    \DeclareFontShape{T1}{lmr}{m}{scit}{<->ssub*lmr/m/scsl}{}%
    \DeclareFontShape{T1}{lmr}{bx}{sc}{<->ssub*lmr/m/sc}{}%
\newtheorem{theorem}{Theorem}
\newtheorem{lemma}{Lemma}
\newtheorem{corollary}{Corollary}
\newtheorem{definition}{Definition}
\def\qed{\hfill\hbox{${\vcenter{\vbox{
	\hrule height 0.4pt\hbox{\vrule width 0.4pt height 6pt						
	\kern5pt\vrule width 0.4pt}\hrule height 0.4pt}}}$}}
\begin{document}

\title[Temporal Reachability Dominating Sets: contagion in temporal graphs]{Temporal Reachability Dominating Sets: contagion in temporal graphs}

\author*[1]{\fnm{David C.} \sur{Kutner}}\email{david.c.kutner@durham.ac.uk}
\equalcont{Authors contributed equally to this work}

\author*[2]{Laura Larios-Jones}\email{laura.larios-jones@glasgow.ac.uk}
\equalcont{Authors contributed equally to this work.}

\affil[1]{\orgdiv{Department of Computer Science}, \orgname{Durham University}} 

\affil[2]{\orgdiv{School of Computing Science}, \orgname{University of Glasgow}}

\abstract{
    Given a population with dynamic pairwise connections, we ask if the entire population could be (indirectly) infected by a small group of $k$ initially infected individuals. We formalise this problem as the \textsc{Temporal Reachability Dominating Set} (\textsc{TaRDiS}) problem on temporal graphs. We provide positive and negative parameterized complexity results in four different parameters: the number $k$ of initially infected, the lifetime $\tau$ of the graph, the number of locally earliest edges in the graph, and the treewidth of the footprint graph $\mathcal{G}_\downarrow$.   
    We additionally introduce and study the \textsc{MaxMinTaRDiS} problem, where the aim is to schedule connections between individuals so that at least $k$ individuals must be infected for the entire population to become fully infected. We classify three variants of the problem: Strict, Nonstrict, and Happy. We show these to be coNP-complete, NP-hard, and $\Sigma_2^P$-complete, respectively. Interestingly, we obtain hardness of the Nonstrict variant by showing that a natural restriction is exactly the well-studied \textsc{Distance-3 Independent Set} problem on static graphs.
}

\keywords{Temporal Graphs, Temporal Reachability, Treewidth, Computational Complexity, Polynomial Hierarchy}

\maketitle

\section{Introduction}\label{sec:Introduction}

A natural problem in the study of networks is that of finding a small set of individuals which together can affect the entire network. This problem is practically relevant in identifying influential entities in social networks, in gauging the risk of viral spread in biological networks, or in choosing sources to broadcast from in wireless networks. We study this problem through the lens of \emph{temporal graphs}.  That is, graphs which change over time, capturing the dynamic nature intrinsic to many real-world networks. 

We formalize the notion of a set of sources which together reach the whole temporal graph as a \emph{Temporal Reachability Dominating Set}, or TaRDiS. We study the complexity of the problem of finding a TaRDiS of a given size in a given temporal graph. Later, we ask: if we can choose when connections between individuals exist, can we maximize the size of the minimum TaRDiS? That is, for viral spread, can we guarantee that the entire population will not be contaminated by at most $k$ initial infections. We refer to this problem as \textsc{MaxMinTaRDiS}.

Our problems sit at the intersection of reachability and covering, which are two rich classes of (temporal) graph problems. 
Reachability is a fundamental concept in temporal graph theory. Temporal reachability and related problems have been the subject of extensive study since the seminal work in the field by Kempe, Kleinberg and Kumar \cite{kempe_connectivity_2000}. Several works focus on the complexity of choosing or modifying times to optimize reachability \cite{kempe_connectivity_2000,deligkas_potapov_optimizing_2022,enright_assigning_2021,molter_temporal_2021,klobas_mertzios_molter_spirakis_complexity_2022}, while some consider as input temporal graphs in which times are immutably fixed \cite{casteigts_temporal_2021,erlebach_spooner_cops_2019,erlebach_spooner_exploration_2018,whitbeck_temporal_2012,casteigts_simple_2022_bugfree}. A comparison between our problems and some of those in the literature is shown in Table~\ref{table:problems}. 

\textsc{MaxMinTaRDiS} is closely related to the \textsc{MinMaxReach} problem studied by Enright, Meeks and Skerman \cite{enright_assigning_2021}, in which the objective is to minimize the maximum number of individuals reached by any single vertex. \textsc{TaRDiS}, on the other hand, vastly generalizes Casteigts's \cite{casteigts_journey_2018_bugfree} notion of $\mathcal{J}^{1\forall}$ connectivity, in which the question is whether any single vertex reaches the entire network. This framing of reachability asks what the worst-case spread is from a single source in the temporal graph. In reality, studied populations are often infected by several individuals. For example, SARS-CoV-2 had been independently introduced to the UK at least 1300 times by June 2020 \cite{pybus2020preliminary}. This, a dynamic population infected by many sources, is precisely the setting motivating our \textsc{TaRDiS} problem.

\def\arraystretch{1.5}
\begin{table}[!ht]
		\centering
		\setlength{\tabcolsep}{1pt}
		\begin{tabular}{
				| m{.2\linewidth}
				| >{\centering\arraybackslash}m{.2\linewidth} 
                | >{\centering\arraybackslash}m{.2\linewidth} 
				| >{\centering\arraybackslash}m{.2\linewidth} 
				| >{\centering\arraybackslash}m{.2\linewidth} |}
			\hline
            \textbf{Problem} 
            & \textbf{Choose} 
            & \textbf{such that}
            & \textbf{reach} 
            & \textbf{Aim/motivation}
            \\
			\hhline{|=|=|=|=|=|}
			{\bf \textsc{MaxMinTaRDiS}}
            & times for edges
            & no coalition of $k$ nodes
            & all nodes
            & Minimize connectivity
            \\
            \hline
            {\bf \textsc{TaRDiS}}            
            & nothing
            & a coalition of $k$ nodes
            & all nodes
            & Assess connectivity
            \\
			\hline
            \textsc{MinMaxReach} \cite{enright_assigning_2021}
            & times for edges
            & no single node
            & more than $k$ nodes
            & Minimize connectivity
            \\
			\hline
            $\mathcal J^{1\forall}$ connectedness \cite{casteigts_journey_2018_bugfree}
            & nothing
            & a single node
            & all nodes
            & Assess connectivity
            \\
			\hline
           {\textsc{Reachability \newline Inference} }\cite{kempe_connectivity_2000}
            & times for edges
            & a designated root node
            & all ``good'' nodes and no ``bad'' nodes
            & Network design
            \\
			\hline
            \textsc{ReachFast} \cite{deligkas_eiben_skretas_reachability_2023}, 
            & edges to delay
            & each source in the designated set
            & all nodes
            & Maximize connectivity
            \\
            \hline
            \textsc{MinReachDelete} \cite{molter_temporal_2021, enright_deleting_2021}
            & edges to delete
            & the specified coalition
            & at most $r$ vertices
            & Minimize connectivity
            \\
            \hline
            \textsc{MinReachDelay} \cite{molter_temporal_2021}
            & edges to delay
            & the specified coalition
            & at most $r$ vertices
            & Minimize connectivity
            \\
            \hline
            
		\end{tabular}\medskip
		\caption{Comparison of our problems \textsc{TaRDiS} and \textsc{MaxMinTaRDiS} to problems in the literature.}\label{table:problems}
	\end{table}

Our work is focused on the computational (parameterized) complexity of \textsc{TaRDiS} and \textsc{MaxMinTaRDiS}, which depends heavily on which formalisation of the problems is considered. In particular, instantaneous spread through a large swath of the population, while realistic in some computer networks, is inconsistent with viral spread in a biological system. Further, should multiple interactions between the same pair of individuals be allowed? Lastly, should it be possible for a single individual to simultaneously interact with several others?

We consider all combinations of answers to these questions, and in all cases for both problems: show that the problem is computationally hard in general; identify the maximum number of discrete times at which edges can appear such that the problem remains tractable; and provide (parameterized) algorithms. Interestingly, we also show through a nontrivial proof that \textsc{MaxMinTaRDiS} generalizes the well-studied \textsc{Distance-3 Independent Set} problem.

\subsection{Problem Setting}\label{subsec:problem_setting}

We begin with some standard definitions. We denote by $[i,j]$ the set $\{i, i+1, \ldots, j\}$ and $[j]$ the set $[1,j]$. Let $G=(V,E)$ be an undirected graph with $(u,v) \in E$. We say that $u$ and $v$ are \emph{adjacent} (also \emph{neighbours}) and that the edge $(u,v)$ is \emph{incident} to both $u$ and $v$. If $S \subseteq V$ is a set of vertices, we say an edge $(u,v)$ is incident to $S$ if one of its endpoints $u$ or $v$ is in $S$. For any vertex $v \in V$,  the \emph{closed neighbourhood} of $v$ is denoted $N[V]:= \{v\} \cup \{u : u $ is adjacent to $v\}$. We say $G$ is \emph{planar} if it can be drawn in a plane without any edges intersecting. 

A \emph{temporal graph} $\mathcal{G}=(V,E,\lambda)$ consists of a set of vertices $V$, a set of edges $E$ and a function $\lambda:E\to 2^{\tau} \setminus \{\emptyset\}$ which maps each edge to a discrete set of times where the \emph{lifetime} $\tau\in \mathbb N$ of a temporal graph is the value of the latest timestep. We refer to $\lambda$ as the \emph{temporal assignment} of $\mathcal{G}$. If $t\in\lambda(e)$ then we call the pair $(e,t)$ a \emph{time-edge}, and say $e$ is \emph{active} at time $t$. The set of all time-edges is denoted $\mathcal{E}$. We abuse notation and write $\lambda(u,v)$ to mean $\lambda((u,v))$.

For a static graph $G=(V,E)$, we denote the temporal graph $(V,E,\lambda)$ by $(G,\lambda)$. We also use $V(\mathcal{G}),E(\mathcal{G})$ to refer to the vertex and edge sets of $\mathcal{G}$, respectively, and use $E_t(\mathcal{G})$ to refer to the set of edges active at time $t$, and call $G_t(\mathcal{G})=(V(\mathcal{G}),E_t(\mathcal{G}))$ the \emph{snapshot at time $t$}. When $\mathcal{G}$ is clear from the context we may omit it. Also, we use the convention that no snapshot is empty, which guarantees $\tau \le |\mathcal{E}|$. The static graph $\mathcal{G}_{\downarrow}=(V,E)$ is referred to as the \emph{footprint} of $\mathcal{G}$.

\begin{figure}[!ht]
     \centering
     \begin{subfigure}[b]{0.22\textwidth}
         \centering
         \includegraphics[width=\textwidth,page=1]{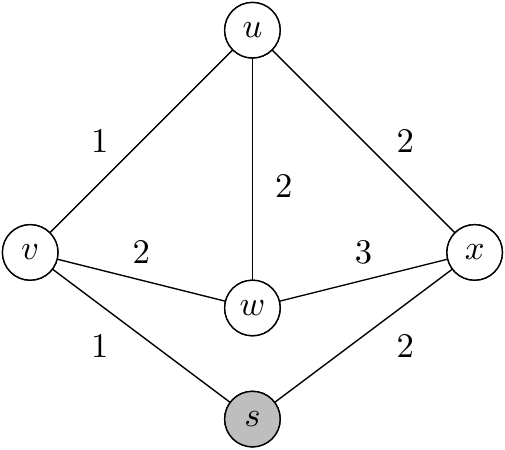}
         \caption{$\mathcal G$}
         \label{fig:snapshots_g}
     \end{subfigure}
     \hfill     \vrule      \hfill
     \begin{subfigure}[b]{0.22\textwidth}
         \centering
         \includegraphics[width=\textwidth,page=2]{IPE_figs_a4.pdf}
         \caption{$G_1$}
         \label{fig:snapshots_g1}
     \end{subfigure}
     \hfill     \vrule      \hfill
     \begin{subfigure}[b]{0.22\textwidth}
         \centering
         \includegraphics[width=\textwidth,page=3]{IPE_figs_a4.pdf}
         \caption{$G_2$}
         \label{fig:snapshots_g2}
     \end{subfigure}
     \hfill     \vrule      \hfill
     \begin{subfigure}[b]{0.22\textwidth}
         \centering
         \includegraphics[width=\textwidth,page=4]{IPE_figs_a4.pdf}
         \caption{$G_3$}
         \label{fig:snapshots_g3}
     \end{subfigure}
        \caption{Spread in a temporal graph from source $s$ through snapshots. Vertices are shaded (half-shaded) when reached from $s$ by a strict (nonstrict) temporal path.}
        \label{fig:snapshots}
\end{figure}

A \emph{strict} (respectively \emph{nonstrict}) \emph{temporal path} from a vertex $u$ to a vertex $v$ is a sequence of time-edges $(e_1,t_1),\ldots, (e_l,t_l)$ such that $e_1 \ldots e_l$ is a static path from $u$ to $v$ and $t_i< t_{i+1}$ (resp. $t_i\leq t_{i+1}$) for $i\in [1,l-1]$. A vertex $u$ \emph{temporally reaches} (or just ``reaches'') a vertex $v$ if there is a temporal path from $u$ to $v$. The \emph{reachability set} $R_u(\mathcal{G})$ of a vertex $u$ is the set of vertices reachable from $u$. When the graph is clear from context, we may simply use $R_u$ to refer to the reachability set of a vertex $u$. We say a vertex $u$ is reachable from a set $S$ if for some $v\in S$, $u\in R_v(\mathcal{G})$. A set of vertices $T$ is \emph{temporal reachability dominated} by another set of vertices $S$ if every vertex in $T$ is reachable from $S$. Domination of and by single vertices is analogously defined. Strict and nonstrict reachability are illustrated in Figure~\ref{fig:snapshots}. We differentiate between strict and nonstrict reachability by introducing a superscript $<$ or $\leq$ to the appropriate operators. For example, in Figure \ref{fig:snapshots}, $u$ is in $R_s^{\leq}$, but is not in $R_s^<$. Note that any strict temporal path is also a nonstrict temporal path, but the converse does not necessarily hold.

Casteigts, Corsini, and Sarkar \cite{casteigts_simple_2022_bugfree} define three useful properties a temporal graph may exhibit. A temporal graph is called \emph{simple} if each edge is active exactly once, \emph{proper} if each snapshot has maximum degree one, and \emph{happy} if it is both simple and proper. Figure \ref{fig:graph_properties} provides examples of the different types of temporal graph. For simple graphs, we define the temporal assignment as $\lambda:E\to [\tau]$ for convenience. We also use these three terms to describe the temporal assignment of a graph with the corresponding property. Happy temporal graphs have the property that any nonstrict temporal path is also a strict temporal path. Part of the utility of happy temporal graphs is that hardness results on them generalize to the strict and nonstrict settings. We can now introduce our protagonist.

\begin{figure}[!ht]
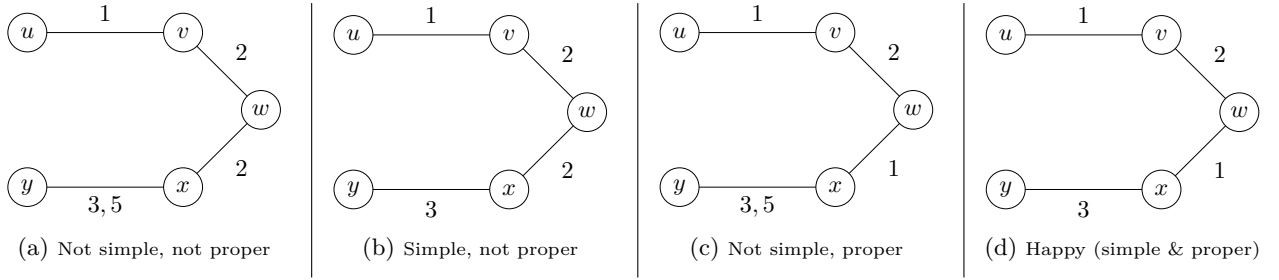

     \centering
     \begin{subfigure}[b]{0.22\textwidth}
         \centering
         \includegraphics[width=\textwidth,page=5]{IPE_figs_a4.pdf}
         \caption{\footnotesize Not simple, not proper}
         \label{fig:notsimple_notproper}
     \end{subfigure}
     \hfill     \vrule      \hfill
     \begin{subfigure}[b]{0.22\textwidth}
         \centering
         \includegraphics[width=\textwidth,page=6]{IPE_figs_a4.pdf}
         \caption{\footnotesize Simple, not proper}
         \label{fig:simple_notproper}
     \end{subfigure}
     \hfill     \vrule      \hfill
     \begin{subfigure}[b]{0.22\textwidth}
         \centering
         \includegraphics[width=\textwidth,page=7]{IPE_figs_a4.pdf}
         \caption{\footnotesize Not simple, proper}
         \label{fig:notsimple_proper}
     \end{subfigure}
     \hfill     \vrule      \hfill
     \begin{subfigure}[b]{0.22\textwidth}
         \centering
         \includegraphics[width=\textwidth,page=8]{IPE_figs_a4.pdf}
         \caption{\footnotesize Happy (simple \& proper)}
         \label{fig:happy}
     \end{subfigure}
        \caption{Four small examples of (non)simple and (non)proper temporal graphs.}
        \label{fig:graph_properties}
\end{figure}

\smallskip

\begin{definition}[TaRDiS]
    In a temporal graph $\mathcal{G}$, a (strict/nonstrict) \emph{Temporal Reachability Dominating Set} (TaRDiS) is a set of vertices $S$ such that every vertex $v\in V(\mathcal{G})$ is temporally reachable from a vertex in $S$ by a (strict/nonstrict) temporal path.
\end{definition}

A minimum TaRDiS is a TaRDiS of fewest vertices in $\mathcal{G}$. We emphasize that, just as every strict temporal path is a nonstrict temporal path, every strict TaRDiS is a nonstrict TaRDiS. It is possible that the smallest nonstrict TaRDiS is strictly smaller than the smallest strict TaRDiS; for example, in Figure \ref{fig:simple_notproper}, there is a nonstrict TaRDiS of size $1$ (namely $\{u\}$) but every strict TaRDiS has size at least $2$.
We now formally define our problems.

\begin{framed}
    \noindent
    \textbf{\textsc{(Strict/Nonstrict) TaRDiS}}\\
    \emph{Input:} A temporal graph $\mathcal{G}=(V,E,\lambda)$ and an integer $k$.\\
    \emph{Question:} Does $\mathcal G$ admit a (strict/nonstrict) TaRDiS of size at most $k$?
\end{framed}

The restriction to \emph{happy} inputs $\mathcal G$ is a subproblem of both \textsc{Strict TaRDiS} and \textsc{Nonstrict TaRDiS}. 
\begin{framed}
    \noindent
    \textbf{\textsc{Happy TaRDiS}}\\
    \emph{Input:} A happy temporal graph $\mathcal{G}=(V,E,\lambda)$ and an integer $k$.\\
    \emph{Question:} Does $\mathcal G$ admit a TaRDiS of size at most $k$?
\end{framed}
\textsc{MaxMinTaRDiS} is an extension of our problem in which we look to find a temporal assignment such that no TaRDiS of cardinality less than $k$ exists.
As seen in Figure~\ref{fig:xkcd}, scheduling social events is natural combinatorial problem.
In a similar manner to how \textsc{Edge Colouring} corresponds straightforwardly to scheduling meetings between one pair of people at a time to avoid a scheduling conflict, the \textsc{MaxMinTaRDiS} problem can be thought of as scheduling interactions (potentially simultaneously) so that the risk of widespread contagion is limited. 

\begin{figure}[!ht]
    \centering
    \includegraphics[width=0.75\linewidth]{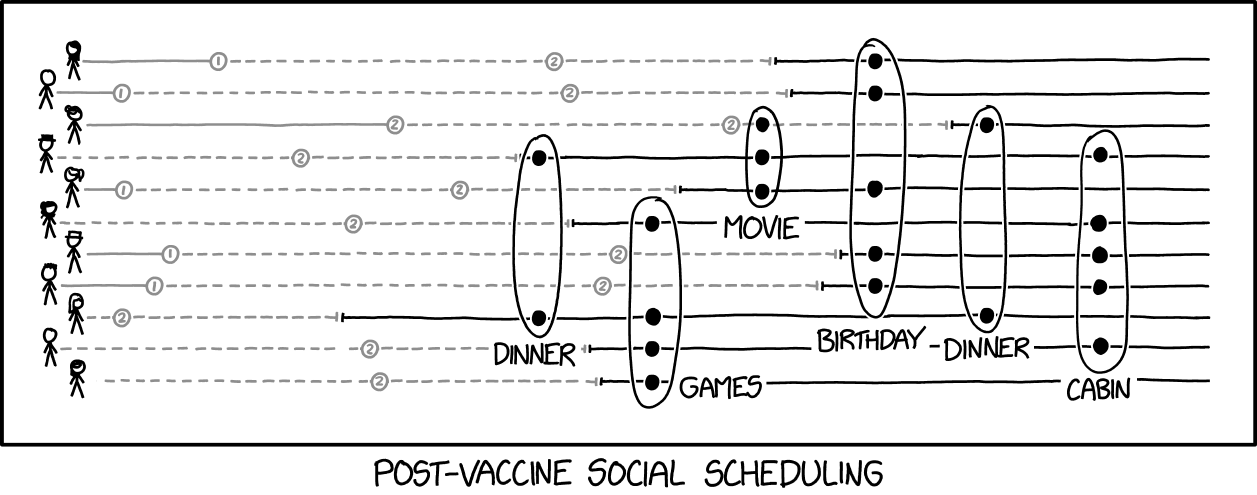}
    \caption{xkcd\#2450 \cite{xkcd} depicts the scheduling of social events to minimize the risk of contagion. }
    \label{fig:xkcd}
\end{figure}






\begin{framed}
    \noindent
    \textbf{\textsc{(Strict/Nonstrict) MaxMinTaRDiS}}\\
    \emph{Input:} A static graph $H=(V,E)$ and integers $k$ and $\tau$.\\
    \emph{Question:} Does there exists a temporal assignment $\lambda:E\to 2^\tau \setminus \{\emptyset\}$ such that every strict/nonstrict TaRDiS admitted by $(H,\lambda)$ is of size at least $k$?
\end{framed}

Likewise, the variant of this problem in which the temporal assignment $\lambda$ is required to be happy is referred to as \textsc{Happy MaxMinTaRDiS}. Note this is not a subproblem of \textsc{Strict MaxMinTaRDiS} or \textsc{Nonstrict MaxMinTaRDiS}. We will later show that \textsc{Happy TaRDiS} generalizes \textsc{Edge Colouring} (and hence conflict-free scheduling in our earlier analogy).

\begin{framed}
    \noindent
    \textbf{\textsc{Happy MaxMinTaRDiS}}\\
    \emph{Input:} A static graph $H=(V,E)$ and integers $k$ and $\tau$.\\
    \emph{Question:} Does there exists a happy temporal assignment $\lambda:E\to [\tau]$ such that every TaRDiS admitted by $(H,\lambda)$ is of size at least $k$?
\end{framed}
In this work, by ``each variant'' we refer to the Strict, Nonstrict and Happy variants of the problems.


\subsection{Our Contribution}

Our work highlights the complexity intrinsic to the dynamic behavior of spreading processes as soon as time-varying elements are incorporated into natural models. At a high level, we identify the minimum lifetime $\tau$ for which each problem is computationally hard. The existence of hardness results even with bounded lifetime justifies the need for parameters other than $\tau$ to obtain tractability results. We provide a fixed-parameter tractable (fpt)\footnote{We use fpt (lowercase) as a descriptor for algorithms witnessing the inclusion of a problem in the parameterized complexity class FPT. A full definition is given in Section~\ref{sec:param-tardis}.} algorithm for \textsc{TaRDiS} with parameters $\tau$ and the treewidth\footnote{Informally, the treewidth of a graph is a measure of its likeness to a tree. A formal definition is given in Section~\ref{sec:param-tardis}.} of the footprint graph $\text{tw}(\mathcal{G}_\downarrow)$, and show existence of such an fpt algorithm for \textsc{MaxMinTaRDiS} with parameters $\tau$, $\text{tw}(\mathcal{G}_\downarrow)$, and $k$.

\begin{table}[ht]
\resizebox{.9\linewidth}{!}{
\begin{tabular}{l}
\begin{tblr}{
  cell{1}{1} = {r=2}{},
  cell{1}{2} = {c=3}{c},
  cell{1}{5} = {c=3}{c},
  cell{3}{2} = {r=4}{},
  cell{3}{4} = {r=2}{},
  cell{3}{5} = {r=4}{},
  cell{3}{7} = {r=2}{},
  cell{4}{3} = {r=3}{},
  cell{5}{4} = {r=2}{},
  cell{5}{6} = {r=2}{},
  vlines,
  hline{1,3,7} = {-}{},
  hline{2} = {2-7}{},
  hline{4} = {1,3,6}{},
  hline{5} = {1,4,6-7}{},
  hline{6} = {1,7}{},
}
 \hspace{.5em}$\tau$&
 \textsc{TaRDiS}&&&
 \textsc{MaxMinTaRDiS}&&
\\
 & 
 Strict&
 Nonstrict&
 Happy&
 Strict&
 Nonstrict&
 Happy\\

\hspace{.5em}1&
 {NP-c \\(Cor.~\ref{cor:tardis-ds})}& 
 {Linear \\(Lem.~\ref{lem:smalltau_tardis})}& 
 {Linear \\(Lem.~\ref{lem:smalltau_tardis})}& 
 {coNP-c \\(Cor.~\ref{cor:ds-strict-mmtardis})}& 
 {Linear \\(Lem.~\ref{lem:smalltau})}& 
 {Linear \\ (Lem.~\ref{lem:smalltau})}\\ 

\hspace{.5em}2&
&
{NP-c \\(Thm.~\ref{thm:setcover_nonstrict})}& 
&
&
{NP-c \\(Cor.~\ref{cor:d3is})}& 
\\

\hspace{.5em}3&
&
&
{NP-c \\(Thm.~\ref{thm:happytardis})}& 
&
{$\in \Sigma_2^P$ \\(Lem.~\ref{lem:insigma})}& 
{$\Sigma_2^P-c$ \\(Thm.~\ref{thm:sigma})}\\ 

$\geq 4$&
&
&
&
&
&
 {$\in$ coNP-h $\cap~\Sigma_2^P$\\ (Cor.~\ref{cor:edgecol}, Lem.~\ref{lem:insigma})} 
\end{tblr}
\end{tabular}}
\caption{Computational complexity of our problems and its dependence on $\tau$.}\label{table:res}%
\end{table}

\def\arraystretch{1.5}
\begin{table}[!ht]
		\centering
		\setlength{\tabcolsep}{1pt}
          \begin{tabular}{
				| m{.12857\linewidth}
				| >{\centering\arraybackslash}m{.12857\linewidth} 
				| >{\centering\arraybackslash}m{.12857\linewidth} 
				| >{\centering\arraybackslash}m{.12857\linewidth} 
                | >{\centering\arraybackslash}m{.12857\linewidth} 
				| >{\centering\arraybackslash}m{.12857\linewidth} 
				| >{\centering\arraybackslash}m{.12857\linewidth} |}
			\hline
			 
			& \multicolumn{3}{c|}{\textsc{TaRDiS}}
            & \multicolumn{3}{c|}{\textsc{MaxMinTaRDiS}}
            \\
			\hline
			Parameter  
			& Strict
            & Nonstrict
            & Happy
			& Strict
            & Nonstrict
            & Happy
            \\
			\hline
            $\Delta + \tau$  
			& para-NP-h  \newline (Cor.~\ref{cor:tardis-ds})
			& para-NP-h  \newline (Thm.~\ref{thm:setcover_nonstrict})
			& para-NP-h  \newline (Thm.~\ref{thm:happytardis})
			& para-NP-h  \newline (Cor.~\ref{cor:ds-strict-mmtardis})
			& para-NP-h  \newline (Cor.~\ref{cor:d3is})
			& para-NP-h  \newline (Thm.~\ref{thm:sigma})
            \\ 
            \hline
            $\mathrm{\#LEE}$  
			& n/a
            & \multicolumn{2}{c|}{FPT \newline (Lem.~\ref{lem:fpt_lee})}
			& \multicolumn{3}{c|}{n/a}
            \\ 
			\hline
            $k$  
			& W[2]-h  \newline (Cor.~\ref{cor:tardis-ds})
           & W[2]-h  \newline (Thm.~\ref{thm:setcover_nonstrict}) 
           & W[2]-h  \newline (Cor.~\ref{cor:setcover_happy}) 
			& co-W[2]-c  \newline (Cor.~\ref{cor:ds-strict-mmtardis}) 
			& W[1]-h  \newline (Cor.~\ref{cor:d3is}) 
			& para-NP-h  \newline (Cor.~\ref{cor:edgecol}) 
            \\ 
            \hline
			$k+\tau$  
			& W[2]-h  \newline (Cor.~\ref{cor:tardis-ds})
			& W[2]-h  \newline (Thm.~\ref{thm:setcover_nonstrict})
			& FPT \newline (Lem.~\ref{lem:fpt_finite_lang_tardis})
			& co-W[2]-c  \newline (Cor.~\ref{cor:ds-strict-mmtardis}) 
			& W[1]-h  \newline (Cor.~\ref{cor:d3is}) 
			& para-NP-h  \newline (Cor.~\ref{cor:edgecol}) 
            \\ 
            \hline
            $k+\tau+\Delta$  
			& FPT \newline (Lem.~\ref{lem:fpt_finite_lang_tardis})
			& W[2]-h  \newline (Thm.~\ref{thm:setcover_nonstrict})
			& FPT \newline (Lem.~\ref{lem:fpt_finite_lang_tardis})
			& FPT \newline (Lem.~\ref{lem:fpt_finite_lang})
			& ???
			& para-NP-h  \newline (Cor.~\ref{cor:edgecol})
            \\ 
            \hline
            $\mathrm{tw}\mathcal{G}_\downarrow + \tau$  
			& \multicolumn{3}{c|}{{FPT (Thm.~\ref{thm:tree-decomp-tardis})}}
			& \multicolumn{3}{c|}{???}
            \\ 
            \hline
            $\mathrm{tw}\mathcal{G}_\downarrow + \tau + k$
            & \multicolumn{3}{c|}{{FPT (Thm.~\ref{thm:tree-decomp-tardis})}}
            & \multicolumn{3}{c|}{{FPT (Thm.~\ref{thm:emso})}}
            \\ 
            \hline
		\end{tabular}\medskip
		\caption{Summary of our parameterized complexity results. Parameters are: maximum degree of the footprint graph $\Delta$, lifetime $\tau$, number of (weakly) locally earliest edges $\mathrm{\#LEE}$, input $k$, and treewidth of the footprint graph $\mathrm{tw}\mathcal{G}_\downarrow$. Problems which are W[1]-, (co-)W[2]-, or para-NP-hard are ones for which there presumably exists no fpt algorithm.}\label{table:fpt_results}
  
	\end{table}

Our results relating lifetime and computational complexity are highlighted in Table~\ref{table:res}, and our parameterized complexity results are summarized in Table~\ref{table:fpt_results}. For the case of happy temporal graphs, we exactly characterize the complexity of both \textsc{TaRDiS} and \textsc{MaxMinTaRDiS} with lifetime $\tau\leq 3$. Both problems are trivially solvable in linear time for $\tau \leq 2$. We show NP-completeness of \textsc{Happy TaRDiS} and $\Sigma_2^P$-completeness of \textsc{Happy MaxMinTaRDiS} when $\tau=3$ - even when restricted to planar inputs of bounded degree.

For \textsc{MaxMinTaRDiS}, membership of NP is nontrivial since the existence of a polynomial-time verifiable certificate is uncertain. 
Indeed, the $\Sigma_2^P$-completeness of \textsc{Happy MaxMinTaRDiS} indicates no such certificate exists in general unless the Polynomial Hierarchy collapses. 
Interestingly, we show equivalence\footnote{Our definition of equivalence can be found in Section \ref{sec:classic-maxmintardis}.} of \textsc{Nonstrict MaxMinTaRDiS} restricted to inputs where $\tau=2$ and \textsc{Distance-3 Independent Set}, which is NP-complete \cite{eto_distance-dindependent_2014_bugfree}, in Section~\ref{sec:NonStrictMTRDSA}. Given the uncertain membership of NP for the general problem, NP-completeness of this restriction of \textsc{Nonstrict MaxMinTaRDiS} indicates it is possibly \emph{easier} than the unrestricted problem.
    
Having shown $\tau$ and planarity of the footprint graph alone are insufficient for tractability, we give an algorithm which solves \textsc{TaRDiS} on trees in time polynomial in the number of time-edges $|\mathcal{E}|$ in the input graph. In addition, we give an algorithm for \textsc{TaRDiS} on nice tree decompositions \cite{cygan_parameterized_2015_bugfree}. This gives us tractability with respect to lifetime and treewidth of the footprint of the input graph combined. We also show existence of an algorithm for \textsc{MaxMinTaRDiS} which is fixed-parameter tractable with respect to $\tau$, $k$, and treewidth. This is achieved by applying Courcelle's theorem \cite{courcelle_1997_expression}.

\subsection{Related Work}\label{sec:RelatedWork}

Reachability and connectivity problems on temporal graphs have drawn significant interest in recent years. These have been studied in the context of network design \cite{akrida_temporal_2020,bilo_blackout-tolerant_2022_bugfree,casteigts_temporal_2021} and transport logistics \cite{fuchsle_delay-robust_2022_bugfree} (where maximizing connectivity and reachability at minimum cost is desired), and the study of epidemics \cite{braunstein_inference_2016,enright_deleting_2021,enright_assigning_2021,valdano_analytical_2015} and malware spread \cite{tang_applications_2013_bugfree}(where it is not).

In research on networks, broadcasting refers to transmission to every device. In a typical model, communication rather than computation is at a premium, and there is a single source in a graph which does not vary with time\cite{peleg_time_2007_bugfree}. Broadcasting-based questions deviating from this standard have been studied as well. Namely, there is extensive study of the complexity of computing optimal broadcasting schedules for one or several sources \cite{erlebach_np-hardness_2004,jakoby_complexity_1995_bugfree}, broadcasting in ad-hoc networks or time-varying graphs \cite{casteigts_time-varying_2011_bugfree}, and the choice of multiple sources (originators) for broadcasting in minimum time in a static graph \cite{chia_multiple_2007,grigoryan_problems_2013}. Ours is the first work to focus on the hardness of choosing multiple sources in a temporal graph to minimize the number of sources, in an offline setting.

One metric closely related to a temporal graph's vulnerability to contagion is its \emph{maximum reachability}; that is, the largest number $k$ such that some vertex reaches $k$ vertices in the graph. In Enright, Meeks and Skerman \cite{enright_assigning_2021} and Enright, Meeks, Mertzios and Zamaraev's \cite{enright_deleting_2021} works, the problems of deleting and reordering edges in order to minimize $k$ are shown to be NP-complete. Problems on temporal graphs are often more computationally complex than their static counterparts \cite{akrida_temporal_2019,bumpus_edge_2021,mertzios_computing_2023,klobas2023interference}. In such cases, efficient algorithms may still be obtained on restricted inputs. 
In work exploring disease spread through cattle, Enright and Meeks find that these real-world networks naturally have low treewidth \cite{enright_deleting_2018}. Treewidth and other structural parameters have been used with varying degrees of success to give tractability on some of these temporal problems \cite{akrida_temporal_2019,erlebach_temporal_2015,haag_feedback_2022,mertzios_computing_2023}. 
 
A powerful tool in parameterized complexity is Courcelle's theorem, which gives tractability on graphs of bounded treewidth for problems that can be represented in monadic second order logic \cite{courcelle_monadic_2012}.
Unfortunately, there are many temporal problems which remain intractable even when the underlying graph is very strongly restricted, for example when it is a path, a star, or a tree \cite{akrida_temporal_2019,bumpus_edge_2021,mertzios_computing_2023,klobas2023interference}, all of which have treewidth 1. When this is the case, it is sometimes sufficient to additionally bound the lifetime of the temporal graph in addition to its underlying structure. This motivates our use of treewidth in combination with lifetime as parameters for the study of our problems.  

The problems we study generalize two classical combinatorial graph problems, namely \textsc{Dominating Set} and \textsc{Distance-3 Independent Set (D3IS)}. For a static graph $G=(V,E)$, a dominating set is a set of vertices $S$ such that every vertex is either adjacent to a vertex in $S$ or is in $S$ itself, and a D3IS is a set of vertices $S$ such that no pair of vertices $u,v\in S$ has a common neighbour $w$ (so all pairs are at distance at least three). The corresponding decision problems ask, for an integer $k$, if there exists a dominating set (respectively D3IS) of size at most (resp. at least) $k$, and are W[2]- (resp. W[1]-) complete \cite{downey_fixed-parameter_1995, eto_distance-dindependent_2014_bugfree}; that is, even if $k$ is fixed it is unlikely there exists an algorithm solving the problem with running time $f(k)\cdot n^{O(1)}$. However, \textsc{Dominating Set} can be solved in polynomial time on graphs of bounded treewidth \cite{bjorklund_fourier_2006,cygan_parameterized_2015_bugfree}.

\textsc{TaRDiS} is exactly the problem of solving the directed variant of \textsc{Dominating Set} on a Reachability Graph \cite{bhadra_complexity_2003}. A Reachability Graph is also referred to as the transitive closure of journeys in a temporal graph and is shown to be efficiently computable by Whitbeck, Amorim, Conan, and Guillaume \cite{whitbeck_temporal_2012}. Temporal versions of dominating set and other classical covering problems have been well studied \cite{akrida_temporal_2020,casteigts_journey_2018_bugfree,verheije_algorithms_2021_bugfree}, however these interpretations do not allow a chosen vertex to dominate beyond its neighbours. Furthermore, many other problems looking to optimally assign times to the edges of a static graph have been studied \cite{kempe_connectivity_2000,enright_assigning_2021,klobas_mertzios_molter_spirakis_complexity_2022}. \textsc{TaRDiS} also generalises \textsc{Temporal Source}, which asks whether a single vertex can infect every other vertex in the graph, which is equivalent to the graph being a member of the class $\mathcal{J}^{1\forall}$ \cite{casteigts_journey_2018_bugfree} mentioned earlier. There has also been extensive research into modifying an input temporal graph (subject to some constraints) to achieve some desired reachability objective \cite{casteigts_journey_2018_bugfree, deligkas_eiben_skretas_reachability_2023, molter_temporal_2021}. Deligkas, Eiben, and Skretas \cite{deligkas_eiben_skretas_reachability_2023} have the objective of modifying $\lambda$ through a delaying operation so that \emph{each vertex} in a designated set reaches all vertices in the graph; that is $\bigcap_{u\in S} R_u = V(\mathcal{G})$. Molter, Renken and Zschoche \cite{molter_temporal_2021} consider both delay and deletion operations to modify $\lambda$ with the objective of minimizing the cardinality of the set $\bigcup_{u\in S}R_u$. Contrast these with our problem \textsc{MaxMinTaRDiS}, which has the objective of maximizing the minimum cardinality of {\bf any} set $S$ satisfying $\bigcup_{u\in S}R_u=V(\mathcal{G})$.

\subsection{Organization}

This paper is organised as follows. We begin with classical complexity results for \textsc{TaRDiS} in Section~\ref{sec:classic-tardis}. This consists of some preliminary observations pertaining to our problems in Section~\ref{sec:prelim-tardis}, which also convey some of the intuition for \textsc{TaRDiS} and provide tools for later technical results. Here we also consider temporal graphs with very small lifetime, showing that  \textsc{Nonstrict TaRDiS} and \textsc{Happy TaRDiS} are efficiently solvable in temporal graphs with lifetime $\tau=1$ and $\tau=2$ respectively. In Sections~\ref{sec:np-happy-lifetime-3} and~\ref{sec:np-nonstrict-lifetime-2} we show that these are the largest lifetime for which the problem is efficiently solvable in general. Finally, in Section~\ref{sec:tree-alg}, we show that \textsc{TaRDiS} is efficiently solvable when the footprint of the temporal graph is a tree. 

Section~\ref{sec:classic-maxmintardis} presents classical complexity results for our \textsc{MaxMinTaRDiS} problems. We again begin with some preliminary results in Section~\ref{sec:prelim-mmtardis} which provide the scaffolding for our later proofs as well as some easy bounds against which we compare our more technical results. This section allows us to draw the comparison between our problems and the classical problems \textsc{Edge Colouring} and \textsc{Dominating Set}. The remainder of this section includes our proofs of the contrasting results that \textsc{Nonstrict MaxMinTaRDiS} generalizes \textsc{D3IS} (and is in NP when restricted to $\tau=2$), and of the $\Sigma_2^P$-completeness of \textsc{Happy MaxMinTaRDiS}. 

Having established the hardness of both problems in the general case, and provided an algorithm for \textsc{TaRDiS} when the footprint graph is a tree, we turn to parameterized complexity in Sections~\ref{sec:param-tardis} and~\ref{sec:param-maxmintardis}. The majority of these sections focus on the structural parameter of treewidth of the footprint graph. We give an algorithm that solves \textsc{TaRDiS} on a nice tree decomposition of a graph in Section~\ref{sec:tree-decomp}. In Section~\ref{sec:courcelle} we show existence of such an algorithm for \textsc{MaxMinTaRDiS} by leveraging Courcelle's theorem.
Finally, we give some concluding remarks and future research directions in Section~\ref{sec:conclusion}.

\section{Classical complexity results for \textsc{TaRDiS}}\label{sec:classic-tardis}

In this section, we establish NP-completeness of each variant of \textsc{TaRDiS}, and characterize the maximum lifetime $\tau$ for which the problem tractable. We also give an algorithm solving the problem in polynomial time when the footprint graph $\mathcal G_\downarrow$ is a tree. A more general algorithm for footprint graphs of bounded treewidth is later given in Section~\ref{sec:param-tardis}.

\subsection{Containment in NP, useful tools, and small lifetime}\label{sec:prelim-tardis}

We begin this section by showing containment of \textsc{Strict, Nonstrict} and \textsc{Happy TaRDiS} in NP and introducing the notions of a (weakly) locally earliest edge and (weakly) canonical TaRDiS. We then show that there always exists a canonical minimum TaRDiS in happy temporal graphs. This allows us to reduce the number of cases we must consider when solving \textsc{Happy MaxMinTaRDiS} in Section \ref{sec:Sigma}. Finally, we consider the restrictions of the lifetime to $\tau=1$ and $\tau\le 2$, where \textsc{Nonstrict TaRDiS} and \textsc{Happy TaRDiS} respectively are easily shown to be efficiently solvable. We later show that these are the largest values of $\tau$ for which these problems are tractable.

\begin{lemma}\label{lem:innp}
    Each variant of \textsc{TaRDiS} is in NP.
\end{lemma}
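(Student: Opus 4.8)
The plan is to exhibit a polynomial-size certificate for a yes-instance and a polynomial-time verification procedure, which together place each variant of \textsc{TaRDiS} in NP. Given a temporal graph $\mathcal{G}=(V,E,\lambda)$ and an integer $k$, the natural certificate is a candidate set $S \subseteq V$ with $|S| \le k$. This certificate has size at most $|V|$, hence is polynomial in the input. The verification task is then to check that $S$ is a (strict/nonstrict) TaRDiS, i.e. that every vertex of $V$ lies in $R_u$ for some $u \in S$.

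First I would recall that temporal reachability is decidable in polynomial time: from any fixed source $u$, the reachability set $R_u^{<}$ (resp.\ $R_u^{\le}$) can be computed by a straightforward time-expanded BFS/Dijkstra-style sweep through the time-edges in increasing order of timestamp, maintaining the earliest arrival time at each vertex. This runs in time polynomial in $|\mathcal{E}|$ (and hence in the input size, since the temporal assignment is part of the input). Concretely, process timesteps $t = 1, \ldots, \tau$; at step $t$, for each active edge $(v,w) \in E_t$, if $v$ has already been reached by a temporal path arriving strictly before $t$ (for the strict variant) or at time at most $t$ (for the nonstrict variant), mark $w$ as reached, and symmetrically. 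For the happy variant, since every nonstrict temporal path is also strict, either computation suffices. Then $R_u = \{v : v \text{ is marked reached from } u\} \cup \{u\}$.

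The verification procedure is therefore: compute $\bigcup_{u \in S} R_u$ by running this reachability computation once from each of the at most $k \le |V|$ vertices of $S$, and accept iff this union equals $V$. This is a polynomial number of polynomial-time computations, hence polynomial overall. Soundness and completeness are immediate from the definition of TaRDiS: a yes-instance has some TaRDiS of size at most $k$, which serves as an accepted certificate, and any accepted certificate is by construction a TaRDiS of size at most $k$.

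There is no real obstacle here; the only point requiring a word of care is making explicit that the temporal reachability sets are polynomial-time computable, which is standard (and already noted in the excerpt via the efficient computability of the transitive closure of journeys \cite{whitbeck_temporal_2012}). I would simply cite this fact or include the one-paragraph BFS argument above, and note that the argument is uniform across the Strict, Nonstrict, and Happy variants — the only difference being the strict-versus-nonstrict comparison of timestamps in the reachability sweep, and the Happy variant being a syntactic restriction of the input that does not affect the verification.
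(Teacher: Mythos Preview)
Your proposal is correct and takes essentially the same approach as the paper: exhibit the candidate set $S$ as certificate and verify by computing each $R_u$ via a modified BFS in polynomial time. The paper's proof is simply a terser version of yours, stating only that reachability sets are polynomial-time computable by a BFS variant and hence verification is polynomial.
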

\begin{proof}
The reachability set of a vertex can be computed in polynomial time by a modification of breadth-first search. Therefore, we can verify whether a set temporal reachability dominates a temporal graph in polynomial time.
\end{proof}

We now introduce the notion of a canonical TaRDiS. The following results will then allow us to make assumptions about the composition of a minimum TaRDiS when working with proper or happy temporal graphs, or on \textsc{Nonstrict TaRDiS}. 

\smallskip
\begin{definition}[(Weakly) locally earliest, (weakly) canonical TaRDiS]
    In a temporal graph $\mathcal{G}$, a time-edge $((u,v),t)$ is \emph{locally earliest} if every other time-edge incident
    to either $u$ or $v$ is at a time $t' > t$. If the weaker constraint $t' \geq t$ holds, then we call the time-edge weakly locally earliest. We say an edge $(u,v)$ is (weakly) locally earliest if, for some $t$, the time-edge $((u,v),t)$ is (weakly) locally earliest. A (weakly) \emph{canonical TaRDiS} consists exclusively of vertices which are incident to a (weakly) locally earliest edge.
\end{definition}

In Figure~\ref{fig:intro_canonical}, each of $\{b,d\}$, $\{b,c,d\}$ and $\{a,c,e\}$ is a TaRDiS, but only the former two are canonical. 

\begin{figure}[!ht]
     \centering
     \begin{subfigure}[t]{0.25\textwidth}
         \centering
    \includegraphics[height=2.5cm,page=11]{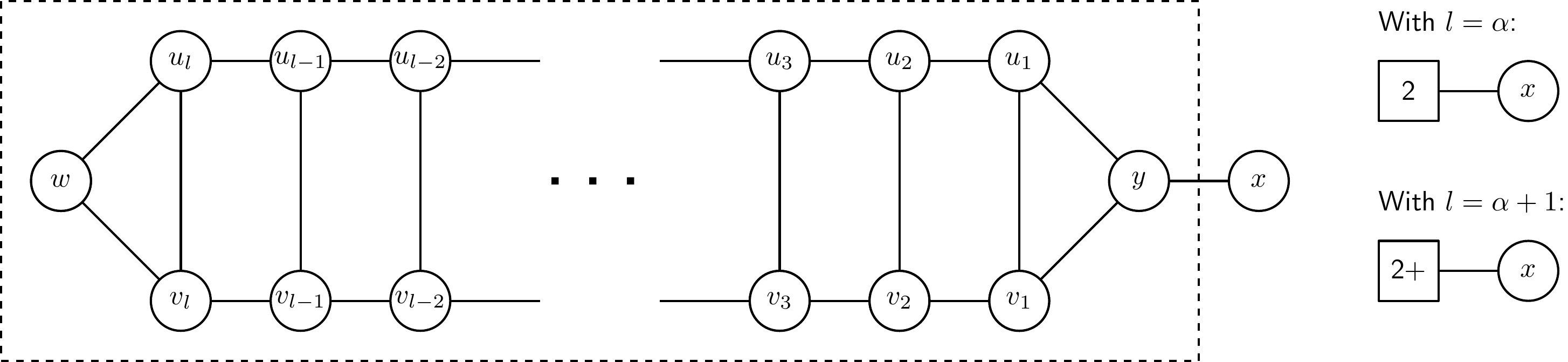}
         \caption{$R_a^{\leq} \neq R_a^<$ and $R_e^{\leq} \neq R_e^<$.}
         \label{fig:intro_strict_nonstrict}
     \end{subfigure}
     \hfill     \vrule      \hfill
     \begin{subfigure}[t]{0.25\textwidth}
         \centering
         \includegraphics[height=2.5cm,page=31]{IPE_TaRDiS_figs.pdf}
         \caption{$\{a,e\}$ is not canonical.}
         \label{fig:intro_canonical}
     \end{subfigure}
     \hfill     \vrule      \hfill
     \begin{subfigure}[t]{0.4\textwidth}
         \centering
         \includegraphics[height=2.5cm,page=7]{IPE_TaRDiS_figs.pdf}
         \caption{Every TaRDiS is canonical.}
         \label{fig:intro_d3iscounter}
     \end{subfigure}     
        \caption{Three temporal graphs, all admitting $\{a,e\}$ as a minimum TaRDiS.}
        \label{fig:canonical}
\end{figure}

\begin{lemma}\label{lem:canonical}
    In a temporal graph, there always exists a minimum weakly canonical nonstrict TaRDiS. In a proper temporal graph, there always exists a canonical TaRDiS.
\end{lemma}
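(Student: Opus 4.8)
The plan is to show that any minimum TaRDiS of the relevant flavour can be rewritten, one vertex at a time, into one of the same size that is (weakly) canonical, by replacing each vertex with a suitable vertex incident to a (weakly) locally earliest edge whose reachability set is at least as large. The workhorse is the following observation, which I would establish first. Suppose $(v,u)\in E$ is active at a time $t\in\lambda(v,u)$ with $t=\min\{s : s\in\lambda(e),\ v\in e\}$, i.e.\ no edge incident to $v$ is active before time $t$. Then $R_u^{\leq}\supseteq R_v^{\leq}$: given $w\in R_v^{\leq}$ with a witnessing nonstrict temporal path $P$ from $v$, if $u$ already appears on $P$ then the suffix of $P$ starting at $u$ is a nonstrict temporal path from $u$ to $w$; otherwise the first time-edge of $P$ is incident to $v$ and hence active at a time $\geq t$, so prepending $((v,u),t)$ to $P$ yields a nonstrict temporal path from $u$ to $w$ (the static path stays simple because $u\notin V(P)$). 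This is the one place where nonstrictness is genuinely used.

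Next I would define, for each vertex $v$, a descent $v=v_0,v_1,v_2,\dots$: given $v_i$, stop if $v_i$ is incident to a weakly locally earliest edge; otherwise let $t_i:=\min\{s : s\in\lambda(e),\ v_i\in e\}$ and let $v_{i+1}$ be a neighbour of $v_i$ with $(v_i,v_{i+1})$ active at time $t_i$. Every edge incident to $v_i$ is active only at times $\geq t_i$; if the same were true of $v_{i+1}$, then $((v_i,v_{i+1}),t_i)$ would be weakly locally earliest and the descent would already have stopped at $v_i$, so in fact $t_{i+1}<t_i$. Since the $t_i$ form a strictly decreasing sequence of positive integers, the descent terminates at some $v_m$ incident to a weakly locally earliest edge, and applying the observation along the descent gives $R_{v_m}^{\leq}\supseteq\dots\supseteq R_{v_0}^{\leq}=R_v^{\leq}$. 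Write $\phi(v):=v_m$. Given a minimum nonstrict TaRDiS $S$, the set $S':=\{\phi(v):v\in S\}$ satisfies $\bigcup_{x\in S'}R_x^{\leq}\supseteq\bigcup_{v\in S}R_v^{\leq}=V(\mathcal{G})$, so it is a nonstrict TaRDiS with $|S'|\leq|S|$; minimality of $S$ forces equality, and by construction every vertex of $S'$ is incident to a weakly locally earliest edge. This settles the first statement.

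For the proper case I would first record two elementary facts. In a proper temporal graph, two consecutive edges of a temporal path share a vertex and therefore cannot be active at the same time (that vertex would have degree $\geq 2$ in one snapshot), so every nonstrict temporal path is strict and $R_x^{\leq}=R_x^{<}$ for all $x$; likewise, all time-edges incident to a fixed vertex are active at pairwise distinct times. Now I would rerun the descent with ``weakly locally earliest'' replaced by ``locally earliest'': the minimum time $t_i$ at $v_i$ is attained by a unique time-edge, whose other endpoint is $v_{i+1}$; distinctness of times at $v_i$ makes every other time-edge at $v_i$ active strictly after $t_i$, and if every other time-edge at $v_{i+1}$ were also active strictly after $t_i$, then $((v_i,v_{i+1}),t_i)$ would be locally earliest, contradicting non-termination — so again $t_{i+1}<t_i$ and the descent ends at a vertex incident to a locally earliest edge. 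Applying the same substitution argument to a minimum TaRDiS (strict and nonstrict coincide here) and combining the observation with $R^{\leq}=R^{<}$ produces a minimum canonical TaRDiS.

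The step I expect to require the most care is the reachability-containment observation — specifically the dichotomy on whether the replacement vertex $u$ already lies on the witnessing path (needed so that the static path remains simple) and the verification that the non-decreasing timing condition is preserved when the new initial time-edge $((v,u),t)$ is prepended. Everything after that is routine bookkeeping with a strictly decreasing sequence of positive integers.
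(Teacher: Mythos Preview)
The proposal is correct and takes essentially the same approach as the paper: iteratively replace each non-canonical vertex by a neighbour along an earliest incident time-edge, using nonstrict reachability containment at each step, and handle the proper case by observing that nonstrict and strict reachability coincide there. Your version is simply more explicit about termination (via the strictly decreasing sequence $t_i$) and about why the prepended path stays simple, but the core descent argument is identical to the paper's.
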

\begin{proof}
We may assume without loss of generality that $\mathcal{G}_\downarrow$ is a connected graph. Note that a weakly locally earliest edge in a proper temporal graph is necessarily a locally earliest edge. Consequently, for the remainder of the proof, we focus on weakly locally earliest edges.

Given a TaRDiS $S$, we can find a weakly canonical TaRDiS $S'$ by replacing vertices not incident to weakly locally earliest edges by vertices that are. To see this, suppose $x\in S$ is not incident to a weakly locally earliest edge. Then there exist vertices $u,v\in V(\mathcal{G})$ and time-edges $((x,u),t),((u,v),t')$ such that $t \ge t'$. Suppose that $t$ is the earliest time that this is true for. Then, since all vertices reachable from $x$ must be temporally reachable from $v$ by a path appending the time-edges $((v,u),t'),((u,x),t)$, $R_x\subseteq R_{v}$. Therefore, we can swap $x$ for $v$ without losing domination of the whole graph. We perform this replacement repeatedly for every such $x$ until we obtain a set of vertices each incident to at least one weakly locally earliest edge. This is precisely a weakly canonical TaRDiS.
\end{proof}
This gives us a result that mirrors Lemma 54 from \cite{balev_temporally_2023}.

\smallskip
\begin{corollary}\label{cor:lee_balev}
  The number of weakly locally earliest edges upper-bounds the size of a minimum nonstrict TaRDiS.
\end{corollary}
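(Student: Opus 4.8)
The plan is to read the corollary off directly from Lemma~\ref{lem:canonical}. Write $L$ for the set of weakly locally earliest edges of $\mathcal{G}$. By Lemma~\ref{lem:canonical} there is a minimum weakly canonical nonstrict TaRDiS $S$, so by definition every vertex $x\in S$ is incident to at least one edge of $L$. It therefore suffices to construct an injection $S\hookrightarrow L$, which immediately yields $|S|\le|L|$, i.e. the minimum nonstrict TaRDiS has size at most the number of weakly locally earliest edges. As in Lemma~\ref{lem:canonical} I would assume $\mathcal{G}_\downarrow$ is connected (otherwise argue componentwise); the only degenerate situation, an edgeless footprint, is a single vertex and is trivial.

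The key preliminary I would establish is the following: if $e=(u,v)$ is a weakly locally earliest edge, witnessed by the time-edge $((u,v),t)$, then $R_u^{\leq}=R_v^{\leq}$. Indeed, since $e$ is weakly locally earliest every time-edge incident to $u$ is at a time $\ge t$, so given any nonstrict temporal path starting at $u$ (whose first time-edge is some $((u,w),t_1)$ with $t_1\ge t$) we may prepend $((v,u),t)$ to obtain a valid nonstrict temporal path from $v$ to the same endpoint; hence $R_u^{\leq}\subseteq R_v^{\leq}$, and the symmetric argument gives equality. Note also that $u$ reaches $v$ (and $v$ reaches $u$) directly along $e$.

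To finish, for each $x\in S$ I fix one weakly locally earliest edge $e_x\in L$ incident to $x$, and consider the assignment $x\mapsto e_x$. If it were not injective, two distinct $x,y\in S$ would satisfy $e_x=e_y$; since that edge is incident to both $x$ and $y$, it must equal $(x,y)$, so by the preliminary $R_x^{\leq}=R_y^{\leq}$ and moreover $x\in R_y^{\leq}$. Then every vertex dominated by $x$ is already dominated by $y\in S\setminus\{x\}$, so $S\setminus\{x\}$ is still a nonstrict TaRDiS, contradicting the minimality of $S$. Hence $x\mapsto e_x$ is injective and $|S|\le|L|$. The only point requiring any care — and thus the ``main obstacle,'' though a mild one — is precisely ruling out two chosen TaRDiS vertices sharing the same witnessing edge; everything else is immediate from Lemma~\ref{lem:canonical} and the reachability equality above, which is why the corollary mirrors Lemma~54 of \cite{balev_temporally_2023}.
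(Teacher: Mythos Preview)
Your proof is correct and follows the same approach the paper intends: the corollary is stated immediately after Lemma~\ref{lem:canonical} with no explicit proof, so it is meant to be read off from that lemma. Your injectivity argument (ruling out two TaRDiS vertices sharing a witnessing weakly locally earliest edge via $R_x^{\leq}=R_y^{\leq}$ and minimality of $S$) is precisely the detail needed to make the implicit step rigorous.
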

\smallskip

We now show that the classical problem \textsc{Dominating Set} is a special case of \textsc{Strict TaRDiS}. \textsc{Dominating Set} is known to be NP-complete, even on planar graphs with maximum degree $\Delta=3$ (hence para-NP-hard with respect to $\Delta$) \cite{garey_simplified_1974}, and W[2]-hard with respect to the size of the dominating set \cite{downey_fixed-parameter_1995}.

\smallskip
\begin{lemma}\label{lem:tar2ds}
    For all positive integers $\tau$ and instances $(G,k)$ of \textsc{Dominating Set}, a temporal graph $\mathcal{G}$ of lifetime $\tau$ can be found in linear time, where $(\mathcal{G},k)$ is a yes-instance of \textsc{Strict TaRDiS} if and only if $(G,k)$ is a yes-instance of \textsc{Dominating Set}.
\end{lemma}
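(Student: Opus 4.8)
The plan is to reduce \textsc{Dominating Set} to \textsc{Strict TaRDiS} by converting ``closed neighbourhood in $G$'' into ``strict temporal reachability''. The core idea: if every edge of $G$ is made active at the single time $1$, then any strict temporal path inside $V$ has length at most one, so the reachability set of $v\in V$ is exactly $N_G[v]$; hence strict TaRDiSes coincide with dominating sets. This already settles $\tau=1$: take $\mathcal G=(G,\lambda)$ with $\lambda\equiv 1$ (the degenerate subcase where $G$ has no edges is trivial, as then both a minimum dominating set and a minimum TaRDiS have size $|V|$).

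For $\tau\ge 2$ I need to pad the lifetime up to $\tau$ without changing the optimal value. I would fix an arbitrary $v_0\in V$, add one new pendant vertex $w$ adjacent only to $v_0$, keep $\lambda(e)=\{1\}$ for every $e\in E(G)$, and set $\lambda((v_0,w))=[\tau]$. Then $\mathcal G$ has lifetime exactly $\tau$, every snapshot is nonempty (each contains $(v_0,w)$), and $\mathcal G$ is clearly built from $G$ in linear time. The reachability analysis is short: since every $G$-edge is at time $1$, a strict path of length $\ge 2$ must use an edge active at time $\ge 2$, and the only such edge is $(v_0,w)$; such a path must reach $v_0$ at time $1$ and then terminate at $w$ (degree one). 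Consequently, for $v\in V$ we get $R^<_v(\mathcal G)=N_G[v]$ if $v_0\notin N_G[v]$ and $R^<_v(\mathcal G)=N_G[v]\cup\{w\}$ if $v_0\in N_G[v]$, while $R^<_w(\mathcal G)=\{v_0,w\}$. In particular $R^<_v(\mathcal G)\cap V=N_G[v]$ for every $v\in V$.

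With this, both directions are routine. If $D$ is a dominating set of $G$, then viewed inside $V(\mathcal G)$ it reaches every vertex of $V$ via closed neighbourhoods, and it reaches $w$ too: some $d\in D$ lies in $N_G[v_0]$, and any such $d$ has $w\in R^<_d(\mathcal G)$. Conversely, given a strict TaRDiS $S$ of $\mathcal G$, note $R^<_w(\mathcal G)=\{v_0,w\}\subseteq R^<_{v_0}(\mathcal G)$, so replacing $w$ by $v_0$ (if $w\in S$) yields a TaRDiS of the same size contained in $V$; then for each $u\in V$ there is $s$ in this set with $u\in R^<_s(\mathcal G)\cap V=N_G[s]$, so $s$ dominates $u$. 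Hence $(\mathcal G,k)$ is a yes-instance of \textsc{Strict TaRDiS} iff $(G,k)$ is a yes-instance of \textsc{Dominating Set}.

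The part needing the most care is making the padding gadget ``invisible'' to the optimisation, and both relevant facts hinge on keeping the $G$-edges synchronised at a single time: that reachability restricted to $V$ is \emph{exactly} $N_G[v]$ (so a TaRDiS vertex cannot exploit $w$ to cover extra vertices of $V$), and that $w$ is dominated for free precisely because every dominator of $v_0$ reaches $w$ through the late edge $(v_0,w)$. It is worth stressing that the gadget genuinely requires $\tau\ge 2$: at $\tau=1$ temporal reachability collapses to adjacency, so a pendant on $v_0$ would force $v_0$ or $w$ into every TaRDiS and break the correspondence --- which is why $\tau=1$ is handled directly.
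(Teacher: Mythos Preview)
Your proof is correct and follows essentially the same idea as the paper: put every edge of $G$ at time~$1$ so that strict reachability inside $V$ collapses to the closed neighbourhood, then attach a small gadget to pad the lifetime up to $\tau$ without disturbing the correspondence. The only difference is in the padding gadget. The paper appends a \emph{path} of $\tau-1$ new vertices to a chosen vertex, with the path edges assigned the distinct times $2,3,\ldots,\tau$; you instead attach a single pendant $w$ whose edge is active at every time in $[\tau]$. Both gadgets are reached for free by any dominator of the anchor vertex and contribute nothing to reachability back into $V$, so the equivalence argument is the same in spirit. The practical distinction is that the paper's construction yields a \emph{simple} temporal graph (each edge active exactly once), whereas yours does not; this does not matter for the lemma as stated or for the corollary that follows, but it is worth being aware of if one later wants the hardness to hold for simple inputs.
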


\begin{proof}
    The construction of $\mathcal{G}$ can be seen in Figure \ref{fig:trivial_domset_reduction}. 
    We append a path $P$ of length $\tau-1$ to an arbitrary vertex $v$ in $G$. The temporal assignment of $\mathcal{G}$ is defined as follows: all edges in $E(G)$ are assigned time 1, and the edges in $P$ are assigned times in ascending order from $v$. It is clear that there exists a dominating set of cardinality $k$ in $G$ if and only if there exists a strict TaRDiS of cardinality $k$ in $\mathcal{G}$.
\end{proof}
We note that, in our construction, $k$ is unchanged and the maximum degree of $\mathcal{G}_{\downarrow}$ is bounded by $\Delta(G)+1$; giving us the following corollary.

\smallskip
\begin{corollary}\label{cor:tardis-ds}
    \textsc{Strict TaRDiS} is NP-complete, W[2]-hard with respect to $k$, and para-NP-hard with respect to $\Delta+\tau$, where $\Delta$ is the maximum degree of the footprint graph and $\tau$ is the lifetime of the temporal graph.
\end{corollary}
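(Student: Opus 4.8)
The plan is to obtain all three claims as immediate consequences of Lemma~\ref{lem:tar2ds} together with Lemma~\ref{lem:innp} and the known classical and parameterized hardness of \textsc{Dominating Set}; no new construction is required. Membership in NP is handed to us directly by Lemma~\ref{lem:innp}. For NP-hardness, I would apply Lemma~\ref{lem:tar2ds} with any fixed lifetime (for concreteness $\tau=1$, in which case the appended path has length $0$ and $\mathcal{G}$ is simply $G$ with every edge active at time $1$): it produces in linear time a temporal graph $\mathcal{G}$ such that $(\mathcal{G},k)$ is a yes-instance of \textsc{Strict TaRDiS} if and only if $(G,k)$ is a yes-instance of \textsc{Dominating Set}. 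Since \textsc{Dominating Set} is NP-hard \cite{garey_simplified_1974}, this gives NP-hardness, and hence NP-completeness, of \textsc{Strict TaRDiS}.

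For W[2]-hardness with respect to $k$, I would observe that the reduction of Lemma~\ref{lem:tar2ds} leaves the parameter $k$ unchanged, so it is a parameterized reduction from \textsc{Dominating Set} parameterized by solution size to \textsc{Strict TaRDiS} parameterized by $k$. As \textsc{Dominating Set} is W[2]-hard with respect to solution size \cite{downey_fixed-parameter_1995}, the same follows for \textsc{Strict TaRDiS}.

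For para-NP-hardness with respect to $\Delta+\tau$, I would start from the fact that \textsc{Dominating Set} remains NP-hard on planar graphs of maximum degree $3$ \cite{garey_simplified_1974}, and again apply Lemma~\ref{lem:tar2ds} with $\tau$ fixed to a small constant (say $\tau=1$). As noted after Lemma~\ref{lem:tar2ds}, the footprint $\mathcal{G}_\downarrow$ of the output has maximum degree at most $\Delta(G)+1=4$ (the only vertex whose degree increases is the single vertex $v$ to which the path $P$ is attached), and its lifetime is the chosen constant. Hence \textsc{Strict TaRDiS} is NP-hard already on the slice where $\Delta+\tau$ is bounded by an absolute constant, which is exactly para-NP-hardness with respect to $\Delta+\tau$.

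There is essentially no technical obstacle here: the content lies entirely in the simple construction of Lemma~\ref{lem:tar2ds}. The only points to state carefully are that the reduction is valid for the specific small $\tau$ used in each claim (in particular that for $\tau=1$ it reduces to the identity transformation on $G$ with all edges at time $1$), and that the degree increase of $1$ affects only the attachment vertex, so both the degree bound and, if desired, planarity of the footprint are preserved.
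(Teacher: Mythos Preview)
Your proposal is correct and follows essentially the same approach as the paper: derive NP-membership from Lemma~\ref{lem:innp}, and derive all hardness claims directly from Lemma~\ref{lem:tar2ds} together with the known hardness of \textsc{Dominating Set}, noting that $k$ is preserved and the maximum degree of the footprint increases by at most one. Your choice to instantiate with $\tau=1$ is a harmless specialization (and in fact makes the degree bound even tighter, since the appended path is then empty).
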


\begin{figure}[!ht]
    \centering
    \includegraphics[page=45, width=\linewidth]{IPE_TaRDiS_figs.pdf}
    \caption{Illustration of the construction used to reduce from \textsc{Dominating Set} to \textsc{Strict TaRDiS} with arbitrary lifetime $\tau$.}
    \label{fig:trivial_domset_reduction}
\end{figure}

We now consider the problems \textsc{Nonstrict TaRDiS} and \textsc{Happy TaRDiS} with very restricted lifetimes.

\smallskip
\begin{lemma}\label{lem:smalltau_tardis}
    \textsc{Nonstrict TaRDiS} can be computed in linear time when $\tau=1$. When $\tau\le 2$, \textsc{Happy TaRDiS} is solvable in linear time. 
\end{lemma}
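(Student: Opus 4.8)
The plan is to handle the two assertions separately; in both, the tiny lifetime collapses temporal paths to short, local objects. For \textsc{Nonstrict TaRDiS} with $\tau = 1$, every edge is active only at time $1$, so the monotonicity condition $t_i \le t_{i+1}$ along a nonstrict temporal path is vacuous and a nonstrict temporal path is simply a path in the footprint $\mathcal{G}_\downarrow$. Consequently $v \in R_u^{\le}(\mathcal{G})$ if and only if $u$ and $v$ lie in the same connected component of $\mathcal{G}_\downarrow$, so $S$ is a nonstrict TaRDiS exactly when it intersects every connected component, and the minimum size of a nonstrict TaRDiS equals the number of connected components of $\mathcal{G}_\downarrow$. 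That number is computed by one graph search and compared against $k$ in linear time. Note this argument fails for \textsc{Strict TaRDiS}, where with $\tau=1$ a strict path has length at most $1$ and the problem is literally \textsc{Dominating Set}, consistently with Corollary~\ref{cor:tardis-ds}.

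For \textsc{Happy TaRDiS} with $\tau \le 2$, I would first use that $\mathcal{G}$ is proper: each snapshot is a matching, so $\mathcal{G}_\downarrow$ is a union of at most two matchings, hence a disjoint union of paths and cycles, and the edge-times along any such path or cycle must alternate between $1$ and $2$ because a footprint vertex of degree $2$ cannot meet two edges in the same snapshot. Next, since $\mathcal{G}$ is happy, strict and nonstrict reachability coincide, and a strict temporal path with times in $\{1,2\}$ uses at most two time-edges; a short case analysis on the parity of the edge labels then shows that, for every vertex $u$, the reachability set $R_u$ is a contiguous sub-path of the path- or cycle-component containing $u$, of at most $4$ vertices, and that all these sets can be listed in linear time. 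Finding a minimum TaRDiS restricted to one path component is then exactly the classical problem of stabbing the vertices $v_0,\dots,v_\ell$ of that path by the fewest of these intervals, solved greedily in linear time; for a cycle component one additionally guesses the interval covering a fixed vertex --- only $O(1)$ choices, as each vertex belongs to boundedly many reachability sets --- and reduces to the path case. Adding the per-component optima over all components of $\mathcal{G}_\downarrow$ and comparing to $k$ finishes in linear time.

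The only mildly delicate step is the case analysis certifying that each $R_u$ is a short interval of its component; everything downstream is routine interval/arc covering, and the $\tau=1$ part is immediate. I do not anticipate a genuine obstacle --- the boundedness of $\tau$ is doing all the work --- and I would remark that the argument breaks precisely at $\tau = 3$, where $\mathcal{G}_\downarrow$ need no longer have maximum degree $2$, in line with the hardness established in Theorem~\ref{thm:happytardis}.
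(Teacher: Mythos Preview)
Your proposal is correct and matches the paper's approach in spirit. For $\tau=1$ the arguments are identical. For \textsc{Happy TaRDiS} with $\tau\le 2$, the paper also observes that $\mathcal{G}_\downarrow$ decomposes into paths and even cycles with alternating labels, then gives a direct greedy sweep along each path (and breaks a cycle by fixing one TaRDiS vertex on a time-$1$ edge before reducing to the path case); your interval-cover formulation is a slightly more abstract packaging of the same greedy, and your cycle-handling by guessing the $O(1)$ intervals covering a fixed vertex is the natural counterpart to the paper's choice. One terminological quibble: what you call ``stabbing'' is really interval \emph{cover} (choose fewest intervals whose union contains all points), not hitting set; the algorithm is the same either way.
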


\begin{proof}
If every edge is active at the same time, every vertex can be temporally reached by a nonstrict path from any vertex in the same connected component. Therefore, if $\tau=1$, we find a minimum TaRDiS by choosing exactly one vertex in every connected component of the graph.

If a temporal graph is happy and the lifetime of the graph is at most 2, the footprint of the graph must be 2-edge colourable, and so consist of only paths or even cycles. Any TaRDiS must include every vertex which has no neighbours. Then, for any paths, we choose a leaf and add its neighbour to the TaRDiS. Following this, we work along the path adding an endpoint of an edge if it is active before the edge previously considered. We use a similar method on even cycles by simply picking an arbitrary vertex $v$ incident to an edge in $E_1(\mathcal{G})$ to be in the TaRDiS. From here, we find the subgraph induced by removing the vertices reachable from $v$, giving us a path to which we can apply the previous procedure.
\end{proof}

\subsection{NP-completeness of \textsc{Happy TaRDiS} with lifetime 3}\label{sec:np-happy-lifetime-3}

Above, we establish that \textsc{Happy TaRDiS} can trivially be solved in linear time when the input has lifetime $\tau \leq 2$ by Lemma~\ref{lem:smalltau_tardis}. Here we show that the problem immediately becomes NP-complete for inputs where $\tau=3$, even when $\mathcal{G}_{\downarrow}$ is planar.

We give a reduction from the problem \textsc{Planar exactly 3–bounded 3–SAT}, which asks whether the input Boolean formula $\phi$ admits a satisfying assignment. We are guaranteed that $\phi$ is a planar formula in 3-CNF (a disjunction of clauses each containing at most 3 literals) with each variable appearing exactly thrice and each literal at most twice. This problem is shown to be NP-hard by Tippenhauer and Muzler \cite{tippenhauer2016planar}.

\subsubsection{Construction} Let $\phi$ be an instance \textsc{Planar exactly 3–bounded 3–SAT} consisting of clauses $\{c_1, \ldots, c_m\}$ over variables $X=\{x_1, \ldots, x_n\}$. 
Let $l_{2i}$ (resp. $l_{2i-1}$) denote the positive (resp. negative) literal for variable $x_i$, and $l_{2n+1}$ denote the special literal $\bot$, which is always \texttt{False}.
We rewrite every 2-clause of $\phi$ to include the special literal $\bot$ (e.g. the clause $(x_i\lor \lnot x_j)$ becomes $(x_i\lor \lnot x_j \lor \bot)$ then $(l_{2i} \lor l_{2j-1} \lor l_{2n+1})$). Let $\phi '$ denote the new formula obtained by doing this. Note that $\phi'$ admits a satisfying assignment (in which $\bot$ evaluates to \texttt{False}) if and only if $\phi$ admits a satisfying assignment. 

\begin{figure}[!ht]
    \centering
    \includegraphics[page=9, width =\textwidth]{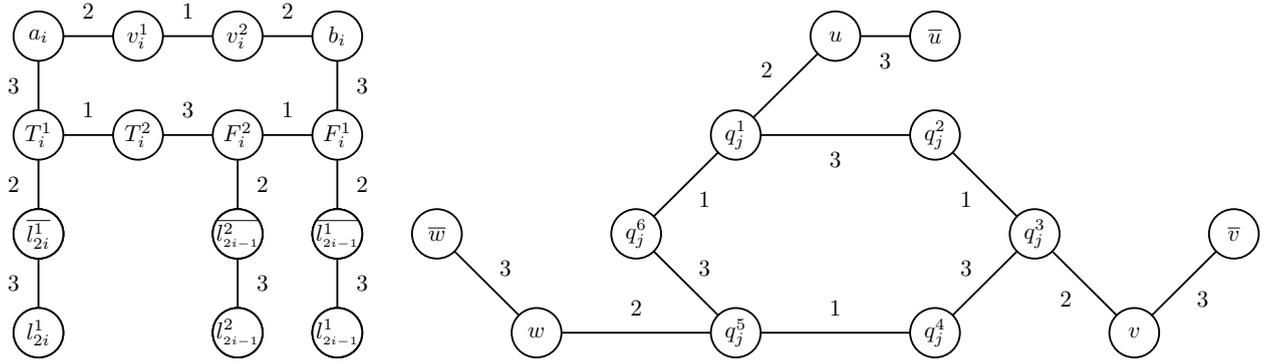}
    \caption{Gadgets and adjacent literal vertices in the reduction from \textsc{Planar exactly 3–bounded 3–SAT} to \textsc{Happy TaRDiS}. Left: the variable gadget for $x_i$, which appears twice negatively. Right: the clause gadget for the clause $c_j$.}
    \label{fig:happy_tardis}
\end{figure}

In this construction $E(\mathcal{G})=E_1\cup E_2 \cup E_3$; $\lambda$ is implicitly defined on this basis. We introduce \emph{literal vertices} as follows. Iterate over $\phi'$: at the $a$th appearance of each literal $l_i$, create two new vertices $l_i^a$ and $\overline{l_i^a}$, and let $(l_i^a, \overline{l_i^a})\in E_3$. We refer to the set of all literal vertices as $L$. We say the vertex $l_i^a$ \emph{belongs} to the clause $c_j$ in which $l_i$ appears for the $a$th time.

For each variable $x_i$, we introduce vertices $V_i=\{T_i^1,T_i^2,F_i^1,F_i^2,v_i^1,v_i^2,a_i,b_i\}$, connected in a cycle labeled as shown in Figure~\ref{fig:happy_tardis}. Specifically, we have $(T_i^1,T_i^2),(F_i^1,F_i^2),(v_i^1,v_i^2)\in E_1$, $(a_i,v_i^1),(b_i,v_i^2) \in E_2$, and $(a_i,T_i^1),(b_i,F_i^1) \in E_3$. Then, for each positive (resp. negative) literal vertex $\overline{l_{2i}^a}$ (resp. $\overline{l_{2i-1}^a}$), we add $(\overline{l_{2i}^a}, T_i^a)$ to $E_2$.

Every three literals vertices $u,v,w$ belonging to the same clause $c_j$ are connected with the \emph{clause gadget} as shown in Figure~\ref{fig:happy_tardis}. Namely, we introduce the vertices $Q_j=\{q_j^1,\ldots,q_j^6\}$, with $(q_j^1,q_j^2),(q_j^3,q_j^4),(q_j^5,q_j^6)\in E_3$, $(q_j^2,q_j^3),(q_j^4,q_j^5),(q_j^6,q_j^1)\in E_1$, and $(q_j^1,u),(q_j^3,v),(q_j^5,w)\in E_2$. If, for example, clause $c_j=(l_{5}\lor l_{17} \lor l_{20})$ is the first appearance of $l_{5}$ and $l_{20}$ and the second appearance of $l_{17}$ in $\Phi$, then $u=l_{5}^1$, $v=l_{17}^2$ and $w=l_{20}^1$.

Lastly, we let $k=2m + 2n$. This concludes our construction of the \textsc{Happy TaRDiS} instance $(\mathcal{G}, k)$.

\subsubsection{Properties of the construction} We will show that $\mathcal G$ admits a TaRDiS $S$ of size $k$ if $\phi$ is satisfiable, and that any TaRDiS is of size at least $k+1$ otherwise. It is worth noting that our construction preserves planarity of $\phi$. That is, $\mathcal G_\downarrow$ is a planar graph.

We now show that the existence of a TaRDiS of size at most $k$ implies that of a satisfying assignment.

\smallskip
\begin{lemma}\label{lem:canonical_happy_tardis}
    Any \emph{canonical} TaRDiS of $\mathcal G$ includes at least 2 vertices from each variable gadget and at least 2 vertices from each clause gadget, and so has size at least $k$.
\end{lemma}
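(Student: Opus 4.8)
The plan is to work only with canonical TaRDiSes, first pinning down exactly which vertices one may contain and then arguing gadget by gadget via a short list of ``certificate'' vertices that are hard to reach.

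Since $\mathcal{G}$ is happy it is in particular proper, so by Lemma~\ref{lem:canonical} every canonical TaRDiS consists only of vertices incident to a locally earliest edge. I would begin by showing that the locally earliest time-edges of $\mathcal{G}$ are exactly those active at time $1$: any time-$1$ edge is locally earliest, since properness forbids a second incident time-$1$ edge and every other incident edge is at time $2$ or $3$; no edge of $E_2$ is locally earliest, since one of its endpoints is also an endpoint of an $E_1$ edge (active at time $1<2$); and no edge of $E_3$ is locally earliest, since every vertex of $\mathcal{G}$ has degree at least two, so by properness each endpoint of an $E_3$ edge is incident to an edge at time $1$ or $2$. Consequently the only vertices a canonical TaRDiS may use inside a variable gadget are the $E_1$-endpoints $\{T_i^1,T_i^2,F_i^1,F_i^2,v_i^1,v_i^2\}$ (in particular not $a_i$ or $b_i$), and inside a clause gadget they are all of $q_j^1,\dots,q_j^6$.

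For a variable gadget $V_i$ I would take as certificate vertices $a_i$, $b_i$ and $T_i^2$. A routine backward temporal-reachability computation (using that an edge active at time $2$ or $3$ can only be the opening edge of a strict temporal path out of an endpoint with no earlier incident edge, and that literal vertices, although they can reach into the gadget, are never candidates) shows that the candidate vertices reaching $a_i$ form a subset of $\{v_i^1,v_i^2,T_i^1,T_i^2\}$, those reaching $b_i$ a subset of $\{v_i^1,v_i^2,F_i^1,F_i^2\}$, and those reaching $T_i^2$ a subset of $\{T_i^1,T_i^2,F_i^1,F_i^2\}$. These three sets have empty common intersection, and each is contained in $V_i$; hence if a canonical TaRDiS $S$ had at most one vertex in $V_i$ it could not dominate all of $a_i$, $b_i$ and $T_i^2$, so $|S\cap V_i|\ge 2$. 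The clause gadget is handled by the same template with certificate vertices $q_j^1,q_j^3,q_j^5$: exploiting the $2$-step rotational symmetry of the gadget one checks that the candidates reaching $q_j^1$ lie in $\{q_j^1,q_j^2,q_j^3,q_j^6\}$, those reaching $q_j^3$ in $\{q_j^2,q_j^3,q_j^4,q_j^5\}$, and those reaching $q_j^5$ in $\{q_j^4,q_j^5,q_j^6,q_j^1\}$, again three subsets of $Q_j$ with empty common intersection, so $|S\cap Q_j|\ge 2$.

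Since $V_1,\dots,V_n$ and $Q_1,\dots,Q_m$ are pairwise disjoint, summing the two bounds gives $|S|\ge 2n+2m=k$. The main obstacle is the reachability bookkeeping in the middle two steps: for each certificate vertex one must enumerate the \emph{complete} set of candidate vertices that temporally reach it and, crucially, verify that this set never leaves the gadget. This is exactly where the placement of times $1$, $2$, $3$ does the work, since the time-$2$ and time-$3$ edges stop any outside candidate from arriving early enough to traverse them, while the only vertices that could otherwise bridge gadgets, the literal vertices, fall outside the canonical regime established in the first step.
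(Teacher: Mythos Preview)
Your proposal is correct and shares the paper's high-level scaffolding: both first argue that literal vertices cannot belong to a canonical TaRDiS (confining candidates to gadget vertices), and then argue gadget by gadget that one candidate is never enough. The difference lies in the within-gadget argument. The paper uses a short counting bound: it observes that no vertex of $V_i$ reaches any vertex of $Q_j$ or vice versa, and then that every vertex of $Q_j$ reaches at most four of the six vertices of $Q_j$ (respectively, every vertex of $V_i$ reaches at most seven of its eight vertices), so a single candidate cannot dominate the whole gadget. Your certificate method instead picks three explicit targets per gadget and shows their sets of reachers have empty common intersection. Your route is a bit more laborious (you fully characterise the locally earliest edges and exclude $a_i,b_i$, which the paper does not need to do, and you must enumerate reachers for six certificates), but it buys you an explicit witness to the failure of any single-vertex cover, whereas the paper's argument is terser but relies on the reader accepting the uniform reach bound. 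Either is fine; the paper's version is simply more compressed.
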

\begin{proof}
    Recall that a \emph{canonical} TaRDiS $S$ by definition is incident only to locally earliest edges. In the case of $\mathcal G$, none of the \emph{literal vertices} can belong to $S$, i.e. $S\subseteq V \setminus L$ where $L$ is the set of literal vertices. 
    For each variable $x_i$ and clause $c_j$, no vertex from $V_i$ reaches any vertex from $Q_j$, and vice versa. Note that, for all $j\in [m]$, every vertex in $Q_j$ reaches exactly 4 other vertices in $Q_j$, for example $R_{q_j^3}=\{q_j^1,q_j^2,q_j^3,q_j^4\}$. Thus, $|S\cap Q_j|\geq 2$, else $S$ would not reach every vertex in $Q_j$. Similarly, for all $i\in[n]$, every vertex in $V_i$ reaches at most 6 other variable gadget vertices, so $|S\cap V_i|\geq 2$. 
  \end{proof}

\begin{lemma}\label{lem:if_happy_tardis_k_then_sat}
If $\mathcal G$ admits a TaRDiS $S$ of size $k$ then $\phi$ is satisfiable.
\end{lemma}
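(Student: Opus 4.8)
The plan is to extract a satisfying assignment of $\phi$ directly from the structure that a size-$k$ TaRDiS is forced to have. First I would reduce to a canonical solution. Since $\mathcal{G}$ is happy, hence proper, Lemma~\ref{lem:canonical} gives a canonical TaRDiS of size at most $|S| = k$, while Lemma~\ref{lem:canonical_happy_tardis} shows every canonical TaRDiS has size at least $k = 2m + 2n$; so I may assume $S$ itself is canonical with $|S| = k$, whence the count is tight gadget-by-gadget: $|S \cap V_i| = 2$ for each variable $x_i$, $|S \cap Q_j| = 2$ for each clause $c_j$, and $S \cap L = \emptyset$ (no literal vertex is incident to a locally earliest edge). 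The point of this tight budget is that each gadget must be dominated from within itself: since $S \cap L = \emptyset$ and no vertex of one gadget ($V_i$ or $Q_j$) reaches a vertex of a different gadget, $V_i$ is reachability dominated by the two vertices of $S \cap V_i$ and $Q_j$ by the two vertices of $S \cap Q_j$.

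Next I would read an assignment off the $S \cap V_i$. A finite reachability check — valid in both the strict and nonstrict settings since they coincide on happy graphs — shows that $v_i^1$ and $v_i^2$ can only be reached from vertices of $V_i \setminus \{T_i^1, T_i^2, F_i^1, F_i^2\}$. As these two vertices must be dominated by $S \cap V_i$, that pair contains at most one of $T_i^1, T_i^2, F_i^1, F_i^2$. I set $\alpha(x_i) = \texttt{True}$ iff $S \cap \{T_i^1, T_i^2\} \neq \emptyset$ (so then $S \cap \{F_i^1, F_i^2\} = \emptyset$), and $\alpha(x_i) = \texttt{False}$ otherwise (so then $S \cap \{T_i^1, T_i^2\} = \emptyset$). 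A second finite check records the dichotomy that drives the argument: a literal vertex has exactly two neighbours, namely $\overline{l_i^a}$ and the $q$-vertex of the clause it belongs to, so (as $S \cap L = \emptyset$) every path from $S$ to it arrives either through that $q$-vertex — the \emph{clause side} — or through $\overline{l_i^a}$ — the \emph{variable side}. On the variable side, for a positive literal vertex $l_{2i}^a$ the path must continue $T_i^a \to \overline{l_{2i}^a} \to l_{2i}^a$, which is possible precisely when $T_i^a$ is reached by time $1$, i.e.\ precisely when $S \cap \{T_i^1, T_i^2\} \neq \emptyset$, i.e.\ when $\alpha(x_i) = \texttt{True}$; symmetrically a negative literal vertex uses $F_i^a$ and needs $\alpha(x_i) = \texttt{False}$; and a $\bot$-literal vertex (one of those created for $l_{2n+1}$) has no variable side at all.

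The main work, and the step I expect to be the real obstacle, is the clause gadget. A reachability computation inside $Q_j$ shows its six vertices split into three types $A = \{q_j^1, q_j^6\}$, $B = \{q_j^2, q_j^3\}$, $C = \{q_j^4, q_j^5\}$ with in-$Q_j$ reachability sets $\{q_j^1, q_j^2, q_j^5, q_j^6\}$, $\{q_j^1, q_j^2, q_j^3, q_j^4\}$ and $\{q_j^3, q_j^4, q_j^5, q_j^6\}$ respectively; consequently two vertices dominate all of $Q_j$ if and only if they lie in two \emph{distinct} types. Hence $S \cap Q_j$ meets exactly two of $A, B, C$; say it misses $A$. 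Then $q_j^1$ cannot be reached from $S$ by time $1$ — its only edge active by time $1$ goes to $q_j^6$, and reaching $q_j^6$ by time $1$ would require a type-$A$ vertex in $S$ — so the literal vertex attached to $q_j^1$ is not dominated on its clause side; symmetrically for $B$ and $C$. Thus exactly one literal vertex $u$ of $c_j$ is not dominated on its clause side, and, $S$ being a TaRDiS, it must be dominated on its variable side. By the dichotomy above this rules out $u$ being a $\bot$-literal vertex, so the missed type corresponds to a genuine literal $l$ of $c_j$, and variable-side domination of $u$ forces $\alpha$ to satisfy $l$: if $l = x_i$ then $S \cap \{T_i^1, T_i^2\} \neq \emptyset$, so $\alpha(x_i) = \texttt{True}$; if $l = \lnot x_i$ then $S \cap \{F_i^1, F_i^2\} \neq \emptyset$, so $S \cap \{T_i^1, T_i^2\} = \emptyset$ by the variable-gadget bound and $\alpha(x_i) = \texttt{False}$. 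So every clause of $\phi'$ — hence of $\phi$ — is satisfied by $\alpha$, and $\phi$ is satisfiable. The whole argument rests on the two finite reachability computations (that $v_i^1, v_i^2$ are unreachable from $\{T_i^1, T_i^2, F_i^1, F_i^2\}$, and the three-type structure of $Q_j$); the remainder is bookkeeping on the $2m + 2n$ budget.
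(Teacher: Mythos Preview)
Your proposal is correct and follows the same skeleton as the paper's proof: reduce to a canonical TaRDiS via Lemma~\ref{lem:canonical}, apply Lemma~\ref{lem:canonical_happy_tardis} to get the tight per-gadget budget $|S\cap V_i|=2$, $|S\cap Q_j|=2$, $S\cap L=\emptyset$, and then read off an assignment from $S\cap V_i$.

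The only difference is stylistic. The paper argues by contradiction: if some clause $c_j$ were unsatisfied, all three of its literal vertices would have to be reached from inside $Q_j$, forcing $|S\cap Q_j|\ge 3$. You argue the contrapositive directly, via a more explicit structural analysis of $Q_j$ (the three ``types'' $A,B,C$), identifying exactly which literal vertex fails to be reached from the clause side and then deducing it is reached from the variable side. Your variable-gadget step (at most one of $\{T_i^1,T_i^2,F_i^1,F_i^2\}$ in $S$, because $v_i^1,v_i^2$ are unreachable from that set) is a mild variant of the paper's observation that $a_i,b_i$ are not incident to locally earliest edges and hence lie outside a canonical $S$. Your handling of the $\bot$-literal is an extra detail needed only because you argue constructively; the paper's contradiction argument sidesteps it. Both routes are equally valid and of comparable length.
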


\begin{proof}
Suppose $\mathcal G$ admits a TaRDiS $S$ of size at most $k$. We first apply Lemma~\ref{lem:canonical}, which allows us to assume that $S$ is canonical. Then, by Lemma~\ref{lem:canonical_happy_tardis}, we have that $S$ is of size \emph{exactly} $k$, and that for all $i$ and $j$, $|S\cap V_i|= 2$ and $|S\cap Q_j|=2$.

We show that a satisfying assignment $X$ to the variables of $\phi$ can be obtained from $S$. For each variable $x_i$, assign $x_i=\texttt{True}$ if $T_i^1$ or $T_i^2\in S$, and $x_i=\texttt{False}$ otherwise. Note that, since $S$ is a canonical TaRDiS, $a_i$ and $b_i$ are not in $S$. Hence $F_i^1\in S$ or $F_i^2\in S$ if and only if $x_i = \texttt{False}$ in our assignment. Suppose for contradiction there is some clause $c_j$ which is not satisfied. Then, under this assumption, the gadget $c_j$ is incident to three \emph{literal vertices} $\{u,v,w\}$, none of which are reached from their respective \emph{variable gadgets}. Since $S$ is a TaRDiS, $q_j^1 $ or $q_j^6$ is in $S$ (as $u$ must be reached from within $Q_j=\{q_j^1,\ldots,q_j^6\}$), and likewise $q_j^2$ or $q_j^3$ and $q_j^4$ or $q_j^5$ are in $S$ (as $v$ and $w$ respectively must be reached from within $Q_j$). Hence $|Q_j\cap S|\geq 3$. By applying Lemma~\ref{lem:canonical_happy_tardis} we get that $|S|\geq 2m + 2n + 1 = k+1$, a contradiction. Hence $X$ must satisfy $\phi$.
  \end{proof}

\begin{lemma}\label{lem:if_sat_then_happy_tardis_k}
    If $\phi$ is satisfiable then $\mathcal G$ admits a TaRDiS $S$ of size $k$.
\end{lemma}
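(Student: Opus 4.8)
The plan is to turn a satisfying assignment of $\phi$ directly into a TaRDiS of size exactly $k=2n+2m$ by choosing two vertices from each variable gadget and two from each clause gadget, and then checking that the resulting set $S$ temporally reachability dominates $\mathcal G$.

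First, fix a satisfying assignment $\sigma$ of $\phi$ (equivalently of $\phi'$, with $\bot$ evaluating to \texttt{False}). For each variable $x_i$ I would add to $S$ two vertices of $V_i$ selected according to $\sigma(x_i)$: one of $T_i^1,T_i^2$ together with one of $v_i^1,v_i^2$ if $\sigma(x_i)=\texttt{True}$, and one of $F_i^1,F_i^2$ together with one of $v_i^1,v_i^2$ if $\sigma(x_i)=\texttt{False}$. Reading the reachability sets off Figure~\ref{fig:happy_tardis}, these two vertices already dominate all of $V_i$; moreover the chosen $T_i$-vertex (resp.\ $F_i$-vertex), by following a time-$2$ edge into a literal vertex $\overline{l_{2i}^a}$ (resp.\ $\overline{l_{2i-1}^a}$) and then the time-$3$ edge $(l_i^a,\overline{l_i^a})$, reaches \emph{every} literal vertex $\overline{l_{2i}^a}$ and $l_{2i}^a$ (resp.\ $\overline{l_{2i-1}^a}$ and $l_{2i-1}^a$) corresponding to an occurrence of the literal of $x_i$ made true by $\sigma$.

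Next, for each clause $c_j$: since $\sigma$ satisfies $c_j$, some literal $\ell_t$ of $c_j$ is true under $\sigma$, and in particular $\ell_t\neq\bot$. I would add to $S$ the two vertices of $Q_j$ which — by the cyclic structure of the clause gadget — dominate all of $Q_j$ together with the literal vertices (and their barred partners) belonging to $c_j$ for the \emph{other} two literals; e.g.\ if $\ell_t$ is attached at $q_j^1$ then $\{q_j^3,q_j^5\}$ works, with the two symmetric choices covering the cases where $\ell_t$ is attached at $q_j^3$ or $q_j^5$. The literal vertices belonging to $c_j$ for $\ell_t$ itself are deliberately left uncovered here; they are dominated from $\ell_t$'s variable gadget by the previous paragraph, precisely because $\ell_t$ is true under $\sigma$.

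It remains to assemble the verification, which is the step requiring the most care because the budget $k=2n+2m$ is exactly tight. We have $|S|=2n+2m=k$; every vertex of every $V_i$ and every $Q_j$ is dominated by construction; and each literal vertex $l_i^a$ with $i\le 2n$ lies in a unique clause $c_j$ and is dominated either from $Q_j$'s two chosen vertices or — if it is the ``left-out'' literal of $c_j$ — from its variable gadget, in either case forcing the barred vertex $\overline{l_i^a}$ to be reached as well (from the clause side via $l_i^a$ at time $2$ then the time-$3$ edge, or directly from the variable side). The only genuinely delicate point is the padding literal $\bot=l_{2n+1}$, whose literal vertices have no variable gadget; but since we always choose the left-out literal of each clause to be a \emph{true} literal, $\bot$ is never left out, so its vertices are always dominated from the clause side. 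Hence $S$ is a TaRDiS of $\mathcal G$ of size $k$, and the hard part is this bookkeeping together with checking the gadget-level reachability claims, rather than any conceptual obstacle.
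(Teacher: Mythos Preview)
Your proposal is correct and follows essentially the same approach as the paper: pick two vertices per variable gadget encoding the truth value (the paper specifically takes $v_i^2$ together with $T_i^1$ or $F_i^1$, but your more flexible ``one of each pair'' also works by symmetry of the gadget), and for each clause pick the two $Q_j$-vertices adjacent to the two non-satisfied literal vertices, leaving the satisfied one to be reached from its variable gadget. Your explicit treatment of the padding literal $\bot$ is a nice addition that the paper leaves implicit.
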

\begin{proof}
    Given a satisfying assignment to $\phi$, we construct $S$ as follows. We start with $S=\cup_i\{v_i^2\}$. Then if $x_i=\texttt{True}$ (respectively $x_i=\texttt{False}$) under the assignment, add $T_i^1$ to $S$ (resp. $F_i^1\in S$). Now $S$ has size $2n$ and a vertex in $S$ reaches the literal vertices for every literal set to $\texttt{True}$ in the assignment, but none of the clause gadget vertices.

    For each clause $c_j$, there is some literal in $c_j$ which is assigned \texttt{True}, and hence at least one literal vertex incident to $Q_j$ is reached from the corresponding variable gadget. Denote this literal vertex $u$, and the other literal vertices incident to $Q_j$ (which may or may not be reached from their respective variable gadgets) $v$ and $w$, respectively. We add the neighbours of $v,w$ in $Q_j$ to $S$. That is, $S=S\cup ((N[v]\cup N[w])\cap Q_j)$. Now $S$ has size $2n+2m$ and reaches: all variable gadget vertices; all clause gadget vertices; and all literal vertices (since every literal vertex is incident to a clause gadget). Hence $S$ is a TaRDiS of size exactly $k$.
  \end{proof}

\begin{theorem}\label{thm:happytardis}
    \textsc{Happy TaRDiS} is NP-complete, even restricted to instances where the footprint of the temporal graph is planar and its lifetime is 3.
\end{theorem}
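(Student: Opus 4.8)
The plan is to assemble the components developed in this section into a single reduction. Membership in NP is immediate from Lemma~\ref{lem:innp}, which covers every variant of \textsc{TaRDiS}; so it suffices to establish NP-hardness under the stated restrictions, and for this I would use the reduction from \textsc{Planar exactly 3–bounded 3–SAT} (NP-hard by~\cite{tippenhauer2016planar}) constructed above.

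First I would check that the map $\phi \mapsto (\mathcal{G}, k)$ is computable in linear time: passing to $\phi'$ increases the size only by a constant factor, each literal occurrence contributes $O(1)$ literal vertices, each variable a constant-size gadget $V_i$, each clause a constant-size gadget $Q_j$, and $k = 2m + 2n$. The temporal assignment is read off from the partition $E(\mathcal{G}) = E_1 \cup E_2 \cup E_3$, so $\tau = 3$; an edge-by-edge inspection of the gadget descriptions confirms that each edge is active exactly once and that each snapshot $G_t$ has maximum degree one, so $(\mathcal{G},\lambda)$ is indeed happy and the instance is a legitimate input to \textsc{Happy TaRDiS}.

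Next I would invoke correctness directly: Lemma~\ref{lem:if_sat_then_happy_tardis_k} turns a satisfying assignment of $\phi$ into a TaRDiS of size $k$, and Lemma~\ref{lem:if_happy_tardis_k_then_sat} turns a TaRDiS of size at most $k$ back into a satisfying assignment, so $(\mathcal{G}, k)$ is a yes-instance of \textsc{Happy TaRDiS} exactly when $\phi$ is satisfiable. Planarity of $\mathcal{G}_\downarrow$ then follows from the remark made in the construction: a planar drawing of $\phi$ can be expanded locally, replacing each variable by $V_i$, each clause by $Q_j$, and each variable–clause incidence by the associated literal vertices and their connecting edges, all without introducing crossings.

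The real work, and the place I expect the main difficulty, lies not in this high-level packaging but in the bookkeeping already carried out in Lemmas~\ref{lem:canonical_happy_tardis}--\ref{lem:if_sat_then_happy_tardis_k}: verifying that canonicalization via Lemma~\ref{lem:canonical} applies (it does, since $\mathcal{G}$ is proper), that the reachability counts inside each $Q_j$ and $V_i$ are exactly as claimed so that the bound $k = 2m+2n$ is tight, and that the sets exhibited in the completeness direction genuinely reach every literal, variable-gadget, and clause-gadget vertex. Granting those lemmas, Theorem~\ref{thm:happytardis} follows as a corollary.
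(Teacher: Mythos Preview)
Your proposal is correct and mirrors the paper's proof almost exactly: invoke Lemma~\ref{lem:innp} for NP membership, note the construction runs in polynomial time and preserves planarity, and combine Lemmas~\ref{lem:if_happy_tardis_k_then_sat} and~\ref{lem:if_sat_then_happy_tardis_k} for the equivalence. You add explicit checks that the resulting $(\mathcal{G},\lambda)$ is happy and that $\tau=3$, which the paper leaves implicit in the construction, but otherwise the argument is the same.
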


\begin{proof}
    We have membership of NP from Lemma~\ref{lem:innp}. The construction above produces an instance $(\mathcal G, k)$ of \textsc{Happy TaRDiS} from an instance $\phi$ of \textsc{Planar Exactly 3–Bounded 3–SAT} in polynomial time. 
    Combining Lemmas~\ref{lem:if_happy_tardis_k_then_sat} and~\ref{lem:if_sat_then_happy_tardis_k} shows that $\mathcal G$ admits a TaRDiS of size at most $k$ if and only if $\phi$ is satisfiable.
  \end{proof}

This result generalises to both \textsc{Strict} and \textsc{Nonstrict TaRDiS}, giving us NP-completeness of all variants with bounded lifetime and a planar footprint.

\subsection{NP-completeness of \textsc{Nonstrict TaRDiS} with lifetime 2}\label{sec:np-nonstrict-lifetime-2}
We show hardness of \textsc{Nonstrict TaRDiS} by reducing from \textsc{Set Cover}, which is known to be NP-complete \cite{book_richard_1975} and W[2]-hard with respect to the parameter $k$ \cite{cygan_parameterized_2015_bugfree}. The \textsc{Set Cover} problem is defined as follows.
\begin{framed}
        \noindent
    \textbf{\textsc{Set Cover}}\\
    \emph{Input:} Universe $\mathcal{U}=\{x_1,\ldots x_n\}$, a family $\mathcal{S}=\{s_i|s_i\subseteq \mathcal{U}\}$ of subsets in $\mathcal{U}$ and an integer $k$.\\
    \emph{Question:} Is there a set $\{s_{i\in I}\}$ with at most $k$ elements such that $\cup_{i\in I}s_i=\mathcal{U}$?
\end{framed}
Given an instance of \textsc{Set Cover}, we build a temporal graph $\mathcal{G}$ with vertex set $V(\mathcal{G})= \mathcal{U} \cup \mathcal{S} \cup \{a_i^j|~\exists x_i\in \mathcal{U}, s_j\in \mathcal{S}: x_i \in s_j\}$. To this vertex set we add the edges:
\begin{itemize}
    \item connecting $s_j$ in a path at time 1 to all $a_i^j$ for all $i$ such that $a^j_i$ exists;
    \item connecting $x_i$ in a path at time 2 to all $a_i^j$ for all $j$ such that $a^j_i$ exists;
    \item $(s_j,s_{j+1})$ at time 2 for all $j\in[1,m-1]$.
\end{itemize}
A sketch of $\mathcal{G}$ can be found in Figure~\ref{fig:setcover2nonstrict}. Note that the maximum degree of $\mathcal{G}$ is 4. We keep the same value of $k$ in our construction; that is, we construct an instance $(\mathcal G, k)$ of \textsc{Nonstrict TaRDiS} from the instance $(\mathcal U, \mathcal S, k)$ of \textsc{Set Cover}.

\smallskip
\begin{lemma}\label{lem:SCtardis}
    Any TaRDiS in $\mathcal{G}$ can be reduced to a TaRDiS of the same or smaller size in $V(\mathcal{G})\cap\mathcal{S}$.
\end{lemma}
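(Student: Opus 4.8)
The plan is to define a replacement map $\sigma$ sending every vertex of $\mathcal{G}$ to a set vertex in $V(\mathcal{G})\cap\mathcal{S}$ so that $R_v^{\leq}\subseteq R_{\sigma(v)}^{\leq}$ for all $v$; then, given any TaRDiS $T$, the set $T':=\{\sigma(v):v\in T\}$ lies in $V(\mathcal{G})\cap\mathcal{S}$, has $|T'|\leq|T|$, and satisfies $\bigcup_{w\in T'}R_w^{\leq}\supseteq\bigcup_{v\in T}R_v^{\leq}=V(\mathcal{G})$, so $T'$ is again a TaRDiS. Everything therefore reduces to an elementary but careful analysis of nonstrict reachability in $\mathcal{G}$.

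First I would record the ``no-return'' property of the construction: every edge of an $s_j$-path carries time $1$, whereas every edge of an $x_i$-path and every edge $(s_j,s_{j+1})$ carries time $2$, so once a nonstrict temporal journey traverses a time-$2$ edge it can never afterwards use a time-$1$ edge. Using this I would compute $R_{s_j}^{\leq}$ explicitly, obtaining $R_{s_j}^{\leq}=\{s_1,\dots,s_m\}\cup\{x_{i'}:x_{i'}\in s_j\}\cup\{a_{i'}^{j'}:x_{i'}\in s_j,\ x_{i'}\in s_{j'}\}$: a journey from $s_j$ either runs along the $s$-path (reaching all $s_{j'}$ and then getting stuck, since the only further edges at an $s_{j'}$ have time $1<2$) or leaves $s_j$ at time $1$ within its own path, drops into some $x_{i'}$-path at time $2$, and then gets stuck.

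With $R_{s_j}^{\leq}$ in hand I would set $\sigma(s_j)=s_j$, $\sigma(a_i^j)=s_j$, and $\sigma(x_i)=s_j$ for some fixed $j$ with $x_i\in s_j$; here I note that we may assume without loss of generality that every element of $\mathcal{U}$ appears in at least one set of $\mathcal{S}$ (otherwise the corresponding $x_i$ is isolated in $\mathcal{G}$, no TaRDiS contained in $\mathcal{S}$ exists, and after this standard preprocessing the claim is vacuous). The three inclusions are checked against the explicit description of $R_{s_j}^{\leq}$: one has $R_{a_i^j}^{\leq}=R_{s_j}^{\leq}$ because a journey from $a_i^j$ either reaches $s_j$ at time $1$ (hence all of $R_{s_j}^{\leq}$) or enters $x_i$'s path at time $2$, landing only on vertices $x_i$ and $a_i^{j'}$ (with $x_i\in s_{j'}$) that already lie in $R_{s_j}^{\leq}$; and $R_{x_i}^{\leq}=\{x_i\}\cup\{a_i^{j'}:x_i\in s_{j'}\}$, every element of which is reached from $s_j$ via $s_j\to a_i^j\to x_i\to a_i^{j'}$ at times $1,2,2$.

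The main (indeed only) obstacle is making the reachability computations genuinely exhaustive rather than merely plausible: one must be sure there is no clever alternating journey that escapes $R_{s_j}^{\leq}$. This is precisely what the no-return property rules out, since after its first time-$2$ step a journey is confined to $x_{i'}$-paths and the $s$-path, both of which are already covered by $R_{s_j}^{\leq}$; phrasing the verification as ``$R_{s_j}^{\leq}$ is closed under one additional edge-step from each vertex type'' makes it rigorous without a case explosion.
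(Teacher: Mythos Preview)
Your proposal is correct and follows essentially the same approach as the paper: both arguments replace each non-$\mathcal{S}$ vertex by an $\mathcal{S}$ vertex whose nonstrict reachability set contains it, using that $R_{x_i}^{\leq}\subseteq R_{a_i^j}^{\leq}=R_{s_j}^{\leq}$. The paper carries this out as a two-step iterative swap ($x_i\mapsto a_i^j$, then $a_i^j\mapsto s_j$) and is terser about the reachability verification, whereas you package it as a single map $\sigma$ and give an explicit computation of $R_{s_j}^{\leq}$ together with the ``no-return'' justification; the extra care is harmless but not required at the level of detail the paper adopts.
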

\begin{proof}
    Suppose there is a TaRDiS $S$ in $\mathcal{G}$ containing a vertex $v\in V(\mathcal{G})\setminus \mathcal{S}$. We have two cases:
    \begin{enumerate}
        \item $v=x_i$ for some $1\leq i\leq n$, or
        \item $v=a^j_i$ for some $1\leq i\leq n$ and $1\leq j\leq m$.
    \end{enumerate}
In the first case, we can swap $x_i$ for any neighbour (or delete $x_i$ if $N[x_i] \subset S$). This must be a vertex $a_i^j$ for some $1\leq j\leq m$. Since the reachability set of $a^j_i$ is a strict superset of that of $x_i$, $S$ remains a TaRDiS when $x_i$ is replaced by its neighbour. This leaves us with case 2. The reachability sets of $a^j_i$ are equal to the reachability set of $s_j$ for all $a_i^j \in V(\mathcal G) \setminus (\mathcal{U} \cup \mathcal{S})$. Therefore, we can replace vertices in $S$ with those corresponding to sets in $\mathcal{S}$ without changing the reachability set of the TaRDiS.
  \end{proof}

\begin{lemma}
    The temporal graph $\mathcal{G}$ admits a TaRDiS of cardinality $k$ if and only if $(\mathcal U, \mathcal S, k)$ is a yes-instance of \textsc{Set Cover}.
\end{lemma}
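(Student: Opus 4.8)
The plan is to prove both directions by combining the structural reduction of Lemma~\ref{lem:SCtardis} with a direct correspondence between set-cover solutions and TaRDiSes supported on the vertex set $\mathcal S$. First I would handle the ``if'' direction: given a set cover $\{s_j : j\in I\}$ of size at most $k$, I claim $S = \{s_j : j\in I\}$ is a TaRDiS of $\mathcal G$. Indeed, each $s_j$ reaches (nonstrictly) all its children $a_i^j$ via the time-$1$ path, and then from each $a_i^j$ one can step to $x_i$ at time $2$, so $s_j$ reaches every $x_i$ with $x_i \in s_j$ and every $a_i^j$ on that path; since $I$ is a cover, every $x_i$ is reached. It remains to check the $s_j$-vertices themselves and the $a_i^{j'}$ for $j'\notin I$ are reached: the horizontal path $(s_j,s_{j+1})$ at time $2$ lets any chosen $s_j$ reach all $s_{j'}$ with $j' \ge j$, and, as $I$ is nonempty (assuming $\mathcal U \neq \emptyset$), the smallest index in $I$ reaches every $s_{j'}$; and every $a_i^{j'}$ has its parent $s_{j'}$ reached this way at time $2$, but we need the edge $(s_{j'},a_i^{j'})$ active at time $\ge 2$ — it is active at time $1$, so this needs a small argument. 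The fix is that $a_i^{j'}$ is also reachable from $x_i$ (reached as above) at time $2$ via the time-$2$ path among the $a_i^{\cdot}$'s and $x_i$; so every $a$-vertex is reached. Hence $S$ is a TaRDiS of size $|I| \le k$.

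For the ``only if'' direction: suppose $\mathcal G$ admits a TaRDiS of size at most $k$. By Lemma~\ref{lem:SCtardis} we may assume it has the form $S = \{s_j : j\in I\} \subseteq \mathcal S$ with $|I| \le k$. I claim $\{s_j : j\in I\}$ is a set cover. Fix $x_i \in \mathcal U$; since $S$ is a TaRDiS, $x_i \in R_v^{\le}$ for some $v = s_j$ with $j\in I$. I would then analyze which vertices $s_j$ can reach: from $s_j$ at time $1$ one reaches its $a$-children $a_i^j$ and other $s$-vertices only at time $2$ (and onward), and from those $s$-vertices at time $2$ one cannot descend to any $a$-vertex (that edge is at time $1 < 2$) nor to any $x_i$ directly. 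The only way to reach an $x_i$ from $s_j$ is: $s_j \to a_i^j$ (time $1$) $\to x_i$ (time $2$), which requires $a_i^j$ to exist, i.e. $x_i \in s_j$. Therefore every $x_i$ reached by $s_j$ satisfies $x_i \in s_j$, so $\bigcup_{j\in I} s_j = \mathcal U$, giving a set cover of size $\le k$.

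The main obstacle — and the step needing the most care — is the reachability analysis in the ``only if'' direction: precisely pinning down $R_{s_j}^{\le}$ and in particular arguing that no ``detour'' through the horizontal $s$-path followed by further edges lets $s_j$ reach an $x_i$ with $x_i\notin s_j$. Here the time structure is essential: reaching another $s_{j'}$ costs reaching time $2$, and all edges from $s_{j'}$ to $a$-vertices are at time $1$, so they cannot be traversed afterwards even nonstrictly; and there is no edge directly from an $s$-vertex to an $x$-vertex. I would state this as an explicit description of $R_{s_j}^{\le} = \{s_{j'} : j' \ge j\} \cup \{a_i^j : x_i\in s_j\} \cup \{x_i : x_i \in s_j\}$ (plus possibly $a$-vertices reached via the $x$-path, which does not affect the argument) and verify it by a short case check over the three edge types. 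Once this description is in hand, both directions follow immediately, completing the proof; combined with membership in NP (Lemma~\ref{lem:innp}) and the bounded maximum degree noted before the lemma, this yields NP-completeness and the stated hardness in $k$ and $\Delta+\tau$.
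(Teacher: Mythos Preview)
Your proposal is correct and follows essentially the same approach as the paper's proof: invoke Lemma~\ref{lem:SCtardis} to reduce to a TaRDiS contained in $\mathcal S$, then argue that $s_j$ reaches $x_i$ if and only if $x_i \in s_j$, and in the forward direction handle the remaining $a$-vertices via the time-$2$ path through the already-reached $x_i$. One minor imprecision: since all edges $(s_j,s_{j+1})$ are at time $2$ and paths are nonstrict, each $s_j$ in fact reaches \emph{every} $s_{j'}$ (not only those with $j'\ge j$), though this does not affect the validity of your argument.
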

\begin{proof}
    Suppose there is a set $\{s_{j\in I}\}$ of cardinality $k$ which is a set cover of $\mathcal{U}$. Then we claim the same set $S$ is a TaRDiS in $\mathcal{G}$. If an element $x_i$ is in the set $s_j$, then by construction of $\mathcal{G}$, there is a path from $s_j$ to $x_i$ via $a_i^j$. Furthermore, all vertices $s_{j'}$ for $1\leq j'\leq m$ are temporally reachable from any $s_j$. Since all $x_i$ appear in at least one of $\{s_{j\in I}\}$, we know that all $x_i$ and all $s_j$ are temporal reachability dominated by $S=\{s_{j\in I}\}$. What remains to check is that all $a_i^j$ are also reached by a vertex in $S$. Each $a_i^j$ is connected to $x_i$ at time 2. Therefore, since all $x_i$ are reached at time 2 from $S$, each $a_i^j$ must also be temporally reachable from $S$.

    Now suppose that there is a TaRDiS $S$ of cardinality $k$ of $\mathcal{G}$. By Lemma~\ref{lem:SCtardis}, we can assume that $S\subseteq \mathcal{S}$. Observe that there is a nonstrict temporal path from a vertex $s_j$ reaches a vertex $x_i$ if and only if $x_i\in s_j$. Hence, a vertex $x_i$ is temporally reachable from some vertex in $S$ if and only if a set $s_j$ containing $x_i$ is in $S$. Since every $x_i$ is reachable from some vertex in $S$, then, $S$ is a set cover. 
  \end{proof}

\begin{figure}[ht]
    \centering
    \includegraphics[page=10, width=0.6\textwidth, height=0.6\textwidth]{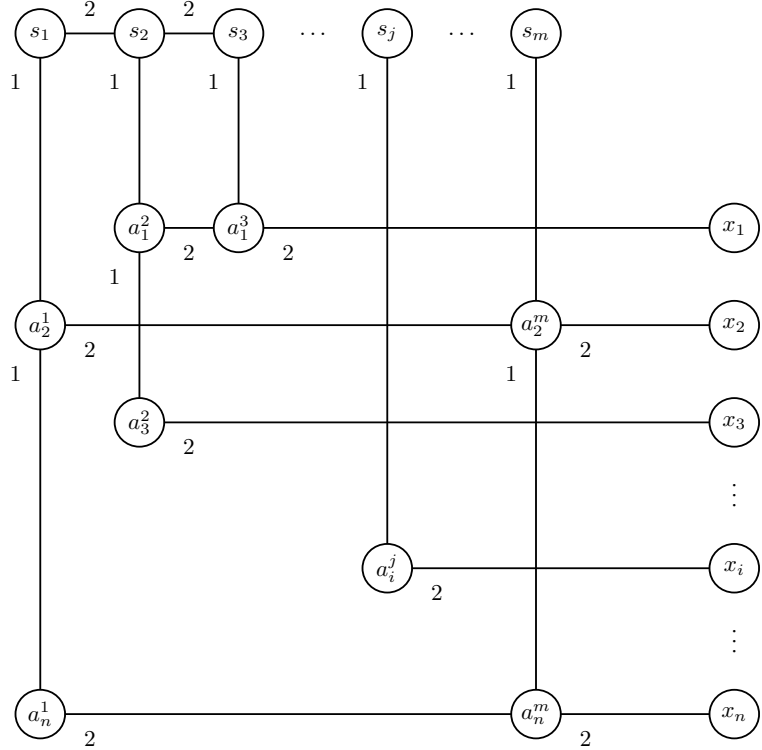}
    \caption{Sketch for the reduction from \textsc{Set Cover} to \textsc{Nonstrict TaRDiS}}
    \label{fig:setcover2nonstrict}
\end{figure}

We note that since we use the same input $k$ for both problems, \textsc{Nonstrict TaRDiS} is $W[2]$-hard with respect to $k$. This proves the following theorem.

\smallskip
\begin{theorem}\label{thm:setcover_nonstrict}
    For every lifetime $\tau\geq 2$ \textsc{Nonstrict TaRDiS} is NP-complete and W[2]-hard with respect to $k$. This holds even on graphs with maximum degree 4. 
\end{theorem}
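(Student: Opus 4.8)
The plan is to verify that the construction preceding the theorem — the reduction from \textsc{Set Cover} to \textsc{Nonstrict TaRDiS} — delivers all three claimed consequences, and then to handle the "every lifetime $\tau \geq 2$" clause by a padding argument analogous to the one used in Lemma~\ref{lem:tar2ds}. The correctness of the reduction for $\tau = 2$ is exactly the content of the two lemmas immediately above (Lemma~\ref{lem:SCtardis} and the unnamed equivalence lemma), so the bulk of the work is already done: $\mathcal{G}$ admits a nonstrict TaRDiS of size at most $k$ iff $(\mathcal{U},\mathcal{S},k)$ is a yes-instance of \textsc{Set Cover}, the construction is clearly computable in polynomial (indeed linear) time, and $\mathcal{G}$ has lifetime exactly $2$ and maximum footprint degree $4$ (each $a_i^j$ has degree $2$; each $s_j$ has degree at most $2$ from the $a$-path plus $2$ from the $s_j$–$s_{j+1}$ edges; each $x_i$ has degree equal to the number of sets containing it, which we may assume is at most... — here one should note that if some $x_i$ lies in more than $4$ sets the degree bound fails, so the max-degree-$4$ claim should be read as holding for the natural restriction of \textsc{Set Cover} where each element appears in at most $4$ sets, which remains NP-complete and W[2]-hard).

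\textbf{Key steps, in order.}
\emph{Step 1 (membership in NP).} Immediate from Lemma~\ref{lem:innp}, since \textsc{Nonstrict TaRDiS} is one of the variants covered there.
\emph{Step 2 (NP-hardness and W[2]-hardness for $\tau = 2$).} Invoke the construction and the two lemmas above: the map $(\mathcal{U},\mathcal{S},k) \mapsto (\mathcal{G},k)$ is a polynomial-time many-one reduction from \textsc{Set Cover}, and since the parameter $k$ is preserved verbatim it is also a parameterized reduction, transferring W[2]-hardness in $k$ \cite{cygan_parameterized_2015_bugfree}.
\emph{Step 3 (bounded degree).} Observe from the edge list that degrees are bounded as above; combined with the known hardness of \textsc{Set Cover} restricted to bounded-frequency instances, this gives hardness on footprints of maximum degree $4$.
\emph{Step 4 (arbitrary $\tau \geq 2$).} Given the instance $(\mathcal{G},k)$ of lifetime $2$, append to an arbitrary vertex a path of $\tau - 2$ new edges carrying the strictly increasing times $3, 4, \ldots, \tau$, exactly as in Figure~\ref{fig:trivial_domset_reduction}. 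A nonstrict (equivalently strict, on that tail) temporal path traverses the whole appended path from its attachment point, so the minimum nonstrict TaRDiS is unchanged, $k$ is unchanged, the lifetime becomes $\tau$, and the maximum degree rises by at most $1$ only at the single attachment vertex — choosing an $a_i^j$ as the attachment point keeps the overall maximum degree at $4$ when the attachment vertex had degree $2$ or $3$; otherwise one may attach to a fresh pendant and accept max degree $4$ throughout by picking a degree-$\le 3$ vertex, which always exists.

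\textbf{Main obstacle.} The genuinely delicate point is the interaction between the maximum-degree-$4$ claim and the two "for all $\tau$" and "W[2]-hard in $k$" claims simultaneously: \textsc{Set Cover} with element frequency bounded by a constant is still W[2]-hard in $k$ (this is standard, via the equivalence with bounded-degree \textsc{Dominating Set}), but one must be careful that the reduction does not blow up $k$ and that the $\tau$-padding does not push any degree above $4$; both are handled by the attachment-vertex choice in Step 4. Everything else is routine bookkeeping on top of the already-proved equivalence lemma.
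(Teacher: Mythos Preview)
Your overall strategy matches the paper's, but Step~3 rests on a misreading of the construction that then propagates into a false claim. The phrase ``connecting $x_i$ in a path at time $2$ to all $a_i^j$'' does \emph{not} make $x_i$ adjacent to every $a_i^j$; it places $x_i$ and all the $a_i^j$ on a single time-$2$ path, so $x_i$ has footprint degree at most $2$ (and similarly each $a_i^j$ lies on one time-$1$ path and one time-$2$ path, giving it degree at most $4$, not $2$). Hence the max-degree-$4$ bound holds for \emph{arbitrary} \textsc{Set Cover} instances, with no frequency restriction needed.

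This matters because your proposed workaround is actually incorrect: \textsc{Set Cover} with element frequency bounded by a constant $d$ is the dual of $d$-\textsc{Hitting Set}, which is well known to be in FPT parameterized by $k$ (e.g.\ via a sunflower kernel). So it is \emph{not} W[2]-hard, and the sentence ``which remains NP-complete and W[2]-hard'' would break the W[2]-hardness conclusion if you went down that route. Once you read the path structure correctly, the ``main obstacle'' you identify evaporates, Step~4 goes through by attaching the $\tau$-padding tail at any $x_i$ (degree rises from at most $2$ to at most $3$), and your argument coincides with the paper's.
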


We now extend our construction to obtain W[2]-hardness of \textsc{Happy TaRDiS}. Consider the following construction for a happy temporal graph $\mathcal{G'}$. Begin with the temporal graph $\mathcal{G}$ as constructed earlier in Figure~\ref{fig:setcover2nonstrict}. Then, for each $i$, add edges such that the set $\{s_i\}\cup \bigcup_{j}a^i_j$ forms a clique. For each $j$, we also add edges such that the set $\{x_j\}\cup \bigcup_{i}a^i_j$ forms a clique. Then we order all time-edges arbitrarily such that all edges in each clique containing edges formerly in $E_1$ are active before the edges in each clique containing all time-edges formerly in $E_2$. This concludes the description of $\mathcal{G}'$. Observe that $\mathcal{G}'$ is a happy temporal graph, since each snapshot contains just one edge.

\smallskip
\begin{lemma}
      The temporal graph $\mathcal{G'}$ admits a TaRDiS of cardinality $k$ if and only if $(\mathcal G, k)$ is a yes-instance of \textsc{Happy TaRDiS}.
\end{lemma}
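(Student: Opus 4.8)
The plan is to show that the clique-augmented graph $\mathcal{G}'$ has essentially the same reachability structure as $\mathcal{G}$, so that the previous lemma transfers. The key observation is that adding a clique on $\{s_i\} \cup \bigcup_j a^i_j$ with all its edges active consecutively (and before the second batch of cliques) does not change which vertices outside that clique are reachable from any vertex: within the clique, the relative order of edges is irrelevant since any two vertices are joined by a direct edge, and because the whole clique's edges precede the $x_j$-cliques' edges, a nonstrict journey in $\mathcal{G}$ passing through $s_j \to a^j_i \to x_i$ still exists in $\mathcal{G}'$, and conversely no new "shortcuts" between an $s$-vertex and an $x$-vertex are created. Concretely, I would first argue $R^{\leq}_{s_j}(\mathcal{G}) = R^{\leq}_{s_j}(\mathcal{G}')$ for each $j$, and more generally that restricted to $\mathcal{S} \cup \mathcal{U} \cup \{a^j_i\}$ the reachability relations agree; the only genuinely new reachabilities in $\mathcal{G}'$ are among vertices already mutually reachable (within a clique) or already reached together.

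Second, I would re-run the analogue of Lemma~\ref{lem:SCtardis} for $\mathcal{G}'$: any TaRDiS can be pushed onto $\mathcal{S}$ without increasing its size. For $v = x_i$, its closed neighbourhood and reachability set are still dominated by (indeed contained in the reachability set of) any incident $a^j_i$, which in $\mathcal{G}'$ still has $R^{\leq}_{a^j_i} \supseteq R^{\leq}_{x_i}$; swapping $x_i$ for such a neighbour is safe, noting the new clique edges only add to $R^{\leq}_{a^j_i}$. For $v = a^j_i$, its reachability set equals that of $s_j$ (the clique among $\{s_j\} \cup \bigcup_i a^j_i$ makes them mutually reachable, and downstream both reach the same $x$-vertices and $s$-vertices), so $a^j_i$ can be replaced by $s_j$. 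Hence we may assume any TaRDiS of $\mathcal{G}'$ lies in $\mathcal{S}$, exactly as for $\mathcal{G}$.

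Third, with both reductions landing on the same object — a subset of $\mathcal{S}$ — I would observe that a subset $S \subseteq \mathcal{S}$ is a TaRDiS of $\mathcal{G}'$ if and only if it is a TaRDiS of $\mathcal{G}$: the $s$-vertices are all mutually reachable in both graphs, each $x_i$ is reached from $s_j \in S$ iff $x_i \in s_j$ in both graphs (the cliques do not connect distinct $s$-vertices to new $x$-vertices), and each $a^j_i$ is reached from $S \subseteq \mathcal{S}$ in $\mathcal{G}'$ precisely when it is in $\mathcal{G}$ (via $s_j$ directly, or via $x_i$ which is in turn reached). Combining these three steps: $\mathcal{G}'$ admits a TaRDiS of size $k$ iff it admits one inside $\mathcal{S}$ of size $\leq k$, iff $\mathcal{G}$ does, iff $(\mathcal{G},k)$ is a yes-instance of \textsc{Happy TaRDiS} (equivalently, by the previous lemma, $(\mathcal{U},\mathcal{S},k)$ is a yes-instance of \textsc{Set Cover}).

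The main obstacle I anticipate is the careful bookkeeping in the first step: one must verify that the arbitrary intra-clique ordering together with the "first-batch-before-second-batch" constraint genuinely prevents any unintended nonstrict journey — in particular that no journey can now enter an $s$-clique, leave it, and re-enter exploiting the consecutive active times to reach an $x$-vertex it could not reach in $\mathcal{G}$. Making the reachability-equivalence claim precise (stating it as a small claim about $R^{\leq}_v(\mathcal{G}') \cap (\mathcal{S}\cup\mathcal{U}\cup\{a^j_i : \ldots\})$ versus $R^{\leq}_v(\mathcal{G})$ and checking it edge-type by edge-type) is where the real work lies; everything afterwards mirrors the $\mathcal{G}$ argument almost verbatim.
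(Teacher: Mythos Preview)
Your proposal is correct and shares the paper's central insight---reachability equivalence between $\mathcal{G}$ and $\mathcal{G}'$---but you over-engineer the argument. The paper proves the full statement ``$u$ reaches $v$ in $\mathcal{G}$ iff $u$ reaches $v$ in $\mathcal{G}'$'' directly, after which the TaRDiS equivalence is immediate; your steps~2 and~3 (re-running the push-to-$\mathcal{S}$ lemma, then comparing $\mathcal{S}$-TaRDiSes) become redundant once step~1 is done in full generality.

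The paper's argument is also cleaner than the edge-type-by-edge-type case analysis you anticipate. For the forward direction it splits any nonstrict path in $\mathcal{G}$ into its time-$1$ prefix and time-$2$ suffix and replaces each by a single clique edge. For the backward direction it observes that any temporal path in $\mathcal{G}'$ can be shortened to at most two edges: the first-batch cliques are pairwise vertex-disjoint, as are the second-batch cliques, so the path stays in one first-batch clique (collapsible to one edge) followed by one second-batch clique (collapsible to one edge), and the first-batch/second-batch ordering guarantees these two edges form a valid temporal path. From this $2$-edge normal form the existence of a corresponding nonstrict path in $\mathcal{G}$ is immediate. This sidesteps the ``enter, leave, re-enter'' worry you flagged, since disjointness of the cliques within each batch rules it out structurally rather than by tracking times.
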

\begin{proof}
    We show that a vertex $u$ temporally reaches a vertex $v$ in $\mathcal{G}$ if and only if $u$ also reaches $v$ in $\mathcal{G}'$. Thus, proving that $\mathcal{G'}$ admits a TaRDiS of cardinality $k$ if and only if $\mathcal G$ admits a TaRDiS of cardinality $k$.

    Suppose we have a path $p$ from a vertex $u$ to a vertex $v$ in $\mathcal{G}$. We have two cases; namely, the edges in the path are assigned the same time in $\mathcal{G}$ or they are not. In the former, there must still be a (one-edge) path from $u$ to $v$ in $\mathcal{G}'$ since we have added edges such that all connected components in either snapshot of $\mathcal{G}$ form a clique. In the latter, we can break $p$ into a path at time one $p_1$ and a path $p_2$ at time two. These paths can each be replaced by an edge as in the case where $p$ consists of edges which occur at the same time. Therefore, for every path in $\mathcal{G}$, there exists a path with the same starting and terminal vertex in $\mathcal{G'}$.

    Now suppose we have a path $p'$ in $\mathcal{G}'$ from $u'$ to $v'$. Since we have cliques for each connected component in the snapshots of $\mathcal{G}$, the path consisting of fewest edges in $\mathcal{G}'$ from $u'$ to $v'$ must consist of at most 2 edges. Suppose without loss of generality that $p'$ is the shortest path from $u'$ to $v'$. If $p'$ consists of a single edge, then $u'$ and $v'$ must be in the same connected component of $\mathcal{G}$, thus there exists a path from $u'$ to $v'$ in $\mathcal{G}$. Now suppose that $p'$ consists of two edges. We know that $v'$ and $u'$ are not adjacent in $\mathcal{G}'$, else $p'$ would not be the shortest path from $u'$ to $v'$. Therefore, $u'$ and $v'$ must be members of two different cliques in $\mathcal{G}'$. Let $w'$ be the vertex adjacent to both $u'$ and $v'$ which $p'$ traverses. Then, since $u'$ and $w'$ share an edge which is earlier than the edge $(w',v')$, they must be in the same connected component of $\mathcal{G}_1$. Similarly, $w'$ and $v'$ must be in the same connected component of $\mathcal{G}_2$. Therefore, there is a path from $u'$ to $v'$ in $\mathcal{G}$. Hence, a vertex $u$ temporally reaches a vertex $v$ in $\mathcal{G}$ if and only if $u$ also reaches $v$ in $\mathcal{G}'$.
\end{proof}

\begin{corollary}\label{cor:setcover_happy}
    \textsc{Happy TaRDiS} is NP-complete and W[2]-hard with respect to $k$.
\end{corollary}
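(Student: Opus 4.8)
The plan is to obtain the corollary by composition, with essentially no new combinatorial work; the substantive content over Theorem~\ref{thm:happytardis} is W[2]-hardness with respect to $k$, which the preceding construction is designed to deliver. Membership in NP is immediate from Lemma~\ref{lem:innp}. For hardness, I would chain the reduction from \textsc{Set Cover} to \textsc{Nonstrict TaRDiS} (producing $\mathcal G$) with the transformation $\mathcal G\mapsto\mathcal G'$ described above. The two equivalence lemmas just proved do the work: $\mathcal G$ admits a nonstrict TaRDiS of size $k$ exactly when $(\mathcal U,\mathcal S,k)$ is a yes-instance of \textsc{Set Cover}, and $\mathcal G'$ admits a TaRDiS of size $k$ exactly when $\mathcal G$ does. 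Since $\mathcal G'$ is happy, the composite map is a reduction from \textsc{Set Cover} to \textsc{Happy TaRDiS} in which the parameter $k$ is copied verbatim.

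Next I would verify that this composite map is polynomial-time: $\mathcal G$ has $O(nm)$ vertices and edges, forming $\mathcal G'$ adds only polynomially many clique edges, and since $\mathcal G'$ is happy its lifetime $\tau$ equals $|E(\mathcal G')|$, which is polynomial; hence the reduction runs in polynomial time and, with Lemma~\ref{lem:innp}, re-establishes NP-completeness. Because the target parameter is literally the source parameter, the same map is also a parameterized reduction with respect to $k$, so the W[2]-hardness of \textsc{Set Cover} parameterized by the cover size (\cite{cygan_parameterized_2015_bugfree}), already invoked for Theorem~\ref{thm:setcover_nonstrict}, transfers to \textsc{Happy TaRDiS} parameterized by $k$.

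The one point that needs care --- and the closest thing to an obstacle --- is confirming that the time-edges of $\mathcal G'$ can genuinely be scheduled so that $\mathcal G'$ is happy: every snapshot must contain exactly one edge, while all cliques absorbing edges formerly in $E_1$ must precede all cliques absorbing edges formerly in $E_2$. This is immediate, since these two edge sets are disjoint and within each block ties may be broken arbitrarily, giving each of the $|E(\mathcal G')|$ edges its own timestep --- exactly the argument in the paragraph preceding the corollary. I therefore expect no substantive difficulty; the weight of the result rests entirely on the construction and equivalence lemmas already in place.
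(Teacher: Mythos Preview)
Your proposal is correct and follows exactly the approach the paper intends: the corollary is stated without its own proof in the paper, relying on the immediately preceding lemma (that $\mathcal G'$ admits a TaRDiS of size $k$ iff $\mathcal G$ does) composed with the earlier \textsc{Set Cover} $\to$ \textsc{Nonstrict TaRDiS} reduction, with $k$ unchanged throughout. Your remarks on polynomial size, on the happy scheduling being achievable (one edge per timestep), and on the parameter being preserved verbatim are precisely the points one would spell out if the corollary carried a proof.
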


\subsection{Algorithm for \textsc{TaRDiS} on Trees}\label{sec:tree-alg}
In this section, we show that each variant of \textsc{TaRDiS} can be solved in polynomial time on inputs which are trees, even when $k$ and lifetime are unbounded. We introduce some preliminary notions needed for the algorithm.

In this section, we deal with \emph{non-simple} temporal graphs; each edge $e\in E(\mathcal{G})$ may appear several times.  
We define the functions $\lambda_{\min}:E\to [\tau]$ and $\lambda_{\max}:E\to [\tau]$ which return the earliest and latest appearance of an edge, respectively. 

In the following we refer to \emph{rooted trees}. A rooted tree is a tree $T=(V,E)$ with a special vertex $r$; the \emph{root}. A vertex $w$ is a \emph{descendant} of a vertex $v$ if $v$ lies on the unique path from $w$ to $r$. We refer to the vertices neighbouring $v$ which are descendants of $v$ as its \emph{children}. If a vertex $w$ is a child of $v$, then $v$ is its \emph{parent}. Similarly, the parent of a parent vertex is referred to as a \emph{grandparent}. The subgraph induced by a vertex $v$ and its descendants is referred to as the \emph{subtree rooted at $v$}.
The input of the algorithm is a \emph{rooted temporal tree} $\mathcal{G}$. By this, we mean a temporal graph whose footprint $\mathcal{G}_{\downarrow}$ is a rooted tree. We define the function $C(p)$ to return the set of children for a parent vertex $p$. 

\subsubsection{Intuition}
The algorithm operates as follows. We initialise a counter $k$ and two empty sets. The first set $S$ will become a TaRDiS if a TaRDiS of size $k$ exists. The second set $M$ is a set of vertices which are temporally reachable from those in $S$ or will be temporally reachable from a later vertex added to $S$. Roughly, these are the vertices we do not need to worry about. We refer to them as \emph{marked}. Vertices not in $M$ will be referred to as \emph{unmarked}. We refer to the distance of a vertex in the tree from the root as its \emph{depth}.

We work from the leaves of $\mathcal{G}$ to the root $r$. At each iteration, we either mark vertices or reduce the number of appearances of an edge. We continue until the unmarked graph is a star or empty. The star graph is the complete bipartite graph $K_{1,k}$. Its central vertex is a vertex incident to every edge the graph if $k\geq 1$, and is the sole vertex in the graph otherwise. 

If a vertex $v$ with parent $p$ and grandparent $g$ is reachable from its grandparent using the earliest time on the edge $(v,p)$ and the latest appearance of $(p,g)$, it will be reached by a vertex in $S$ if its parent is. Therefore, we can mark that vertex. If a vertex $v$ is not marked and also not temporally reachable from its grandparent, we must add a vertex in $N[v]$ to $S$. We solve \textsc{Strict TaRDiS} or \textsc{Nonstrict TaRDiS} using the variable $s$ which depends on the Boolean flag \texttt{Strict}.
\begin{framed}
\textbf{Algorithm: \texttt{TaRDiS on Trees}}\\
\emph{Input}: A rooted temporal tree $\mathcal{G}$ and a Boolean flag \texttt{Strict}.\\
\emph{Output:} A TaRDiS $S$ of minimum size.
\begin{enumerate}
    \item Initialise $S=\emptyset,M=\emptyset$, and $s=0$. 
    \item If \texttt{Strict}:
    \begin{enumerate}
        \item Set $s=1$.
    \end{enumerate}
    \item While $M\neq V(\mathcal{G})$:
            \begin{enumerate}
            \item If $\mathcal{G}\setminus M$ is a star:
            \begin{enumerate}
                \item Add the central vertex to $S$ and return $S$.
            \end{enumerate}
            \item Denote $L$ the set of vertices in $V(\mathcal{G})\setminus M$ at maximum depth. 
            \item Let $p$ be the parent of an arbitrary vertex from $L$. 
            \item Choose the vertex $l\in C(p)\setminus M$ which minimizes $\lambda_{\max}(p,l)$. 
            \item Let $g$ be the parent of $p$ (and hence the grandparent of $l$).
            \item  While $l$ is not in $M$:\label{alg:while}
            \begin{enumerate}
                \item If $\lambda_{\max}(l,p)< \lambda_{\min}(p,g)+s$:
                \begin{enumerate}
                    \item Add $p$ to $S$. \label{alg:addp}
                    \item Find $R_p$. 
                    \item Add $R_p$ to $M$: $M=M\cup R_p$.
                \end{enumerate}
                \item Else, if $\lambda_{\max}(l,p)\geq \lambda_{\max}(p,g)+s$:\label{alg:else}
                \begin{enumerate}
                    \item  Do $M:= M \setminus \{p\}$ (note this does nothing if $p\notin M$ already).
                    \item Add all children of $p$ to $M$.          
                \end{enumerate}
                \item Else, update $\lambda$: do $\lambda(p,g)=\lambda(p,g)\setminus \{\lambda_{\max}(p,g)\}$.
            \end{enumerate}
            \end{enumerate}
            \item Return $S$.
        \end{enumerate}
\end{framed}

\smallskip
\begin{lemma}\label{lem:terminate}
    The algorithm \texttt{TaRDiS on Trees} always terminates in time $O(|\mathcal{E}|^2)$, where $\mathcal{E}$ is the set of time-edges in $\mathcal{G}$.
\end{lemma}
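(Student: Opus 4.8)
The plan is to bound the running time by showing that the outer \textbf{While} loop (step 3) and the inner \textbf{While} loop (step 3(f)) together perform at most $O(|\mathcal{E}|^2)$ work, with each individual iteration costing $O(|\mathcal{E}|)$ time. The key observation is that there is a monovariant: the quantity $|\mathcal{E}| + |V(\mathcal{G}) \setminus M|$ (number of remaining time-edges plus number of unmarked vertices) strictly decreases on every iteration of the inner loop, and never increases elsewhere. I would verify this case by case against the three branches of step 3(f): in the first branch (step 3(f)i), we add $p$ to $S$ and mark all of $R_p$, which includes $l \notin M$, so $|V \setminus M|$ strictly decreases; in the second branch (step 3(f)ii, the \texttt{Else if}), we add all children of $p$ to $M$, and since $l$ is a child of $p$ with $l \notin M$ at the start of the iteration, again $|V \setminus M|$ strictly decreases (the removal of $p$ from $M$ is harmless because it was added only in this branch, and one must check it cannot oscillate — here I would note that once a vertex's children are all marked, the vertex is never revisited as a ``$p$''); in the third branch (the final \texttt{Else}), we delete one time from $\lambda(p,g)$, so $|\mathcal{E}|$ strictly decreases by one. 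Since each branch is entered only when $l \notin M$, and the inner loop condition is exactly ``$l \notin M$'', every inner iteration strictly decreases the monovariant, which starts at $|\mathcal{E}| + |V(\mathcal{G})|$ and is bounded below by $0$; hence there are at most $O(|\mathcal{E}| + |V|) = O(|\mathcal{E}|)$ inner iterations in total, summed over all outer iterations (using $|V| \le 2|\mathcal{E}|$ since $\mathcal{G}_\downarrow$ is connected, or simply $\tau \le |\mathcal{E}|$ and tree structure).

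Next I would handle the outer loop bookkeeping: each pass through the outer \textbf{While} either returns (step 3(a), when $\mathcal{G} \setminus M$ is a star) or selects a fresh parent $p$ and runs the inner loop at least once. I need to argue the outer loop cannot spin without progress — but this is immediate, since after computing $L$, $p$, $l$, $g$ (steps 3(b)–(e)) the inner loop condition $l \notin M$ holds by the choice of $l \in C(p) \setminus M$, so at least one inner iteration runs, and we already showed each inner iteration decreases the global monovariant. Therefore the total number of inner iterations across the whole run is $O(|\mathcal{E}|)$, and the number of outer iterations is at most one more than that (the last one possibly terminating via the star check), so also $O(|\mathcal{E}|)$.

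It then remains to bound the per-iteration cost. Computing $R_p$ (a reachability set in a tree) takes $O(|\mathcal{E}|)$ time by the modified BFS mentioned in the proof of Lemma~\ref{lem:innp}; checking whether $\mathcal{G} \setminus M$ is a star, finding the maximum-depth unmarked vertices $L$, selecting $p$ and the minimizing child $l$, and updating $\lambda(p,g)$ are all $O(|\mathcal{E}|)$ (or cheaper) with straightforward data structures. Multiplying $O(|\mathcal{E}|)$ iterations by $O(|\mathcal{E}|)$ cost per iteration gives the claimed $O(|\mathcal{E}|^2)$ bound.

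The main obstacle I anticipate is the second branch of step 3(f), where $p$ is \emph{removed} from $M$: one must be careful that this does not create a loop in which a vertex is alternately marked and unmarked. I would resolve this by observing that whenever this branch fires, \emph{all} children of $p$ become marked and stay marked (no later step unmarks a non-$p$ vertex except possibly as some other node's parent — and here I would need the structural fact that a vertex plays the role of ``$p$'' at most a bounded number of times, which again follows from the monovariant since each such episode marks a previously-unmarked child). So the potential subtlety is really just confirming the monovariant argument is airtight across this branch; once that is nailed down, the rest is routine.
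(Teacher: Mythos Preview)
Your approach is essentially the same as the paper's: bound the total number of inner iterations by a monovariant combining $|\mathcal{E}|$ and the number of unmarked vertices, and charge $O(|\mathcal{E}|)$ work per iteration. You are in fact more careful than the paper, which simply asserts that each inner iteration either deletes a time-edge or increases $|M|$ without addressing the removal of $p$ from $M$ in branch~(ii); your instinct to flag this is correct, and your proposed fix (each vertex can serve as ``$p$'' only boundedly often because its children stay marked) is right---the cleanest way to make it rigorous is to note that the maximum depth of an unmarked vertex is non-increasing, so a child of $p$ can never later become a $p$-vertex itself.
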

\begin{proof}
    The algorithm operates by adding vertices to a set $S$ until the whole tree is marked. Observe that each iteration of the inner while-loop (step:~\ref{alg:while}) results in either an increase in size of the set $M$ of marked vertices or removal of an appearance of the time-edge between parent and grandparent vertices of the vertex in question. Note that, if there is only one appearance of the edge between parent and grandparent vertices, a vertex must be added to $M$. Therefore, there are at most $|\mathcal{E}|$ iterations of the inner while loop. Computing $R_u$ is achievable in linear time when $\mathcal G_\downarrow$ is a tree. Thus, each iteration can be completed in time linear in the number of time-edges. Hence, the algorithm terminates in time $O(|\mathcal{E}|^2)$. 
  \end{proof}

We let $\mathcal{T}_v$ denote the subtree rooted at a vertex $v$.
\smallskip
\begin{lemma}\label{lem:inv1}
    For any vertex $v$ added to the set $S$ by the algorithm \texttt{TaRDiS on Trees}, $S\cap \mathcal{T}_v$ is a TaRDiS of the subtree $\mathcal{T}_v$ rooted at $v$. 
\end{lemma}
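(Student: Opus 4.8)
The plan is to prove this by induction on the depth at which vertices are added to $S$, working upward from the leaves, and tracking an invariant about the set $M$ of marked vertices. Concretely, I would strengthen the statement slightly into a loop invariant: \emph{at every point in the execution, for each vertex $v$ that has been added to $S$ so far, the set $S \cap \mathcal{T}_v$ temporal-reachability-dominates $\mathcal{T}_v$; and moreover every vertex in $\mathcal{T}_v$ is marked (i.e. $\mathcal{T}_v \subseteq M$) once the inner while-loop processing the children of $v$'s... }—actually more carefully, I would set up the invariant so that whenever a vertex is newly added to $S$ (step~\ref{alg:addp}), its entire subtree is already contained in $M$ except possibly for $v$ itself and its as-yet-unmarked children, and each previously-unmarked vertex of $\mathcal{T}_v$ is reachable either from $v$ directly or from some earlier element of $S \cap \mathcal{T}_v$.

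First I would establish the base case: when $v$ is a vertex added to $S$ and $\mathcal{T}_v$ contains only $v$ (a leaf) or only $v$ and leaves, then $R_v(\mathcal{G})$ restricted to $\mathcal{T}_v$ is all of $\mathcal{T}_v$ since every child edge of $v$ is trivially traversable as the first edge of a temporal path (strict or nonstrict), so $\{v\}$ dominates $\mathcal{T}_v$. For the inductive step, consider $v = p$ being added to $S$ in step~\ref{alg:addp}, triggered because for the chosen child $l$ we have $\lambda_{\max}(l,p) < \lambda_{\min}(p,g)+s$. The key observations are: (i) all descendants of $p$ at strictly greater depth were handled in earlier iterations, so by the invariant, each maximal subtree $\mathcal{T}_c$ for $c \in C(p)$ that was already "closed off" has $S \cap \mathcal{T}_c$ dominating it, \emph{or} $\mathcal{T}_c \setminus M$ is nonempty and waiting to be dominated from above; (ii) because $l$ was chosen to minimize $\lambda_{\max}(p,l)$ among unmarked children, and $p$ is added to $S$, I must show $p$ reaches every currently-unmarked vertex in $\mathcal{T}_p$. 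A vertex $w \in \mathcal{T}_c \setminus M$ for some child $c$: since $c$ was not marked via the \texttt{Else} branch (step~\ref{alg:else}), we have $\lambda_{\max}(c,p) < \lambda_{\max}(p,g)+s$, and combined with the minimality of $l$'s latest edge and the structure of the algorithm one argues there is a temporal path from $p$ through $c$ reaching $w$ — this uses that $w$ being unmarked means no descendant cut it off, so the earliest-to-latest edge timings along the path from $p$ down to $w$ are monotone in the required sense. Then $R_p$ is computed and added to $M$, so after this step $\mathcal{T}_p$'s unmarked part shrinks, and $S \cap \mathcal{T}_p$ dominates all of $\mathcal{T}_p$ that is now marked by $p$ plus (by induction) the parts marked earlier.

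The main obstacle I expect is step (ii): precisely characterizing which vertices of $\mathcal{T}_p$ remain unmarked at the moment $p$ is added, and proving $p$ reaches exactly those (plus the already-marked ones that were reached from earlier $S$-vertices in $\mathcal{T}_p$). This requires carefully unwinding what the three branches of the inner while-loop do — the \texttt{If} branch adds $p$ and marks $R_p$; the \texttt{Else} branch (step~\ref{alg:else}) marks all children of $p$ (meaning they'll be dominated from $p$'s parent later, not from within $\mathcal{T}_p$) and \emph{unmarks} $p$; the middle branch just deletes one appearance of edge $(p,g)$ and re-loops. I would need a sub-lemma saying: a vertex $c \in C(p)$ ends up marked-from-below (its subtree fully dominated within $\mathcal{T}_p$) iff at the time $p$ is added, $\lambda_{\max}(c,p)$ is small enough relative to the (possibly reduced) $\lambda(p,g)$ values that the "reach from $p$" relation extends correctly — and crucially that the edge-appearance deletions performed in the middle branch never destroy a temporal path we need, because they only remove the \emph{latest} appearance of $(p,g)$, which is exactly the appearance that would be used to reach \emph{down} into $\mathcal{T}_p$ from above, not a path \emph{within} $\mathcal{T}_p$. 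Once that sub-lemma is in hand, combining it with the induction hypothesis applied to each child's subtree, and noting $R_p$ is computed on the (possibly edge-reduced) current graph $\mathcal{G}$ but restricted to $\mathcal{T}_p$ this coincides with reachability in the original tree (since deletions only touched the edge above $p$), gives that $S \cap \mathcal{T}_p$ is a TaRDiS of $\mathcal{T}_p$, completing the induction.
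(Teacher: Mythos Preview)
Your inductive framework and the observation that edge-appearance deletions only touch the edge above $p$ (hence are harmless for reachability inside $\mathcal{T}_p$) are both sound, but there is a genuine gap in what you set out to prove at the inductive step. You write that you must show ``$p$ reaches every currently-unmarked vertex in $\mathcal{T}_p$'' and then reason about some $w \in \mathcal{T}_c \setminus M$. But at the moment $p$ is added to $S$, the chosen child $l$ of $p$ is by construction an unmarked vertex of \emph{maximum} depth, so every vertex of $\mathcal{T}_p$ strictly deeper than the children of $p$ already lies in $M$. Thus the only unmarked vertices in $\mathcal{T}_p$ are $p$ itself and some of its children, each reached from $p$ along a single edge; the case you devote most of your effort to is trivial. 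Relatedly, your dichotomy ``$S \cap \mathcal{T}_c$ dominates $\mathcal{T}_c$, or $\mathcal{T}_c \setminus M$ is nonempty'' misses the real case: $\mathcal{T}_c$ can have $c$ unmarked while deeper vertices are marked via the \texttt{Else} branch yet reached from no element of $S$ at all.

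The actual work is therefore to show that every $u \in M \cap \mathcal{T}_p$ which is \emph{not} reached from $S \setminus \{p\}$---i.e.\ every vertex marked via the \texttt{Else} branch rather than via some $R_{p'}$---is reached from $p$. The paper attacks this directly, without induction on subtrees: for such a $u$, the \texttt{Else}-branch condition at the time $u$ was marked gives $\lambda_{\max}(u,p_u) \geq \lambda_{\max}(p_u,g_u)+s$, which is exactly the time relation needed for a valid temporal path $g_u \to p_u \to u$. Since $p_u$ and $g_u$ cannot themselves be reached from $S\setminus\{p\}$ (else $u$ would be, by appending these two edges), the same reasoning applies to them; chaining this upward along the $u$--$p$ path terminates at an unmarked ancestor, which must be $p$ or a child of $p$, yielding the desired temporal path from $p$ to $u$. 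Your proposed sub-lemma about when a child's subtree is ``marked-from-below'' gestures at the right dichotomy, but until you isolate this else-branch time relation and chain it upward, the argument does not close.
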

\begin{proof}
    We show that all descendants of $v$ are temporal reachability dominated by $S\cap \mathcal{T}_v$. Note that the children of $v$ are necessarily temporal reachability dominated by $S$ since they are adjacent to the vertex $v$. Before $v$ is added to $S$, it must have a child $l$ which is an unmarked vertex at maximum depth. Therefore, all vertices deeper in the subtree must be in $M$.
    
    We will show that all vertices in $M\cap \mathcal{T}_v$ are temporally reachable from a vertex in $S$. We claim that, if a vertex $u$ is in $M\cap \mathcal{T}_v$ but not temporally reachable from $S\setminus \{v\}$, then there is a temporal path from $v$ to $u$. This is the because edge from $u$ to its parent $p_u$ is active strictly later (or possibly at the same time in the nonstrict case) than the last appearance of the edge from $p_u$ to the grandparent vertex $g_u$. Otherwise, $u$ would not be marked. Note that this implies that neither $g_u$ nor $p_u$ are temporally reachable from a vertex in $S\setminus \{v\}$; otherwise $u$ would also be reachable by appending the edges $(g_u,p_u)$ and $(p_u,u)$ to such a path. 
    
    This either gives us an ancestor of $u$ which is not in $M$, or an ancestor of $u$ in $M$ which is not temporally reachable from a vertex in $S\setminus\{v\}$. If the latter is true, we repeat this logic until we reach an ancestor $w$ which is not in $M$. This must be either $v$ or a child of $v$ since $l$ is an unmarked vertex of maximum depth. For all descendants $w'$ of $w$ that are $u$ or ancestors of $u$, we have shown that if the parent of $w'$ is temporally reached by a path from an ancestor, then $w'$ must be too. Since $v$ is either $w$ or the parent of $w$, $u$ and all of its ancestors in $\mathcal{T}_v$ must be temporally reachable from $v$ and thus $S\cap \mathcal{T}_v$ is a TaRDiS for $\mathcal{T}_v$.
  \end{proof}

\begin{lemma}\label{lem:inv2}
    For any vertex $v$ added to the set $S$ by the algorithm \texttt{TaRDiS on Trees}, $S\cap \mathcal{T}_v$ is a minimum TaRDiS of the subtree $\mathcal{T}_v$ rooted at $v$. 
\end{lemma}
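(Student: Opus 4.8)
The plan is to argue by strong induction on the position of $v$ in the sequence of vertices added to $S$. Lemma~\ref{lem:inv1} already gives that $S\cap\mathcal T_v$ is a TaRDiS of $\mathcal T_v$, so only minimality is at issue. The base case is the first vertex added, where $S\cap\mathcal T_v=\{v\}$ and $\mathcal T_v$ is nonempty, so $1$ is trivially optimal.

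For the inductive step I would first record the structural fact that, at the moment $v$ is added to $S$ (playing the role of ``$p$'' in some iteration), its critical child $l$ is a vertex of $V(\mathcal{G})\setminus M$ of maximum depth; hence every vertex of $\mathcal T_v$ other than $v$ and the children of $v$ is deeper and already marked, and once $v$ is added the marking of $R_v\supseteq\{v\}\cup C(v)$ leaves the whole of $\mathcal T_v$ marked. Consequently no proper descendant of $v$ is added to $S$ afterwards, so $S\cap\mathcal T_v=\{v\}\cup D$ with $D$ the set of vertices added before $v$ that lie in $\mathcal T_v$. Taking $w_1,\dots,w_r$ to be the maximal elements of $D$ under the ancestor order, these are pairwise incomparable, so $\mathcal T_{w_1},\dots,\mathcal T_{w_r}$ are pairwise disjoint and omit $v$; the same structural fact applied to each $w_j$ shows $S\cap\mathcal T_{w_j}=\{w_j\}\cup\{d\in D: d\text{ a descendant of }w_j\}$, and these sets partition $D$. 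By the induction hypothesis $|S\cap\mathcal T_{w_j}|$ equals the minimum TaRDiS size $\mu_j$ of $\mathcal T_{w_j}$, so $|S\cap\mathcal T_v|=1+\sum_{j=1}^{r}\mu_j$, and it remains to show every TaRDiS $T$ of $\mathcal T_v$ has $|T|\ge 1+\sum_j\mu_j$.

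For the lower bound I would split $\mathcal T_v$ into the ``core'' $K=\mathcal T_v\setminus\bigcup_j\mathcal T_{w_j}$ (which contains $v$ and the critical child $l$) and the disjoint pieces $\mathcal T_{w_j}$. Two ingredients drive the bound. First, $v$ was added because $\lambda_{\max}(l,v)<\lambda_{\min}(v,g)+s$ (with $g$ the parent of $v$), so no temporal path traverses $g\to v\to l$; since in a tree the unique path between two vertices of $\mathcal T_v$ stays inside $\mathcal T_v$, every vertex reaching $l$ lies in $\{v\}\cup\mathcal T_l$, and one checks --- using that each $w_j$ inside $\mathcal T_l$ was added to $S$ without its subtree ever reaching $l$ --- that in fact every such vertex lies in $K$, forcing $T\cap K\neq\emptyset$. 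Second, I would re-root $T$: processing the $\mathcal T_{w_j}$ suitably (by non-increasing depth), whenever a vertex of $\mathcal T_{w_j}$ is dominated by $T$ only from outside $\mathcal T_{w_j}$ it is dominated along a path through $w_j$, so the offending source can be swapped for $w_j$ without loss of domination of $\mathcal T_{w_j}$ or increase of $|T|$; this produces a TaRDiS $T^\star$ of $\mathcal T_v$ with $|T^\star|\le|T|$, $T^\star\cap K\neq\emptyset$, and $T^\star\cap\mathcal T_{w_j}$ a TaRDiS of $\mathcal T_{w_j}$ (hence of size at least $\mu_j$) for every $j$. Disjointness then gives $|T|\ge|T^\star|\ge|T^\star\cap K|+\sum_j|T^\star\cap\mathcal T_{w_j}|\ge 1+\sum_j\mu_j$, closing the induction.

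I expect the genuinely hard part to be twofold. First, the algorithm mutates $\lambda$ by deleting the latest appearance of a parent edge $(p,g)$; before reading off ``$l$ unreachable from $g$'' or computing reachability sets as above, one must verify the supporting claim that each such deletion does not change which \emph{currently unmarked} vertices are reachable from outside their subtree (the only candidates are $p$ and its unmarked children, for which a surviving earlier appearance of $(p,g)$ is no worse), so that the quantities $\mu_j$ and the membership facts used are the same for the mutated and the original instance. Second, turning the informal ``re-rooting'' into a correct procedure --- in particular fixing the order in which the $\mathcal T_{w_j}$ are processed and checking that removing a dominating source and inserting $w_j$ never undoes domination of an already-handled subtree or of $K$ --- is where the real bookkeeping lies, and is the main obstacle; I anticipate this needs a case analysis mirroring the three branches of the inner while-loop (add $p$ to $S$; defer the children of $p$; delete an appearance of $(p,g)$).
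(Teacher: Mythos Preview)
Your inductive decomposition is a sensible scaffold and is genuinely different from the paper's proof, but the lower-bound half has a concrete error, not merely an unfinished bookkeeping step. You assert that ``every vertex reaching $l$ lies in $K$''. This is false already on the four-vertex path $v\!-\!l\!-\!w_1\!-\!a$ with $\lambda(v,l)=1$, $\lambda(l,w_1)=2$, $\lambda(w_1,a)=1$: the algorithm first adds $w_1$ (so $w_1$ is one of your $w_j$'s, $K=\{v,l\}$), yet $w_1\notin K$ reaches $l$ along the edge at time $2$. The inequality $\lambda_{\max}(l,v)<\lambda_{\min}(v,g)+s$ only blocks paths to $l$ \emph{from the parent $g$ of $v$}, i.e.\ from \emph{outside} $\mathcal T_v$; it says nothing about reachability of $l$ from the subtrees $\mathcal T_{w_j}$ inside $\mathcal T_v$. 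Your re-rooting step has a related gap: when a vertex of $\mathcal T_{w_j}$ is dominated only by some $u\notin\mathcal T_{w_j}$ and you replace $u$ by $w_j$, you preserve domination of $\mathcal T_{w_j}$ but can lose domination of vertices outside $\mathcal T_{w_j}$ that $u$ was covering, so $T^\star$ need not remain a TaRDiS of $\mathcal T_v$, let alone keep $T^\star\cap K\neq\emptyset$.

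The paper avoids both pitfalls by arguing in the opposite direction. It fixes a hypothetical smaller TaRDiS $S'$ of $\mathcal T_v$, passes to a component $\mathcal T''$ where $|S'\cap\mathcal T''|<|S\cap\mathcal T''|$, and takes $u$ of \emph{maximum depth} in $(S\cap\mathcal T'')\setminus S'$. It then shows (with a two-case analysis according to whether appearances of $(u,c)$ were deleted) that a specific child or grandchild of $u$ is unreachable from $u$'s parent and ancestors, hence must be reached by some descendant $c'\in S'\setminus S$ of $u$; swapping $c'$ for $u$ in $S'$ cannot lose coverage because, by the choice of $u$ and Lemma~\ref{lem:inv1}, $S\cap\mathcal T_u$ already dominates $\mathcal T_u$. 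Iterating aligns $S'$ with $S$ and yields the contradiction. The key difference is that the paper swaps a vertex of $S'$ \emph{upward} into $S$ (where Lemma~\ref{lem:inv1} guarantees safety), whereas you attempt to swap a vertex of $T$ \emph{downward} into a subtree, which is not safe.
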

\begin{proof}
    Suppose for contradiction that such a vertex $v\in S$ and set $S'$ exist where $S'\subseteq T_v$ is a TaRDiS of $\mathcal{T}_v$ and strictly smaller than $S\cap \mathcal{T}_v$. We assume without loss of generality that $S'$ is a minimal TaRDiS. 
    
    Consider the set $S'\cap S$. If this is non-empty, let $R$ be the union of reachability sets of vertices in $S'\cap S$. Let $\mathcal{T}'$ be the subgraph of $\mathcal{T}_v$ induced by removing all vertices in $R\setminus (S'\cap S)$. Since we have assumed that $|S'|<|S|$, there must be a connected component $\mathcal{T}''$ of $\mathcal{T}'$ where $|\mathcal{T}''\cap S'|<|\mathcal{T}''\cap S|$. 

    Let $u\in \mathcal{T}''\cap S$ be the vertex in $(\mathcal{T}''\cap S)\setminus S'$ of maximum depth. Observe that $u$ is not a leaf in $\mathcal{T}''$, otherwise it would not have an unmarked child before its addition to $S$ by the algorithm. Denote by $c$ the child of $u$ which minimises $\lambda_{\max}(u,c)$. Since $u$ is added to $S$, we must have that $\lambda_{\max}(c,u)<\lambda_{\min}(u,p)+s$ where $p$ is the parent of $u$. Therefore, the path consisting of time-edges $((u,p),\lambda_{\min}(u,p)), ((c,u),\lambda_{\max}(c,u))$ is not a valid temporal path. We now have two cases to consider. The case where appearances of the edge $(c,u)$ have been deleted and the case where no appearances of the edge $cu$ have been deleted by the algorithm. 

    If no appearances of $(u,c)$ have been removed by the algorithm, then $c$ is not temporally reachable from $p$ or any ancestors of $p$. Since we know that $u$ is in $S$ and not $S'$, $c$ must be temporally reachable from a descendant $c'$ of $u$ in $S'\setminus S$. We claim that we can swap $c'$ with $u$ without reducing the reachability set of vertices in $S'$. Suppose otherwise; that there is a vertex $w$ no longer reached by $S'$. Since we assumed that $u$ was a vertex of maximum depth in $S'\setminus S$, this contradicts that $S\cap\mathcal{T}_u$ is a TaRDiS of $\mathcal{T}_u$. Thus, this contradicts Lemma~\ref{lem:inv1}.

    Now suppose that there are appearances of the edge $(u,c)$ which have been removed by the algorithm. Then, a child of $c$ must have been an unmarked vertex of maximum depth. Call this vertex $g$. This implies that $g\in \mathcal{T}''$. If the algorithm were to add $c$ to $S$ when $g$ is unmarked, then $c$ would not be an unmarked vertex of maximum depth in any later iterations. Before deletion of appearances of $(u,c)$, we must have that $\lambda_{\max}(g,c)< \lambda_{\max}(c,u)+s$ and $\lambda_{\max}(g,c)\geq \lambda_{\min}(c,u)+s$. Following deletion of the appearance $\lambda_{\max}(c,u)$ (potentially multiple times), $\lambda_{\max}(g,c)\geq \lambda_{\max}(c,u)+s$ and $\lambda_{\max}(c,u)<\lambda_{\min}(u,p)+s$. Therefore, there is no temporal path from $p$ or an ancestor of $p$ to $g$. Since we know that $u$ is in $S$ and not $S'$, $g$ must be temporally reachable from a descendant $g'$ of $u$ in $S'\setminus S$. We claim that we can swap $g'$ with $u$ without reducing the reachability set of vertices in $S'$. Suppose otherwise; that there is a vertex $w$ no longer reached by $S'$. Since we assumed that $u$ was a vertex of maximum depth in $S'\setminus S$, this contradicts that $S\cap\mathcal{T}_u$ is a TaRDiS of $\mathcal{T}_u$. Thus, this also contradicts Lemma~\ref{lem:inv1}.

    In both cases, we have shown that there is a vertex in $S'\setminus S$ that can be replaced by $u$ without reducing the reachability set of vertices in $S'$. We repeat this with the vertex at maximum depth in $\mathcal{T}''\cap S\setminus S'$ until either we contradict the assumption that $S'$ is a TaRDiS of $\mathcal{T}_v$, or obtain $S=S'$; contradicting our assertion that $|S'|<|S|$. Therefore $S$ is a minimum TaRDiS of $\mathcal{T}_v$.
  \end{proof}
\begin{lemma}\label{lem:algTaRDiS}
    The set $S$ output by the algorithm \texttt{TaRDiS on Trees} is a minimum TaRDiS of $\mathcal{G}$.
\end{lemma}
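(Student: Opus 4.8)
The plan is to combine the termination statement of Lemma~\ref{lem:terminate} with the subtree-optimality results of Lemmas~\ref{lem:inv1} and~\ref{lem:inv2}, and then bridge the gap between ``$S$ is optimal on every subtree rooted at a chosen vertex'' and ``$S$ is optimal on all of $\mathcal{G}$''. Since the algorithm treats \textsc{Strict} and \textsc{Nonstrict TaRDiS} uniformly through the shift $s\in\{0,1\}$, and \textsc{Happy TaRDiS} is a special case of both, one argument parameterised by $s$ suffices. What remains to prove is (a) the returned $S$ is a TaRDiS of $\mathcal{G}$, and (b) its size equals the minimum TaRDiS size, which I denote $\mu(\mathcal{G})$.

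For (a) I would first record a structural fact already implicit in the proof of Lemma~\ref{lem:inv1}: once a vertex $p$ is inserted into $S$ at step~\ref{alg:addp}, the entire subtree $\mathcal{T}_p$ is marked (its descendants strictly deeper than $l$ are already in $M$, and $R_p$ contains $p$ together with all of $p$'s children) and stays marked, so the outer loop never chooses a vertex inside $\mathcal{T}_p$ again. In particular, the vertices of $S$ that are maximal in the ancestor order are pairwise incomparable, hence have vertex-disjoint subtrees, and $S$ is the disjoint union of the sets $S\cap\mathcal{T}_v$ over those maximal $v$. Consequently, at termination the unmarked graph $\mathcal{G}\setminus M$ is either empty or a star $K_{1,j}$, whose centre $c$ reaches every other unmarked vertex along the single time-edge between them (a single time-edge is always a valid strict and nonstrict temporal path), so $S\cup\{c\}$ dominates $V\setminus M$. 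That every vertex of $M$ is reached from the \emph{final} $S$ is exactly the argument of Lemma~\ref{lem:inv1}, promoted from one subtree to the whole tree: a vertex is marked only when the time-edge to its parent is late enough (by $s$) relative to the latest surviving time-edge from the parent to the grandparent, so reachability of a marked vertex propagates downward from whichever of its ancestors is reached first; the else-branch (step~\ref{alg:else}) is consistent with this because it only (re)marks children of a vertex $p$ whose edges to $p$ are already that late, and returns $p$ to the unmarked set, where $p$ is later either placed in $S$ or reached from above.

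For (b) I would maintain, through the outer loop, the invariant $|S_i| + \mu(\mathcal{G}_i[V\setminus M_i]) \le \mu(\mathcal{G})$, where $S_i, M_i$ are the current partial solution and marked set and $\mathcal{G}_i$ is the temporal graph currently held (after edge deletions). It holds vacuously at the start, and at termination the unmarked part is empty (so $|S|\le\mu(\mathcal{G})$) or a star of minimum-TaRDiS size $1$ (so the returned $S\cup\{c\}$ has size $\le\mu(\mathcal{G})$); combined with (a) this gives $|S|=\mu(\mathcal{G})$. The only step that grows $S$ is step~\ref{alg:addp}; there one must show that marking $R_p$ lowers the minimum TaRDiS size of the unmarked part by at least one. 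This is where the entry condition $\lambda_{\max}(l,p)<\lambda_{\min}(p,g)+s$ is used: within the unmarked graph it forces the unmarked child $l$ to be unreachable from anything other than a dominator in $N[l]$ or strictly below $l$, so no single vertex can discharge that obligation and also dominate the rest, while Lemma~\ref{lem:inv2} prevents ``paying twice'' for a subtree when a later element of $S$ is an ancestor of an earlier one. (An attractive alternative to the star bookkeeping is to attach a virtual super-root above the real root by a time-edge later than every existing one: this changes neither $\mu$ nor the run on the original vertices, forces the final vertex to be added through step~\ref{alg:addp} rather than as a star centre, and then lets Lemma~\ref{lem:inv2} deliver global minimality directly.)

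The main obstacle is maintaining the invariant of (b) across the two ``bookkeeping'' branches of the inner loop. The else-branch temporarily \emph{un}-marks a vertex, so the naive invariant ``every marked vertex is currently dominated'' is false and must be replaced by the forward-looking version above; and because the algorithm always expands a globally deepest unmarked vertex, the chosen vertices are interleaved across far-apart branches of the tree, so the charging of an obligatory vertex of an optimal solution against each chosen vertex has to be done against the pairwise-disjoint maximal subtrees, with Lemma~\ref{lem:inv2} collapsing the nested ones. One also has to verify that the time-edge deletions in the last case of the inner loop never destroy reachability among still-unmarked vertices — the point being that such a deletion removes the latest appearance of $(p,g)$ only when that appearance cannot be extended past $p$ to the current target $l$ in any case. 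Once these points are checked, (a) and (b) together show that $S$ is a TaRDiS of $\mathcal{G}$ of size $\mu(\mathcal{G})$, i.e.\ a minimum TaRDiS, and the argument applies verbatim to \textsc{Strict}, \textsc{Nonstrict}, and hence \textsc{Happy TaRDiS}.
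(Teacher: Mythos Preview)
Your proposal is workable in principle, but it takes a substantially longer route than the paper. The paper's entire proof is two sentences: if the root $r$ happens to lie in $S$, then $\mathcal{T}_r=\mathcal{G}$ and Lemmas~\ref{lem:inv1} and~\ref{lem:inv2} applied to $v=r$ immediately give that $S=S\cap\mathcal{T}_r$ is a minimum TaRDiS of $\mathcal{G}$; if $r\notin S$, one simply re-roots the tree at the last vertex added to $S$ and observes that this does not change the output set, so the previous case applies. Your parenthetical ``super-root'' alternative is exactly this idea in a slightly different dress, and is by far the cleanest path to the result.

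Your main line of argument---proving (a) by replaying the marking analysis of Lemma~\ref{lem:inv1} globally, and (b) via the loop invariant $|S_i|+\mu(\mathcal{G}_i[V\setminus M_i])\le\mu(\mathcal{G})$---is a genuinely different decomposition. It has the merit of being self-contained and of making explicit why each branch of the inner loop is safe, but it also forces you to confront exactly the obstacles you flag: the else-branch can unmark $p$, so ``marked $\Rightarrow$ dominated'' fails as a naive invariant; and the time-edge deletions in the third branch can in principle raise $\mu(\mathcal{G}_i[V\setminus M_i])$, so preserving the inequality there requires a reachability-preservation argument stronger than the one you sketch (you justify only that reachability to the current target $l$ is unaffected, not to all unmarked vertices). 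These are fixable, but the paper sidesteps them entirely by never tracking $\mu$ on the unmarked remainder and instead invoking Lemma~\ref{lem:inv2} once, on the whole tree, after re-rooting. In short: your alternative is the paper's proof; your primary approach is correct in outline but does more work than needed and leaves more to verify.
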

\begin{proof}
    If the root $r$ of $\mathcal{G}$ is in $S$ as output by the algorithm, then combining Lemmas~\ref{lem:inv1} and~\ref{lem:inv2} gives us the desired result. Suppose $r$ is not in $S$. Then, rooting the tree at the final vertex added to $S$ gives the result without changing the vertices in $S$.
  \end{proof}

This gives us the following theorem and corollary.
\smallskip
\begin{theorem}\label{thm:tree-alg}
    When the footprint of the graph is a tree, \textsc{TaRDiS} is solvable in $O(|\mathcal{E}|^2)$ time.
\end{theorem}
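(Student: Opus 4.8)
The plan is to obtain the theorem by assembling the facts already established about the algorithm \texttt{TaRDiS on Trees}. First I would observe that this algorithm computes a \emph{minimum} TaRDiS outright, so it solves the optimisation problem and hence the decision problem for every target $k$ simultaneously: given an instance $(\mathcal{G},k)$ with $\mathcal{G}_\downarrow$ a tree, root $\mathcal{G}_\downarrow$ at an arbitrary vertex, run \texttt{TaRDiS on Trees} with the Boolean flag \texttt{Strict} set to \texttt{true} for \textsc{Strict TaRDiS} and to \texttt{false} for \textsc{Nonstrict TaRDiS} (this flag fixes the offset $s\in\{0,1\}$ through which the pseudocode switches between strict and nonstrict reachability), and then answer \texttt{yes} if and only if the returned set $S$ satisfies $|S|\le k$. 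Correctness of this step is exactly Lemma~\ref{lem:algTaRDiS}, which states that the output $S$ is a minimum TaRDiS of $\mathcal{G}$ (of the requested flavour); rooting the tree and performing the cardinality check cost only polynomial time, negligible against the bound below.

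Second, for the running time I would invoke Lemma~\ref{lem:terminate} verbatim: \texttt{TaRDiS on Trees} halts after $O(|\mathcal{E}|^2)$ steps, where $\mathcal{E}$ is the set of time-edges of $\mathcal{G}$. The preprocessing and the final comparison add only lower-order terms, so the overall running time is $O(|\mathcal{E}|^2)$, as claimed.

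Third, the \textsc{Happy TaRDiS} case requires no separate argument. A happy temporal graph is in particular simple (so $\lambda_{\min}=\lambda_{\max}$ holds on every edge) and has the property that every nonstrict temporal path in it is also strict; hence a happy input is simultaneously a valid input to \textsc{Strict TaRDiS} and to \textsc{Nonstrict TaRDiS} with the same yes/no answer, and running \texttt{TaRDiS on Trees} with either setting of the flag settles it within the same time bound. Combining the three observations yields the theorem.

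The genuine difficulty is not in this final assembly — that is entirely discharged by Lemmas~\ref{lem:terminate}, \ref{lem:inv1}, \ref{lem:inv2} and \ref{lem:algTaRDiS} — but rather in the exchange argument underpinning Lemma~\ref{lem:inv2}: one must show that a supposed strictly smaller TaRDiS $S'$ of a subtree $\mathcal{T}_v$ can be rewritten, by repeatedly swapping a maximum-depth vertex of $S'\setminus S$ for the algorithm's choice $u\in S\setminus S'$, into a set no smaller than $S\cap\mathcal{T}_v$, using the invariant that when the algorithm adds a parent $u$ to $S$ the relevant (possibly edge-reduced) child is unreachable from the ancestors of $u$. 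In the present proof the only care needed is bookkeeping: confirming that the flag correctly encodes the strict/nonstrict distinction and that the decision problem reduces to a single size comparison.
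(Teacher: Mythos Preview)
Your proposal is correct and follows essentially the same approach as the paper: the theorem is obtained directly by combining Lemma~\ref{lem:terminate} (running time) with Lemma~\ref{lem:algTaRDiS} (correctness), which in turn rests on Lemmas~\ref{lem:inv1} and~\ref{lem:inv2}. Your additional remarks about the \texttt{Strict} flag and the Happy case are accurate bookkeeping that the paper leaves implicit.
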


We obtain a running time dependent only on the number of vertices for simple temporal graphs, where $|\mathcal{E}|=|E(\mathcal G_\downarrow)|=V(\mathcal{G})-1$.

\smallskip
\begin{corollary}
    When the input temporal graph is simple and the footprint of the graph is a tree, \textsc{TaRDiS} is solvable in $O(n^2)$ time.
\end{corollary}

In Section~\ref{sec:param-tardis}, we give a more general algorithm which solves any variant of \textsc{TaRDiS} on a nice tree decomposition of the underlying graph. When the footprint graph is a tree, we have treewidth 1. Our algorithm for \textsc{TaRDiS} on trees runs faster than the tree decomposition algorithm (whose runtime also grows with $\tau$), so this does not consume Theorem~\ref{thm:tree-alg}.

\section{Classical complexity results for \textsc{MaxMinTaRDiS}}\label{sec:classic-maxmintardis}

We now shift our focus to \textsc{MaxMinTaRDiS}, where the objective is to find a temporal assignment which precludes the existence of a small TaRDiS. For each variant, we characterize the minimum lifetime such that the problem becomes intractable. Intriguingly, the expected leap in complexity to $\Sigma_2^P$ is only seen with \textsc{Happy MaxMinTaRDiS}.

We give a full definition of $\Sigma_2^P$ in Section~\ref{sec:Sigma}. It is widely believed that $\Sigma_2^P$ is a strict superclass of NP $\cup$ coNP.
In this sense, \textsc{Happy MaxMinTaRDiS} is a harder problem than both \textsc{Strict MaxMinTaRDiS} (which is coNP-complete), and all \textsc{TaRDiS} variants (which are NP-complete as shown in Section~\ref{sec:classic-tardis}). We show \textsc{Nonstrict MaxMinTaRDiS} is at least as hard as all \textsc{TaRDiS} variants and at most as hard as \textsc{Happy MaxMinTaRDiS}, but leave open its exact complexity. Interestingly, this is achieved by proving that the well-studied \textsc{Distance-3 Independent Set} problem is a subproblem of \textsc{Nonstrict MaxMinTaRDiS}. 

\subsection{Containment in \texorpdfstring{$\Sigma_2^P$}{Sigma 2 P}, useful tools, and small lifetime} \label{sec:prelim-mmtardis}

Here we give some preliminary complexity results for each variant of \textsc{MaxMinTaRDiS}. We begin by showing that they are all contained in $\Sigma_2^P$. We then show that we need only consider simple temporal assignments when trying to solve \textsc{MaxMinTaRDiS}. Finally, we draw comparisons between \textsc{Happy MaxMinTaRDiS} and \textsc{Strict MaxMinTaRDiS} and the classical problems \textsc{Edge Colouring} and \textsc{Dominating Set} respectively, which allow us to show NP-hardness and coNP-hardness of the respective problems.

\begin{lemma}\label{lem:insigma}
    Each variant of the problem \textsc{MaxMinTaRDiS} is contained in $\Sigma_2^P$.
\end{lemma}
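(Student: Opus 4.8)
The plan is to phrase each variant of \textsc{MaxMinTaRDiS} as a $\Sigma_2^P$ predicate: recall that a language lies in $\Sigma_2^P$ exactly when membership can be written as ``$\exists y\,\forall z\colon R(x,y,z)$'' with $|y|,|z|$ polynomially bounded in $|x|$ and $R$ polynomial-time decidable. For an instance $(H,k,\tau)$ with $H=(V,E)$, I would take the existential certificate $y$ to be a temporal assignment $\lambda$ of the relevant kind ($\lambda\colon E\to 2^\tau\setminus\{\emptyset\}$ for Strict/Nonstrict, a happy $\lambda\colon E\to[\tau]$ for Happy), the universal challenge $z$ to be a vertex set $S\subseteq V$, and let $R$ hold iff $\lambda$ is a legal assignment of the required type \emph{and} either $|S|\ge k$ or $S$ is not a (strict/nonstrict) TaRDiS of $(H,\lambda)$. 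Unfolding the problem definition, $(H,k,\tau)$ is a yes-instance iff some legal $\lambda$ forces every TaRDiS of $(H,\lambda)$ to have size at least $k$, i.e.\ iff $\exists\lambda\,\forall S\colon R$; and an illegal $\lambda$ fails $R$ for every $S$, so it can never serve as a witness, as desired.

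Two things then need checking. First, the quantified objects must have polynomial size. The challenge $S$ is a subset of $V$, of size $O(|V|)$, and we may restrict attention to $|S|<k$. A happy assignment is simple, hence described by one timestamp per edge, with an encoding of size $O(|E|\log\tau)$ --- polynomial in $|(H,k,\tau)|$ whether $\tau$ is written in unary or binary. For Strict and Nonstrict a general assignment could a priori be much larger, so here I would first record a monotonicity observation: the reachability sets of $(H,\lambda)$ only grow when $\lambda$ grows, since enlarging any $\lambda(e)$ adds time-edges and every strict/nonstrict temporal path valid under the smaller assignment remains valid under the larger one. Hence replacing $\lambda$ by the simple assignment keeping a single active time per edge can only enlarge a minimum TaRDiS, so if a legal witness exists then a \emph{simple} one does, again encodable in $O(|E|\log\tau)$ bits. (This is the same fact used afterwards to justify restricting \textsc{MaxMinTaRDiS} to simple assignments.)

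Second, $R$ must be polynomial-time decidable: checking that $\lambda$ is simple and/or proper is immediate, and by the argument of Lemma~\ref{lem:innp} every reachability set of $(H,\lambda)$ can be computed in polynomial time (by a modification of breadth-first search), after which testing whether $S$ temporal-reachability-dominates $V$ is one scan; then $R$ just combines these with the cardinality test. Putting this together gives each variant the required $\exists\forall$ form and hence membership in $\Sigma_2^P$. The only genuinely non-automatic point --- the ``hard part'', such as it is --- is ruling out pathologically large non-simple witnesses in the Strict and Nonstrict cases, which the monotonicity observation handles; everything else is bookkeeping around the quantifier structure.
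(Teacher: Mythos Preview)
Your proposal is correct and follows essentially the same $\exists\lambda\,\forall S$ approach as the paper. If anything, you are more careful than the paper's own sketch: the paper simply asserts the quantifier structure and defers details, implicitly treating $\lambda$ as simple without justifying the polynomial certificate size inside this proof, whereas you explicitly handle the potential blowup for non-simple assignments via the monotonicity argument (which the paper isolates as a separate subsequent lemma).
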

\begin{proof}
    \textsc{(Strict/Nonstrict/Happy) MaxMinTaRDiS} admits a triple $(H, k, \tau)$ as a yes-instance if and only if \textit{there exists} a (happy) temporal assignment $\lambda:E(H)\to \tau$ such that \textit{for every} (Strict/Nonstrict) TaRDiS $S$ of $(H,\lambda)$, $|S|\geq k$. It follows that a formula $\Phi$ over variable sets $X$ and $Y$ can be produced such that $(H,k,\tau)$ is a yes-instance if and only if there exists an assignment to the variables in $X$ such that for every assignment to the variables of $Y$, $\Phi(X,Y)$ is \texttt{True}. We refer the reader to Section~\ref{sec:Sigma} for a more detailed construction of such a Boolean formula.
  \end{proof}

We begin with the observation that we need only consider \emph{simple} temporal assignments for the input graph.

\smallskip
\begin{lemma}\label{lem:max-min-always-simple}
    Let $(H, k, \tau)$ be a yes-instance of \textsc{MaxMinTaRDiS}. Then there exists a \emph{simple} temporal assignment $\lambda: E \to [\tau]$ such that the cardinality of the minimum TaRDiS on $(H,\lambda)$ is at least $k$.
\end{lemma}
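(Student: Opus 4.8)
The plan is to take any temporal assignment that witnesses the yes-instance and ``thin'' it to a simple one by keeping a single time per edge, arguing that deleting time-edges can only make reachability worse and hence can only increase the minimum TaRDiS. Concretely, let $\lambda$ be an assignment witnessing that $(H,k,\tau)$ is a yes-instance, so that every (strict, resp.\ nonstrict) TaRDiS of $(H,\lambda)$ has size at least $k$; for the Happy variant $\lambda$ is already simple and there is nothing to prove. For each $e\in E(H)$ pick an arbitrary $t_e\in\lambda(e)$ (possible since $\lambda(e)\neq\emptyset$) and set $\lambda'(e)=t_e$. Then $\lambda'\colon E(H)\to[\tau]$ is simple, and the footprints of $(H,\lambda)$ and $(H,\lambda')$ coincide.

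The crucial step is a one-directional monotonicity observation: the time-edge set of $(H,\lambda')$ is contained in that of $(H,\lambda)$, so any sequence of time-edges forming a strict (resp.\ nonstrict) temporal path in $(H,\lambda')$ is, verbatim, a temporal path in $(H,\lambda)$. Hence $R_u(H,\lambda')\subseteq R_u(H,\lambda)$ for every vertex $u$, in both the strict and nonstrict senses. Now if $S$ is any TaRDiS of $(H,\lambda')$ then $\bigcup_{u\in S}R_u(H,\lambda)\supseteq\bigcup_{u\in S}R_u(H,\lambda')=V(H)$, so $S$ is also a TaRDiS of $(H,\lambda)$ and therefore $|S|\ge k$. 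Since $V(H)$ is always a TaRDiS, $(H,\lambda')$ admits a minimum TaRDiS, and by the previous sentence its cardinality is at least $k$, as required.

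The only technicality is the convention that no snapshot be empty: thinning $\lambda$ to $\lambda'$ may leave some timesteps of $[\tau]$ unused. If one wishes to respect that convention, apply an order-preserving relabelling of the times actually used to $1,\dots,\tau'$ with $\tau'\le\tau$; this preserves the relative order of all time-edges, hence preserves every strict and nonstrict temporal path and therefore every reachability set and every TaRDiS, while yielding a genuine simple assignment into $[\tau']\subseteq[\tau]$. I do not anticipate a real obstacle here: the argument is entirely one-directional, and the only thing to be careful about is the direction of the inclusion $R_u(H,\lambda')\subseteq R_u(H,\lambda)$ together with the (easy) fact that shrinking all reachability sets can never shrink the minimum TaRDiS.
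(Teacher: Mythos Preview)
Your proposal is correct and follows essentially the same approach as the paper: take a witnessing assignment, keep one time per edge, observe that reachability sets can only shrink and hence the minimum TaRDiS can only grow. Your write-up is in fact slightly more careful than the paper's, both in getting the direction of the inequality right explicitly and in addressing the no-empty-snapshot convention.
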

\begin{proof}
    We begin by supposing, for a contradiction, that $(H,k,\tau)$ is an instance of \textsc{MaxMinTaRDiS} such that any optimal solution is non-simple. Denote such a solution by $\lambda^*$. By our assumption, there is at least one edge $e^*\in E(H)$ such that $|\lambda^*(e^*)|>1$. Let $\lambda$ be a simple temporal assignment such that, for all edges $e\in E(H)$, $\lambda(e)\in\lambda^*(e)$. Then, under $\lambda$, the reachability set of any vertex $v\in V(H)$ is a subset of the reachability set of $v$ under $\lambda^*$. Therefore, a minimal TaRDiS of $(H,\lambda)$ must be at most the cardinality of a minimal TaRDiS of $(H,\lambda^*)$. Since $\lambda$ is a simple temporal assignment, we have contradicted our assumption. Thus, for every instance of \textsc{MaxMinTaRDiS}, there exists an optimal solution which is simple.
 \end{proof}

\begin{lemma}[\cite{casteigts_simple_2022_bugfree}]\label{lem:edgecol}
    A static graph $H$ admits a happy temporal assignment $\lambda:E(G)\to [\tau]$ if and only if $H$ is $\tau$-edge colourable. 
\end{lemma}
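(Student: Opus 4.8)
A static graph $H$ admits a happy temporal assignment $\lambda: E(H) \to [\tau]$ if and only if $H$ is $\tau$-edge colourable. This is attributed to Casteigts, Corsini, and Sarkar, so it is likely a short observation; my plan is to prove it directly from the definitions.

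**The plan.** The key observation is that the definition of a \emph{proper} temporal graph (each snapshot has maximum degree one) and the definition of a proper edge colouring (adjacent edges receive distinct colours) are literally the same combinatorial condition, once we identify "colour" with "timestep" and recall that a happy assignment is by definition simple (each edge active exactly once) and proper. So the proof is essentially an unravelling of definitions in both directions.

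\textbf{Forward direction.} Suppose $H$ admits a happy temporal assignment $\lambda : E(H) \to [\tau]$. Since $\lambda$ is happy it is in particular proper, meaning that for every $t \in [\tau]$ the snapshot $G_t(\mathcal{G})$ has maximum degree at most one. I would now interpret $\lambda$ as an edge colouring with colour set $[\tau]$: since $\lambda$ is simple, each edge receives exactly one colour. If two edges $e = (u,v)$ and $e' = (u,w)$ share the endpoint $u$ and $\lambda(e) = \lambda(e') = t$, then $u$ would have degree at least two in $G_t(\mathcal{G})$, contradicting properness. Hence adjacent edges receive distinct colours, so $\lambda$ is a proper $\tau$-edge colouring and $H$ is $\tau$-edge colourable.

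\textbf{Reverse direction.} Conversely, suppose $H$ is $\tau$-edge colourable, witnessed by a proper edge colouring $c : E(H) \to [\tau]$. Define $\lambda := c$, viewed as a temporal assignment $E(H) \to [\tau]$ (equivalently $E(H) \to 2^{[\tau]} \setminus \{\emptyset\}$ sending $e \mapsto \{c(e)\}$). Each edge is active exactly once, so $\lambda$ is simple. For any $t$, the edges active at time $t$ are exactly the colour class $c^{-1}(t)$, which forms a matching because $c$ is proper; hence every snapshot has maximum degree at most one, so $\lambda$ is proper. Being both simple and proper, $\lambda$ is happy. (One minor bookkeeping point: the paper's convention forbids empty snapshots, so strictly one should take $\tau' \le \tau$ to be the number of colours actually used and relabel; this does not affect edge-colourability since a $\tau'$-edge colouring is also a $\tau$-edge colouring for $\tau' \le \tau$. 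I would mention this in a sentence but not belabour it.)

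\textbf{Main obstacle.} There is essentially no mathematical obstacle here — the entire content is the dictionary "snapshot $\leftrightarrow$ colour class", "proper temporal graph $\leftrightarrow$ proper edge colouring", "simple $\leftrightarrow$ each edge one colour". The only thing to be careful about is the no-empty-snapshot convention and the exact codomain of $\lambda$ for simple graphs (the paper writes $\lambda : E \to [\tau]$ in the simple case), so that the two notions of "$\tau$" line up; handling this cleanly is the single point requiring a sentence of care rather than a one-line definition chase.
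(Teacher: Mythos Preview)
Your proposal is correct and takes essentially the same approach as the paper: identify timesteps with colours and observe that the ``proper'' condition on a happy assignment is exactly the condition that adjacent edges receive distinct colours. The paper's proof is a terser two-sentence version of what you wrote, omitting the bookkeeping about empty snapshots.
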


\begin{proof}
    To see this, we assign each of the times a colour. Then, if the graph cannot be $\tau$-edge coloured, we cannot assign times to the edges such that no two adjacent edges share a time. Furthermore, if we can give $H$ a happy temporal assignment, then the corresponding assignment of colours to edges is a proper edge-colouring.
  \end{proof}

\textsc{Happy MaxMinTaRDiS} restricted to instances with $k=0$ asks only if there exists a happy assignment $\lambda$ with lifetime $\tau$ for the input graph $G$. This is equivalent to the \textsc{Edge colouring} problem with $\tau$ colours. \textsc{Edge Colouring} is NP-complete, even when the number of colours is 3 \cite{holyer1981np}. 

\smallskip
\begin{corollary}\label{cor:edgecol}
     \textsc{Happy MaxMinTaRDiS} is NP-hard for any $\tau\geq 3$, even when $k=0$. 
\end{corollary}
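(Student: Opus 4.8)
The plan is to derive Corollary~\ref{cor:edgecol} directly from Lemma~\ref{lem:edgecol} by observing that \textsc{Happy MaxMinTaRDiS} with $k=0$ degenerates into exactly the question of whether a happy temporal assignment with lifetime $\tau$ exists. First I would note that the condition ``every TaRDiS admitted by $(H,\lambda)$ is of size at least $0$'' is vacuously satisfied by any temporal assignment $\lambda$ whatsoever, since cardinalities are non-negative. Hence the instance $(H, 0, \tau)$ of \textsc{Happy MaxMinTaRDiS} is a yes-instance if and only if there exists \emph{some} happy temporal assignment $\lambda: E(H) \to [\tau]$ for $H$ — no constraint on the resulting TaRDiS is imposed.

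Next I would invoke Lemma~\ref{lem:edgecol}, which states that $H$ admits a happy temporal assignment with lifetime $\tau$ if and only if $H$ is $\tau$-edge colourable. Combining this with the previous observation, $(H, 0, \tau)$ is a yes-instance of \textsc{Happy MaxMinTaRDiS} if and only if $H$ is $\tau$-edge colourable, i.e.\ if and only if $(H,\tau)$ is a yes-instance of \textsc{Edge Colouring}. This gives a trivial (identity-on-$H$, with $k$ set to $0$) polynomial-time reduction from \textsc{Edge Colouring} with $\tau$ colours to \textsc{Happy MaxMinTaRDiS} restricted to instances with $k=0$ and lifetime $\tau$.

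Finally I would cite Holyer's result \cite{holyer1981np} that \textsc{Edge Colouring} is NP-complete already for $\tau = 3$ colours (and more generally remains hard for any fixed $\tau \geq 3$, e.g.\ by padding the graph with a disjoint $(\tau+1)$-clique minus a perfect matching, or simply noting $3$-edge-colourability reduces to $\tau$-edge-colourability for the relevant graph classes); this immediately yields NP-hardness of \textsc{Happy MaxMinTaRDiS} for every $\tau \geq 3$ even when $k = 0$, as claimed.

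I do not anticipate any genuine obstacle here: the corollary is essentially a bookkeeping consequence of Lemma~\ref{lem:edgecol} together with a known NP-completeness result. The only point requiring the slightest care is spelling out that $k=0$ makes the TaRDiS constraint vacuous, so that the quantifier alternation in \textsc{MaxMinTaRDiS} collapses to a single existential quantifier over happy assignments — which is precisely \textsc{Edge Colouring} in disguise. It is also worth remarking (as the paper does in Table~\ref{table:fpt_results}) that since this hardness holds with $k=0$ fixed, \textsc{Happy MaxMinTaRDiS} is para-NP-hard with respect to $k$.
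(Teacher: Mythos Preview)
Your proposal is correct and follows essentially the same reasoning as the paper: with $k=0$ the TaRDiS constraint is vacuous, so the problem reduces to the existence of a happy assignment, which by Lemma~\ref{lem:edgecol} is exactly $\tau$-edge-colourability, and Holyer's result gives NP-hardness for $\tau\geq 3$. Your additional remarks on extending hardness from $\tau=3$ to larger $\tau$ and on para-NP-hardness with respect to $k$ are sound elaborations that the paper leaves implicit.
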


\smallskip
 \begin{lemma}\label{lem:smalltau}
   \textsc{Nonstrict MaxMinTaRDiS} can be computed in linear time when $\tau=1$. When $\tau\le 2$, \textsc{Happy MaxMinTaRDiS} is solvable in linear time. 
\end{lemma}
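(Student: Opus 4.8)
The plan is to treat the two assertions separately; in both cases the point is that for such small $\tau$ the set of available temporal assignments is so constrained that the optimum reduces to a quantity computable by breadth-first search. Throughout, write $\mu(\mathcal G)$ for the number of vertices in a minimum TaRDiS of a temporal graph $\mathcal G$, and note that $\mu(\mathcal G)$ equals the sum of $\mu$ over the connected components of $\mathcal G_\downarrow$, since a temporal path never leaves a connected component of the footprint. For \textsc{Nonstrict MaxMinTaRDiS} with $\tau=1$ there is nothing to choose: the only map $E\to 2^{\{1\}}\setminus\{\emptyset\}$ sends every edge to $\{1\}$, and in the resulting temporal graph every edge is active at time $1$, so --- exactly as in the proof of Lemma~\ref{lem:smalltau_tardis} --- any two vertices in a common connected component of $H$ reach each other by a nonstrict path. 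Hence $\mu$ equals the number $cc(H)$ of connected components of $H$, and $(H,k,1)$ is a yes-instance if and only if $cc(H)\ge k$, which is decidable in linear time.

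For \textsc{Happy MaxMinTaRDiS} with $\tau\le 2$, Lemma~\ref{lem:edgecol} tells us a happy assignment of lifetime $\tau$ exists if and only if $H$ is $\tau$-edge-colourable. If $\tau=1$ this means $\Delta(H)\le 1$, i.e. $H$ is a disjoint union of isolated vertices and single edges; as above the happy assignment then has $\mu=cc(H)$, so $(H,k,1)$ is a yes-instance iff $\Delta(H)\le 1$ and $cc(H)\ge k$. If $\tau=2$, then $H$ must have maximum degree at most $2$ and contain no odd cycle, i.e. be a disjoint union of paths and even cycles; this is checkable in linear time, and if it fails then $(H,k,2)$ is a no-instance for every $k$ (including $k=0$, since then no $\lambda$ exists at all).

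It remains to solve the case $\tau=2$ with $H$ a disjoint union of paths and even cycles. Any combination of proper $2$-edge-colourings of the individual components is a proper $2$-edge-colouring of $H$, and conversely every proper $2$-edge-colouring of $H$ arises this way; combining this with additivity of $\mu$ over components gives
\[
\max_{\lambda}\ \mu(H,\lambda)\;=\;\sum_{C}\ \max_{\lambda_C}\ \mu(C,\lambda_C),
\]
the inner maximum ranging over the proper $2$-edge-colourings of a component $C$. A path or even cycle with at least one edge admits exactly two such colourings (an alternating colouring is determined by the colour of one edge), and a single-vertex component admits one; so for each component we evaluate $\mu(C,\lambda_C)$ for $O(1)$ colourings, each evaluation taking time linear in $|C|$ via the linear-time minimum-TaRDiS construction for happy graphs with $\tau\le 2$ from the proof of Lemma~\ref{lem:smalltau_tardis}. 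Summing the per-component optima and comparing with $k$ decides the instance in total time $O\!\big(\sum_C|C|\big)=O(n)$.

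I expect the only genuinely load-bearing step is the displayed identity --- that one may optimise each component's colouring independently --- but this is immediate once one observes both that $\mu$ decomposes as a sum over components of the footprint and that proper $2$-edge-colourings of $H$ are exactly independent choices of proper $2$-edge-colourings of its components. The remaining ingredients (linear-time recognition of ``disjoint union of paths and even cycles'', enumerating the at most two alternating colourings of each such component, and invoking the linear-time \textsc{Happy TaRDiS} solver) are routine.
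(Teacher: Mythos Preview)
Your proof is correct and follows essentially the same approach as the paper: reduce to components, observe that each path or even cycle admits only $O(1)$ happy assignments with $\tau\le 2$, and compute the minimum TaRDiS for each via Lemma~\ref{lem:smalltau_tardis}. The only cosmetic difference is that the paper explicitly identifies which of the two colourings is optimal on an odd-length path (the one assigning time~$1$ to both leaf edges), whereas you simply evaluate both and take the maximum; neither the argument nor the linear running time is affected.
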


\begin{proof}
If every edge is active at the same time, we have no choice in the temporal assignment. As shown in Lemma~\ref{lem:smalltau_tardis}, we can find the size of the minimum nonstrict TaRDiS in linear time by counting the number of connected components.

As stated in Lemma~\ref{lem:smalltau_tardis}, the footprint of a happy temporal graph with lifetime 2 necessarily consists of disconnected paths and even cycles. In paths and cycles of even length, there is only one happy temporal assignment up to symmetry. In odd paths, the optimal ordering is that wherein both edges incident to leaves are assigned time 1. This forces one of the endpoints of these edges to be in a TaRDiS, which is not the case if they are assigned time 2.
\end{proof}

\begin{lemma}\label{lem:mmtar2ds}
    For any static graph $H$ and $k\in \mathbb N^+$, $(H,k)$ is a yes-instance of \textsc{Strict MaxMinTaRDiS} if and only if $(H,k-1)$ is a no-instance of \textsc{Dominating Set}.
\end{lemma}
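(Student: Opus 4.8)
The plan is to prove the slightly stronger statement that the optimal value of \textsc{Strict MaxMinTaRDiS} on $H$ --- the largest $k$ for which some temporal assignment makes every strict TaRDiS of $(H,\lambda)$ have size at least $k$ --- equals the size of a minimum dominating set of $H$. The claimed equivalence is then immediate, since $(H,k-1)$ is a no-instance of \textsc{Dominating Set} exactly when $H$ has no dominating set of size at most $k-1$, i.e. when its domination number is at least $k$.

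The argument rests on two easy observations about strict reachability. First, for \emph{any} temporal assignment $\lambda$ of $H$, every dominating set of $H$ is a strict TaRDiS of $(H,\lambda)$: whenever $(u,v)\in E(H)$, picking any $t\in\lambda(u,v)$ makes the single time-edge $((u,v),t)$ a valid strict temporal path, so $N[u]\subseteq R_u^<(H,\lambda)$ for every vertex $u$; hence if $S$ dominates $H$ then $\bigcup_{u\in S}R_u^< \supseteq \bigcup_{u\in S}N[u]=V(H)$, so $S$ is a strict TaRDiS. Second, if $\lambda$ assigns every edge the single time $1$ (so the lifetime is $1$), then no strict temporal path can use more than one edge, since the times along a strict path must strictly increase; therefore $R_u^< = N[u]$ for all $u$, and the strict TaRDiSes of $(H,\lambda)$ are precisely the dominating sets of $H$.

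Putting these together: the first observation shows that for every $\lambda$ the minimum strict TaRDiS of $(H,\lambda)$ has size at most the domination number of $H$, while the second exhibits a $\lambda$ attaining this bound. Hence the optimal value of \textsc{Strict MaxMinTaRDiS} on $H$ is exactly the domination number, which yields the two directions of the lemma directly: if $H$ has no dominating set of size $\le k-1$ then the all-time-$1$ assignment witnesses that $(H,k)$ is a yes-instance; conversely, if $D$ is a dominating set with $|D|\le k-1$, then by the first observation $D$ is a strict TaRDiS of size $<k$ under \emph{every} temporal assignment, so no $\lambda$ can force all strict TaRDiSes to have size at least $k$, and $(H,k)$ is a no-instance.

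I do not anticipate a genuine obstacle: the content is entirely in the two observations, and the only things to get right are the off-by-one in the threshold and the fact that the upper bound ``minimum strict TaRDiS $\le$ domination number'' holds \emph{uniformly} over all temporal assignments --- so no choice of $\lambda$, and in particular no choice of lifetime $\tau$, can push the minimum strict TaRDiS above the domination number of $H$.
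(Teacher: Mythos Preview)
Your proof is correct and takes essentially the same approach as the paper's: both establish that the constant (all-time-$1$) assignment is optimal by noting that under any $\lambda$ every vertex strictly reaches at least its closed neighbourhood (so any dominating set is a strict TaRDiS), while under the constant assignment strict reachability sets equal closed neighbourhoods exactly. Your presentation is slightly more self-contained, proving the ``$R_u^< = N[u]$ under constant $\lambda$'' step directly rather than citing the earlier reduction lemma, but the underlying argument is the same.
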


\begin{proof}
    We first show that the constant function $\lambda$ is an optimal one (i.e. one which maximizes the size of the minimum TaRDiS for $(H,\lambda)$).
    Suppose otherwise, that we have an optimal temporal assignment $\lambda'$ where there exist edges $e$, $e'$ such that $\lambda'(e)\neq \lambda'(e')$. Then, under a temporal assignment $\lambda(e)=c$ for all edges $e\in E$ and $c\in\mathbb{N}$, all vertices have reachability sets whose cardinality are bounded above by the size of their reachability set under $\lambda'$. Therefore a minimum TaRDiS under $\lambda$ must be at least the size of a minimum TaRDiS under $\lambda'$. Applying Lemma~\ref{lem:tar2ds}, we see that $(H,k)$ is a yes-instance of \textsc{Strict MaxMinTaRDiS} if and only if $(H,k-1)$ is a no-instance of \textsc{Dominating Set}.
\end{proof}
\begin{corollary}\label{cor:ds-strict-mmtardis}
    \textsc{Strict MaxMinTaRDiS} is coNP-complete, coW[2]-complete with respect to $k$, and para-NP-hard with respect to $\Delta+\tau$.
\end{corollary}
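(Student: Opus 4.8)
The plan is to read off all three claims from Lemma~\ref{lem:mmtar2ds}, which already supplies the essential ingredient: the polynomial-time-computable equivalence ``$(H,k)$ is a yes-instance of \textsc{Strict MaxMinTaRDiS} if and only if $(H,k-1)$ is a no-instance of \textsc{Dominating Set}''. Taken contrapositively, the map $(G,k')\mapsto(G,k'+1)$ is a many-one reduction from \textsc{Dominating Set} to the \emph{complement} of \textsc{Strict MaxMinTaRDiS}, and $(H,k)\mapsto(H,k-1)$ is a many-one reduction the other way; since each alters the parameter only by~$1$, both are simultaneously polynomial-time reductions and parameterized reductions. Every remaining claim is then a transfer of a known classification of \textsc{Dominating Set} across one of these two reductions, taking care to flip ``yes'' with ``no'' and ``W[2]'' with ``co-W[2]'' wherever a complement is involved.

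Concretely, I would first handle classical complexity. Membership in coNP follows because, by Lemma~\ref{lem:mmtar2ds}, deciding $(H,k)$ is exactly deciding that $H$ admits \emph{no} dominating set of size $\le k-1$, the complement of an NP predicate; this is sharper than the generic $\Sigma_2^P$ bound of Lemma~\ref{lem:insigma} precisely because in the strict setting a constant temporal assignment is provably optimal, so the outer existential quantifier over assignments collapses. coNP-hardness comes from composing NP-hardness of \textsc{Dominating Set}~\cite{garey_simplified_1974} with the reduction $(G,k')\mapsto(G,k'+1)$, which makes the complement of \textsc{Strict MaxMinTaRDiS} NP-hard. Together these give coNP-completeness.

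For the parameterized statements: \textsc{Dominating Set} parameterized by solution size is W[2]-complete~\cite{downey_fixed-parameter_1995}, so the parameterized reduction $(G,k')\mapsto(G,k'+1)$ makes the complement of \textsc{Strict MaxMinTaRDiS} (parameter $k$) W[2]-hard while $(H,k)\mapsto(H,k-1)$ places it in W[2]; hence \textsc{Strict MaxMinTaRDiS} parameterized by $k$ is co-W[2]-complete. For para-NP-hardness in $\Delta+\tau$, note that \textsc{Dominating Set} is already NP-hard on planar graphs of maximum degree~$3$~\cite{garey_simplified_1974}, and the reduction applied to such an instance outputs a \textsc{Strict MaxMinTaRDiS} instance on the same graph with $\tau$ free to be taken as~$1$ (one timestep carrying the constant assignment is legal and optimal); thus hardness persists with $\Delta(\mathcal{G}_\downarrow)$ and $\tau$ both bounded. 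I do not anticipate a real obstacle: Lemma~\ref{lem:mmtar2ds} does the work, and the only care needed is the complementation bookkeeping and the observation that $\tau=1$ is admissible. If one preferred not to invoke the general-$\tau$ form of Lemma~\ref{lem:mmtar2ds}, one could argue directly that on a single timestep every strict temporal path has length at most one, so each vertex's strict reachability set is its closed neighbourhood and a strict TaRDiS is literally a dominating set --- reproving the needed equivalence in a line.
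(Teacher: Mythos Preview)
Your proposal is correct and takes essentially the same approach as the paper: the corollary is stated immediately after Lemma~\ref{lem:mmtar2ds} with no separate proof, and you have simply spelled out the routine transfer of the known hardness and completeness results for \textsc{Dominating Set} across the parameter-preserving equivalence that lemma provides. Your observation that one may fix $\tau=1$ (since the constant assignment is optimal regardless of $\tau$) is exactly what is needed to pin down the $\Delta+\tau$ parameter, and your remark that the outer existential over assignments collapses is precisely why membership drops from $\Sigma_2^P$ to coNP here.
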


\subsection{\texorpdfstring{$\Sigma_2^P$-completeness}{Sigma2P-completeness} of \textsc{Happy MaxMinTaRDiS} with lifetime 3}\label{sec:Sigma}

As stated in Corollary~\ref{cor:edgecol}, \textsc{Happy MaxMinTaRDiS} is trivially NP-hard even for lifetime $\tau=3$. Lemma~\ref{lem:insigma} gives us that the problem is in $\Sigma_2^P$. We characterize the problem's complexity exactly, showing it is $\Sigma_2^P$-complete.

\begin{definition}[$\Sigma_2^P$ - adapted from \cite{arora_computational_2009_bugfree}, Definition 5.1]
    $\Sigma_2^P$ is the set of all languages $L$ for which there exists a polynomial-time Turing Machine $M$ and a polynomial $q$ such that
    $$x\in L\iff \exists u\in\{0,1\}^{q(|x|)} \forall v\in\{0,1\}^{q(|x|)} M(x,u,v)=1$$ 
    for every $x\in\{0,1\}^*$.
\end{definition}
Arora and Barak \cite{arora_computational_2009_bugfree} note that $\text{NP} \cup \text{co-NP} \subseteq \Sigma_2^P$ and that $\Sigma^P_1=\text{NP}$. The complement of a $\Pi_i^P$-complete problem is necessarily $\Sigma_i^P$-complete \cite{papadimitriou_computational_1994}. 

We begin by presenting its problem \textsc{Restricted Planar Satisfiability}, which is $\Pi_2^P$ complete \cite{gutner_complexity_2008}.

\begin{framed}
    \noindent
    \textbf{\textsc{Restricted Planar Satisfiability (RPS)}}\\
    \emph{Input:} An expression of form $(\forall X) (\exists Y) \Phi(X,Y)$ with $\Phi$ a CNF formula over the set $X\cup Y$ of variables. Each clause contains exactly 3 distinct variables, each variable occurs in exactly three clauses, every literal appears at most twice, and the graph $G_\Phi$ is planar.\\
    \emph{Question:} Is the expression true?
\end{framed}

We also introduce the complement problem \textsc{co-RPS}, which we will reduce from.

\begin{framed}
    \noindent
    \textbf{\textsc{co-Restricted Planar Satisfiability (co-RPS)}}\\
    \emph{Input:} An expression of form $(\exists X) (\nexists Y) \Phi(X,Y)$ with $\Phi$ a CNF formula over the set $X\cup Y$ of variables. Each clause contains exactly 3 distinct variables, each variable occurs in exactly three clauses, every literal appears at most twice, and the graph $G_\Phi$ is planar.\\
    \emph{Question:} Is the expression true?
\end{framed}

\smallskip
\begin{lemma}\label{lem:corps_sigma2p}
    The problem co-\textsc{RPS} is $\Sigma_2^P$ complete. 
\end{lemma}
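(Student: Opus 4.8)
The plan is to derive the claim almost immediately from the two cited facts: that \textsc{RPS} is $\Pi_2^P$-complete \cite{gutner_complexity_2008}, and that the complement of any $\Pi_2^P$-complete language is $\Sigma_2^P$-complete \cite{papadimitriou_computational_1994}. The only real content is to check that \textsc{co-RPS}, as defined above, is genuinely --- up to a trivial polynomial-time transformation --- the complement of \textsc{RPS}.

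First I would observe that the two problems take syntactically identical inputs: an expression built from a CNF formula $\Phi$ over a partitioned variable set $X\cup Y$, together with the structural side conditions (each clause contains exactly three distinct variables, each variable occurs in exactly three clauses, each literal appears at most twice, and $G_\Phi$ is planar). Each of these conditions is decidable in polynomial time --- in particular planarity of $G_\Phi$ by a linear-time planarity test --- so we may adopt the convention that a string which does not encode a well-formed instance is a no-instance of both problems. Under this convention \textsc{RPS} and \textsc{co-RPS} are languages over the same alphabet.

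Next I would invoke the logical equivalence $(\exists X)(\nexists Y)\,\Phi(X,Y) \equiv \neg\bigl((\forall X)(\exists Y)\,\Phi(X,Y)\bigr)$, which is immediate from the De Morgan laws for quantifiers. Hence, on well-formed inputs, an instance is a yes-instance of \textsc{co-RPS} precisely when the same instance is a no-instance of \textsc{RPS}; combined with the convention on malformed inputs, this shows that the language of \textsc{co-RPS} equals $\{0,1\}^* \setminus L$ where $L$ is the language of \textsc{RPS}. Since $\textsc{RPS}\in\Pi_2^P$ and is $\Pi_2^P$-hard, its complement lies in $\Sigma_2^P$ and is $\Sigma_2^P$-hard (the identity map serves as a polynomial-time many-one reduction witnessing hardness in both directions), so \textsc{co-RPS} is $\Sigma_2^P$-complete.

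The one point requiring care --- and the step I expect to be the main (mild) obstacle --- is the treatment of the promise/restriction: \textsc{RPS} is stated as a problem over a restricted class of inputs, and ``complement'' must be interpreted as complement within $\{0,1\}^*$ for the transfer result of \cite{papadimitriou_computational_1994} to apply verbatim. Polynomial-time recognizability of well-formed instances is exactly what lets us convert the restricted problem into an ordinary language and makes this step rigorous; no genuine combinatorial difficulty arises.
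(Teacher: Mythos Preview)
Your proposal is correct and follows essentially the same approach as the paper: cite Gutner for $\Pi_2^P$-completeness of \textsc{RPS} and invoke the general fact (already stated in the paper just before the lemma) that complements of $\Pi_2^P$-complete languages are $\Sigma_2^P$-complete. You are simply more explicit than the paper about the logical equivalence and the handling of malformed inputs, which is fine but not something the paper bothers to spell out.
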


\begin{proof}
    Gutner shows \textsc{Restricted Planar Satisfiability (RPS)} to be $\Pi_2^P$-complete in \cite{gutner_complexity_2008}. Note that they do not state the restriction that each literal appears at most twice in the lemma stating their result, but this can be seen from their construction. 
  \end{proof}

\subsubsection{Intuition}
We can imagine problems in $\Sigma_2^P$ as games with two players. In \textsc{co-RPS} the first player chooses an assignment to variables in $X$, then the second player chooses an assignment to the variables in $Y$. If the resulting assignment to $X\cup Y$ satisfies $\Phi$ then the second player wins; otherwise the first player wins. Analogously, in \textsc{Happy MaxMinTaRDiS} the first player chooses a happy temporal assignment $\lambda$ with lifetime $\tau$ for the edges of $G$, then the second player chooses a set $S$ of vertices in $G$ of size at most $k-1$. If $S$ is a TaRDiS of $(G,\lambda)$ then the second player wins; otherwise the first player wins.

We perform this reduction by creating gadgets which force the first player to choose a function $\lambda$ which is \emph{nice}. Essentially, $\lambda$ must assign specific times to certain edges. This means that if the first player does not do this, the second player can always win. We can then replicate some techniques from the proof of Theorem~\ref{thm:happytardis} to encode clauses being satisfied. The choice of TaRDiS by the second player then encodes a truth assignment to variables in $Y$. By allowing a small amount of freedom in the choice of $\lambda$, we allow the first player to encode a truth assignment to the variables of $X$. We now give the formal construction.

\subsubsection{Construction}
Given an instance $(\Phi,X,Y)$ of \textsc{co-RPS}, we will produce an instance $(G, k, \tau=3)$ of \textsc{MaxMinTaRDiS}. The number of clauses in $\Phi$ is denoted by $m$. We label the number of variables in $X$ (respectively $Y$) with $n_X$ (resp. $n_Y$). The total number of variables is $n$. Note that $3m \geq n\geq m$, so the size of the \textsc{co-RPS} instance is linear in $n$. Further, we denote $X = \{x_1,\ldots, x_{n_x}\}$ and $Y=\{y_{n_x+1}, \ldots, y_{n_x+n_y}\}$. We say that the literal $l_{2i}$ (respectively, $l_{2i-1}$) is the positive (resp., negative) literal for variable $x_i$ if $i\leq n_x$, and the positive (resp. negative) literal for variable $y_i$ if $n_x<i\leq n_y$. 

We now describe the construction of $G$ and $k$. The construction of our gadgets is intended to guarantee that any optimal temporal assignment $\lambda$ will have certain properties. First, we define the Uncovered 2-Gadget, Uncovered 3-Gadget, and Covered 2-Gadget. Each of these is treated in our construction as a vertex of degree one. The construction for each of these makes use of large values $\alpha$ and $\beta$. Let $\beta=100n$ and $\alpha=600n\beta+600n+2$. In particular we require that $\alpha >> \beta >> n$, $\beta$ is even, and $\alpha \equiv 2 \mod 12$.

\subsubsection{Uncovered 2-Gadget (U2G)}
This construction is illustrated in Figure~\ref{fig:u2g}. We use a U2G by connecting it to some vertex $x$ elsewhere in the construction. Given some such $x$, we create vertices $y$ and $w$ and a ladder graph on $2\alpha$ vertices $\{u_1, \ldots, u_\alpha, v_1, \ldots, v_\alpha\}$, with the edges $(x,y), (y,u_1),(y,v_1),(w,u_{\alpha}), (w,v_{\alpha})$. The \emph{ladder graph} $L_n$ on $2n$ vertices has vertex set $V(L_n)=\{u_1,\ldots,u_n,v_1,\ldots,v_n\}$, and edge set $E(L_n)=\{(u_i,u_{i+1})|i\in \mathbb [n-1]\}\cup \{(v_i,v_{i+1})|i\in \mathbb [n-1]\} \cup \{(u_i,v_i)|i\in \mathbb [n]\}$.

\subsubsection{Covered 2-Gadget (C2G)}
The construction is very similar to that of the U2G - the only difference is the incrementation of the ladder length by 1. Given $x$, we create vertices $y$ and $w$ and a ladder graph on $2\alpha+2$ vertices $\{u_1, \ldots, u_{\alpha+1}, v_1, \ldots, v_{\alpha+1}\}$, and edges $(x,y), (y,u_1),(y,v_1),(w,u_{\alpha+1}), (w,v_{\alpha+1})$. 

\begin{figure}[!ht]
    \centering
    \includegraphics[page=1, width=\textwidth]{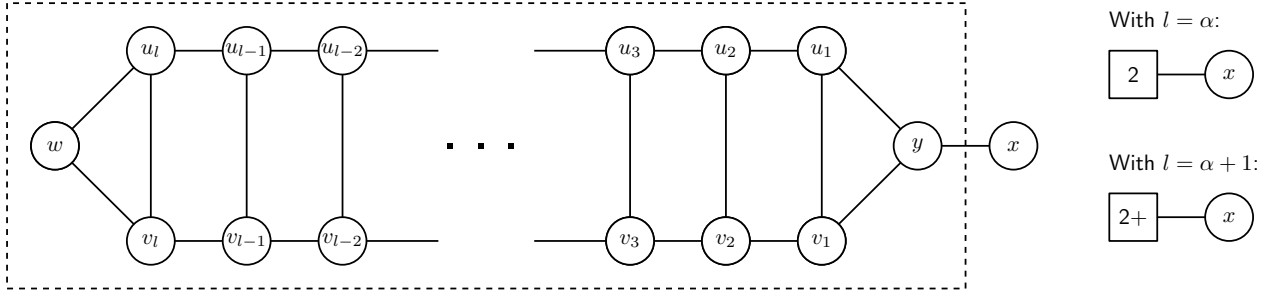}
    \caption{A vertex $x$ incident to the Uncovered 2 Gadget (U2G) or Covered 2 Gadget (C2G). Left: construction of the gadget where $l=\alpha$ for a U2G and $l=\alpha+1$ for a C2G. Right: usage in later constructions. Note that planarity is preserved.}
    \label{fig:u2g}
\end{figure}

\subsubsection{Uncovered 3-Gadget (U3G)}
The construction is illustrated in Figure~\ref{fig:u3g}. We start with the ladder graph on $2\alpha$ vertices $\{u_1, \ldots, u_{\alpha}, v_1, \ldots, v_{\alpha}\}$, and doubly subdivide the edge $(u_i,u_{i+1})$ (respectively $(v_i,v_{i+1})$) whenever $i$ is even (resp. odd) with two new vertices $a_i, b_i$. We say \emph{doubly subdivide} an edge $(u,v)$ to refer to the deletion of $(u,v)$ and introduction of two vertices $a,b$ and three edges $(u,a),(a,b),(b,v)$. We add the vertices $y_1,y_2,y_3,w$ and edges $(x,y_1)$, $(y_1,y_2)$, $(y_2,y_3)$, $(y_3,u_1)$, $(y_3,v_1)$, $(w,u_{\alpha})$. We denote the set of internal vertices of the U3G by $V_\text{U3G}=\{a_1, \ldots, a_{\beta-1}, b_1, \ldots b_{\beta-1}, y_1, y_2, y_3, u_1, \ldots, u_\beta, v_1, \ldots, v_\beta, w\}$. Internal vertices are highlighted by a red (solid) box in Figure~\ref{fig:u3g}. We make any vertex not already of degree 3 in $V_\text{U3G}$ incident to a U2G. 

\begin{figure}[!ht]
    \centering
    \includegraphics[page=4, width=\textwidth]{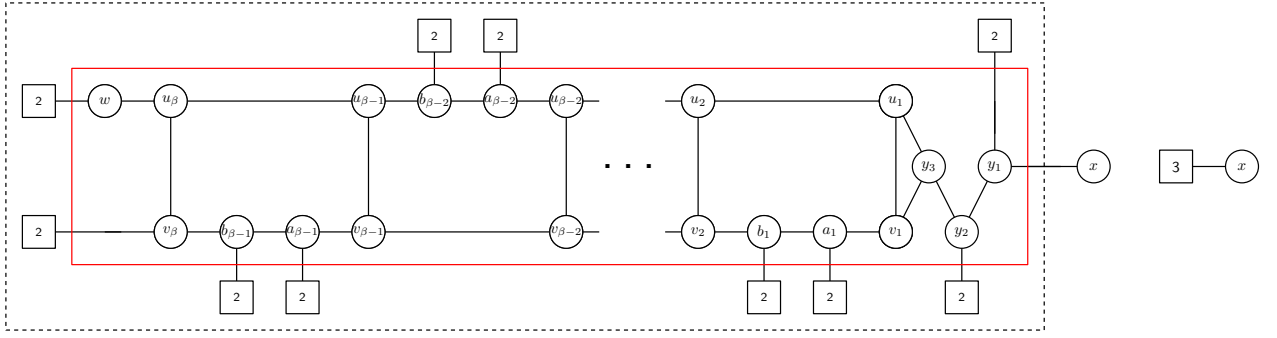}
    \caption{A vertex $x$ incident to the Uncovered 3 Gadget (U3G). Left: construction of the gadget; vertices of $V_\text{U3G}$ are in the red (solid) box. Right: usage in later constructions. Note that planarity is preserved.}
    \label{fig:u3g}
\end{figure}

\subsubsection{Uncovered 1-Gadget (U1G)}
For a U1G, we simply create a vertex $y$ incident to both a C2G and a U3G as shown in Figure~\ref{fig:u1g}, then add an edge from $y$ to the target vertex $x$.

\begin{figure}[!ht]
    \centering
    \includegraphics[page=2, width=.5\textwidth]{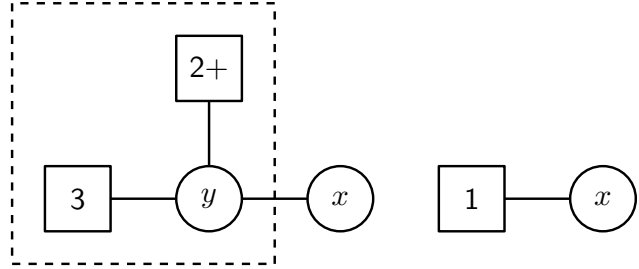}
    \caption{A vertex $x$ incident to the Uncovered 1 Gadget (U1G). Left: implementation of the gadget. Right: usage in other constructions.}
    \label{fig:u1g}
\end{figure}

\subsubsection{Construction: literal vertices and clause gadgets}
Exactly $3m$ \emph{literal vertices} and $m$ \emph{clause gadgets} are created.
We iterate over $\Phi$, and for each clause $c_j$ we create a cycle on 6 new vertices $Q_j=\{q_j^1,\ldots,q_j^6\}$, and make $q_j^2,q_j^4,q_j^6$ each incident to a U2G. Where the clause $c_j$ contains the $a$th appearance of some literal $l_i$ in $\Phi$, we create two new \emph{literal vertices} $l_i^a$ and $\overline{l_i^a}$ connected by an edge, and make each of $l_i^a$ and $\overline{l_i^a}$ incident to a new U1G. We make $l_i^a$ incident to a vertex from $\{q_j^1, q_j^3, q_j^5\}$, such that each of these is adjacent to exactly one literal vertex. Figure~\ref{fig:mmt_clause} illustrates this construction. Vertices $u,v,w, \overline{u},\overline{v},\overline{w}$ are literal vertices for which the corresponding literals appear in $c_j$. 

\begin{figure}[!ht]
    \centering
    \includegraphics[width=.7\textwidth, page=3]{IPE_TaRDiS_figs.pdf}
    \caption{The clause gadget for clause $c_j$ together with an example \emph{nice} assignment. If, for example, clause $c_j=(l_{5}\lor l_{17} \lor l_{20})$ is the first appearance of $l_{5}$ and $l_{20}$ and the second appearance of $l_{17}$ in $\Phi$, then $u=l_{5}^1$, $v=l_{17}^2$ and $w=l_{20}^1$. The neighbourhoods of dashed vertices are shown in Figures \ref{fig:xvar} and \ref{fig:yvar} respectively depending on whether they correspond to variables in $X$ or $Y$.}
    \label{fig:mmt_clause}
\end{figure}
\begin{figure}[!hb]
    \centering
    \includegraphics[page=5,width=\textwidth]{IPE_TaRDiS_figs.pdf}
    \caption{The $X$-variable gadget for variable $x_i$ together with an example gadget-respecting assignment; note that either $A=1$ and $B=3$ or $B=1$ and $A=3$. Only in the former case, which corresponds to setting $x_i$ to \texttt{True}, does $x_i^4$ have a temporal path to $l_{2i}^1$. Here $x_i$ appears twice negatively and once positively, hence $x_i^9$ is not incident to a literal vertex. The neighbourhoods of dashed vertices are shown in Figure \ref{fig:mmt_clause}.}
    \label{fig:xvar}
\end{figure}

\subsubsection{Construction: \texorpdfstring{$X$}{X}-variable gadget}
This construction is illustrated in Figure~\ref{fig:xvar}. For each variable $x_i\in X$, we create 13 vertices $\{x_i^1,\ldots,x_i^{13}\}$ connected in a path, with vertex $x_i^p$ incident to a U2G if $p$ is even, and incident to a C2G for $p\in\{1,7,13\}$. Lastly we add the edges $(x_i^3,\overline{l_{2i}^1}),(x_i^5,\overline{l_{2i-1}^1}),(x_i^9,\overline{l_{2i}^2})$, and $(x_i^{11},\overline{l_{2i-1}^2})$ whenever the appropriate literal vertex exists. Recall that, by the restrictions on $\Phi$, at most one of the vertices $l_{2i}^2$ and $l_{2i-1}^2$ exist in our construction for any input. 

\subsubsection{Construction: \texorpdfstring{$Y$}{Y}-variable gadget}
This construction is illustrated in Figure~\ref{fig:yvar}. For each variable $y_i\in Y$, we create 8 vertices $\{a_i, v_i^1, v_i^2, b_i, F_i^1, F_i^2, T_i^2, T_i^1\}$ connected in a cycle, with vertices $v_i^1$ and $v_i^2$ incident to U3Gs and vertices $a_i$ and $b_i$ incident to U1Gs. Then we add edges $(T_i^1,\overline{l_{2i}^1}),(T_i^2,\overline{l_{2i}^2}),(F_i^2, \overline{l_{2i-1}^2}),(F_i^1,\overline{l_{2i-1}^1})$ whenever the appropriate literal vertex exists. If vertex $T_i^2$ is not incident to a literal vertex, we connect it to a U2G. Otherwise, $F_i^2$ is connected to a U2G. 
\begin{figure}
    \centering
    \includegraphics[page=6,width=.7\textwidth]{IPE_TaRDiS_figs.pdf}
    \caption{The $Y$-variable gadget for variable $y_i$ together with an example \emph{nice} assignment. Note the similarity to the variable gadget in Figure~\ref{fig:happy_tardis}. Here $x_i$ appears twice negatively and once positively, hence $T_i^2$ is connected to a U2G and not a literal vertex. The neighbourhoods of dashed vertices are shown in Figure \ref{fig:mmt_clause}.}
    \label{fig:yvar}
\end{figure}

\subsubsection{Construction: \texorpdfstring{$k$}{k}}
In order to state $k$, we first define some auxiliary variables. As stated earlier, the number of $X$-gadgets, $Y$-gadgets and clause-gadgets is $n_X,n_Y$ and $m$ respectively. We then define:
\begin{align*}
\#_\text{lit} = 2\cdot3m \hspace{10pt}&\text{(The number of literal nodes.)}\\
\#_\text{U1G} = \#_\text{lit} + 2\cdot n_Y \hspace{10pt}&\text{(The number of U1Gs.)}\\
\#_\text{U3G} = 2n_Y + \#_\text{U1G} \hspace{10pt}&\text{(The number of U3Gs.)}\\
\#_\text{C2G} = 3n_X + \#_\text{U1G} \hspace{10pt}&\text{(The number of C2Gs.)}\\
\#_\text{U2G} = \#_\text{U3G}(2\beta + 2) + 3m + 6n_X + n_Y \hspace{10pt}&\text{(The number of U2Gs.)}\\
\end{align*}

We now define $k$:
\begin{align*}
k & = \#_\text{U2G}\cdot\left(\frac{\alpha+1}{3}\right) + \#_\text{C2G}\cdot\left(\frac{\alpha+4}{3}\right) + \#_\text{U3G}\cdot(\beta) + 2m + 3n_X + 3n_Y + 1 
\end{align*}

This concludes the construction of the \textsc{Happy MaxMinTaRDiS} instance.

\subsubsection{Properties of the construction} We first define the temporal assignments of interest to us.

\begin{definition}[Gadget-respecting, Nice Temporal Assignment]
    A happy temporal assignment $\lambda$ for the graph $G$ described above is:
    \begin{itemize}
        \item \emph{2-gadget-respecting} if every edge $(x,y)$ incident to a U2G or C2G is assigned time 2 under $\lambda$.
        \item \emph{3-gadget-respecting} (resp. \emph{1-gadget-respecting}) if every edge $(x,y)$ incident to a U3G (resp. U1G) is assigned time 3 (resp. 1) under $\lambda$. 
    \end{itemize}
    If $\lambda$ is 1-, 2- and 3-gadget-respecting, we say that $\lambda$ is \emph{nice}.
\end{definition}

We will show that a happy temporal assignment $\lambda$ for $G$ satisfies that every TaRDiS of $(G,\lambda)$ has cardinality at least $k-1$ if and only if $\lambda$ is \emph{nice}. We then show that there exists a $\lambda$ such that every TaRDiS of $(G,\lambda)$ has cardinality at least $k$ if and only if $(\Phi, X, Y)$ is a yes-instance of \textsc{co-RPS}.

\begin{lemma}\label{lem:U2G}
    For any happy temporal assignment $\lambda$ on any U2G gadget incident to $x$ consisting of vertices $V_\text{U2G}=\{y,w,u_1, \ldots, u_\alpha, v_1, \ldots, v_\alpha\}$, if $\lambda(x,y)=2$ 
    \begin{itemize}
        \item exactly $\frac{\alpha+1}{3}$ vertices from $V_\text{U2G}$ are needed to temporal reachability dominate $V_\text{U2G}$;
        \item  no choice of $\frac{\alpha+1}{3}$ vertices temporally dominates $V_\text{U2G}\cup \{x\}$.
    \end{itemize}
     If $\lambda(x,y)\neq 2$, it is possible to temporally dominate $V_\text{U2G}$ with exactly $\frac{\alpha+2}{4}$ vertices. 
\end{lemma}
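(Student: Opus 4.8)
\emph{Step 1: the assignment on the gadget is forced.} Since $\mathcal G$ is happy, every snapshot has maximum degree $1$, so the (at most three) time-edges incident to any vertex carry pairwise distinct times in $\{1,2,3\}$. I would start the argument at $y$, which has degree $3$. If $\lambda(x,y)=2$ then $\{\lambda(y,u_1),\lambda(y,v_1)\}=\{1,3\}$; assume (up to swapping the two rails, which is WLOG) that $\lambda(y,u_1)=1$ and $\lambda(y,v_1)=3$. Then at $u_1$ the two remaining edges take times $\{2,3\}$, at $v_1$ they take $\{1,2\}$, which forces $\lambda(u_1,v_1)=2$, $\lambda(u_1,u_2)=3$, $\lambda(v_1,v_2)=1$. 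Propagating this along the ladder by induction (at each rung the intersection of the two allowed sets is the singleton $\{2\}$) shows the whole assignment is determined: every rung $(u_i,v_i)$ has time $2$, the $u$-rail edges alternate $3,1,3,1,\dots$, the $v$-rail edges alternate $1,3,1,3,\dots$, and at $w$ one is forced to $\lambda(w,u_\alpha)=1$, $\lambda(w,v_\alpha)=3$ (using that $\alpha$ is even).

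\emph{Step 2: with $\lambda(x,y)=2$, $\tfrac{\alpha+1}{3}$ vertices suffice and are necessary.} Under the forced assignment, any temporal path has at most $3$ edges (the times lie in $\{1,2,3\}$ and strictly increase). The key structural fact is that, because of the rungs, for every $p\in V_\text{U2G}$ one has $|R_p\cap V_\text{U2G}|\le 6$; this is a short, periodic case check (distinguishing interior columns, by parity, from the two ends, where $y,u_1,u_\alpha,w$ reach only $4$ vertices). Since $|V_\text{U2G}|=2\alpha+2$, any set reaching all of $V_\text{U2G}$ has size at least $\tfrac{2\alpha+2}{6}=\tfrac{\alpha+1}{3}$, giving the lower bound. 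For the matching upper bound I would exhibit $S=\{v_1\}\cup\{u_4,v_7,u_{10},\dots\}$, where the $t$-th chosen vertex has index $3t-2$ and lies on the $v$-rail when $t$ is odd and the $u$-rail when $t$ is even. Using $\alpha\equiv 2\pmod{12}$, one has $\tfrac{\alpha+1}{3}=4k+1$ (odd), so the last chosen vertex is $v_{\alpha-1}$; I would check that $R_{v_1},R_{u_4},R_{v_7},\dots$ are pairwise disjoint $6$-element sets whose union is all of $V_\text{U2G}$ (the final set, $R_{v_{\alpha-1}}$, being the one that contains $w$).

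\emph{Step 3: $\tfrac{\alpha+1}{3}$ vertices cannot dominate $V_\text{U2G}\cup\{x\}$.} Any temporal path ending at $x$ must use $(x,y)$ (time $2$) last, and the only time-$1$ edge at $y$ is $(y,u_1)$, so $x$ is reached only from $\{x,y,u_1\}$; likewise, any vertex outside $V_\text{U2G}$ reaches at most $\{y,v_1\}$ inside $V_\text{U2G}$. Suppose $S$ dominates $V_\text{U2G}\cup\{x\}$ with $|S|\le\tfrac{\alpha+1}{3}$, and let $j=|S\cap V_\text{U2G}|$. Counting: $S$ reaches at most $6j+2(\tfrac{\alpha+1}{3}-j)$ vertices of $V_\text{U2G}$, and forcing this to be $\ge 2\alpha+2$ yields $j\ge\tfrac{\alpha+1}{3}$; hence $S\subseteq V_\text{U2G}$, $|S|=\tfrac{\alpha+1}{3}$, and every vertex of $S$ must reach \emph{exactly} $6$ vertices of $V_\text{U2G}$. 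But to dominate $x$ the set $S$ must contain $y$ or $u_1$, each of which reaches only $\{y,u_1,v_1,u_2\}$, a contradiction.

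\emph{Step 4: the case $\lambda(x,y)\ne 2$.} The same forcing argument (now $\lambda(x,y)\in\{1,3\}$) again pins down the assignment up to the rail swap, but with rungs all at time $\lambda(x,y)$ and rails alternating over the other two values. In this pattern certain vertices — e.g.\ $u_2$, and more generally $u_{4j+2}$ — reach $8$ vertices of $V_\text{U2G}$, and I would verify that $S=\{u_2,u_6,u_{10},\dots,u_{12k-2}\}\cup\{v_\alpha\}$, of size $3k+1=\tfrac{\alpha+2}{4}$, covers all of $V_\text{U2G}$ (the two reachability sets that contain $w$ overlapping slightly). The case $\lambda(x,y)=3$ is symmetric. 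The main obstacle throughout is the bookkeeping in Steps 2 and 4: checking $|R_p\cap V_\text{U2G}|\le 6$ (resp.\ that the $8$-vertex reachability sets tile as claimed) uniformly in $p$, including the behaviour at the two ends of the ladder, which is exactly where $\alpha\equiv 2\pmod{12}$ is needed to make the indices, types, and integrality all line up.
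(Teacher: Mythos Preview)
Your proposal is correct and follows essentially the same approach as the paper: establish uniqueness of the proper $3$-edge-colouring, bound $|R_p\cap V_\text{U2G}|\le 6$ when $\lambda(x,y)=2$, use counting for the lower bound, and exhibit explicit dominating sets (your set $\{v_1,u_4,v_7,\dots,v_{\alpha-1}\}$ in Step~2 is exactly the paper's).

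Two minor differences are worth noting. For the second bullet, the paper uses a one-line count: since every vertex reaches at most $6$ vertices of $V_\text{U2G}\cup\{x\}$, any $(\alpha+1)/3$ vertices reach at most $2\alpha+2<2\alpha+3=|V_\text{U2G}\cup\{x\}|$. Your Step~3 argument (forcing $S\subseteq V_\text{U2G}$, then deriving that every chosen vertex must reach exactly $6$, then observing $y,u_1$ reach only $4$) is correct but more work than needed. For Step~4, the paper exhibits $\{v_1,v_5,v_9,\dots,v_{\alpha-1}\}$ rather than your $\{u_2,u_6,\dots,u_{12k-2},v_\alpha\}$; both are valid, and your observation that certain vertices reach $8$ others is the right mechanism in either case.
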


\begin{proof}
    It is easy to check that there is only one proper 3-colouring of the edges of the induced graph on $V_\text{U2G}\cup \{x\}$ up to isomorphism. Denote by $A$ the colour of $(x,y)$, $B$ the colour of $(y,u_1)$ and $C$ the colour of $(y,v_1)$. Note that all edges $(u_i, v_i)$ are given colour $A$, and the edges on the path $u_1,\ldots,u_{\alpha}$ (resp. $v_1,\ldots,v_{\alpha}$) alternate between $B$ and $C$.   
    
    Any possible happy temporal assignment then corresponds exactly to one of 6 possible assignments of times $\{1,2,3\}$ to the colours $\{A,B,C\}$. From this point, we abuse notation slightly and denote $\lambda(x,y)$ by $\lambda(A)$, $\lambda(y,u_1)$ by $\lambda(B)$ and $\lambda(y,v_1)$ by $\lambda(C)$. By symmetry of the gadget construction, we can always assume $\lambda(B)<\lambda(C)$. 
    
    If $\lambda(A)=2$, we have $\lambda(B)=1$ and $\lambda(C)=3$; then $|R_v|\leq 6$ for all $v\in V_\text{U2G}$. Note in particular that  $R_{v_1}=\{y, v_1, v_2, v_3, u_1, u_2\}$, and $R_x\cap V_\text{U2G}=\{y,v_1\}$. Thus for any TaRDiS $S$, at least $|V_\text{U2G}|-2=2\alpha$ vertices in $V_\text{U2G}$ are reached by vertices of $S\cap V_\text{U2G}$, each of which reaches at most 6 vertices including itself. Recall that $\alpha \equiv 2 \mod 12$. Hence, $|S\cap V_\text{U2G}|\geq \lceil \frac{2\alpha}{6} \rceil = \frac{\alpha+1}{3}$. Note that, by the same logic, any set of size at most $ \frac{\alpha+1}{3}$ reaches at most $2\alpha+2$ vertices and hence no such set reaches every vertex in $V_\text{U2G}\cup \{x\}$ since $|V_\text{U2G}\cup \{x\}|=2\alpha+3$. Further, the set $\{v_1,u_4,v_7,u_{10},\ldots v_{\alpha-1}\}$ is of size $\frac{\alpha+1}{3}$ exactly and reaches every vertex in $V_\text{U2G}$. Conversely, if $\lambda(A)\in \{1,3\}$ then $\{v_1,v_5,v_9,\ldots,v_{\alpha-1}\}$ is a set of size $\frac{\alpha+2}{4}$ which reaches all vertices of $V_\text{U2G}$.
  \end{proof}

\begin{lemma}\label{lem:C2G}
    For any happy temporal assignment $\lambda$ on any C2G gadget incident to vertex $x$ consisting of vertices $V_\text{C2G}=\{y,w,u_1, \ldots, u_{\alpha+1}, v_1, \ldots, v_{\alpha+1}\}$, if $\lambda(x,y)=2$
    \begin{itemize}
        \item exactly $\frac{\alpha+4}{3}$ vertices from $V_\text{C2G}$ are needed to temporally dominate $V_\text{C2G}$;
        \item it is possible to temporally dominate $V_\text{C2G}\cup \{x\}$ with the same number of vertices.
    \end{itemize} If $\lambda(x,y)\neq 2$ it is possible to temporally dominate $V_\text{C2G}$ with exactly $\frac{\alpha+6}{4}$ vertices.
\end{lemma}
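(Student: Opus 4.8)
The plan is to follow the template of the proof of Lemma~\ref{lem:U2G}, since a C2G is obtained from a U2G by lengthening each rail of the ladder by a single vertex. First I would establish, exactly as for the U2G, that the graph induced on $V_\text{C2G}\cup\{x\}$ has a unique proper $3$-edge-colouring up to isomorphism: the edge $(x,y)$ and every rung $(u_i,v_i)$ receive one colour, call it $A$, and the two rails receive the remaining two colours $B$ and $C$ alternately. Since $\alpha$ is even, the colour pattern at the closed ($w$) end is the mirror image of the one occurring in the U2G, and this parity flip is precisely the source of the extra dominator. Consequently a happy temporal assignment $\lambda$ corresponds to one of the six ways of assigning times $\{1,2,3\}$ to $\{A,B,C\}$, and by the $u\leftrightarrow v$ symmetry of the gadget we may assume $\lambda(B)<\lambda(C)$.

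I would then split into the two cases of the statement. For $\lambda(x,y)=2$ (i.e. $\lambda(A)=2$, forcing $\{\lambda(B),\lambda(C)\}=\{1,3\}$), the same local computations as in Lemma~\ref{lem:U2G} show that $|R_v\cap V_\text{C2G}|\le 6$ for every $v\in V_\text{C2G}$, that the only vertices of $V_\text{C2G}$ reachable from $x$ are $y$ and $v_1$, and that the only vertices of $V_\text{C2G}$ that reach $x$ are $y$ and $u_1$. For the upper bound I would exhibit an explicit set $S$ of size $(\alpha+4)/3$: take the U2G-style ``every third ladder vertex, alternating rails'' pattern, shifted so that it contains $u_1$ (which both reaches $x$ and starts the chain), extended along the longer ladder, and augmented by a single vertex near $w$ to mop up the last few vertices at the closed end; checking that $S$ dominates $V_\text{C2G}\cup\{x\}$ is a routine verification like the one at the end of the U2G proof, and using $\alpha\equiv 2\bmod 12$ one checks $|S|=(\alpha+4)/3$ exactly.

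The genuine work, and the main obstacle, is the matching lower bound: $(\alpha+1)/3$ vertices do not suffice to dominate $V_\text{C2G}$. A naive count is too weak here, since $x$ reaches only $\{y,v_1\}$ inside the gadget, so $S\cap V_\text{C2G}$ must dominate the other $2\alpha+2$ vertices, and with $|R_v\cap V_\text{C2G}|\le 6$ this yields only $|S\cap V_\text{C2G}|\ge\lceil(2\alpha+2)/6\rceil=(\alpha+1)/3$. To upgrade this to $(\alpha+4)/3$ I would argue that no ``perfect packing'' of $V_\text{C2G}$ by reachability sets of size $6$ exists, with the closed end as the obstruction: analysing which vertices can dominate each of $w$, $u_{\alpha+1}$, $v_{\alpha+1}$ and their immediate neighbours, one sees that any family of dominators covering this region either employs some $v$ with $|R_v\cap V_\text{C2G}|\le 4$ or else leaves a residue that forces the same kind of deficit further along the ladder; accumulating these deficits (a discharging/weighting argument, or an induction on rail length mirroring the fact that even the U2G pattern carries slack $2$) gives $6\,|S\cap V_\text{C2G}|\ge(2\alpha+2)+6$, hence $|S\cap V_\text{C2G}|\ge(\alpha+4)/3$. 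Getting this bookkeeping exactly right — so that no deficit is double-counted and the extra ``$+6$'' is genuinely forced — is the delicate part.

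Finally, for $\lambda(x,y)\neq 2$, i.e. $\lambda(A)\in\{1,3\}$, the reachability sets are larger (as observed in the last line of the U2G proof), and I would simply exhibit the analogue of the sparse set $\{v_1,v_5,v_9,\dots\}$ on the longer ladder, adjusted at the closed end, of size $(\alpha+6)/4$ — an integer since $\alpha\equiv 2\bmod 12$ — and check that it dominates $V_\text{C2G}$. Combining the two cases proves the lemma.
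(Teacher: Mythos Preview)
Your plan mirrors the paper's proof almost exactly: establish the unique proper $3$-edge-colouring up to isomorphism, reduce via the $u\leftrightarrow v$ symmetry to the single parameter $\lambda(A)=\lambda(x,y)$, then give explicit dominating sets for the upper bounds and a counting argument for the lower bound. The paper's explicit witnesses are $\{y,u_2,v_5,u_8,\ldots,u_\alpha\}$ when $\lambda(A)=2$ (note $y$ reaches $x$ at time~$2$) and $\{y,u_2,u_6,\ldots,u_\alpha\}$ when $\lambda(A)\neq 2$; your sketched constructions are vaguer but of the same flavour.

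The one substantive divergence is the lower bound when $\lambda(A)=2$, and here you are more careful than the paper. The paper simply writes $|S\cap V_\text{C2G}|\ge\lceil(2\alpha+2)/6\rceil=(\alpha+4)/3$, having subtracted~$2$ for the vertices $\{y,v_1\}$ reachable via $x$ exactly as in the U2G argument. But for $\alpha\equiv 2\pmod{12}$ one has $2\alpha+2\equiv 0\pmod 6$, so that ceiling is $(\alpha+1)/3$, precisely the value you obtain. Thus the paper's naive count, as written, contains an arithmetic slip and establishes only the weaker bound; your instinct that a finer structural argument near the closed $w$-end is needed to recover the missing~$+1$ is well-founded. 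You have correctly identified a gap in the paper's writeup and sketched a plausible repair, though the discharging argument would still need to be carried out in detail.
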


\begin{proof}
    As in the case of the U2G, there is only one proper 3-colouring of the edges of the induced graph on $V_\text{U3G}\cup \{x\}$ up to isomorphism. We again denote $A$ the colour of $(x,y)$, $B$ the colour of $(y,u_1)$ and $C$ the colour of $(y,v_1)$. Since gadget symmetry is preserved, we also may assume without loss of generality that $\lambda(B)<\lambda(C)$. Then the assignment of the edges on vertices from $V_\text{C2G}\cup \{x\}$ depends only on the value of $\lambda(x,y)$. 

    The dependence of reachability on this choice remains; namely if $\lambda(A)=2$ then $|R_v|\leq 6$ for all $v\in V_\text{C2G}$ and $R_x\cap V_\text{C2G}=\{y,v_1\}$. Applying the same logic as above, for any TaRDiS $S$ at least $|V_\text{C2G}|-2=2\alpha \bm{+2}$ vertices in $V_\text{C2G}$ are reached by vertices of $S\cap V_\text{C2G}$. Hence $|S\cap V_\text{C2G}|\geq \lceil \frac{2\alpha+2}{6} \rceil = \frac{\alpha+4}{3}$. Now $\{y, u_2,v_5,u_8,v_{11},\ldots u_{\alpha}\}$ is of size $\frac{\alpha+4}{3}$ exactly and reaches every vertex in $V_\text{C2G}$ and additionally reaches $x$ at time 2. Conversely if $\lambda(A)\neq 2$ then covering $V_\text{C2G}$ requires at most $ \frac{\alpha+6}{4}$ vertices, for example $\{y, u_2, u_6, u_{10}, \ldots, u_{\alpha}\}$.
  \end{proof}

\begin{lemma}\label{lem:2g_respect}
    For any happy function $\lambda$ which is \emph{not} 2-gadget-respecting, $(G,\lambda)$ admits a TaRDiS of size less than $k-1$. 
\end{lemma}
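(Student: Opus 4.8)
The plan is to turn the first bullet point of Lemmas~\ref{lem:U2G} and~\ref{lem:C2G} against the offending assignment: once some 2-gadget has its attachment edge set to the wrong time, the internal vertices of that gadget become far cheaper to temporally dominate, by a margin that swamps every other term in $k$. Suppose $\lambda$ is happy and not 2-gadget-respecting, and fix a gadget $\Gamma$ --- a U2G or a C2G incident to a vertex $x$ through the edge $(x,y)$ --- with $\lambda(x,y)\neq 2$. By Lemma~\ref{lem:U2G} (resp.\ Lemma~\ref{lem:C2G}) there is a set $D_\Gamma\subseteq V_\Gamma$ that temporally dominates $V_\Gamma$ using only $\frac{\alpha+2}{4}$ (resp.\ $\frac{\alpha+6}{4}$) vertices, whereas the per-gadget contribution hard-wired into $k$ is $\frac{\alpha+1}{3}$ for a U2G and $\frac{\alpha+4}{3}$ for a C2G; in either case this is a saving of exactly $\frac{\alpha-2}{12}$ vertices.

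I would then assemble a TaRDiS $S$ of $(G,\lambda)$ as a union of three parts: (i) the cheap set $D_\Gamma$ for the offending gadget; (ii) for every \emph{other} U2G, C2G or U3G $\Gamma'$ of $G$, a set $D_{\Gamma'}\subseteq V_{\Gamma'}$ that temporally dominates $V_{\Gamma'}$ and whose size is at most the per-gadget contribution to $k$ (that is, at most $\frac{\alpha+1}{3}$, $\frac{\alpha+4}{3}$, and $\beta$ for a U2G, a C2G and a U3G respectively); and (iii) every \emph{core} vertex of $G$, meaning every vertex lying outside all of the U2G/C2G/U3G internal sets --- the $6m$ literal vertices, the $6m$ clause-gadget vertices, the $13n_X$ and $8n_Y$ variable-gadget vertices, and the single connecting vertex of each U1G. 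Part (ii) is possible for \emph{any} happy $\lambda$ because, by Lemmas~\ref{lem:U2G},~\ref{lem:C2G} and the analogous statement for U3Gs, each such gadget is costliest to dominate precisely when its attachment edge is at the time demanded by the corresponding gadget-respecting condition, and is strictly cheaper otherwise.

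Next I would check that $S$ is genuinely a TaRDiS of $(G,\lambda)$. The vertex set of $G$ partitions into the core together with the internal sets of all the U2Gs, C2Gs and U3Gs (each U1G contributes only its connecting vertex to the core, plus one nested C2G and one nested U3G to the gadget families). Every core vertex lies in $S$; every gadget-internal set is temporally dominated by the $D_{\Gamma'}$ (or $D_\Gamma$) chosen for it; and every attachment point of a gadget is either a core vertex or an internal vertex of some U3G, so it is temporally dominated in all cases. The point that needs care is that a gadget's own dominating set need not reach that gadget's attachment point, so domination of the attachment points has to be argued separately (via the core, or via the enclosing U3G), and one must keep track of the gadgets nested inside the U1Gs and those hung off the internal vertices of the U3Gs.

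Finally I would perform the count. Relative to $k$, forming $S$ as above (a) lowers the offending gadget's term by $\frac{\alpha-2}{12}$ and (b) replaces the additive constant $2m+3n_X+3n_Y+1$ of $k$ by the core size, which equals $18m+13n_X+10n_Y$ (using that there are $6m+2n_Y$ U1Gs). Hence
\[
|S| \le k-\frac{\alpha-2}{12}+16m+10n_X+7n_Y \le k-\frac{\alpha-2}{12}+33n ,
\]
and since $\beta=100n$ yields $\alpha=60000n^2+600n+2$, we get $\frac{\alpha-2}{12}=5000n^2+50n$, which exceeds $33n+1$ for every $n\ge 1$; therefore $|S|<k-1$. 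The arithmetic is not the real obstacle here --- the built-in hierarchy $\alpha\gg\beta\gg n$ leaves a huge slack --- rather, the delicate part is the structural bookkeeping of part (iii): confirming that the per-gadget dominating sets, glued onto the core, genuinely form a TaRDiS of the whole graph $G$.
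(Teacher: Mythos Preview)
Your overall strategy---exploit the $\Theta(\alpha)$ saving on the offending 2-gadget and show it swamps everything else---is exactly right, and your arithmetic is fine. But step (ii) has a real gap: you invoke ``the analogous statement for U3Gs'' to bound each U3G's internal dominating set by $\beta$, yet Lemma~\ref{lem:U3G} is stated only for \emph{2-gadget-respecting} assignments. In the regime of this lemma $\lambda$ is explicitly \emph{not} 2-gadget-respecting, so the edge-3-colouring of each U3G (which is forced only via the U2Gs hanging off its internal vertices) need no longer be the one analysed in Lemma~\ref{lem:U3G}, and you have no bound of $\beta$ per U3G available.

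The fix is cheap and is exactly what the paper does: drop the attempt to dominate U3Gs efficiently and instead throw \emph{all} vertices outside the U2Gs and C2Gs---including every U3G-internal vertex---into $S$. There are at most $\#_\text{U3G}\cdot(4\beta+2)+\#_\text{U1G}+\#_\text{lit}+8n_Y+13n_X+6m\le 32n\beta+31n$ such vertices. Together with the per-gadget dominating sets for all U2Gs and C2Gs (using the cheap $\frac{\alpha+2}{4}$ or $\frac{\alpha+6}{4}$ set for the offending one and the worst-case $\frac{\alpha+1}{3}$ or $\frac{\alpha+4}{3}$ for the rest, which Lemmas~\ref{lem:U2G}--\ref{lem:C2G} do cover for \emph{every} happy $\lambda$), you get a TaRDiS of size at most $\#_\text{U2G}\cdot\frac{\alpha+1}{3}+\#_\text{C2G}\cdot\frac{\alpha+4}{3}-\frac{\alpha-2}{12}+32n\beta+31n$, and since $\frac{\alpha-2}{12}=50n\beta+50n>32n\beta+31n$ this is below $k-1$. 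Your more refined bookkeeping with the ``core'' is unnecessary here; the paper's cruder bound suffices precisely because $\alpha$ was chosen with $\alpha\gg n\beta$ rather than merely $\alpha\gg n$.
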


\begin{proof}
    Recall that $n=n_Y+n_X$ and $n>m$. Observe that the number of vertices in $G$ which are not inside a U2G or C2G is exactly:
    \begin{align*}
     & ~~~~ \#_\text{U3G}\cdot(4\beta+2) + \#_\text{U1G} + \#_\text{lit} + 8n_Y +  13n_X + 6m \\
     & = (2n_Y+6m)(4\beta+2) + 6m + 6m + 12n_Y + 13n_X + 16m \\
     & \leq 32n\beta + 31n. \\
    \end{align*}
    which is strictly smaller than $\frac{\alpha}{12}\approx 50n\beta + 50n$. 
    Even if there exists some $\lambda_\bot$ such that every vertex in $G$ not belonging to a U2G or C2G is necessarily included in every TaRDiS of $(G, \lambda_\bot)$, any such TaRDiS would still have size less than $\#_\text{U2G}\cdot\frac{\alpha+1}{3}+\#_\text{C2G}\cdot\frac{\alpha+4}{3}$ and hence less than $k-1$. 
  \end{proof}

\begin{lemma}\label{lem:U3G}
    For any 2-gadget respecting happy temporal assignment $\lambda$ on any U3G gadget incident to vertex $x$:
    \begin{itemize}
        \item If $\lambda(x, y_1)=3$ then $\beta$ vertices from $V_\text{U3G}$ are needed to temporally dominate $V_\text{U3G}\setminus\{y_1\}$, and no choice of $\beta$ vertices temporally dominates $V_\text{U3G}\cup \{x\}$.
        \item If $\lambda(x,y_1)\neq 3$, it is possible to temporally dominate $V_\text{U3G}\setminus\{y_1\}$ with exactly $\frac{\beta}{2}+1$ vertices. 
    \end{itemize}
    Recall $V_{\text{U3G}}=\{a_1, \ldots, a_{\beta-1}, b_1, \ldots b_{\beta-1}, y_1, y_2, y_3, u_1, \ldots, u_\beta, v_1, \ldots, v_\beta, w\}$; the set of vertices in the red (solid) box in Figure~\ref{fig:u3g}.
\end{lemma}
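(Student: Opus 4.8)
The plan is to follow exactly the template of the proofs of Lemmas~\ref{lem:U2G} and~\ref{lem:C2G}, but adapted so that the gadget detects whether the incident edge carries time $3$ rather than time $2$.

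First I would reduce the reachability analysis to a single edge‑colouring. Since $\lambda$ is happy, Lemma~\ref{lem:edgecol} lets me view it as a proper $3$‑edge‑colouring of $G_\downarrow$, and moreover every nonstrict temporal path in $(G,\lambda)$ is strict, so I may reason exclusively about strict temporal paths. I then restrict attention to the subgraph spanned by $V_\text{U3G}\cup\{x\}$ together with the single edges joining vertices of $V_\text{U3G}$ to their degree‑completing U2Gs. Using the hypothesis that $\lambda$ is $2$‑gadget‑respecting, every such joining edge is frozen to colour $2$. A finite local check then shows that, \emph{once the colour of $(x,y_1)$ is fixed}, properness together with these frozen edges forces the colour of every edge inside the U3G, up to the reflection symmetry of the staggered, doubly‑subdivided ladder (see Figure~\ref{fig:u3g}): the colour of $(x,y_1)$ propagates along the tail $y_1y_2y_3$ and is then determined rung by rung along the ladder. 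As in the earlier lemmas I would name the colours involved and reduce the whole argument to the three cases $\lambda(x,y_1)\in\{1,2,3\}$, of which only $\lambda(x,y_1)=3$ and $\lambda(x,y_1)\neq 3$ matter.

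Second, in the case $\lambda(x,y_1)=3$ I would compute, under the forced colouring, the reachability set of every vertex of $V_\text{U3G}$. The qualitative point is that colour $3$ is ``used up'' at the entrance: since times lie in $\{1,2,3\}$ and strict temporal paths need strictly increasing times, no temporal path can use $(x,y_1)$ and extend beyond $y_1$, so inside $V_\text{U3G}\cup\{x\}$ the vertex $x$ reaches only $y_1$; symmetrically, every vertex of $V_\text{U3G}$ reaches $x$ only via the terminal step $(y_1,x)$, and the doubly‑subdivided rungs behave as length‑three time barriers that, under this colouring, no strict temporal path can traverse while continuing. Consequently each vertex of $V_\text{U3G}$ reaches only a bounded constant number of others, with $y_1$ the one mild exception (exactly as $x$ is the exception in Lemma~\ref{lem:U2G}). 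Counting as in Lemma~\ref{lem:U2G}, and using that $\beta$ is even together with the divisibility conditions built into $\alpha$ and $\beta$ so that the relevant ceilings are tight, I obtain that (i) at least $\beta$ vertices of $V_\text{U3G}$ are required to temporally dominate $V_\text{U3G}\setminus\{y_1\}$, matched by an explicit evenly‑spaced selection of exactly $\beta$ rail vertices (the analogue of $\{v_1,u_4,v_7,\dots\}$ in Lemma~\ref{lem:U2G}); and (ii) any set of $\beta$ vertices reaches strictly fewer than $|V_\text{U3G}\cup\{x\}|$ vertices, hence fails to temporally dominate $V_\text{U3G}\cup\{x\}$.

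Third, in the case $\lambda(x,y_1)\neq 3$ colour $3$ is available \emph{inside} the ladder, so increasing‑time chains can run the full length of a doubly‑subdivided block; in each of the sub‑cases $\lambda(x,y_1)=1$ and $\lambda(x,y_1)=2$ I would read off the forced colouring and exhibit an explicit dominating set of size $\frac{\beta}{2}+1$ for $V_\text{U3G}\setminus\{y_1\}$, each chosen vertex now covering roughly a whole block instead of half a block. The conceptual content here is identical to Lemmas~\ref{lem:U2G}--\ref{lem:C2G}; the main obstacle is purely the bookkeeping, since the staggered doubly‑subdivided ladder with its many degree‑completing U2Gs frozen at time $2$ has a substantially more intricate forced colouring and reachability structure than the plain ladder of the U2G, and getting the constant in the reachability bound to interact correctly with the chosen residues of $\alpha$ and $\beta$ is exactly what makes the lower bound of $\beta$ in the second step tight.
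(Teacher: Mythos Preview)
Your proposal follows essentially the same approach as the paper: reduce to a forced edge-colouring, then bound reachability set sizes for the lower bound and exhibit explicit sets for the upper bound. The paper's own proof is extremely terse (little more than ``manually verify'' together with the explicit sets $\{v_1,u_2,v_3,u_4,\ldots,u_\beta\}$ and $\{u_1,u_2,u_4,\ldots,u_\beta\}$), so your more detailed plan simply expands what the authors leave implicit.

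One small correction: you need not (and cannot) treat the case $\lambda(x,y_1)=2$. Because $y_1\in V_\text{U3G}$ has degree $2$ before U2Gs are attached, it receives its own U2G, and a $2$-gadget-respecting $\lambda$ already assigns time $2$ to that edge; properness then forces $\lambda(x,y_1)\in\{1,3\}$. This is why the paper states there are exactly two $2$-gadget-respecting assignments rather than three, and why the rung edges satisfy $\lambda(u_i,v_i)=\lambda(x,y_1)$ throughout. Your partition into the cases $\lambda(x,y_1)=3$ and $\lambda(x,y_1)\neq 3$ is still correct, just with the latter case containing only $\lambda(x,y_1)=1$.
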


\begin{proof}
    There are only two possible 2-gadget-respecting assignments for the edges of $V_\text{U3G}$. In any case, for all $i\in[\beta]$, $\lambda(u_i,v_i)=\lambda(x,y_1)$. We denote by $\lambda_3$ the assignment where $\lambda(x,y)=3$ and $\lambda_1$ the temporal assignment where $\lambda(x,y)=1$. 

    It can be manually verified that, under $\lambda_3$, $\beta$ vertices are necessary to temporally dominate $V_\text{U3G}\setminus \{y_1\}$. This is also sufficient: the set $\{v_1, u_2, v_3, u_4, \ldots, u_\beta\}$ is one possibility.  On the other hand, under $\lambda_1$ there exists a set of size $\frac{\beta}{2}+1$ which temporally dominates $V_\text{U3G}\setminus \{y_1\}$, namely $\{u_1, u_2, u_4, \ldots, u_\beta\}$.
  \end{proof}

\begin{lemma}\label{lem:3g_respect}
    For any 2-gadget-respecting $\lambda$, if for any edge $e$ incident to a U3G $\lambda(e)\neq 3$ then $(G,\lambda)$ admits a TaRDiS of size less than $k-1$. 
\end{lemma}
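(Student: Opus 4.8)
The plan is to build, for any $2$-gadget-respecting $\lambda$ under which some edge $e=(x,y_1)$ incident to a U3G satisfies $\lambda(e)\neq 3$, an explicit TaRDiS $S$ of $(G,\lambda)$ of size strictly below $k-1$. By the dichotomy inside the proof of Lemma~\ref{lem:U3G}, a $2$-gadget-respecting assignment restricted to any U3G is one of exactly two patterns ($\lambda_3$ with $\lambda(x,y_1)=3$, or $\lambda_1$ with $\lambda(x,y_1)=1$); since $\lambda(e)\neq 3$, the ``bad'' U3G is in pattern $\lambda_1$. We may also assume $\mathcal G_\downarrow$ connected, and we have already disposed of non-$2$-gadget-respecting $\lambda$ in Lemma~\ref{lem:2g_respect}.

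I would assemble $S$ from three kinds of pieces. First, all \emph{core} vertices: every vertex of $G$ not lying inside a U2G, C2G, or U3G — that is, the literal vertices, the clause-gadget vertices, the $X$- and $Y$-variable-gadget vertices, and the connector vertex of each U1G; these number $O(n)$, and in particular fewer than $\beta/2=50n$. Second, for each U2G (resp.\ C2G), the set of $\tfrac{\alpha+1}{3}$ (resp.\ $\tfrac{\alpha+4}{3}$) vertices which, because $\lambda$ is $2$-gadget-respecting (so the incident edge has time $2$), dominates its internal vertex set by Lemma~\ref{lem:U2G} (resp.~\ref{lem:C2G}). Third, for the bad U3G the set of $\tfrac\beta2+1$ vertices dominating $V_\text{U3G}\setminus\{y_1\}$ supplied by Lemma~\ref{lem:U3G} in pattern $\lambda_1$, and for every other U3G the dominating set of $V_\text{U3G}\setminus\{y_1\}$ given by Lemma~\ref{lem:U3G}, which has size $\beta$ in pattern $\lambda_3$ and $\tfrac\beta2+1$ in pattern $\lambda_1$, hence at most $\beta$ in either case.

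Next I would verify $S$ is a TaRDiS. Core vertices lie in $S$, so they are dominated. Within each U2G, C2G, or U3G every internal vertex is reached by that gadget's own chosen set, with the single exception of the vertex $y_1$ of each U3G. But $y_1$ is joined by one time-edge to the U3G's target vertex $x$, and since every U3G is attached only to a U1G connector vertex or to a $v_i^1$ or $v_i^2$ of a $Y$-variable gadget (never into the interior of another U2G/C2G/U3G), that target $x$ is always a core vertex and hence in $S$; the length-one temporal path along $(x,y_1)$ then dominates $y_1$. As these exhaust the vertex types of $G$, $S$ temporal reachability dominates $G$. Summing the pieces, $|S|\le \#_\text{U2G}\cdot\tfrac{\alpha+1}{3}+\#_\text{C2G}\cdot\tfrac{\alpha+4}{3}+(\#_\text{U3G}-1)\cdot\beta+\bigl(\tfrac\beta2+1\bigr)+N$, where $N<\beta/2=50n$ is the number of core vertices. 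Subtracting from the defining expression for $k$ gives $k-|S|\ge \tfrac\beta2 + 2m + 3n_X + 3n_Y - N$, which, using $\beta=100n$ and $N\le 31n$, is at least $19n+2>1$; thus $|S|\le k-2<k-1$, as required.

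I expect the main obstacle to be the bookkeeping in the TaRDiS-verification step: one must confirm that dumping the entire core into $S$ together with the per-gadget optimal sets really discharges \emph{every} domination obligation — the delicate point being that the only vertices a gadget's own dominator misses are the $y_1$'s of U3Gs, and that at every site where a U3G is used the corresponding target is genuinely a core vertex. The counting is then comfortable, since the $\Theta(\beta)=\Theta(n)$ slack created by the one bad U3G easily absorbs the $\Theta(n)$ cost of including the whole core.
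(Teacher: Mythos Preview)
Your proposal is correct and follows essentially the same counting argument as the paper: include all ``core'' vertices together with optimal internal dominating sets for each U2G, C2G, and U3G, and observe that the saving of roughly $\beta/2$ from the one non-$3$-labelled U3G comfortably absorbs the at most $31n<\beta/2$ core vertices. Your write-up is in fact more explicit than the paper's --- you actually verify that the resulting set is a TaRDiS (in particular handling the $y_1$ vertex of each U3G via its core neighbour), whereas the paper compresses this into a single inequality.
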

\begin{proof}
    Observe that the number of vertices in $G$ which are not inside a U2G, C2G or U3G is exactly:
    \begin{align*}
     & ~~~~ \#_\text{U1G} + \#_\text{lit} + 8n_Y +  13n_X + 6m \\
     & = 6m + 6m + 8n_Y + 13n_X + 6m \\
     & = 18m + 8n_Y + 13n_X \\
     & \leq 31n \\
    \end{align*}
    which is strictly smaller than $\frac{\beta}{2}$. Even if there exists some 2-gadget-respecting $\lambda_\bot$ such that every vertex in $G$ not belonging to a U2G, C2G, or U3G is necessarily included in every TaRDiS of $(G, \lambda_\bot)$, any such TaRDiS would still have cardinality less than $\#_\text{U2G}\cdot\frac{\alpha+4}{3}+\#_\text{C2G}\cdot\frac{\alpha+4}{3}+\#_\text{U3G}\cdot(\beta)$ and hence less than $k-1$. Recall we assigned $k = \#_\text{U2G}\cdot\left(\frac{\alpha+1}{3}\right) + \#_\text{C2G}\cdot\left(\frac{\alpha+4}{3}\right) + \#_\text{U3G}\cdot(\beta) + 2m + 3n_X + 3n_Y + 1$, and that by our choice of $\beta$ we have $\frac{\beta}{2} > 2m + 3n_X + 3n_Y + 1$.  
  \end{proof}

\begin{lemma}\label{lem:nicelambda}
    Any happy temporal assignment $\lambda$ such that every TaRDiS on $(G,\lambda)$ has size at least $k-1$ is \emph{nice}.
\end{lemma}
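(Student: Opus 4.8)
The plan is to get the conclusion almost for free from the two preceding lemmas, using properness of happy assignments to handle the U1Gs. Assume $\lambda$ is a happy temporal assignment such that every TaRDiS of $(G,\lambda)$ has size at least $k-1$; the goal is to show that $\lambda$ is $1$-, $2$-, and $3$-gadget-respecting.

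First I would apply the contrapositive of Lemma~\ref{lem:2g_respect}: since $(G,\lambda)$ admits no TaRDiS of size less than $k-1$, the assignment $\lambda$ must be $2$-gadget-respecting, i.e.\ every edge joining a target vertex to a U2G or C2G receives time $2$. With $2$-gadget-respecting now in hand, the contrapositive of Lemma~\ref{lem:3g_respect} gives that every edge incident to a U3G receives time $3$, so $\lambda$ is also $3$-gadget-respecting.

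It remains to show $\lambda$ is $1$-gadget-respecting. Recall that a U1G attached to a vertex $x$ consists of a single new vertex $y$ which is incident to exactly three edges: the edge $(x,y)$, the edge joining $y$ to the entry vertex of a C2G, and the edge joining $y$ to the entry vertex of a U3G. By $2$-gadget-respecting the C2G-edge at $y$ has time $2$, and by $3$-gadget-respecting the U3G-edge at $y$ has time $3$. Since $\lambda$ is happy, hence proper, the edges incident to $y$ carry pairwise distinct times from $\{1,2,3\}$, which forces $\lambda(x,y)=1$. As $x$ was an arbitrary vertex carrying a U1G, $\lambda$ is $1$-gadget-respecting, and therefore \emph{nice}.

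The genuine work is entirely inside Lemmas~\ref{lem:2g_respect} and~\ref{lem:3g_respect}, which carry out the counting against $k-1$; the only thing to be careful about here is that the center $y$ of every U1G really does have degree exactly $3$ in $G$ — that is, no literal vertex, clause-gadget vertex, or variable-gadget vertex is ever identified with a U1G center — so that the properness argument pins $\lambda(x,y)$ to the unique remaining time $1$. This is immediate from the construction, so no additional case analysis is required.
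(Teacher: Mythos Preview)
Your proposal is correct and follows essentially the same approach as the paper's own proof: apply Lemma~\ref{lem:2g_respect} and then Lemma~\ref{lem:3g_respect} (in that order, since the latter assumes $2$-gadget-respecting) via their contrapositives, and then deduce $1$-gadget-respecting from properness at the degree-$3$ U1G centre. The paper compresses your third step into the single phrase ``by the edge colouring constraint,'' but the content is identical.
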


\begin{proof}
    By Lemmas~\ref{lem:2g_respect} and~\ref{lem:3g_respect}, any happy temporal assignment $\lambda$ such that every TaRDiS on $(G,\lambda)$ has size at least $k-1$ is 2-gadget and 3-gadget-respecting. 
    Observe that if $\lambda$ is happy, 2- and 3-gadget-respecting, it must also be 1-gadget-respecting by the edge coloring constraint. 
    Hence $\lambda$ is \emph{nice}.
  \end{proof}

We say that a nice function $\lambda$ \emph{encodes} a truth assignment to the variables of $X$, and define this assignment as follows: let variable $x_i\in X$ be set to \texttt{True} if $\lambda(x_i^1,x_i^2)=1$ and \texttt{False} otherwise. In Figure~\ref{fig:xvar}, \texttt{True} corresponds to the case where $A=1$.

\smallskip
\begin{lemma}
    A nice function $\lambda$ encodes an assignment to $X$ under which $\Phi(X,Y)$ is satisfiable if and only if $(G,\lambda)$ admits a TaRDiS of size $k-1$. 
\end{lemma}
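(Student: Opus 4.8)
The plan is to prove the biconditional in two directions, both resting on a ``budget'' accounting that makes the reduction tight; throughout, $\sigma_X$ denotes the assignment to $X$ that the nice function $\lambda$ encodes, and recall that a TaRDiS ``of size $k-1$'' means one of size at most $k-1$ (any smaller TaRDiS can be padded). For the accounting: since $\lambda$ is nice it is 2- and 3-gadget-respecting, so Lemmas~\ref{lem:U2G}, \ref{lem:C2G} and~\ref{lem:U3G} apply to every U2G, C2G and U3G, giving for any TaRDiS $S$ of $(G,\lambda)$ that $|S\cap V_\text{U2G}|\ge\frac{\alpha+1}{3}$, $|S\cap V_\text{C2G}|\ge\frac{\alpha+4}{3}$ and $|S\cap V_\text{U3G}|\ge\beta$ on each such gadget, with the external vertex of a U2G or U3G never reached from inside it and the vertex $y_1$ of a U3G reached from inside it only at the cost of one more vertex. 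Writing $R$ for the set of vertices lying in no 2- or 3-gadget (the clause gadgets $Q_j$, the literal vertices, and the variable-gadget vertices), these bounds sum to
\[
|S|\ \ge\ \#_\text{U2G}\cdot\frac{\alpha+1}{3}+\#_\text{C2G}\cdot\frac{\alpha+4}{3}+\#_\text{U3G}\cdot\beta+|S\cap R|,
\]
so $|S|\le k-1$ forces every gadget bound to be an equality and $|S\cap R|\le 2m+3n_X+3n_Y$; the only vertices then still needing domination from within $R$ are the U2G/U3G-external vertices (the dashed vertices of Figures~\ref{fig:mmt_clause}, \ref{fig:xvar}, \ref{fig:yvar}), the vertex $y_1$ of any U3G whose external vertex is not chosen, and all literal vertices (the vertex $p$ joining the C2G and U3G inside any U1G is dominated by that C2G, so U1Gs consume none of the $R$-budget). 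Both directions then mirror the three Lemmas~\ref{lem:canonical_happy_tardis}--\ref{lem:if_sat_then_happy_tardis_k} of the \textsc{Happy TaRDiS} reduction.

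For the forward direction, given $\sigma_Y$ with $\Phi(\sigma_X,\sigma_Y)$ true, I take $S$ to be the union of: the $\frac{\alpha+1}{3}$-vertex set of Lemma~\ref{lem:U2G} in each U2G; the $\frac{\alpha+4}{3}$-vertex set of Lemma~\ref{lem:C2G} that also dominates the external vertex, in each C2G; the $\beta$-vertex set of Lemma~\ref{lem:U3G} in each U3G; three vertices in each $X$-gadget chosen to agree with the value $\sigma_X(x_i)$ that $\lambda$ already encodes; three vertices in each $Y$-gadget chosen to agree with $\sigma_Y(y_i)$; and, in each clause gadget $Q_j$, the two $Q_j$-neighbours of the two literal vertices whose literals do \emph{not} satisfy $c_j$, exactly as in Lemma~\ref{lem:if_sat_then_happy_tardis_k}. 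Then $|S|=k-1$, and I verify $S$ is a TaRDiS: gadget-internal and C2G-external vertices are covered by the local sets; the U2G/U3G-external vertices and the $y_1$ vertices are covered from the chosen variable- and clause-gadget vertices using the time structure forced by niceness; and every literal vertex is covered because the variable-gadget vertices of $S$ have a temporal path to the literal vertex of each \texttt{True} literal (the phenomenon highlighted in Figures~\ref{fig:xvar} and~\ref{fig:yvar} that setting $x_i$ to \texttt{True} lets $x_i^4$ reach $l_{2i}^1$, and its $Y$-analogue), while the literal vertices of \texttt{False} literals are covered from the two $Q_j$-vertices we added and from their variable gadget --- this is the literal-covering bookkeeping of Lemmas~\ref{lem:canonical_happy_tardis}--\ref{lem:if_sat_then_happy_tardis_k} transported to the present construction.

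For the reverse direction, let $S$ be a TaRDiS with $|S|\le k-1$. By the accounting above, $S$ meets every gadget bound with equality and uses at most $2m+3n_X+3n_Y$ vertices on $R$; since each $Q_j$ needs at least two of its own vertices to be dominated (each vertex of $Q_j$ reaches only four others, as in Lemma~\ref{lem:canonical_happy_tardis}) and each variable gadget needs at least three vertices to dominate its U2G/U3G-external vertices, equality forces exactly two vertices of $S$ per clause gadget and exactly three per variable gadget. Using niceness, I then argue that three vertices can dominate an $X$-gadget only in one of two ``canonical'' ways, that exactly one of these --- the one matching $\sigma_X(x_i)$ --- also has a temporal path to the literal vertex of the \texttt{True} literal of $x_i$, and similarly that the three vertices chosen in the $Y$-gadget for $y_i$ determine a value $\sigma_Y(y_i)$ and reach the literal vertex of precisely its \texttt{True} literal. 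Finally, with only two vertices in $Q_j$, $Q_j$ can reach all three of its incident literal vertices only if at least one of them is already reached from its variable gadget, i.e.\ $c_j$ contains a literal that is \texttt{True} under $(\sigma_X,\sigma_Y)$; hence $\Phi(\sigma_X,\sigma_Y)$ is satisfied. This is the $\Sigma_2^P$ analogue of Lemma~\ref{lem:if_happy_tardis_k_then_sat}.

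The main obstacle is the tight local analysis of the variable gadgets under a nice $\lambda$: one must verify that the $13$-vertex path with C2Gs at positions $1,7,13$, U2Gs at the even positions and literal-vertex attachments at positions $3,5,9,11$ has exactly two $3$-vertex dominating configurations, that these are exactly the ones compatible with $\lambda$'s two admissible colourings, and that only the ``\texttt{True}'' configuration forwards a temporal path to the appropriate literal vertices; and, symmetrically, that the doubly-subdivided ladder inside a U3G (via Lemma~\ref{lem:U3G}) makes the $8$-cycle $Y$-gadget behave like the variable gadget of the \textsc{Happy TaRDiS} reduction. A secondary point needing care is confirming that the U1Gs and the $y_1$-vertices of all U3Gs are absorbed entirely into the forced gadget cost, so that the $R$-budget is exactly $2m+3n_X+3n_Y$ and not larger --- this is what makes the equality argument in the reverse direction valid.
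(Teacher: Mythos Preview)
Your proposal tracks the paper's own proof closely: both directions rest on the same budget accounting via Lemmas~\ref{lem:U2G}, \ref{lem:C2G} and~\ref{lem:U3G}, then a forward construction of a TaRDiS from a satisfying $\mathcal Y$ and a reverse extraction of $\mathcal Y$ from a size-$(k-1)$ TaRDiS using tightness plus the canonical-TaRDiS restriction (the paper invokes Lemma~\ref{lem:canonical} explicitly, which you should do as well rather than only citing Lemma~\ref{lem:canonical_happy_tardis}). The one visible discrepancy is that you allot three $R$-vertices per variable gadget whereas the paper's proof text says two in both directions; your count is the one consistent with the stated $k=\ldots+2m+3n_X+3n_Y+1$, so this is a minor arithmetic slip in the paper rather than a flaw in your plan.
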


\begin{proof}
    We first prove the forward direction. That is, if $\lambda$ encodes an assignment $\mathcal{X}$ to $X$ under which $\Phi(X,Y)$ is satisfiable then $(G,\lambda)$ admits a TaRDiS of size at most $k-1$. 

    In this case, there is an assignment $\mathcal{Y}$ to $Y$ such that $\Phi(\mathcal{X,Y})$ evaluates to \texttt{True}. We have, by Lemmas~\ref{lem:U2G},~\ref{lem:C2G} and~\ref{lem:U3G}, that exactly $\#_\text{U2G}\cdot \frac{\alpha+1}{3} + \#_\text{C2G}\cdot \frac{\alpha+4}{3} + \#_\text{U3G}\cdot \beta$ vertices are necessary and sufficient to cover all the vertices in U2G, U3G, C2G and U1G gadgets, and these vertices also reach every vertex adjacent to a C2G at time 2 exactly. Let $S$ include exactly those vertices. 

    For each $Y$-variable gadget we add the vertices $v_i^1$ and $T_i^1$ to $S$, if $y_i$ is \texttt{True} in $\mathcal{Y}$, and vertices $v_i^1$ and $F_i^1$ are added otherwise. Further, let $S$ include vertices $x_i^4$ and $x_i^{10}$ from each $X$-variable gadget. Note that irrespective of how that gadget is labeled under $\lambda$, this choice of vertices is sufficient to reach all vertices not already covered from some vertex in $S$ within a C2G. Additionally, note that now every literal vertex corresponding to a literal set to \texttt{True} under the combined assignment to $X\cup Y$ is reached from $S$. Since this is a satisfying assignment, every set $Q_j$ of clause gadget vertices as shown in Figure~\ref{fig:mmt_clause} must be incident to at least one literal vertex which is already reached from $S$. By symmetry we may assume this literal vertex is $u$ (otherwise, cycle the labels $q_i^1,\ldots, q_i^6$ such that it is). Then let $S$ include vertices $q_j^3$ and $q_j^5$. Observe that $S$ is a TaRDiS of size precisely $k-1$; all gadget vertices, literal vertices, clause vertices and variable vertices are reachable from $S$. 

    We now prove the other direction. Namely, if $(G,\lambda)$ admits a TaRDiS of size at most $k-1$ then $\lambda$ encodes an assignment to $X$ under which $\Phi(X,Y)$ is satisfiable. Recall that, by Lemma~\ref{lem:nicelambda}, $\lambda$ must be nice. Let $S$ be a minimum TaRDiS of size at most $k-1$ in $(G,\lambda)$. We may assume, by Lemma~\ref{lem:canonical}, that $S$ is canonical. 

    Since $S$ is minimum, we have, by Lemmas~\ref{lem:U2G},~\ref{lem:C2G} and~\ref{lem:U3G}, that exactly $\#_\text{U2G}\cdot \frac{\alpha+1}{3} + \#_\text{C2G}\cdot \frac{\alpha+4}{3} + \#_\text{U3G}\cdot \beta$ vertices in $S$ cover all the vertices in U2G, U3G, C2G and U1G gadgets. Then at most $2m+3n_X+3n_Y$ vertices of $S$ are not among these, and must be sufficient to cover the remaining vertices of $G$.

    We define from $S$ a truth assignment to $Y$ as follows: if vertex $T_i^1$ or $T_i^2$ is in $S$, then we assign $y_i$ to be \texttt{True}, and we assign it to be \texttt{False} otherwise. We argue that this assignment together with $\mathcal X$ necessarily satisfies $\Phi(X,Y)$. Note that, since $S$ is canonical, it must contain at least 2 vertices from every clause gadget, 2 vertices from every $Y$-variable gadget, and 2 vertices from every $X$-variable gadget. All these bounds must be tight, else $S$ has cardinality more than $k-1$.

    Suppose for contradiction that $\Phi(X,Y)$ is not satisfied by the assignment. Then some clause $c_j$ contains only False literals under the assignment, and at least 3 vertices in $Q_j$ must be in $S$, contradicting that at most 2 vertices from any clause gadget may be in the TaRDiS. Hence there is an assignment to $Y$ satisfying $\Phi(X,Y)$.
  \end{proof}

\begin{lemma}\label{lem:corps_reducible}
    \textsc{co-RPS} is polynomial-time reducible to \textsc{Happy MaxMinTaRDiS}.
\end{lemma}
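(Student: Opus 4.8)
The preceding subsections have already built the gadgets and fixed the target value $k$, and Lemma~\ref{lem:nicelambda} together with the (unlabelled) lemma immediately above this one do the real combinatorial work; so the plan is to check that the map $(\Phi, X, Y) \mapsto (G, k, \tau = 3)$ is a polynomial-time reduction and then to assemble correctness from those two lemmas. Concretely, I would first argue polynomiality, then prove each direction of the equivalence ``$(\exists X)(\nexists Y)\Phi(X,Y)$ is true $\iff$ $(G,k,3)$ is a yes-instance of \textsc{Happy MaxMinTaRDiS}''.

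For polynomiality: the \textsc{co-RPS} restrictions force $m \le n \le 3m$, so the instance has size $\Theta(n)$; with $\beta = 100n$ and $\alpha = \Theta(n^2)$ every gadget (U1G, U2G, U3G, C2G, clause- and variable-gadgets) has size $\mathrm{poly}(n)$ and the construction introduces only $\mathrm{poly}(n)$ of them, so $G$ has $\mathrm{poly}(n)$ vertices and edges and is constructible in polynomial time, while $k$ is a fixed arithmetic expression in $m, n_X, n_Y, \alpha, \beta$ and hence computable in polynomial time.

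For the forward direction I would assume $(\exists X)(\nexists Y)\Phi(X,Y)$, witnessed by an assignment $\mathcal X$, and exhibit a happy temporal assignment $\lambda$ witnessing that $(G,k,3)$ is a yes-instance. I would take $\lambda$ to be \emph{nice} and to \emph{encode} $\mathcal X$; such a $\lambda$ exists because, once niceness is imposed, the only remaining freedom is the independent choice inside each $X$-variable gadget between its two admissible edge-colourings (the cases $A=1$ and $A=3$ of Figure~\ref{fig:xvar}), and $\mathcal X$ dictates exactly these choices. Since $\Phi(\mathcal X, Y)$ is unsatisfiable, the preceding lemma gives that $(G,\lambda)$ has no TaRDiS of size $k-1$; and to see it has none of smaller size either I would combine Lemmas~\ref{lem:U2G},~\ref{lem:C2G} and~\ref{lem:U3G} (which force $\#_\text{U2G}\cdot\frac{\alpha+1}{3} + \#_\text{C2G}\cdot\frac{\alpha+4}{3} + \#_\text{U3G}\cdot\beta$ gadget-internal vertices for any $2$- and $3$-gadget-respecting assignment, hence for any nice one) with the canonical-TaRDiS argument of Lemma~\ref{lem:canonical} (at least two further vertices in each of the $m$ clause gadgets and each of the $n_X+n_Y$ variable gadgets). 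These bounds together already leave a minimum TaRDiS of size at least $k$, so $(G,\lambda)$ witnesses the yes-instance.

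For the reverse direction I would start from a happy temporal assignment $\lambda$ under which every TaRDiS of $(G,\lambda)$ has size at least $k$, and hence at least $k-1$; Lemma~\ref{lem:nicelambda} then forces $\lambda$ to be \emph{nice}, so it encodes some assignment $\mathcal X$ to $X$, and since $(G,\lambda)$ admits no TaRDiS of size $k-1$ the preceding lemma yields that $\Phi(\mathcal X, Y)$ is unsatisfiable, i.e.\ $(\nexists Y)\Phi(\mathcal X, Y)$, so $(\exists X)(\nexists Y)\Phi(X,Y)$ holds. The main obstacle I anticipate is not any single computation but the quantifier bookkeeping around the two already-proven lemmas: I need every assignment $\mathcal X$ to $X$ to be realised by some nice $\lambda$ (so that the $\exists X$ of \textsc{co-RPS} matches the $\exists\lambda$ of \textsc{Happy MaxMinTaRDiS}), and I need the off-by-one to line up, i.e.\ that for nice $\lambda$ the condition ``no TaRDiS of size $\le k-1$'' is exactly what the gadget lemmas plus the canonical lower bound deliver — neither loosening to ``$\le k-2$'' nor allowing a TaRDiS of size $k-1$ to slip through.
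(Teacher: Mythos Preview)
Your approach is essentially the paper's: note polynomiality of the construction, then combine Lemma~\ref{lem:nicelambda} with the unnumbered lemma immediately preceding this one to establish both directions. The paper's proof is a terse three-bullet summary doing exactly that, and your explicit check that every truth assignment $\mathcal X$ is encoded by some nice $\lambda$ (via the two admissible colourings of each $X$-gadget) is a point the paper leaves implicit.

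One remark on your forward direction. You read the preceding lemma's statement literally (``TaRDiS of size $k-1$'') and therefore bolt on a separate lower-bound argument to rule out TaRDiS of size strictly less than $k-1$. In fact the \emph{proof} of that lemma already establishes the stronger statement ``satisfiable $\iff$ there is a TaRDiS of size \emph{at most} $k-1$'' (its reverse direction begins ``if $(G,\lambda)$ admits a TaRDiS of size at most $k-1$\ldots''), so the contrapositive directly gives ``unsatisfiable $\Rightarrow$ every TaRDiS has size $\ge k$'' with no further work. This matters because your standalone lower bound, as stated, undercounts: ``two further vertices per variable gadget'' yields $2m+2n_X+2n_Y$ on top of the gadget-internal total, which falls short of the $2m+3n_X+3n_Y+1$ baked into $k$. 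So your auxiliary argument does not reach $k$ on its own; fortunately it is redundant, and dropping it (or simply citing the ``at most $k-1$'' version of the preceding lemma) repairs the proof.
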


\begin{proof}   
    The construction above can be achieved in polynomial time. We have shown:
    \begin{itemize}
        \item The constructed \textsc{MaxMinTaRDiS} instance $(G,k,\tau=3)$ is a yes-instance if and only if there is some \emph{nice} $\lambda$ such that every TaRDiS for $(G,\lambda)$ has size at least $k-1$.
        \item Every \emph{nice} $\lambda$ encodes a truth assignment to $X$.
        \item A nice function $\lambda$ encodes an assignment to $X$ under which $\Phi(X,Y)$ is satisfiable if and only if $(G,\lambda)$ admits a TaRDiS of size $k-1$. 
    \end{itemize}
    
    That is, $(G,k,\tau=3)$ is a yes-instance of \textsc{MaxMinTaRDiS} if and only if $(X,Y,\Phi)$ is a yes-instance of \textsc{co-RPS}.

  \end{proof}

\begin{theorem}\label{thm:sigma}
    \textsc{Happy MaxMinTaRDiS} is $\Sigma_2^P$-complete even restricted to inputs where $\tau=3$ and the input graph $G$ is planar. 
\end{theorem}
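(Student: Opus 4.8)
The plan is to assemble the theorem from the pieces already in place, with very little new work. Membership in $\Sigma_2^P$ is immediate from Lemma~\ref{lem:insigma}, which covers all three variants of \textsc{MaxMinTaRDiS} — in particular \textsc{Happy MaxMinTaRDiS} — with no restriction on $\tau$. So the content is the hardness direction. Here I would simply invoke the reduction chain built in the preceding lemmas: Lemma~\ref{lem:corps_sigma2p} states that \textsc{co-RPS} is $\Sigma_2^P$-complete, and Lemma~\ref{lem:corps_reducible} gives a polynomial-time reduction from \textsc{co-RPS} to \textsc{Happy MaxMinTaRDiS}. Composing these two facts yields $\Sigma_2^P$-hardness of \textsc{Happy MaxMinTaRDiS}, and together with membership we obtain $\Sigma_2^P$-completeness.

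It remains to argue the two restrictions claimed in the statement, namely $\tau=3$ and planarity of $G$. The lifetime bound requires no argument: the construction underlying Lemma~\ref{lem:corps_reducible} produces an instance of the explicit form $(G,k,\tau=3)$, so the reduction already lands inside the restricted problem. For planarity, I would observe that the input $G_\Phi$ to \textsc{co-RPS} is planar by definition of that problem, and that each gadget introduced in the construction — the U2G, C2G, U3G, and U1G, the clause gadgets on the vertex sets $Q_j$, and the $X$- and $Y$-variable gadgets — is attached either at a single vertex (treated as a degree-one vertex, as in Figures~\ref{fig:u2g}, \ref{fig:u1g}) or replaces/subdivides an existing edge (as with the doubly-subdivided ladder rungs inside the U3G, Figure~\ref{fig:u3g}), in each case in a way flagged in the figure captions as preserving planarity. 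Since a graph obtained by gluing planar pieces along single vertices (a $1$-sum) is again planar, the whole of $G$ is planar.

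The only mildly delicate point to verify — and the one I would spend a sentence checking carefully — is that the planarity of the individual gadget attachments genuinely composes into a single global planar embedding. Concretely, some vertices carry more than one incident gadget or external edge: a literal vertex is incident both to a U1G and to a vertex of some $Q_j$; a vertex $q_j^1, q_j^3, q_j^5$ is incident both to a U2G and to a literal vertex; the variable-gadget path/cycle vertices are incident to U2Gs/U3Gs/U1Gs as well as to literal vertices (Figures~\ref{fig:mmt_clause}, \ref{fig:xvar}, \ref{fig:yvar}). One must check that the cyclic (rotation) order of edges around each such vertex can be chosen consistently with a fixed planar embedding of $G_\Phi$. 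This is straightforward: each gadget is a connected planar graph glued to the rest of the construction along exactly one vertex, so it can be inserted into any face incident to that vertex without introducing crossings, and the edges inherited from $G_\Phi$ keep their rotation from the planar embedding of $G_\Phi$. Having recorded this, the theorem follows: \textsc{Happy MaxMinTaRDiS} is $\Sigma_2^P$-complete even on planar inputs of lifetime $3$.
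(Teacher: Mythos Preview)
Your proposal is correct and follows essentially the same route as the paper: invoke Lemma~\ref{lem:insigma} for membership, Lemma~\ref{lem:corps_sigma2p} for $\Sigma_2^P$-completeness of \textsc{co-RPS}, Lemma~\ref{lem:corps_reducible} for the reduction (which already fixes $\tau=3$), and observe that planarity is preserved. Your planarity discussion is more explicit than the paper's one-line assertion and is sound, though note a small slip: it is $q_j^2,q_j^4,q_j^6$ that carry U2Gs, while $q_j^1,q_j^3,q_j^5$ carry the literal-vertex edges---this does not affect your $1$-sum argument.
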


\begin{proof}
    By Lemma~\ref{lem:corps_reducible} we have that \textsc{co-RPS} is polynomially reducible to \textsc{Happy MaxMinTaRDiS}. \textsc{co-RPS} is $\Sigma_2^P$-complete by Lemma~\ref{lem:corps_sigma2p}. The reduction above preserves planarity of $\Phi$. We have containment of \textsc{MaxMinTaRDiS} in $\Sigma_2^P$ from Lemma~\ref{lem:insigma}.
  \end{proof}

\subsection{NP-completeness of \textsc{Nonstrict MaxMinTaRDiS} with lifetime 2}\label{sec:NonStrictMTRDSA}
Here we consider the restriction of \textsc{Nonstrict MaxMinTaRDiS} to instances with lifetime 2. We show the problem to be equivalent to the \textsc{Distance-3 Independent Set} (\textsc{D3IS}) decision problem. We say that two problems $X$ and $Y$ are \emph{equivalent} if they have the same language - that is, an instance $I$ is a yes-instance of $X$ if and only if the same instance $I$ is a yes-instance of $Y$. Where $X$ has a language consisting of triples $(G,k,\tau)$ and $Y$ has a language of tuples $(G,k)$, we may say that $Y$ is equivalent to $X$ with $\tau$ fixed to some value.

We define the distance $d(u,v)$ between two vertices $u,v$ in a temporal graph to be the number of edges in the shortest path between them in the footprint of the graph.

\smallskip
\begin{definition}
    A distance-3 independent set (D3IS) of a static graph $H$ is a set $S\subseteq V(H)$ such that for all distinct $u,v \in S$, $d(u,v)\geq 3$.
\end{definition}
The decision problem \textsc{D3IS} is defined as follows.
\begin{framed}
    \noindent
    \textbf{\textsc{Distance-3 Independent Set (D3IS)}}\\
    \emph{Input:} A static graph $H=(V,E)$ and an integer $k$.\\
    \emph{Question:} Is there a set $S\subseteq V(H)$  of cardinality $k$ that is a distance-3 independent set?
\end{framed}
We aim to show that a static graph $H$ and integer $k$ are a yes-instance of \textsc{Nonstrict MaxMinTaRDiS} with lifetime 2 if and only if the same graph $H$ and integer $k$ are a yes-instance of \textsc{D3IS}. 

We begin by showing that existence of a maximal D3IS of size $k$ in a graph $H$ implies that we can find a temporal assignment $\lambda:E(H)\to\{1,2\}$ such that a minimum TaRDiS in $(H,\lambda)$ is of cardinality $k$. Given such a D3IS $S$ of $H$, we assign $\lambda(u,v)=1$, when $u\in S$ or $v\in S$, and $\lambda(u,v)=2$, otherwise.

\smallskip
\begin{lemma}\label{lem:dis2tardis}
    Let $S$ be a maximal D3IS of a static graph $H$ and $\lambda$ be a temporal assignment of $H$ where $\lambda(u,v)=1$ when $u\in S$ or $v\in S$, and $\lambda(u,v)=2$ otherwise. Then $S$ is a minimum TaRDiS of $(H,\lambda)$.
\end{lemma}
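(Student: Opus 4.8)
The plan is to prove two things and combine them: (i) $S$ is a TaRDiS of $(H,\lambda)$, which uses \emph{maximality} of $S$; and (ii) every nonstrict TaRDiS of $(H,\lambda)$ has size at least $|S|$, which uses that $S$ is a \emph{distance-$3$ independent set}. Together these say exactly that $S$ is a minimum TaRDiS.

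First I would record the shape of the two snapshots. Write $N(S)=\bigcup_{s\in S}N(s)$. By construction $G_1$ is precisely the set of edges incident to $S$; since $S$ is a D3IS, no edge joins two vertices of $S$ and no vertex has two neighbours in $S$, so $G_1$ is a vertex-disjoint union of stars, the star centred at $s\in S$ being on vertex set $\{s\}\cup N(s)$. The snapshot $G_2$ is all remaining edges; in particular every vertex in $V(H)\setminus(S\cup N(S))$ is incident only to edges of $G_2$. For (i), take any $v\in V(H)$; if $v\in S$ it reaches itself, so assume $v\notin S$. Since $S$ is a \emph{maximal} D3IS, $v$ cannot be added to $S$, so some $s\in S$ has $d(v,s)\le 2$. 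If $d(v,s)=1$ then $(s,v)$ is incident to $S$, hence active at time $1$, giving a length-one temporal path $s\to v$. If $d(v,s)=2$, pick $w$ with $s-w-v$ a path in $H$; then $w\notin S$ (as $w\in N(s)$ and $S$ is independent) and $v\notin S$, so $(w,v)$ is active at time $2$ while $(s,w)$ is active at time $1$, and $((s,w),1),((w,v),2)$ is a (strict, hence nonstrict) temporal path from $s$ to $v$. Thus $S$ reachability dominates $(H,\lambda)$.

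For (ii) the key step is a characterisation of which vertices can temporally reach a fixed $s\in S$. Any temporal path ending at $s$ has its last time-edge incident to $s$, hence at time $1$; since in a nonstrict temporal path the times are non-decreasing and lie in $\{1,2\}$, the last time being $1$ forces every time to be $1$, so the whole path lies in $G_1$. Hence a vertex $u$ temporally reaches $s$ only if $u$ lies in the connected component of $s$ in $G_1$, which by the first paragraph equals $\{s\}\cup N(s)$. Because $S$ is a D3IS, the sets $\{s\}\cup N(s)$ for $s\in S$ are pairwise disjoint, so any TaRDiS $T$ must contain, for each $s\in S$, at least one vertex of $\{s\}\cup N(s)$ (in order to dominate $s$), and disjointness then gives $|T|\ge|S|$. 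Combined with (i), $S$ is a minimum TaRDiS of $(H,\lambda)$.

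I expect the only mildly delicate point to be the componentwise description of $G_1$: one has to invoke both halves of the distance-$3$ condition (no $S$–$S$ edge, and no common neighbour of two $S$-vertices) to see that the components of $G_1$ are exactly the stars $\{s\}\cup N(s)$. Once that is in place, the ``last edge forces time $1$'' observation and the disjointness counting are routine, and the use of maximality in (i) is immediate.
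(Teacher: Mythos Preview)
Your proof is correct and follows essentially the same approach as the paper: establish that $S$ is a TaRDiS via maximality (any $v\notin S$ lies within distance $2$ of some $s\in S$, yielding a time-$1$ edge possibly followed by a time-$2$ edge), then show minimality by arguing that each $s\in S$ is reachable only from $N[s]=\{s\}\cup N(s)$ and that these closed neighbourhoods are pairwise disjoint by the D3IS property. Your framing of the minimality step via the component structure of $G_1$ is slightly more explicit than the paper's phrasing, but the underlying argument is the same.
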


\begin{proof}
    We first show that $S$ is a TaRDiS. We assume without loss of generality that $H$ is a single connected component. Suppose for contradiction some vertex $u$ is not reachable from any vertex in $S$. Note that every vertex in $S$ trivially reaches its neighbours. So, by construction of $\lambda$, $u$ is incident only to edges at time $2$. Since we have assumed that $H$ consists of a single connected component, there must be a static path in $H$ from each vertex in $S$ to $u$. Let $z$ be the closest vertex in $S$ to $u$. Then the shortest path from $z$ to $u$ must have length 2 and consist of an edge assigned time 1 followed by an edge assigned time 2. Else, $S$ is not maximal. Hence, the shortest path from $z$ to $u$ is a nonstrict temporal path and $u\in R_z$, contradicting our assumption. Thus $S$ is a TaRDiS of the constructed instance.

    We now show minimality of $S$. By construction of $\lambda$, every vertex $v\in S$ is temporally reachable only from its closed neighbourhood $N[v]$. No temporal path originating outside $N[v]$ can include any edge incident to $v$ since any such path must contain an edge assigned time 2 before the final edge which is assigned time 1. Since $S$ forms a \textsc{D3IS}, $N[u]\cap N[v]=\emptyset$ for all $u$, $v\in S$. Therefore, for $(H,\lambda)$ to be temporal reachability dominated, there must at least be a vertex from the neighbourhood of each vertex in $S$. These are disjoint sets, so any TaRDiS must have cardinality at least $k$. Hence $S$ is a minimum TaRDiS of $(H,\lambda)$.
  \end{proof}

  \begin{definition}[Sole Reachability Set]
    We define the \emph{sole reachability set} of a vertex $v$ in a TaRDiS $S$ as the set $SR(\mathcal{G}, S,v)=R_v(\mathcal{G})\setminus(\cup_{u\in S\setminus \{v\}}R_u(\mathcal{G}))$. Equivalently, it is the set of vertices reachable from $v$ and not any other vertex in $S$.
\end{definition}
When $\mathcal{G}$ is clear from context, we write $SR(S,v)$ for $SR(S,\mathcal{G},v)$. Note that, in a minimum TaRDiS, every vertex has a non-empty sole reachability set.

\smallskip
\begin{definition}
    We call a TaRDiS $S$ on a temporal graph $\mathcal G$ \emph{independent} if and only if every vertex in $S$ is in its own sole reachability set under $S$.
\end{definition}

\smallskip
\begin{lemma}\label{lem:independent_tardis}
    If a temporal graph $\mathcal G$ admits an independent nonstrict TaRDiS $S$, then $S$ is a D3IS in the footprint graph $\mathcal G_\downarrow$.
\end{lemma}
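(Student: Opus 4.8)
The plan is to prove the contrapositive: assuming $S$ is \emph{not} a D3IS of $\mathcal{G}_\downarrow$, I will exhibit two vertices of $S$, one of which temporally reaches the other by a nonstrict path, contradicting the independence of $S$. Recall that $S$ being independent means precisely that $v \in SR(\mathcal{G}, S, v)$ for every $v \in S$, which unpacks to the statement that for all distinct $u, v \in S$, $v \notin R_u^{\leq}(\mathcal{G})$; that is, no vertex of $S$ nonstrictly reaches another vertex of $S$.

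So suppose there are distinct $u, v \in S$ with $d(u,v) \leq 2$ in $\mathcal{G}_\downarrow$. First consider the case $d(u,v) = 1$: then $(u,v) \in E(\mathcal{G})$, so there is a time-edge $((u,v), t)$ for some $t \in \lambda(u,v)$, which on its own is a nonstrict temporal path from $u$ to $v$; hence $v \in R_u^{\leq}$, a contradiction. Now consider $d(u,v) = 2$: then $u$ and $v$ have a common neighbour $w$ in $\mathcal{G}_\downarrow$, distinct from both, together with time-edges $((u,w), t_1)$ and $((w,v), t_2)$ for some $t_1 \in \lambda(u,w)$ and $t_2 \in \lambda(w,v)$. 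Since $t_1$ and $t_2$ are comparable, either $t_1 \leq t_2$ or $t_2 \leq t_1$. In the former case $((u,w),t_1),((w,v),t_2)$ is a nonstrict temporal path, so $v \in R_u^{\leq}$; in the latter case $((v,w),t_2),((w,u),t_1)$ is a nonstrict temporal path, so $u \in R_v^{\leq}$. Either way we contradict the independence of $S$.

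Combining the two cases, every pair of distinct vertices of $S$ is at distance at least $3$ in $\mathcal{G}_\downarrow$, i.e.\ $S$ is a D3IS. The argument is essentially immediate, so there is no real technical obstacle; the only point worth flagging is that the distance-$2$ case crucially relies on \emph{nonstrictness} — two adjacent time-edges bearing the \emph{same} timestamp still form a valid nonstrict path in whichever direction we traverse them — so this is exactly the step that uses the hypothesis, and the characterisation would fail for strict reachability.
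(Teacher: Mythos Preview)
Your proof is correct and follows essentially the same approach as the paper: case-split on $d(u,v)\in\{1,2\}$, and in the distance-$2$ case compare the times on the two edges through a common neighbour to conclude that one of $u,v$ nonstrictly reaches the other. Your version is in fact slightly more careful than the paper's, since you explicitly pick $t_1\in\lambda(u,w)$ and $t_2\in\lambda(w,v)$ rather than comparing the sets $\lambda(u,w)$ and $\lambda(w,v)$ directly.
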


\begin{proof}
    Consider two vertices $u,v\in S$ at distance $d(u,v)$ from one another in $\mathcal G_\downarrow$. If $u$ and $v$ are adjacent, then $u$ and $v$ reach each other, which would contradict independence of $S$. If $d(u,v)=2$ then there is at least one vertex $w\in N[u]\cap N[v]$, and either $\lambda(u,w)\leq \lambda(w,v)$ or $\lambda(u,w) > \lambda(w,v)$. So one of $u$ and $v$ must reach the other. Hence any two vertices in $S$ must be distance at least 3 from one another, the definition of a D3IS. 
  \end{proof}

\begin{lemma}\label{lem:tardis2dis}
    If a temporal graph $\mathcal{G}$ with lifetime $2$ has a minimum nonstrict TaRDiS of cardinality $k$, then $\mathcal{G}_\downarrow$ admits a D3IS of size $k$. 
\end{lemma}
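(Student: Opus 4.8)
The plan is to manufacture, from a minimum nonstrict TaRDiS $S$ of $\mathcal G$, an \emph{independent} nonstrict TaRDiS $S'$ of the same cardinality $k$, and then to invoke Lemma~\ref{lem:independent_tardis} to conclude that the $k$-element set $S'$ is a D3IS of $\mathcal G_\downarrow$.

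First I would record the structure of nonstrict reachability under the assumption $\tau=2$. Since every time-edge is at time $1$ or $2$, a nonstrict temporal path is a (possibly empty) run of time-$1$ edges followed by a (possibly empty) run of time-$2$ edges. Writing $C_1(v)$ (resp.\ $C_2(v)$) for the connected component of $v$ in the snapshot $G_1$ (resp.\ $G_2$), this yields $R_v^{\le}=\bigcup_{u\in C_1(v)}C_2(u)$. Two consequences will be used repeatedly: (i) $R_v^{\le}=R_w^{\le}$ whenever $v,w$ lie in the same $G_1$-component; and (ii) if $w\in R_v^{\le}$ then $C_2(w)\subseteq R_v^{\le}$ (because $w\in C_2(u)$ for some $u\in C_1(v)$, so $C_2(w)=C_2(u)$).

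Next, let $S=\{v_1,\dots,v_k\}$ be a minimum nonstrict TaRDiS and set $C^i=C_1(v_i)$. By (i) and minimality the components $C^i$ are pairwise distinct (two $S$-vertices in one $G_1$-component have equal reachability sets, so one could be deleted), and I write $R(C^i)$ for the common value $R_{v_i}^{\le}$. Minimality also gives, for each $i$, that $SR_i:=R(C^i)\setminus\bigcup_{j\neq i}R(C^j)\neq\emptyset$. The key step — which I expect to be the crux of the argument — is to upgrade this to: $C^i\not\subseteq\bigcup_{j\neq i}R(C^j)$ for every $i$. Indeed, if $C^i$ were contained in that union, then every $w\in C^i$ would lie in some $R(C^j)$ with $j\neq i$, hence $C_2(w)\subseteq R(C^j)$ by (ii); since the structure formula applied to $v_i$ gives $R(C^i)=\bigcup_{w\in C^i}C_2(w)$, this would force $R(C^i)\subseteq\bigcup_{j\neq i}R(C^j)$, contradicting $SR_i\neq\emptyset$.

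Finally, pick $r_i\in C^i\setminus\bigcup_{j\neq i}R(C^j)$ for each $i$ and put $S'=\{r_1,\dots,r_k\}$. The $r_i$ lie in disjoint components, so $|S'|=k$; $S'$ is a TaRDiS since $R_{r_i}^{\le}=R(C^i)$ by (i) and $\bigcup_iR(C^i)=\bigcup_iR_{v_i}^{\le}=V(\mathcal G)$; and $S'$ is independent, because $r_i\in R_{r_i}^{\le}$ while $r_i\notin R_{r_j}^{\le}=R(C^j)$ for all $j\neq i$, so $r_i\in SR(\mathcal G,S',r_i)$. Lemma~\ref{lem:independent_tardis} then gives that $S'$ is a D3IS of $\mathcal G_\downarrow$ of size $k$, as required. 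The reachability-structure formula and the closing bookkeeping are routine; the one genuinely non-obvious move is the observation that minimality forbids $C^i\subseteq\bigcup_{j\neq i}R(C^j)$, which is precisely what guarantees that a usable representative $r_i$ exists in each component. (Note that picking $r_i$ from the sole reachability set $SR_i$ directly does \emph{not} work, since $SR_i$ may be disjoint from $C^i$ and two such representatives can end up adjacent in $\mathcal G_\downarrow$; staying inside $C^i$ is what keeps reachability, and hence coverage, unchanged.)
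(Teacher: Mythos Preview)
Your proof is correct and follows the same overall strategy as the paper: replace each vertex of a minimum nonstrict TaRDiS by a vertex in its time-$1$ component that is not reached by the other TaRDiS members, obtaining an independent TaRDiS of the same size, and then apply Lemma~\ref{lem:independent_tardis}. The paper locates the replacement by taking the \emph{closest} vertex in $SR(S,x)$ and arguing the connecting temporal path must consist only of time-$1$ edges, whereas your component-level argument (if $C^i\subseteq\bigcup_{j\neq i}R(C^j)$ then $R(C^i)\subseteq\bigcup_{j\neq i}R(C^j)$, contradicting $SR_i\neq\emptyset$) reaches the same conclusion more directly.
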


\begin{proof}
    Our proof is constructive; given a minimum TaRDiS $S$, we show existence of an \emph{independent} minimum TaRDiS $S^*$ of equal cardinality and then apply Lemma~\ref{lem:independent_tardis}. 

    First, we justify some simplifying assumptions about $\mathcal G$. Since \textsc{TaRDiS} can be computed independently in disconnected components of $\mathcal G_\downarrow$, we will assume $\mathcal G_\downarrow$ is connected. Further, if $E_1(\mathcal G)= \emptyset$ or $E_2(\mathcal G)= \emptyset$ we may choose any single vertex from $\mathcal G$ to be a minimum TaRDiS which is also independent; hence we assume $E_1(\mathcal G)\neq \emptyset$ and $E_2(\mathcal G)\neq \emptyset$. We also recall that $SR(S,x) \neq \emptyset$ for all $x\in S$ by minimality of $S$. 

    We construct $S^*$ by replacing every vertex $x$ in $S$ such that $x\notin SR(S,x)$ with some vertex $x^*$ with the property that $R_{x^*}=R_x$ and $x^*\in SR(S,x)$, as follows. Let $y\neq x$ be a vertex in $S$ such that $y$ reaches $x$. The path from $y$ to $x$ cannot arrive at time 1, else $S$ is not minimal as $R_x= R_{y}$ and $S\setminus\{x\}$ is a TaRDiS. Choose $x^*$ to be the closest vertex to $x$ in $SR(S,x)$. We know such a vertex exists by minimality of $S$. We claim that the path from $x$ to $x^*$ arrives at time 1 and so $R_x=R_{x^*}$. To see this, suppose otherwise. The path must begin with at least one edge at time 1, otherwise $x^*$ would be reachable from $y$. If the path arrives at time 2, then the last vertex on the path reached at time $1$ is closer to $x$ than $x^*$, and is in $SR(S,x)$. Else, some other vertex in $S$ reaches $x^*$. 

    This concludes our construction of $S^*$ as an independent minimum TaRDiS. By Lemma~\ref{lem:independent_tardis}, $S^*$ is also a D3IS. 
  \end{proof}
Combining Lemmas~\ref{lem:dis2tardis} and~\ref{lem:tardis2dis} gives us the following theorem.
\smallskip
\begin{theorem}\label{thm:d3is}
    \textsc{Nonstrict MaxMinTaRDiS} with lifetime $\tau=2$ is equivalent to \textsc{Distance-3 Independent Set}.
\end{theorem}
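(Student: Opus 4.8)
The plan is to derive both directions of the equivalence immediately from Lemmas~\ref{lem:dis2tardis} and~\ref{lem:tardis2dis}, paying attention to the mismatch between the ``size exactly $k$'' phrasing of \textsc{D3IS} and the ``every TaRDiS has size at least $k$'' phrasing of \textsc{Nonstrict MaxMinTaRDiS}.

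For the direction from \textsc{D3IS} to \textsc{Nonstrict MaxMinTaRDiS}, I would start from a distance-3 independent set of size $k$ in $H$, greedily extend it to a \emph{maximal} distance-3 independent set $S$ of size $k' \ge k$, and feed $S$ to Lemma~\ref{lem:dis2tardis}. That lemma returns the explicit assignment $\lambda$ (time $1$ on edges incident to $S$, time $2$ elsewhere), which has lifetime at most $2$ and for which $S$ is a \emph{minimum} nonstrict TaRDiS of $(H,\lambda)$; hence every nonstrict TaRDiS of $(H,\lambda)$ has size $\ge k' \ge k$, which witnesses that $(H,k,\tau=2)$ is a yes-instance of \textsc{Nonstrict MaxMinTaRDiS}.

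For the converse, I would first invoke Lemma~\ref{lem:max-min-always-simple} to assume the witnessing assignment $\lambda$ is simple, i.e. $\lambda\colon E(H)\to\{1,2\}$, so that $(H,\lambda)$ has lifetime at most $2$ and, by hypothesis, a minimum nonstrict TaRDiS of some size $k^{\ast}\ge k$. Applying Lemma~\ref{lem:tardis2dis} (whose proof already treats the subcase where one snapshot is empty, so ``lifetime exactly $2$'' is not actually needed) produces a distance-3 independent set of $\mathcal G_\downarrow=H$ of size $k^{\ast}$; discarding vertices down to size $k$ keeps it a distance-3 independent set, so $(H,k)$ is a yes-instance of \textsc{D3IS}.

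The only point requiring care is precisely this quantifier bookkeeping: in the forward direction I must pass to a maximal D3IS (since Lemma~\ref{lem:dis2tardis} is stated for maximal sets), accepting that its size may exceed $k$ — which is harmless, because the resulting $\lambda$ forces every TaRDiS to have size at least that larger value, hence at least $k$; and in the backward direction I must shrink the D3IS returned by Lemma~\ref{lem:tardis2dis} back to size $k$, using that any subset of a D3IS is again a D3IS. I do not expect any substantive obstacle beyond this, since the two preceding lemmas do all of the real work and the theorem is essentially their concatenation together with Lemma~\ref{lem:max-min-always-simple}.
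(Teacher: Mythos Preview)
Your proposal is correct and takes essentially the same approach as the paper, which simply states that the theorem follows by combining Lemmas~\ref{lem:dis2tardis} and~\ref{lem:tardis2dis}. Your attention to the quantifier bookkeeping (passing to a maximal D3IS before invoking Lemma~\ref{lem:dis2tardis}, shrinking back to size $k$ afterward, and invoking Lemma~\ref{lem:max-min-always-simple} for simplicity of $\lambda$) is more careful than the paper's one-line proof, which leaves these details implicit.
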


Interestingly, the same does not hold for $\tau \geq 3$. A counterexample is shown in Figure~\ref{fig:intro_d3iscounter}, where the minimum TaRDiS is larger than the maximum D3IS of the footprint.
The Petersen graph also gives us a 3-regular counterexample.

Eto, Guo and Miyano \cite{eto_distance-dindependent_2014_bugfree} show that \textsc{D3IS} is NP-complete even on planar, bipartite graphs with maximum degree 3. They also show \textsc{D3IS} to be $W[1]$-hard on chordal graphs with respect to the size of the distance-3 independent set. \textsc{D3IS} is also shown to be APX-hard on $r$-regular graphs for all integers $r\geq 3$ and admit a PTAS on planar graphs by Eto, Ito, Liu, and Miyano \cite{eto_approximability_2016_bugfree}. Agnarsson, Damaschke and Halldórsson \cite{agnarsson_powers_2003} show that \textsc{D3IS} is tractable on interval graphs, trapezoid graphs and circular arc graphs. 

\smallskip
\begin{corollary}\label{cor:d3is}
 \textsc{Nonstrict MaxMinTaRDiS} restricted to inputs with $\tau=2$ remains NP-complete even when restricted to planar, bipartite graphs with maximum degree 3. Furthermore, the problem is $W[1]$-hard with respect to the parameter $k$, the size of a minimum TaRDiS, and is APX-hard on $r$-regular graphs for all integers $r\geq 3$.
\end{corollary}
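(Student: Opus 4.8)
The plan is to obtain every assertion as an immediate consequence of Theorem~\ref{thm:d3is} together with the known hardness results for \textsc{D3IS} recalled above, exploiting the crucial feature of that equivalence: it is the identity on the graph $H$. Concretely, the equivalence pairs an instance $(H,k)$ of \textsc{D3IS} with the instance $(H,k,\tau=2)$ of \textsc{Nonstrict MaxMinTaRDiS}, so it preserves $k$ and \emph{every} structural property of $H$. Hence the reduction of Eto, Guo and Miyano~\cite{eto_distance-dindependent_2014_bugfree} witnessing NP-hardness of \textsc{D3IS} on planar bipartite graphs of maximum degree $3$ is, verbatim, a reduction to \textsc{Nonstrict MaxMinTaRDiS} with $\tau=2$ on the same class of footprints. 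Membership in NP for this restriction likewise follows from Theorem~\ref{thm:d3is}: $(H,k,2)$ is a yes-instance exactly when $H$ admits a D3IS of size $k$, and such a set is a polynomially verifiable certificate. This gives NP-completeness on planar bipartite graphs of maximum degree~$3$.

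For the parameterized lower bound I would again invoke Theorem~\ref{thm:d3is}. Since \textsc{D3IS} parameterized by solution size is W[1]-hard, already on chordal graphs~\cite{eto_distance-dindependent_2014_bugfree}, and the equivalence leaves $k$ (and chordality of the footprint) unchanged, \textsc{Nonstrict MaxMinTaRDiS} restricted to $\tau=2$ is W[1]-hard with respect to $k$ — the parameter $k$ being precisely the size of the minimum TaRDiS one seeks to guarantee, as in the statement of Theorem~\ref{thm:d3is}.

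For APX-hardness the decision-level equivalence of Theorem~\ref{thm:d3is} is not by itself enough, and lifting it to the optimization versions is the one point that requires a little care; it is already contained in the proof of Theorem~\ref{thm:d3is}. Lemma~\ref{lem:dis2tardis} shows that a maximum D3IS $S$ of $H$, together with its associated assignment $\lambda_S$, yields a minimum TaRDiS of $(H,\lambda_S)$ of size $|S|$, while Lemma~\ref{lem:tardis2dis} shows that no assignment with $\tau=2$ can force a minimum TaRDiS larger than the maximum D3IS of $H$. Thus, for every graph $H$, the optimum of \textsc{Max-D3IS} equals the optimum of the maximization version of \textsc{Nonstrict MaxMinTaRDiS} with $\tau=2$, so any factor-$c$ approximation for one problem is a factor-$c$ approximation for the other. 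Since \textsc{D3IS} is APX-hard on $r$-regular graphs for every $r\ge 3$~\cite{eto_approximability_2016_bugfree} and $r$-regularity of the footprint is preserved by the equivalence, \textsc{Nonstrict MaxMinTaRDiS} with $\tau=2$ is APX-hard on $r$-regular graphs for every $r\ge 3$. Beyond this optimization-versus-decision bookkeeping no genuine obstacle remains: all of the combinatorial content has already been discharged in establishing Theorem~\ref{thm:d3is}.
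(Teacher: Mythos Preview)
Your proposal is correct and follows exactly the paper's approach: the corollary is obtained directly from the equivalence of Theorem~\ref{thm:d3is} together with the cited hardness results for \textsc{D3IS}. The paper in fact gives no explicit proof beyond this, so your write-up is more detailed than the original---in particular, your careful justification that the optimization-level equivalence (needed for the APX-hardness transfer) follows from Lemmas~\ref{lem:dis2tardis} and~\ref{lem:tardis2dis} makes a point the paper leaves implicit.
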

\smallskip
\begin{corollary}
 \textsc{Nonstrict MaxMinTaRDiS} restricted to inputs with $\tau=2$ is in NP, admits a PTAS on planar graphs and is tractable on interval graphs, trapezoid graphs and circular arc graphs.
\end{corollary}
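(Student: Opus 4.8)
The plan is to derive every part of the statement as an immediate consequence of Theorem~\ref{thm:d3is} together with known results on \textsc{D3IS}. First I would observe that, since \textsc{Nonstrict MaxMinTaRDiS} restricted to $\tau=2$ and \textsc{D3IS} have the \emph{same} language by Theorem~\ref{thm:d3is}, membership in NP transfers directly: \textsc{D3IS} is trivially in NP (given a candidate set $S$, one verifies $d(u,v)\geq 3$ for all pairs $u,v\in S$ in polynomial time by computing distances in $H$), and hence so is the restricted \textsc{MaxMinTaRDiS} problem. This also resolves the open question of NP-membership in this restricted case, in contrast to the general problem which is only known to lie in $\Sigma_2^P$.

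For the approximation and tractability claims, I would first upgrade the decision-problem equivalence of Theorem~\ref{thm:d3is} to an equivalence of the underlying optimization problems. Lemmas~\ref{lem:dis2tardis} and~\ref{lem:tardis2dis} in fact show, for any static graph $H$, that the maximum size of a D3IS of $H$ equals the maximum over temporal assignments $\lambda:E(H)\to\{1,2\}$ of the size of a minimum nonstrict TaRDiS of $(H,\lambda)$: Lemma~\ref{lem:dis2tardis} turns a maximal D3IS of size $k$ into an assignment whose minimum TaRDiS has size $k$, and Lemma~\ref{lem:tardis2dis} turns a temporal graph of lifetime $2$ with minimum nonstrict TaRDiS of size $k$ into a D3IS of $\mathcal G_\downarrow$ of size $k$. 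Both constructions are polynomial-time and value-preserving, so any $\rho$-approximation (in particular a PTAS) for one problem yields one for the other, and polynomial-time exact solvability on a graph class $\mathcal C$ transfers between the two problems whenever $\mathcal C$ is recognised in polynomial time (as is the case for interval, trapezoid, and circular arc graphs).

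With this in hand, the remaining assertions are just invocations of the literature on \textsc{D3IS}: Eto, Ito, Liu and Miyano~\cite{eto_approximability_2016_bugfree} give a PTAS for \textsc{D3IS} on planar graphs, and Agnarsson, Damaschke and Halldórsson~\cite{agnarsson_powers_2003} show that \textsc{D3IS} is solvable in polynomial time on interval graphs, trapezoid graphs, and circular arc graphs; transferring each result across the equivalence yields the corresponding statement for \textsc{Nonstrict MaxMinTaRDiS} with $\tau=2$. Since the corollary is genuinely a corollary of Theorem~\ref{thm:d3is}, there is no substantive obstacle; the only point needing (minor) care is verifying that the equivalence of Theorem~\ref{thm:d3is} is value-preserving rather than merely preserving yes/no answers, so that approximation ratios carry across — but this is immediate from the explicit size-preserving constructions in Lemmas~\ref{lem:dis2tardis} and~\ref{lem:tardis2dis}.
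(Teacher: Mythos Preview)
Your proposal is correct and follows exactly the paper's intended approach: the corollary is stated without proof, relying directly on the equivalence of Theorem~\ref{thm:d3is} together with the cited results of Eto, Ito, Liu and Miyano~\cite{eto_approximability_2016_bugfree} and Agnarsson, Damaschke and Halld\'orsson~\cite{agnarsson_powers_2003}. Your additional care in noting that the equivalence is value-preserving (so that the PTAS transfers) is a detail the paper leaves implicit.
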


\section{Parameterized complexity results for \textsc{TaRDiS}}\label{sec:param-tardis}
In Section \ref{sec:classic-tardis}, we showed that the variants of \textsc{TaRDiS} are NP-complete and W[2]-hard with respect to $k$. This rules out $k$ as a candidate parameter for an fpt algorithm. We begin by showing inclusion of \textsc{TaRDiS} in FPT with respect to locally earliest edges and tractability when the input is heavily restricted. We then give an algorithm which solves each variant of \textsc{TaRDiS} on a nice tree decomposition of the footprint graph. This generalises the tree algorithm given in Section~\ref{sec:classic-tardis}.

A problem $\Pi$ is said to be \emph{fixed-parameter tractable} (fpt) with respect to some parameter $k$ if there is an algorithm solving $\Pi$ in time $f(k) \cdot \mathrm{poly}(n)$ (where $n$ is the size of the instance of $\Pi$). The complexity class FPT consists of all problem-parameter pairs admitting an fpt algorithm. 

\subsection{FPT results with a restricted temporal assignment}
We begin with two FPT results that require the temporal assignment to be restricted in some way to recover tractability of \textsc{TaRDiS}. In the first, we consider \textsc{Happy TaRDiS} and \textsc{Nonstrict TaRDiS} when parameterised by the number of locally earliest edges. Following that, we consider \textsc{Strict TaRDiS} when each component of the input is restricted in some way.

\smallskip
\begin{lemma}\label{lem:fpt_lee}
    \textsc{Happy TaRDiS} and \textsc{Nonstrict TaRDiS} are in FPT with respect to the number of locally earliest edges and weakly locally earliest edges, respectively.
\end{lemma}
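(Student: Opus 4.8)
The plan is to combine the structural results already in hand—Lemma~\ref{lem:canonical} and Corollary~\ref{cor:lee_balev}—with a brute-force search over a parameter-bounded candidate set. Write $\ell$ for the parameter (the number of locally earliest edges, resp.\ weakly locally earliest edges, of the input temporal graph $\mathcal G$), and let $C \subseteq V(\mathcal G)$ be the set of vertices incident to at least one such edge. Each (weakly) locally earliest edge contributes at most two vertices to $C$, so $|C| \le 2\ell$. Recall that, by definition, a (weakly) \emph{canonical} TaRDiS is a subset of $C$.

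First I would treat \textsc{Nonstrict TaRDiS}. By Lemma~\ref{lem:canonical}, $\mathcal G$ admits a minimum weakly canonical nonstrict TaRDiS $S^\star$; in particular, if $\mathcal G$ admits \emph{any} nonstrict TaRDiS of size at most $k$, then $|S^\star|\le k$ and $S^\star \subseteq C$. Hence $(\mathcal G,k)$ is a yes-instance if and only if some subset of $C$ of size at most $k$ is a nonstrict TaRDiS. The algorithm therefore enumerates all subsets $S\subseteq C$ — there are at most $2^{|C|}\le 4^{\ell}$ of them — and for each one computes the reachability sets of its members (in polynomial time, as in the proof of Lemma~\ref{lem:innp}) to decide whether $S$ is a nonstrict TaRDiS of size at most $k$; it outputs yes iff some such $S$ is found. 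This runs in time $4^{\ell}\cdot\mathrm{poly}(n)$, so it is an fpt algorithm in $\ell$. By Corollary~\ref{cor:lee_balev} one could further restrict to subsets of size at most $\ell$, but the crude bound $2^{|C|}$ already suffices for membership in FPT.

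For \textsc{Happy TaRDiS} the argument is identical, using the second part of Lemma~\ref{lem:canonical}: a happy temporal graph is proper, so there is a canonical minimum TaRDiS, and in a proper temporal graph a weakly locally earliest edge is already locally earliest, so the candidate set $C$ for locally earliest edges again satisfies $|C|\le 2\ell$ with $\ell$ the number of locally earliest edges. Enumerating subsets of $C$ and testing each exactly as above gives the same $4^{\ell}\cdot\mathrm{poly}(n)$ bound. Strictness of temporal paths is immaterial here, since in a happy temporal graph every nonstrict temporal path is strict.

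I do not expect a genuine obstacle: the entire content is that canonical TaRDiSes are confined to a vertex set of size at most $2\ell$, which is exactly what the definition of (weakly) locally earliest edges together with Lemma~\ref{lem:canonical} provide. The only points needing care are the bookkeeping between the strict/nonstrict and weak/non-weak variants — remembering that the \emph{weakly}-canonical guarantee is the one attached to \textsc{Nonstrict TaRDiS}, that in proper (hence happy) temporal graphs the weak and non-weak notions coincide, and that the word ``minimum'' in Lemma~\ref{lem:canonical} is precisely what lets us preserve the size bound $\le k$ when replacing an arbitrary small TaRDiS by a canonical one.
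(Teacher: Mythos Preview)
Your proposal is correct and follows essentially the same approach as the paper: use Lemma~\ref{lem:canonical} to restrict attention to subsets of the vertices incident to (weakly) locally earliest edges, then brute-force over this parameter-bounded set. The only cosmetic difference is that the paper first disposes of the case $k\ge \ell$ via Corollary~\ref{cor:lee_balev} and then enumerates size-$k$ subsets, whereas you enumerate all subsets of $C$ outright; your bound $|C|\le 2\ell$ is in fact more carefully stated than the paper's.
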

\begin{proof}
    It suffices to observe that any instance with $t$ (weakly) locally earliest edges trivially admits a TaRDiS of cardinality $t$ (Corollary \ref{cor:lee_balev}). If $k\geq t$, we must have a yes-instance. 
    Otherwise there are $\binom{t}{k}$ possibilities for a (weakly) canonical TaRDiS, which can be checked by brute force in polynomial time when $t$ is bounded. 
  \end{proof}

\begin{lemma}\label{lem:fpt_finite_lang_tardis}
     \textsc{Strict TaRDiS} is in FPT parameterized by maximum degree in $\mathcal{G}_\downarrow$, lifetime and $k$, and \textsc{Happy TaRDiS} is in FPT with parameters lifetime and $k$.
\end{lemma}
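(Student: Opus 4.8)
The plan is to prove both statements by a kernelization argument that exploits the fact that strict (and, in happy graphs, all) temporal paths are short. First I would record the key observation: along any strict temporal path $v_0, v_1, \ldots, v_\ell$ the associated times satisfy $t_1 < t_2 < \cdots < t_\ell$ with each $t_i \in [\tau]$, so $\ell \le \tau$. Hence for every vertex $v$ the strict reachability set $R_v^<(\mathcal{G})$ is contained in the ball of radius $\tau$ around $v$ in $\mathcal{G}_\downarrow$, which in a graph of maximum degree $\Delta$ has size at most $N(\Delta,\tau) := 1 + \Delta + \Delta^2 + \cdots + \Delta^\tau$, a quantity depending on the two parameters only.

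Next I would turn this into a bound on the whole instance. If $(\mathcal{G},k)$ is a yes-instance of \textsc{Strict TaRDiS}, the at most $k$ vertices of a witnessing TaRDiS collectively reach at most $k \cdot N(\Delta,\tau)$ vertices, so $|V(\mathcal{G})| \le k \cdot N(\Delta,\tau)$. The algorithm therefore first tests whether $|V(\mathcal{G})| > k \cdot N(\Delta,\tau)$ and rejects if so; otherwise $|V(\mathcal{G})|$ is bounded by a function of $\Delta,\tau,k$, and since $|\mathcal{E}| \le \tau \binom{|V(\mathcal{G})|}{2}$ the entire input has parameter-bounded size. On such a bounded instance we brute-force over all $\binom{|V(\mathcal{G})|}{\le k}$ candidate sets, verifying each in polynomial time by computing reachability sets as in Lemma~\ref{lem:innp}. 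This gives total running time $f(\Delta,\tau,k) + \mathrm{poly}(\text{input size})$, so \textsc{Strict TaRDiS} is in FPT with parameters $\Delta$, $\tau$, and $k$. For \textsc{Happy TaRDiS} I would additionally note that happiness forces $\Delta(\mathcal{G}_\downarrow) \le \tau$: each snapshot is a matching and each edge is active exactly once, so a vertex is incident to at most one edge per timestep and hence to at most $\tau$ edges overall; moreover strict and nonstrict reachability coincide on happy graphs. Substituting $\Delta \le \tau$ into the argument above yields an fpt algorithm with parameters $\tau$ and $k$ alone.

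The argument is largely routine; the only points requiring care are making the ball-size bound explicit and noticing the automatic bound $\Delta(\mathcal{G}_\downarrow) \le \tau$ in the happy case. One subtlety worth flagging in the write-up is that the size bound must be used as an all-or-nothing rejection rule — rejecting exactly when $|V(\mathcal{G})| > k\cdot N(\Delta,\tau)$ — rather than as a graph-shrinking reduction, since deleting vertices could alter reachability; this is harmless because the surviving instances are already of parameter-bounded size, so no genuine reduction is needed.
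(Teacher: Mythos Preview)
Your proposal is correct and follows essentially the same approach as the paper: bound the length of strict temporal paths by $\tau$, bound $|R_v^<|$ by a function of $\Delta$ and $\tau$, conclude that any yes-instance has $|V(\mathcal{G})|$ bounded by a function of the parameters, and brute-force the remaining instances; for the happy case, invoke $\Delta \le \tau$. Your write-up is in fact more careful than the paper's (you make explicit the ball bound, the brute-force step, and the point that one rejects rather than deletes), but the underlying argument is identical.
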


\begin{proof}
    Any decidable language consisting of words of length at most some constant $c$ can be decided in constant time, hence is in $P$. Any strict temporal path in a temporal graph $\mathcal G$ has length at most $\tau$. This entails that, for all $v$, $|R_v^{<}|\leq 2 \Delta^\tau$ where $\Delta$ is the maximum degree of $\mathcal G_\downarrow$. Hence any strict TaRDiS $S$ in $\mathcal G$ must satisfy $|S| \geq \frac{V(\mathcal{G})}{2 \Delta^\tau}$.  Therefore, no instance $(\mathcal G, k)$ satisfying $|V(\mathcal{G})|\geq 2k\Delta^\tau$ can be in the language \textsc{Strict TaRDiS}. Note that, in a happy temporal graph, we can can apply the property that $\tau \geq \Delta$.
    Therefore, both problems restricted as above have constantly many yes-instances, each of size bounded by a constant. 
  \end{proof}

\subsection{Preliminaries: treewidth and tree decompositions}

We refer the interested reader to Chapter 10 of Niedermeier's \emph{Invitation to Fixed-Parameter Algorithms} \cite{niedermeier_invitation_2006_bugfree} for a fuller introduction. Definitions and results in this subsection are taken or adapted from that work.

\smallskip
\begin{definition}[Tree Decomposition, Treewidth]
    We say a pair $(T,B)$ is a \emph{tree decomposition} of $G$ if $T$ is a tree and $B = \{B(s) : s\in V(T)\}$ is a collection of subsets of $V(G)$, called \emph{bags}, satisfying:
    \begin{enumerate}
        \item $V(G)=\cup_{s\in V(T)}{B(s)}$.
        \item $\forall (u,v) \in E(G): \exists s \in V(T): \{u,v\}\in B(s)$. That is, for each edge in the graph, there is at least one bag containing both of its endpoints.
        \item $\forall v\in V(G): T[\{s:v\in B(s)\}]$ is connected; for each vertex, the subgraph obtained by deleting every node not containing $v$ in its bag from $T$ is connected.
    \end{enumerate}
The \emph{width} of a tree decomposition is defined to be $\max\{|B(s)|:s\in V(T)\}-1$. The \emph{treewidth} of a graph $G$ is the minimum $\omega$ such that $G$ has a tree decomposition of with $\omega$. 

\end{definition}

For a given tree decomposition $(T,B)$ of graph $G$, we denote by $V_s \subseteq V(G)$ the set of vertices in $G$ that occur in bags of the subtree of $T$ rooted at $s$. It is a well-known result by Bodlaender that finding a tree decomposition of width at most $\omega$, if one exists, is in FPT with parameter $\omega$. 

\begin{theorem}[Bodlaender \cite{bodlaender_linear-time_1996_bugfree}]\label{thm:tree-decomp-linear}
    For all fixed constants $\omega\in \mathbb N$, there exists a linear time algorithm that tests whether a given graph $G=(V,E)$ has treewidth at most $\omega$, and if so outputs a tree decomposition of $G$ with treewidth at most $\omega$. 
\end{theorem}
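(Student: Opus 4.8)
The plan is to give a two-phase recursive algorithm, following Bodlaender's original approach. The first phase is a \emph{reduction} subroutine that, given $G$, produces in linear time a graph $G'$ with $|V(G')| \le c\cdot|V(G)|$ for some constant $c = c(\omega) < 1$, such that $\mathrm{tw}(G') \le \omega$ if and only if $\mathrm{tw}(G) \le \omega$. The second phase is a \emph{lifting-and-decision} subroutine that, from a width-$\le\omega$ tree decomposition of $G'$, first builds a tree decomposition of $G$ of width $O(\omega)$ and then refines it into one of width $\le\omega$, or correctly reports that none exists, in time $f(\omega)\cdot|V(G)|$. Recursing on $G'$ and unfolding the recurrence $T(n) = T(cn) + f(\omega)\cdot n$ gives the overall linear running time; the base case is any graph on $O_\omega(1)$ vertices, solved by brute force.

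First I would establish the combinatorial dichotomy driving the reduction. If $\mathrm{tw}(G)\le\omega$ and $n=|V(G)|$ is large, then $G$ has $\Omega(n/\omega)$ vertices of degree at most $d=O(\omega)$; among these, either a constant fraction are \emph{simplicial} or \emph{almost-simplicial} (their neighbourhoods are cliques, possibly after deleting one vertex), or $G$ contains a matching of size $\Omega(n/\mathrm{poly}(\omega))$ whose endpoints all have degree $\le d$. In the first case I would delete the simplicial-type vertices, recording how to reinsert each into an adjacent bag; in the second I would contract the matching edges. Neither operation can raise the treewidth, and in both cases a width-$\le\omega$ tree decomposition of the reduced graph expands back to a width-$\le 2\omega+1$ (say) tree decomposition of $G$ — by reinserting a deleted vertex into a bag containing its neighbourhood, or by replacing the bag of a contracted vertex with the bags of its two preimages. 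Conversely, if $\mathrm{tw}(G)>\omega$ then the reduced graph still has treewidth $>\omega$, so no information is lost. All low-degree vertices can be located and processed in linear time using degree buckets, giving the claimed constant-factor shrink.

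Next I would implement the decision-and-refinement phase, which is the Bodlaender--Kloks dynamic program: given $G$ together with a tree decomposition of width $\ell = O(\omega)$, decide whether $\mathrm{tw}(G)\le\omega$ and, if so, output a width-$\le\omega$ decomposition, in time depending only on $\ell$ and $\omega$ times $n$. After making the input decomposition nice, this is a bottom-up DP in which the table at a node stores the set of \emph{characteristics} — equivalence classes of partial width-$\le\omega$ tree decompositions of the already-processed subgraph, encoded via \emph{typical sequences} so that the number of distinct characteristics is bounded by a function of $\ell$ and $\omega$ alone. Introduce, forget, and join nodes update tables by composing and pruning characteristics; nonemptiness of the root table decides the problem, and a standard trace-back reconstructs an explicit tree decomposition.

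I expect the DP step to be the main obstacle. Bounding the table size requires the typical-sequence machinery (extension, merging, and domination lemmas), and the delicate part is proving soundness and completeness — that $G$ admits a width-$\le\omega$ tree decomposition exactly when the root table is nonempty after all the pruning. A secondary difficulty is proving the dichotomy lemma with constants sharp enough that the shrink factor stays bounded away from $1$ while the width blow-up during lifting remains $O(\omega)$; this is an extremal argument about the structure of graphs of bounded treewidth that must be carried out carefully. With both pieces in place, the recursion assembles into the stated linear-time algorithm.
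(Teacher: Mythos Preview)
The paper does not prove this theorem at all: it is quoted as a known result of Bodlaender and used as a black box in assembling the running-time bound of Theorem~\ref{thm:tree-decomp-tardis}. There is therefore nothing in the paper to compare your proposal against.

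That said, your sketch is a faithful outline of Bodlaender's original argument: the recursive compression via the simplicial-vertex/large-matching dichotomy, the lifting of a decomposition of the compressed graph to a bounded-width decomposition of $G$, and the Bodlaender--Kloks dynamic program with typical sequences to test and refine down to width $\omega$. The pieces you flag as the hard parts (soundness/completeness of the characteristic DP, and getting the shrink constant bounded away from $1$) are indeed where the work lies, but for the purposes of this paper none of that is needed---you can simply cite the result, as the authors do.
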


For ease, we describe our \textsc{TaRDiS} algorithm on a tree decomposition with additional structural properties.

\smallskip
\begin{definition}[Nice Tree Decomposition, Join/Introduce/Forget/Leaf Node]
A tree decomposition $(T,B)$ is called a \emph{nice} tree decomposition if:
\begin{itemize}
    \item every node of $T$ has at most two children;
    \item if a node $s$ has two children $s_l$ and $s_r$, then $B(s)=B(s_l)=B(s_r)$, and we call $s$ a \emph{join node};
    \item if node $s$ has one child $s'$ then either:
    \begin{itemize}
        \item $|B(s)|=|B(s')|+1$ and $B(s) \supset B(s')$ ($s$ is an \emph{introduce node}), or
        \item $|B(s)|=|B(s')|-1$ and $B(s) \subset B(s')$ ($s$ is a \emph{forget node});
    \end{itemize}
    \item if node $s$ has no children then $B(s)=\emptyset$, and we call $t$ a \emph{leaf node}.
\end{itemize}
\end{definition}
\begin{figure}
    \centering
    \includegraphics[width=\textwidth, page=2]{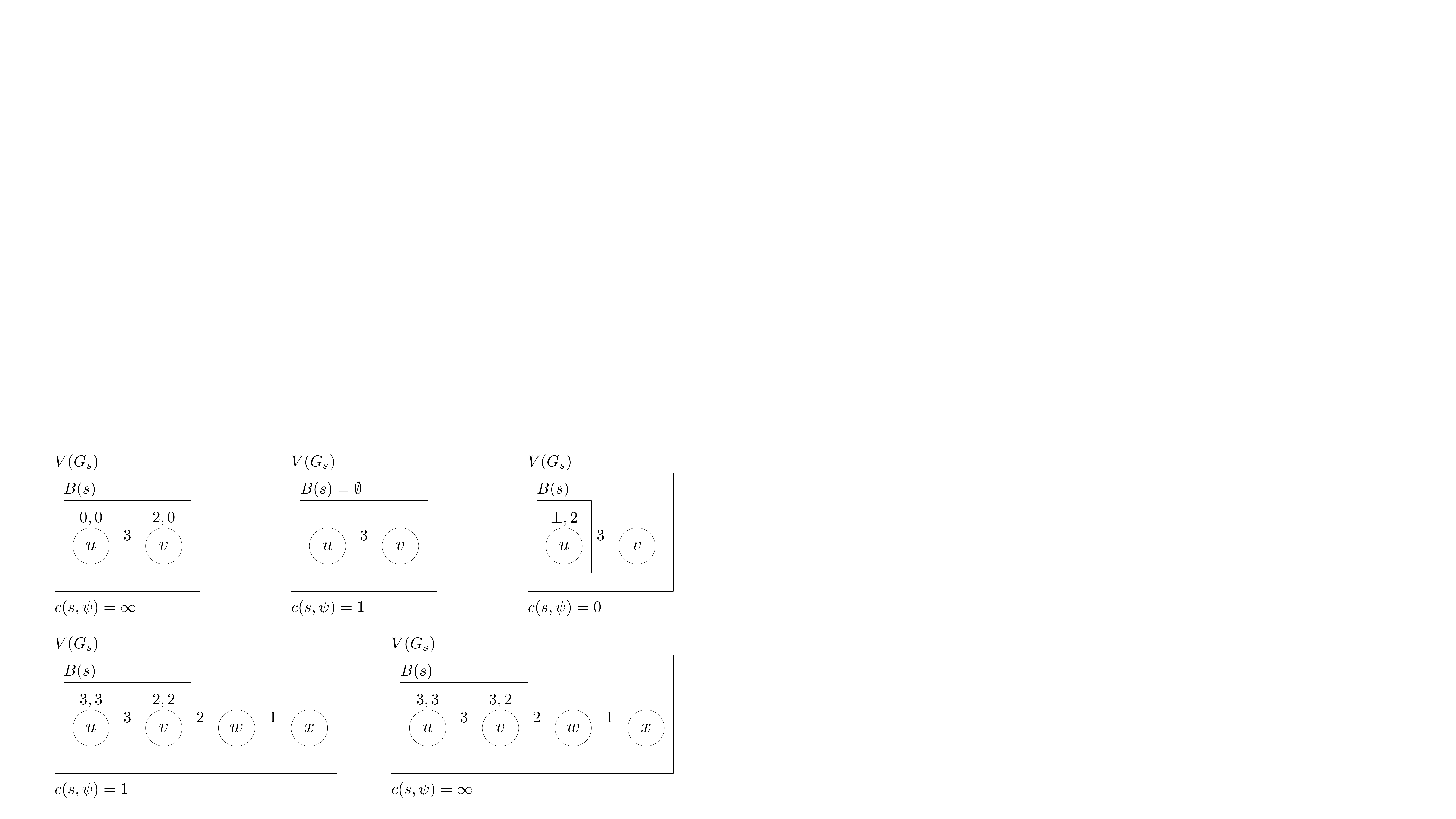}
    \caption{Graph $P_3$ and four of its tree decompositions. From left to right: $P_3$ ; the trivial decomposition of width $2$; a decomposition of width $1$; two possible nice tree decompositions of width $1$. Note that every graph admits the trivial tree decomposition of width $V(G)-1$, where there is only one bag $B(s)=V(G)$.}
    \label{fig:tree_decomposition_intro}
\end{figure}

An example of a graph and different tree decompositions of it can be found in Figure~\ref{fig:tree_decomposition_intro}. It is a known result that a tree decomposition $(T,B)$ of a graph $G$ of width $\omega$ on $n$ nodes can be efficiently processed to obtain a nice tree decomposition $G$ of width $\omega$. We use this result as given below by Cygan, Fomin, Kowalik, Lokshtanov, Marx, Pilipczuk, Pilipczuk and Saurabh \cite{cygan_parameterized_2015_bugfree}.
\smallskip
\begin{lemma}[Cygan \cite{cygan_parameterized_2015_bugfree}]\label{lem:nice-convert-linear}
    If a static graph $H$ admits a tree decomposition of width at most $\omega$, then it also admits a nice tree decomposition of width at most $\omega$. Moreover, given a tree decomposition $(T,B)$ of width at most $\omega$, on can compute a nice tree decomposition of $H$ of with at most $\omega$ that has at most $O(\omega V(H))$ nodes in time $O(\omega^2\max (V(T),V(H))$.
\end{lemma}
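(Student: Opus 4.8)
The statement is a classical result from \cite{cygan_parameterized_2015_bugfree}, so the plan is to reconstruct its standard proof: a three-phase construction that never increases the width, because every bag we ever create is either one of the original bags, a subset of one, or a subset of the union of two adjacent ones. First I would \emph{shrink} $(T,B)$ to an equivalent decomposition of size $O(|V(H)|)$; then \emph{root} it and perform local surgery at each node to install the join/introduce/forget structure; and finally \emph{pad} the root and the leaves so that their bags become empty.

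For the shrinking phase, root $T$ arbitrarily and repeatedly search for an edge $\{s,s'\}$ of $T$ with $B(s)\subseteq B(s')$; contract it, keeping the larger bag $B(s')$. This preserves $\bigcup_s B(s)=V(H)$, preserves that every edge of $H$ lies in some bag, and preserves that the set of nodes whose bag contains a fixed vertex is connected (it is only being merged inside an already-connected subtree); hence the result is a tree decomposition of the same width. When no such edge survives, no bag is a subset of an adjacent bag, and a standard charging argument --- each non-root node $s$ owns a vertex of $B(s)\setminus B(\mathrm{parent}(s))$, and this assignment is injective by the connectivity axiom --- shows the tree now has at most $|V(H)|+1$ nodes. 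Storing each bag as a sorted list makes a containment test cost $O(\omega)$, and there are at most $|V(T)|$ contractions, so this phase runs within $O(\omega^2\,|V(T)|)$ time.

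Next I would nice-ify the reduced rooted tree. Replace every node $s$ with $d\ge 2$ children by a left-deep binary tree of $d-1$ fresh nodes each carrying the bag $B(s)$: a join node and its two children then automatically share a bag. Afterwards, for every remaining parent--child pair $(s,s')$ whose bags differ, splice in a path between them: starting from $B(s')$, \emph{forget} the vertices of $B(s')\setminus B(s)$ one at a time down to the bag $B(s')\cap B(s)$, then \emph{introduce} the vertices of $B(s)\setminus B(s')$ one at a time up to $B(s)$. Every intermediate bag is a subset of $B(s')$ or of $B(s)$ and so has size at most $\omega+1$; moreover a forgotten vertex lies in $B(s')\setminus B(s)$, hence by the original connectivity axiom never appears above $s'$, so it never reappears, and dually an introduced vertex never occurred below $s'$ --- so the connectivity axiom is maintained through every splice. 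Finally, beneath each leaf append a chain of introduce nodes that empties its bag one vertex at a time, and above the root a chain of forget nodes that empties the root bag; this is again width- and connectivity-preserving since only the end vertices are affected.

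For the bounds: an original node of degree $d$ is replaced by $O(d)$ join nodes plus $O(d\,\omega)$ chain nodes (each chain has length at most $2(\omega+1)$), and $\sum_s d_s = O(|V(H)|)$ on the reduced tree, so the nice tree decomposition has $O(\omega\,|V(H)|)$ nodes and is assembled in $O(\omega^2\,|V(H)|)$ time; together with the shrinking phase this yields the stated $O(\omega^2\max(|V(T)|,|V(H)|))$. The only genuinely delicate points I expect are the case analysis verifying that each vertex's occurrence subtree stays connected through the join-expansions and the forget/introduce splices, and the $|V(H)|$-bound on the reduced decomposition --- it is precisely the shrinking phase that lets the final size be expressed in terms of $|V(H)|$ rather than the a priori much larger $|V(T)|$.
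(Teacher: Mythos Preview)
The paper does not prove this lemma at all: it is quoted verbatim as a known result from Cygan et al.\ \cite{cygan_parameterized_2015_bugfree} and used as a black box in the analysis of the tree-decomposition algorithm. Your reconstruction is the standard textbook argument (shrink redundant bags, binarise via join nodes, splice in forget/introduce chains, pad root and leaves) and is correct; there is nothing in the paper to compare it against beyond the citation itself.
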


\subsection{Algorithm for \textsc{TaRDiS} parameterized by treewidth and lifetime}\label{sec:tree-decomp}

The following gives an algorithm for \textsc{TaRDiS} given a nice tree decomposition $(T,B)$ of the footprint $\mathcal G_\downarrow$ of the input temporal graph $\mathcal G$. Specifically, it computes the cardinality of a minimum TaRDiS $S$. Note that we use the word ``vertex'' when referring to the original graph and ``node'' when discussing the decomposition graph. We use the symbols $\prec$ and $\succ$ as place-holders for strict/nonstrict inequalities. More specifically, in \textsc{Strict TaRDiS} we use $<$ and $>$ in the place of $\prec$ and $\succ$ respectively and for \textsc{Nonstrict TaRDiS} we use $\leq$ and $\geq$. By substituting the correct inequalities, it can be seen that the algorithm described is correct for each of \textsc{Strict TaRDiS} and \textsc{Nonstrict TaRDiS}.  

Denote by $G_s$ the subgraph $(V_s,E_s)$, where $V_s$ (respectively $E_s$) is the set of all vertices (resp. edges) introduced in the subtree rooted at node $s$ of the decomposition tree. Intuitively, the algorithm works from leaf nodes to the root node and finds, at each node $s$, a partial solution consisting of a minimal TaRDiS $S$ for the subgraph $G_s$.

We define the \emph{arrival time} of a temporal path at a vertex $v$ to be the time of the final time-edge of the path. For the trivial path from a vertex to itself, we say that the time of arrival is 0. Similarly, we define the \emph{departure time} of a temporal path as the time of the first time-edge in the path. For example, the temporal path consisting of a single time-edge has the same departure and arrival time. We note that, for a path to be a strict temporal path, the arrival time at a vertex $v$ must be strictly before the departure time from $v$. We allow the arrival and departure times to be equal in nonstrict temporal paths of any length. We refer to a temporal path from a vertex $v$ to a vertex $u$ with earliest arrival time as a \emph{foremost} temporal path. We call the arrival time of a foremost path the \emph{foremost arrival time}. A foremost path from a set of vertices $S$ to a single vertex $v$ is a foremost path from a vertex $u$ in $S$ to $v$ such that the arrival time of a path from any other vertex in $S$ to $v$ is the same or later.

\subsubsection{States}
A state of a bag $B(s)$ is a mapping $\psi: B(s)\to ([0,\tau]\cup \perp) \times ([0,\tau]\cup \perp)$ where $\psi(v)=(t_a(v),t_p(v))$, such that $t_a(v) \geq t_p(v)$ if both values are integers and $t_p(v)=\bot$ only if $t_a(v)=\bot$. Conceptually, we think of $t_a(v)$ as the arrival time of some path to $v$ from our partial TaRDiS and $t_p(v)$ as the time by which we ``promise'' arrival of such a path from the eventual full TaRDiS (necessary to reach forgotten vertices). Denote the set of all states of a node $s$ by $\Psi(s)$. 
 
In our definition of a temporal graph, we assume that all edges are active at strictly positive times. Therefore, in both the strict and nonstrict variants of the problem, the earliest time any vertex can be temporally reachable from a vertex in the TaRDiS is 1, unless it is in the TaRDiS itself. Our intention is that, if $t_a(v) \neq \bot$ then $t_a(v)$ is exactly the time that $v$ is reached from some TaRDiS vertex. If $t_a(v)=0$ for a vertex $v$ in the bag, this corresponds to $v$ being included in the partial TaRDiS $S$. We use the notation $t_{a}^{-1}$, $t_{p}^{-1}$ to denote the preimage of the functions $t_{a}$ and $t_{p}$ respectively. That is, $t_{a}^{-1}(x)$ is the set of all vertices $v$ which are assigned $(x,t_p(v))$ under a state $\psi$. 

We call a state $\psi$ of a bag $B(s)$ \emph{consistent} if and only if there exists a set of vertices $S\subseteq V(G_s)$ such that
\begin{itemize}
    \item for all $v\in B(s) \setminus t_a^{-1}(\bot)$ the foremost temporal path from some vertex in $S$ arrives at $t_a(v)$ exactly;
    \item for any vertex $w$ in $V(G_s)\setminus B(s)$ which is not reachable from some vertex in $S$, there is a temporal path to $w$ from some vertex $u$ in $B(s)$ with departure time $t$ such that $t\succ t_p(u)$.
\end{itemize}
We call such a set $S$ a set that \emph{supports} $\psi$ of $s$. Note that, if $t_a(v)=\perp$, then it is possible to have a valid state where $v$ is reachable from some vertex in $S$.
\subsubsection{Signature}
For a state $\psi$ of a node $s$, we define the \emph{signature} $c(s,\psi)$ to be the cardinality of the smallest set $S^*$ which supports $\psi$ of $s$. If there is no such $S^*$, then we say that $c(s,\psi)=\infty$. We say that such a set $S^*$ \emph{supports} the signature $c(s,\psi)$ if $S^*$ supports $\psi$ of $s$ and $|S^*|=c(s,\psi)$. The signature is a data structure with size $O(\tau^{2(\omega+1)})$ where $\omega$ is the width of the nice tree decomposition of $\mathcal{G}_\downarrow$. Note that two bags may contain the same (possibly empty) set of vertices. Therefore, the signature must be both a function of the state $\psi$ and of the node $s$ which $\psi$ is a state of. Examples of the signatures of different states can be seen in Figure~\ref{fig:signature_intro}.

\begin{figure}
    \centering
    \includegraphics[width=\textwidth, page=9]{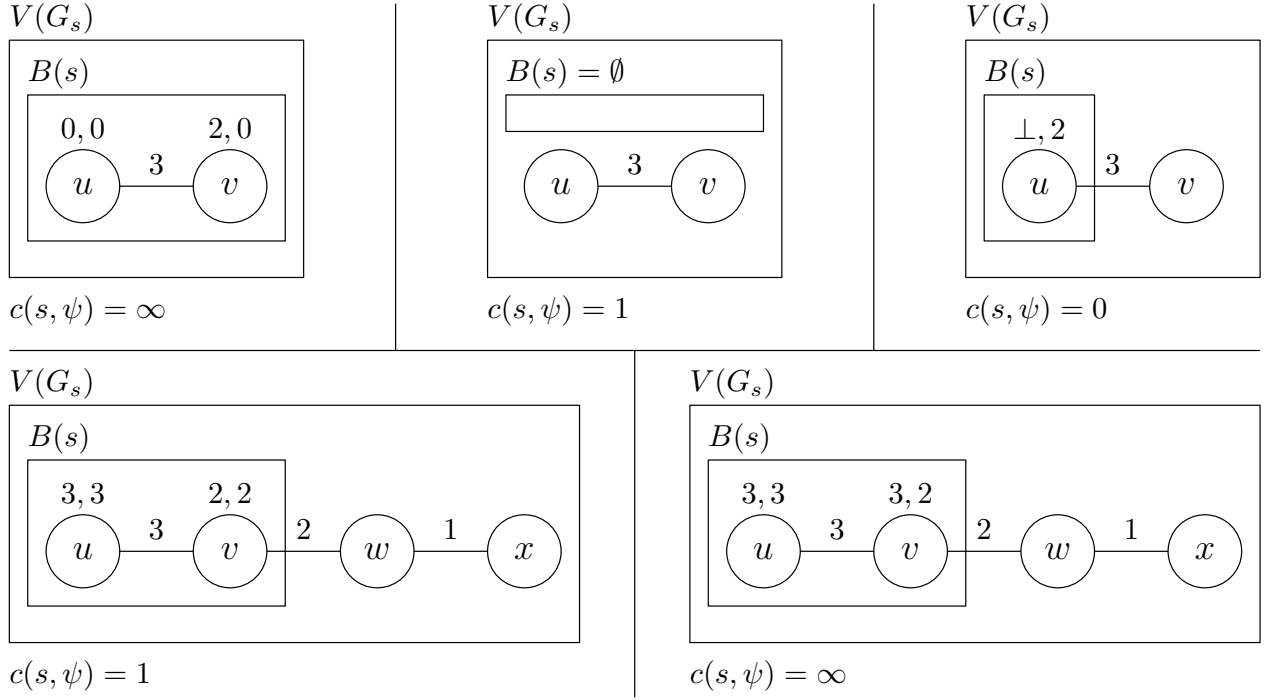}
    \caption{Five example bags, labelled with states and signatures. In the bottom right example, the infinite signature reflects the fact there exists no TaRDiS reaching $v$ at time $3$ exactly.}
    \label{fig:signature_intro}
\end{figure}

We use the convention that the root node $r$ is empty, i.e. $B(r)=\emptyset$. Consequently there is only one possible state for the root node, namely the empty function, and hence $\Psi(r)=\{\emptyset\}$.  Therefore, we have a yes-instance of TaRDiS if and only if $c(r,\emptyset)\leq k$. Note that $\emptyset$ is always a consistent state for $r$. It is supported by the trivial TaRDiS $S=V(G)$. 

We now discuss how we iteratively calculate the signature of a state for each type of node in a nice tree decomposition.
\subsubsection*{Leaf nodes} 
We assume that leaf nodes $l$ are empty, so there is only one trivial state, namely the empty function $\emptyset$. The signature of this state for leaves $l$ is $c(l,\emptyset)=0$.

\subsubsection*{Introduce Nodes}
Let $s$ be an introduce node with child $s'$. Then we must have that $B(s)=B(s')\cup\{v\}$ for some $v\notin B(s')$. To describe how to find $c(s,\psi)$ for an introduce node $s$ and state $\psi$, we must define some new notation.

Let $\psi|_{B(s')}$ be the restriction of some state $\psi$ of $s$ to the bag $B(s')$. For a state $\psi$ and functions $g,f$, let the state $\psi^{t_a(A)\to g, t_p(B)\to h}$ be defined
\[
    \psi^{t_a(A)\to g, t_p(B)\to h}(x):= 
    \begin{cases}
        (g(x),h(x))& \text{if } x\in A\cap B\\
        (g(x),t_p(x))& \text{if } x\in A\setminus B\\
        (t_a(x),h(x))& \text{if } x\in B\setminus A\\
        (t_a(x),t_p(x)),             & \text{otherwise.}
    \end{cases}
    \] 

For an introduce node $s$ with introduced vertex $v$ and state $\psi$ of $s$, we define $a_v$ to be the time of the earliest time-edge in $\{((v,u),t)\,|\, u \in B(s)$ and $t_a(u)\prec t$ under $\psi\}$. If no such edge exists, let $a_v=\infty$. Notionally, this is a time before which $v$ cannot be reached from a vertex in $S$ unless $v$ is itself in $S$. 
We define $R^t_v$ to be the set of vertices that are temporally reachable from $v$ by temporal paths which depart at a time $t'\succ t$. We use the convention that $R^{\infty}_w=R^\bot_w=\emptyset$ for all vertices $w$. For a vertex $u$ in $R^t_v$, define $\text{foremost}^t_v(u)$ to be the arrival time of a foremost path from $v$ to $u$ which departs at a time $t'\succ t$. An example of a valid state of an introduce node can be seen in Figure~\ref{fig:tree_decomposition_introduce}.
\begin{figure}
    \centering
    \includegraphics[width=.6 \textwidth, page=4]{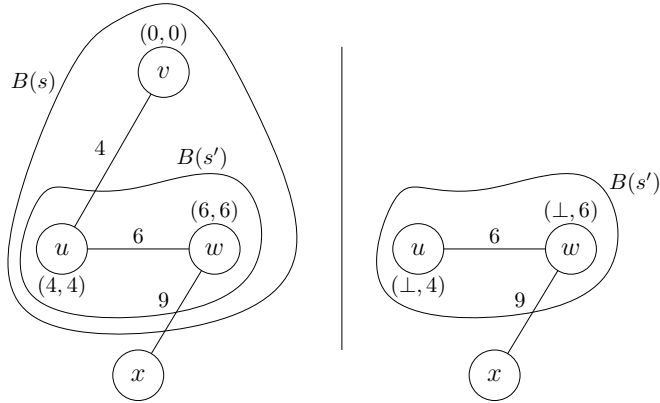}
    \caption{A valid state of an introduce node and a state of its child which are supported by the same partial TaRDiS. Here $x$ is a forgotten vertex.}
    \label{fig:tree_decomposition_introduce}
\end{figure}

\smallskip
\begin{lemma}\label{lem:decomp-intro}
    Let $\psi$ be a state of an introduce node $s$ with child $s'$ where $v$ is the vertex introduced. Define $Z$ to be the set $(R^{t_a(v)}_v(\mathcal{G})\cap B(s))\setminus \{u: \text{foremost}^{t_a(v)}_v(u)> t_a(u)\}$, and define the function $f(w):=\min\{t_p(w),\text{foremost}^{t_p(v)}_v(w)\}$. Then
    \begin{enumerate} 
        \item if, for any $u \in R^{a_v}_v(\mathcal{G})\cap B(s)$, $t_a(u) \neq \bot$ and $\textnormal{foremost}^{a_v}_v(u)< t_a(u)$, then $c(s, \psi)=\infty$ (a foremost path arrives before it is prescribed to); 
        \item else, if $t_a(v)=0$, $c(s,\psi)=1+c\left(s',\psi|_{B(s')}^{t_a(Z)\to \perp,t_p(B(s'))\to f(B(s'))}\right)$ (the introduced vertex is added to the partial TaRDiS);
        \item else, if we allow nonstrict paths (i.e. if $1 \prec 1$), $a_v = t_a(v)$ and there exists a nonempty set $W$ of neighbours of $v$ where, for all $w\in W$, $t_a(w)=t_a(v)= \lambda(vw)$ then $c(s,\psi)=\min_{w\in W}\left\{c\left(s',\psi|_{B(s')}^{t_a(Z')\to \perp,t_p(B(s'))\to f(B(s'))}\right)\right\}$ for $Z'=Z\setminus \{w\}$ (a pair of neighbours each rely on the other to be reached on time); 
        \item else, if $t_a(v)\in \{a_v, \bot\}$ then $c(s,\psi)=c\left(s',\psi|_{B(s')}^{t_a(Z)\to \perp,t_p(B(s'))\to f(B(s'))}\right)$ (the introduced vertex could result in a foremost path with an earlier arrival time);
        \item else,
        $c(s,\psi)=\infty$ (we have an inconsistent state).
    \end{enumerate}
\end{lemma}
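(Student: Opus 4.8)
The plan is to prove the recurrence case by case, in each case exhibiting a correspondence between the sets $S\subseteq V(G_s)$ supporting $\psi$ of $s$ and the sets $S'\subseteq V(G_{s'})$ supporting the indicated modified child state $\psi'$ of $s'$. The first thing I would record is the structural fact underpinning everything: since $s$ is an introduce node of a \emph{nice} tree decomposition, $v\notin V_{s'}$ and every neighbour of $v$ in $G_s$ lies in $B(s')$; hence $G_s$ is obtained from $G_{s'}$ by adding $v$ with its incident time-edges, every temporal path reaching or leaving $v$ does so through a vertex of $B(s')$, and no forgotten vertex of $G_{s'}$ is reached from $v$ except via some $u\in B(s')$. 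Two auxiliary observations are used throughout: (a) if $a_v\neq\infty$ then \emph{every} $S$ supporting $\psi$ reaches $v$ by time $a_v$, because the neighbour $u\in B(s)$ witnessing $a_v$ satisfies $t_a(u)\prec\lambda(vu)=a_v$ and in particular $t_a(u)\neq\bot$, so $S$ reaches $u$ at exactly $t_a(u)$ and the time-edge $((v,u),a_v)$ then reaches $v$; and (b) $\text{foremost}^t_v(\cdot)$ is monotone non-increasing in $t$, so in particular $\text{foremost}^0_v(w)\le\text{foremost}^{a_v}_v(w)$ whenever both are defined.

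\emph{Cases 1 and 5 (infeasibility).} In Case~1 the hypothesis forces $R^{a_v}_v\cap B(s)\neq\emptyset$, so $a_v\neq\infty$; by (a) every supporting $S$ reaches $v$ by $a_v$, whence $v$ forwards along a foremost path to the offending $u$, arriving by $\text{foremost}^{a_v}_v(u)<t_a(u)$ — so the foremost arrival of a path from $S$ to $u$ is strictly earlier than $t_a(u)$, contradicting that $S$ supports $\psi$; hence $c(s,\psi)=\infty$. For Case~5 the remaining possibility is that $t_a(v)$ is a positive integer different from $a_v$: if $t_a(v)>a_v$ then observation (a) forces $v$ to be reached by $a_v<t_a(v)$, contradicting the exact-arrival requirement; if $t_a(v)<a_v$ then, since $v\notin S$, $v$ can only be reached through a bag neighbour, and by definition of $a_v$ no such route reaches $v$ before $a_v>t_a(v)$, again a contradiction. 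Either way $c(s,\psi)=\infty$.

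\emph{Cases 2, 3, 4 (the recurrence).} Here I would prove the two inequalities. For ``$\le$'': take an optimal $S'$ supporting the modified child state $\psi'$ (in Case~3, for the $w\in W$ attaining the minimum), and put $S:=S'\cup\{v\}$ in Case~2 and $S:=S'$ in Cases~3 and~4. Verifying that $S$ supports $\psi$ of $s$ reduces to checking the two clauses of ``supports'': for a bag vertex $w$ with $t_a(w)\neq\bot$ one splits on $w=v$ (trivial), on $w\in Z$ (here the definition of $Z$ — reachable from $v$ after $t_a(v)$, with $v$'s foremost arrival not later than $t_a(w)$ — together with the absence of the Case~1 obstruction gives that $v$ restores the arrival at $w$ to exactly $t_a(w)$, while $w$'s $t_a$ was set to $\bot$ in $\psi'$ so $S'$ produced no earlier conflicting arrival), and on $w\in B(s')\setminus Z$ (unchanged from $\psi'$); in Case~3 the mutually-dependent pair $\{v,w\}$ is resolved because, once $v$ is present, the promise time $f(w)$ meets $t_a(w)$. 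For the forgotten-vertex clause, any $x\in V(G_{s'})\setminus B(s')$ not reached from $S$ is not reached from $S'$ either, so the $\psi'$-promise (strengthened to $f$ along vertices on a path through $v$) transfers into the $\psi$-promise. For ``$\ge$'': take an optimal $S$ supporting $\psi$ and set $S':=S\setminus\{v\}$; one checks that deleting $v$ is exactly compensated by $t_a(Z)\to\bot$ and $t_p(B(s'))\to f$, using (a), (b), and the fact that everything $v$ reached beyond the bag was reached through $B(s')$; in Case~3 one argues that some element of $\{v\}\cup W$ is reached in $S$ by a path avoiding the relevant incident $vw$-edge (else $v$ and all of $W$ form a cyclic reach-dependency and none is genuinely reached, contradicting that the foremost arrival at $v$ equals the finite value $t_a(v)$), and that element is the minimising $w$.

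The step I expect to be the main obstacle is the promise-time bookkeeping in the ``$\ge$'' direction: proving that $Z$ is precisely the set of bag vertices whose arrival-time claim may be discarded when $v$ is removed, and that $f(w)=\min\{t_p(w),\text{foremost}^{t_p(v)}_v(w)\}$ is exactly the right replacement promise — strong enough that $S'$ genuinely supports $\psi'$, yet weak enough that the recurrence does not overcount. The nonstrict Case~3 is the delicate sub-case, since there $v$ and a neighbour $w$ with $t_a(w)=t_a(v)=\lambda(vw)$ each reach the other at the same instant; the care needed is to show that in any supporting set the cyclic dependency is broken at some point of $\{v\}\cup W$, and that the minimum over $w\in W$ in the formula records exactly the cheapest way of breaking it.
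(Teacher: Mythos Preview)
Your proposal takes essentially the same route as the paper: establishing inconsistency in Cases~1 and~5, and in Cases~2--4 exhibiting a size-preserving correspondence (up to the $+1$ in Case~2) between sets $S$ supporting $\psi$ at $s$ and sets $S'$ supporting the indicated modified child state at $s'$. The paper's proof operates at the same level of detail as your sketch---including in the delicate nonstrict Case~3, where its treatment is no more formal than yours---so your explicit structural observation about introduce nodes and your auxiliary facts (a) and (b) are, if anything, a modest improvement in clarity over the paper's version.
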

\begin{proof}
    We begin by noting that for any set $S$ supporting $\psi$ of $s$ if the foremost arrival time to $v$ from a vertex in $S$ is $t_a(v)$, then $\text{foremost}^{t_a(v)}_v(u)<t_a(u)$ contradicts that $t_a(u)$ gives the earliest time of arrival from a vertex in $S$ to $u$. Otherwise, we could append the temporal path to $v$ from $S$ with the path from $v$ to $u$ to find an temporal path with an earlier arrival time. Hence, in case (1) of the Lemma, $\psi$ is an inconsistent state.

    We have characterised the value of $c(s,\psi)$ based on the value of $t_a(v)$ under $\psi$. This gives us four further cases ((2)-(5)) to consider, accounting for all possible values of $t_a(v)$. 
    We show that our calculation of the signature $c(s,\psi)$ is correct for each possible value of $t_a(v)$ for the introduced vertex $v$. Let $S$ be a set that supports the signature of the state $\psi$ of node $s$. That is, it is a set of minimal cardinality such that $t^{-1}_a(0)=S\cap B(s)$; all vertices $w$ in $B(s)\setminus t^{-1}_a(\perp)$ are temporally reachable from $S$ by a foremost path arriving at $t_a(w)$; and each vertex in $G_s\setminus B(s)$ not reachable from $S$ is temporally reachable from a vertex $u$ in $B(s)$ departing at some time $t\succ t_p(u)$. Note that, for a consistent state, if there is a forgotten vertex reachable from a vertex $w$ in $B(s)$ by a path departing at time $t\succ t_p(w)$, the change in $t_p$ values ensures that this remains true in the child states over which we take the minimum value.

    In case (2), $t_a(v)=0$. Additionally, for all $u\in R^t_v$, $t_a(u)\leq \text{foremost}^{t_a(v)}_v(u)$ or $t_a(u)=\perp$, since otherwise we would have case (1). Hence for any set $S$ supporting $\psi$, $v\in S$ and all vertices in $R^0_v$ are reached by time $\text{foremost}^0_v(u)$. We claim that a set $S$ supports $\psi$ if and only if $S\setminus \{v\}$ supports the state $\psi'= (t_a',t_p'):=\psi|_{B(s')}^{t_a(Z)\to \perp,t_p(B(s'))\to f(B(s'))}$ of $s'$. That is, every vertex $w$ in $B(s)\setminus R^0_v$ is temporally reached by a vertex in $S\setminus \{v\}$ at $t_a'(w)$ and every vertex in $V(G_{s})\setminus B(s)$ that is not temporally reachable from $S\setminus\{v\}$ is temporally reachable from a vertex $y$ in $B(s')$ by a path departing at time $t'\succ t_p'(y)$. It is clear that, if states $\psi$ and $\psi'$ are valid, the former statement is true. This is because the value of $t_a$ for a vertex in $B(s)\setminus R^0_v$ is the same under both states.
    
    To verify the latter statement, note that if a vertex $y'$ in $V(G_{s})\setminus B(s)$ is reachable from a vertex $y$ in $B(s')$ by a path departing at $t'\succ t_p'(y)$ under $\psi'$, then the same must be true under $\psi$. By construction of a nice tree decomposition, there are no vertices in $V(G_s)\setminus B(s)$ that are adjacent to $v$. Therefore, any path from $v$ to a vertex in $V(G_s)\setminus B(s)$ must traverse another vertex $x$ in $B(s)$ and depart $x$ at time $t''\succ t_p'(x)$ under $\psi'$. Thus, if a vertex $y'$ in $V(G_{s})\setminus B(s)$ is reachable from a vertex $y$ in $B(s')$ by a path departing at $t'\succ t_p'(y)$ under $\psi$, then $y'$ must be reachable from a (possibly different) vertex $y''$ under $\psi'$. Therefore, a set $S$ supports $\psi$ if and only if $S\setminus \{v\}$ supports the state $\psi'=\psi|_{B(s')}^{t_a(Z)\to \perp,t_p(B(s'))\to f(B(s'))}$ of $s'$ and our calculation of $c(s,\psi)$ is correct.

    Intuitively, cases (3) and (4) deal with the cases where the introduction of $v$ could change the earliest arrival time of a path from a vertex in $S$ to a vertex in $B(s)\setminus \{v\}$. In case (3) we deal with the possibility of nonstrict temporal paths when $t_a(w)=\text{foremost}^{t_a(v)}_v(w)=t_a(v)$ for some vertex $w$. In this case, a child state wherein $t_a(w)=\perp$ for every such $w$ may be supported by a set $S$ which does not support $\psi$. For this reason, we take the minimum over the signatures of states where the $t_a$ value of each such neighbour is unchanged. A set $S$ supports $\psi$ if and only if there exists a $w\in W$ such that $S$ supports $\psi|_{B(s')}^{t_a(Z')\to \perp,t_p(B(s'))\to f(B(s'))}$. The forward implication is clear from our description. Now suppose, for contradiction, that $S$ supports only $\psi|_{B(s')}^{t_a(Z')\to \perp,t_p(B(s'))\to f(B(s'))}$. Then there is either a vertex $u$ in $R^{t_a(v)}_v$ which is not reached from $S$ at time $t_a(u)$ or there is a forgotten vertex which is not reachable from $S$ or a vertex $w$ in $B(s)$ departing at $t'\succ t_p(w)$. Neither case is possible, and thus we have a contradiction. 

    In case (4), the above does not apply. This is either because we are in the strict setting or because there is no such nonempty set $W$, and $t_a(v)\in\{a_v,\bot\}$ under $\psi$. Then, for all vertices $w$ in $Z=(R^{a_v}_v(\mathcal{G})\cap B(s))\setminus\{u: \text{foremost}^{t_a(v)}_v(u)> t_a(u)\}$, there exists a foremost path from a set $S$ that supports the state which traverses $v$. Therefore, any set $S$ which supports $\psi$ must support a child state where these vertices $w\in Z'$ have $t_a$ value $\bot$ and $t_p$ is updated as mentioned earlier. In addition, it is clear that any set which supports $\psi|_{B(s')}^{t_a(Z)\to \perp,t_p(B(s'))\to f(B(s'))}$ must support $\psi$. Since we have not added any vertices to $S$, the signature of $s$ under $\psi$ must be exactly the signature of $s'$ under $\psi|_{B(s')}^{t_a(Z)\to \perp,t_p(B(s'))\to f(B(s'))}$.

    In case (5) we deal with all remaining possibilities, namely when $t_a(v)$ for the introduced vertex $v$ has a nonzero value and $t_a(v)\neq a_v$. Then $t_a(v)$ must be strictly before or after the earliest time-edge $((v,w),t)$ incident to $v$ such that, for the other endpoint $w$, $t_a(w)\prec t$. Therefore, for any set $S$ that supports $\psi$, if the foremost temporal path from a vertex in $S$ to each neighbour $w$ of $v$ arrives at $t_a(w)$, the foremost temporal path to $v$ from $S$ must arrive at a time which is not equal to $t_a(v)$. This implies that the state is inconsistent and must therefore have an infinite signature. $ $
\end{proof}

\subsubsection*{Forget Nodes}
Let $s$ be a forget node with child $s'$ such that $B(s)=B(s')\setminus\{v\}$. Let $\Psi_\text{strong}\subset\Psi(s')$ be the set of states which extend $\psi$ to $B(s')$ where $t_a(v)\in[0,\tau]$ and $t_a(v)= t_p(v)$. Intuitively, these are the states where our partial TaRDiS already reaches the forgotten node $v$ by the promised time $t_p(v)$, i.e. the promise is \emph{strongly} satisfied. 

Let $t'$ be the earliest time such that there exists a vertex $w$ in $B(s)$ and a temporal path in $B(s')$ from $w$ to $v$ departing at some time $t$ such that $t_p(w)\prec t$ under $\psi$ and arriving by $t'$. Let $\Psi_\text{weak}\subset\Psi(s')$ be the set of states which extend $\psi$ to $B(s')$ where either $t' \leq t_p(v)$ or $t_p(v)=\perp$. Intuitively, $\Psi_\text{weak}$ is the set of states where the the requirement that a forgotten vertex is reached by a path departing at time $t\succ t_p(v)$ is automatically satisfied by a path from $w$ departing at $t\succ t_p(w)$ for some $w\neq v$. That is, the promise is \emph{weakly} satisfied. An example of a valid state of a forget node can be seen in Figure~\ref{fig:tree_decomposition_forget}.

\begin{figure}
    \centering
    \includegraphics[width=.6\textwidth, page=5]{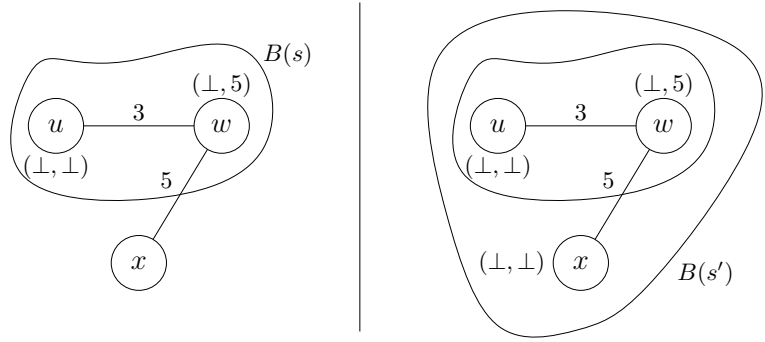}
    \caption{A forget node with a valid state and its extension to a state of its child supported by the same partial TaRDiS.}
    \label{fig:tree_decomposition_forget}
\end{figure}

\smallskip
\begin{lemma}\label{lem:decomp-forget}
    Let $s$ be a forget node with state $\psi$ and child $s'$ where $v$ is the vertex forgotten at $s$.  Then
    $$c(s,\psi)=\min\left ( \{c(s',\psi'):\psi'\in \Psi_\text{weak} \cup \Psi_\text{strong} \}  \cup \{\infty\}\right).$$
\end{lemma}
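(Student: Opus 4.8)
I would prove the stated equality as two inequalities, $c(s,\psi)\le\min(\cdots)$ and $c(s,\psi)\ge\min(\cdots)$, in each case translating between a set that supports a state of $s$ and a set that supports a state of $s'$. The ambient objects barely change at a forget node: $G_s=G_{s'}$, and the only difference is $B(s)=B(s')\setminus\{v\}$. The structural fact I would invoke repeatedly is that $v$ occurs in no bag strictly above $s$, so by the connectedness axiom every edge incident to $v$ lies inside $G_s$ and $B(s)$ separates $v$ from $V(G)\setminus V(G_s)$; this is exactly what makes the promise mechanism sound once $v$ is forgotten. The case $c(s,\psi)=\infty$ (equivalently, $\Psi_\text{weak}\cup\Psi_\text{strong}$ contains no state of finite signature) should fall out of the two inequalities, since $c(s,\psi)\le\infty$ trivially and the reverse direction produces a finite extension whenever $c(s,\psi)$ is finite.

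\textbf{The inequality $c(s,\psi)\le c(s',\psi')$ for each $\psi'\in\Psi_\text{weak}\cup\Psi_\text{strong}$.} Let $S$ support $\psi'$ of $s'$ with $|S|=c(s',\psi')$; I would show $S$ supports $\psi$ of $s$. The foremost-arrival conditions for vertices of $B(s)\subseteq B(s')$ are inherited verbatim from $\psi'$. The only genuinely new obligation is the promise condition for vertices of $V(G_s)\setminus B(s)=(V(G_{s'})\setminus B(s'))\cup\{v\}$. For $v$ itself: if $\psi'\in\Psi_\text{strong}$ then $t_a(v)\in[0,\tau]$, so $v$ is reachable from $S$ and there is nothing to check; if $\psi'\in\Psi_\text{weak}$ and $v$ is not reached from $S$, then (this forces $t_p(v)\ne\bot$, since $R^{\bot}_v=\emptyset$) the defining inequality $t'\le t_p(v)$ supplies precisely a temporal path from some $u\in B(s)$ to $v$ departing at a time $t\succ t_p(u)$. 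For $w\in V(G_{s'})\setminus B(s')$, $\psi'$ gives a promise path from some $u\in B(s')$; if $u\ne v$ we are done, and if $u=v$ we compose that path with the path to $v$ just obtained --- under $\psi'\in\Psi_\text{strong}$ the composition witnesses that $w$ is actually reached from $S$ (so the condition is moot), and under $\psi'\in\Psi_\text{weak}$ it yields a promise path rooted at some $u'\in B(s)$. The only thing requiring care here is that the arrival/departure inequalities chain correctly: "arrives by $t'\le t_p(v)$'' composes with "departs at $t\succ t_p(v)$'' to again give something departing at $t\succ t_p(u')$, in both the $\prec$ and $\preceq$ regimes.

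\textbf{The inequality $c(s,\psi)\ge\min(\cdots)$.} Assume $c(s,\psi)$ is finite and let $S$ be a minimum set supporting $\psi$ of $s$. I would extend $\psi$ to a state $\psi'$ of $s'$ by setting $t_a(v)$ equal to the foremost arrival time at $v$ from $S$, or $\bot$ if $v$ is not reached from $S$. If $v$ is reached from $S$, put $t_p(v)=t_a(v)$, placing $\psi'$ in $\Psi_\text{strong}$; if $v$ is not reached, the support of $\psi$ of $s$ guarantees a promise path from $B(s)$ to $v$, and setting $t_p(v)=t'$ (its earliest arrival time) places $\psi'$ in $\Psi_\text{weak}$. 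In both cases one checks $\psi'$ is a legal state --- the constraints $t_a(v)\ge t_p(v)$ (when both are integers) and "$t_p(v)=\bot$ only if $t_a(v)=\bot$'' hold --- and that the \emph{same} $S$ supports $\psi'$ of $s'$: the foremost condition for $v$ holds by construction, the foremost conditions for $B(s)$ are untouched, and the promise conditions for $V(G_{s'})\setminus B(s')\subseteq V(G_s)\setminus B(s)$ are inherited. Hence $c(s',\psi')\le|S|=c(s,\psi)$, giving $\min(\cdots)\le c(s,\psi)$.

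\textbf{Main obstacle.} I expect the delicate part to be the bookkeeping around $\bot$-valued entries together with the strict-versus-nonstrict inequalities when composing temporal paths through a bag vertex, and in particular making rigorous that the clause "$t_p(v)=\bot$'' in the definition of $\Psi_\text{weak}$ corresponds exactly to the situation in which $v$ needs no promise because it is reached from $S$ (so that $t_a(v)=\bot$ is merely a "don't-care'' value rather than an unmet obligation). Once the semantics of $t_a$, $t_p$, "consistent'' and "supports'' are unrolled carefully and the separator property of $B(s)$ is in hand, the two directions are largely routine.
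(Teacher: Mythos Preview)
Your proposal is correct and follows essentially the same two-inequality strategy as the paper: show that any $S$ supporting some $\psi'\in\Psi_\text{weak}\cup\Psi_\text{strong}$ at $s'$ also supports $\psi$ at $s$, and conversely extend a minimum $S$ supporting $\psi$ to an appropriate $\psi'$ at $s'$. The only notable difference is cosmetic: in the reverse direction the paper sets $t_p(v)=\tau$ uniformly, whereas you set $t_p(v)=t_a(v)$ when $v$ is reached and $t_p(v)=t'$ otherwise---your choice is slightly cleaner since it visibly respects the state constraint $t_a(v)\ge t_p(v)$ and lands $\psi'$ in $\Psi_\text{strong}$ or $\Psi_\text{weak}$ by construction.
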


\begin{proof}
We begin by showing that if a set $S$ supports any state $\psi' \in \Psi_\text{weak} \cup \Psi_\text{strong}$ of $s'$ then the same set $S$ supports $\psi$. This shows us that the signature of $\psi$ is at most what we have calculated. Following this, we show that there is no smaller set that supports $\psi$ and hence that we have calculated the signature correctly.

Recall that set $S$ supports a state $\psi$ of a node $s$ if and only if, under $\psi$:

\begin{itemize}
    \item For all vertices $u$ in $B(s)\setminus t^{-1}_a(\perp)$, the foremost path from $S$ arrives at time $t_a(u)$, and
    \item for all vertices $x$ in $G_s\setminus B(s)$ which are not reachable from $S$, there is a temporal path to $x$ which departs from a vertex $w$ in $B(s)$ at some time $t \succ t_p(w)$.
\end{itemize}

To start, consider the case where $S$ supports $\psi'\in\Psi_\text{weak}$. That is,
there exists a vertex $w\in B(s')$ and a temporal path from $w$ to $v$ that departs at some time $t' \succ t_p(w)$ and arrives by time $t_p(v)$. As a result, any forgotten vertex $x$ temporally reachable from $v$ by a path departing at some time $t'' \succ t_p(v)$ must be temporally reachable from $w$ by a path departing at time $t'$ by prefixing the path from $v$ to $x$ with the temporal path from $w$ to $v$. If $\psi'$ is consistent, then all vertices $y$ in $B(s')\setminus\{v\}=B(s)$ that are reachable from a vertex in $S$ are reached by time $t_a(y)$ if $t_a(y)\neq \perp$. Therefore, the restriction of $\psi'$ to $B(s)$ is consistent and supported by $S$.


Now consider the case where $S$ supports a state $\psi'\in \Psi_\text{strong}$. Recall that $\Psi_\text{strong}$ is the set of states of $s'$ where $t_a(v)\in[0,\tau]$ and $t_a(v)= t_p(v)$. If such a state $\psi'$ is consistent for $s'$, then $v$ is reached from a vertex in $S$ by time $t_a(v)$. Using that $t_p(v)=t_a(v)$, we obtain that any forgotten vertex $x$ reachable from $B(s)$ by a path from $v$ departing at some time $t'\succ t_p(v)$ is reachable from $S$ as well. Namely, we can prefix the temporal path from $v$ to $x$ which departs at time $t'$ with the temporal path from $S$ to $v$ which arrives by time $t_a(v)= t_p(v)$. Hence we obtain that $\psi$ is consistent for $s$ and supported by $S$. 

We now suppose for contradiction that our calculation of the signature is incorrect. We have already shown that the signature must be at most $|S|$ for some set $S$ supporting a state in $\Psi_\text{strong}\cup \Psi_\text{weak}$. Therefore, we suppose that there is a smaller set $S^*$ which supports $\psi$ whose intersection with $B(s)$ is exactly the set of vertices $t_a^{-1}(0)$ under $\psi$. 

Consider the state $\psi^*$ extending $\psi$ to $B(s')$ in which $t_a(v)$ is the arrival time of the foremost path from $S^*$ to $v$  (or $\bot$ if there is no such path) and $t_p(v)=\tau$. Clearly $\psi^* \in \Psi_\text{weak} \cup \Psi_\text{strong}$, and $S^*$ supports $\psi^*$. This contradicts our assumption that $S^*$ was smaller than any set supporting a state in $\Psi_\text{weak} \cup \Psi_\text{strong}$.
 \end{proof}

\subsubsection*{Join Nodes}
Let $s$ be a join node with children $s_1$ and $s_2$. 
\smallskip
\begin{lemma}\label{lem:decomp-join}
    Let $\psi_1=(t_{a,1},t_{p,1})$ and $\psi_2=(t_{a,2},t_{p,2})$ be states of the children $s_1$, $s_2$ of a join node $s$. We say that $\psi_1$ and $\psi_2$ \emph{coincide} with the state $\psi$ if
\begin{itemize}
    \item $\psi(v)=(0,0)$ if and only if $\psi_1(v)=(0,0)$ and $\psi_2(v)=(0,0)$;
    \item for $i\in \{a,p\}$ and all $v\in B(s)$, 
    \[
   t_i(v)= 
    \begin{cases}
        \min\{t_{i,1}(v),t_{i,2}(v)\}& \text{if } t_{i,1}(v),t_{i,2}(v)\neq \perp,\\
        t_{i,1}(v)& \text{if } t_{i,2}(v)=\perp\neq t_{i,1}(v),\\
        t_{i,2}(v)& \text{if } t_{i,1}(v)=\perp\neq t_{i,2}(v),\\
        \perp,             & \text{otherwise.}
    \end{cases}
    \] 
\end{itemize}
Then we calculate $c(s,\psi)$ by
$$c(s,\psi)=\min_{\psi_1,\psi_2}\{c(s_1,\psi_1)+c(s_2,\psi_2)-|t_a^{-1}(0)|\}.$$
where the minimum is taken over all pairs of states $\psi_1$, $\psi_2$ which coincide with $\psi$.
\end{lemma}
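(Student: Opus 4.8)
The statement is an equality of two quantities, so the natural route is to prove $c(s,\psi)\le \min_{\psi_1,\psi_2}\{c(s_1,\psi_1)+c(s_2,\psi_2)-|t_a^{-1}(0)|\}$ and the reverse inequality separately. Two structural facts about a join node $s$ with children $s_1,s_2$ will be used throughout: since $B(s)=B(s_1)=B(s_2)$ separates the rest of the decomposition, $V(G_{s_1})\cap V(G_{s_2})=B(s)$ and there is no edge of $\mathcal G$ with one endpoint in $V(G_{s_1})\setminus B(s)$ and the other in $V(G_{s_2})\setminus B(s)$; and, because join nodes introduce no edges, $E(G_s)=E(G_{s_1})\cup E(G_{s_2})$. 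I will also use the small observation that the state constraints force $t_a(v)=0 \Leftrightarrow t_p(v)=0 \Leftrightarrow \psi(v)=(0,0)$, so that $t_a^{-1}(0)$ is exactly the set of bag vertices placed in the partial TaRDiS.

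\textbf{The inequality $c(s,\psi)\le \min(\cdot)$.} Fix a pair $\psi_1=(t_{a,1},t_{p,1})$, $\psi_2=(t_{a,2},t_{p,2})$ coinciding with $\psi$, with $S_1,S_2$ supporting signatures $c(s_1,\psi_1),c(s_2,\psi_2)$ (if either is $\infty$ the pair contributes nothing). I claim $S:=S_1\cup S_2$ supports $\psi$ of $s$. By the separator fact, $S_1\cap S_2\subseteq B(s)$, and $S_i\cap B(s)=t_{a,i}^{-1}(0)$; the first bullet of ``coincide'' then gives $S_1\cap S_2=t_a^{-1}(0)$, hence $|S|=c(s_1,\psi_1)+c(s_2,\psi_2)-|t_a^{-1}(0)|$. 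For the forgotten-vertex obligation, any $w\in V(G_s)\setminus B(s)$ lies in exactly one $G_{s_i}$, and every temporal path reaching $w$ lies inside that $G_{s_i}$; so if $w$ is not reachable from $S$ it is in particular not reachable from $S_i$, and consistency of $\psi_i$ supplies a temporal path to $w$ from some $u\in B(s)$ departing at time $t\succ t_{p,i}(u)\succeq t_p(u)$ (using $t_p(u)=\min(t_{p,1}(u),t_{p,2}(u))$), which lives in $G_{s_i}\subseteq G_s$. It remains to check that the foremost arrival from $S$ to each $v\in B(s)\setminus t_a^{-1}(\bot)$ is exactly $t_a(v)=\min(t_{a,1}(v),t_{a,2}(v))$ (with the stated $\bot$-handling): one direction is immediate since the $G_{s_i}$-path witnessing $t_{a,i}(v)$ survives in $G_s$; the other direction — no path from $S$ in $G_s$ reaches $v$ strictly earlier — is the delicate point addressed below. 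Granting it, $S$ supports $\psi$, so $c(s,\psi)\le|S|$, and minimising over coinciding pairs gives the inequality.

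\textbf{The inequality $\min(\cdot)\le c(s,\psi)$.} Assume $c(s,\psi)<\infty$ and let $S$ support it. Put $S_i:=S\cap V(G_{s_i})$, so $S=S_1\cup S_2$ and $S_1\cap S_2=S\cap B(s)=t_a^{-1}(0)$. Define $\psi_i$ on $B(s)$ by letting $t_{a,i}(v)$ be the foremost arrival time from $S_i$ to $v$ \emph{inside} $G_{s_i}$ (and $\bot$ if $v$ is unreachable there), and $t_{p,i}(v):=\min(t_p(v),t_{a,i}(v))$ under the convention that $\min$ ignores $\bot$ and that $t_{p,i}(v):=t_p(v)$ when $t_{a,i}(v)=\bot$. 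Using $t_p(v)\le t_a(v)\le t_{a,i}(v)$ (the last step because $S\supseteq S_i$ and $G_s\supseteq G_{s_i}$) one checks each $\psi_i$ is a legal state and that $\psi_1,\psi_2$ coincide with $\psi$: the min-bullet for $t_{a}$ holds because the foremost arrival from $S_1\cup S_2$ in $G_s$ equals $\min$ of the per-side foremost arrivals (again the delicate point), the min-bullet for $t_p$ collapses to $t_p(v)$, and the $(0,0)$-bullet follows from $v\in S_1\cap S_2\Leftrightarrow t_a(v)=0$. Finally $S_i$ supports $\psi_i$: the bag condition holds by construction, and a forgotten vertex of $G_{s_i}$ unreachable from $S_i$ is unreachable from $S$, so the escape path guaranteed by $S$-support (which necessarily lies in $G_{s_i}$) still witnesses the obligation, the $t_p$-thresholds having only been lowered. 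Hence $c(s_1,\psi_1)+c(s_2,\psi_2)-|t_a^{-1}(0)|\le|S_1|+|S_2|-|t_a^{-1}(0)|=|S|=c(s,\psi)$.

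\textbf{The main obstacle.} Both directions hinge on one claim: the foremost arrival time at a bag vertex $v$ from $S_1\cup S_2$ within $G_s$ equals the coordinatewise minimum of the foremost arrival times from $S_1$ in $G_{s_1}$ and from $S_2$ in $G_{s_2}$ — equivalently, a temporal path that ``crosses'' $B(s)$ (starting, say, in $S_1$, entering $G_{s_2}$, possibly returning) cannot beat what either side records alone. I would prove this by decomposing a crossing path into maximal one-sided segments, noting that every transition vertex lies in $B(s)$, that each one-sided segment is a temporal path in a single $G_{s_i}$ whose arrival at its transition endpoint is therefore at least that side's foremost time, and that in coinciding states the tracked $t_a$-values \emph{are} exactly those per-side foremost times, so no concatenation can shortcut them; the $t_p$-promises are what make the forgotten-vertex obligations compose across the cut rather than across the bag. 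I expect the fiddly parts to be the careful bookkeeping for the $\bot$ cases (bag vertices whose arrival is deliberately untracked, which must not be usable as a shortcut) and the strict-versus-nonstrict distinction at transition vertices, where the departure time out of a transition vertex must be compared with $\prec$/$\succ$ to the relevant arrival and promise times.
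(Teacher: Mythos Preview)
Your two-inequality structure mirrors the paper's own argument, which likewise shows that $S$ supports $\psi$ if and only if $S=S_1\cup S_2$ with each $S_i$ supporting a coinciding child state, and then invokes inclusion--exclusion. The paper in fact simply \emph{asserts} the claim you isolate as the main obstacle (``a foremost temporal path from a vertex in $S$ arrives at each vertex $u$ in $B(s)$ at time $t_a(u)$''), so you are being more careful than the paper in flagging it.

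Your segment-decomposition sketch does not establish that claim, however. The bound ``arrival at the transition endpoint is at least that side's foremost time'' holds only for the \emph{first} segment, which starts in some $S_i$; subsequent segments start at bag vertices rather than in $S_j$, so nothing ties their arrival to $t_{a,j}$. One could hope to chain the bound by prepending an $S_j$-path to the preceding transition vertex $q$, but the coincide definition does not guarantee $S_j$ reaches $q$ early enough---indeed $t_{a,j}(q)$ may be $\perp$. Concretely: take $B(s)=\{u,v\}$, with $G_{s_1}$ containing the edge $(x,u)$ at time $1$ and a path $u,z,v$ at times $5,6$, and $G_{s_2}$ containing a path $u,y,v$ at times $2,3$. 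The pair $\psi_1$ with $(t_{a,1}(u),t_{a,1}(v))=(1,6)$, supported by $S_1=\{x\}$, and $\psi_2$ with both arrival values $\perp$, supported by $S_2=\emptyset$ (the forgotten $y$ handled by the promise $t_{p,2}(u)=1$), coincides with a state $\psi$ having $t_a(v)=6$. But in $G_s$ the set $\{x\}$ reaches $v$ at time $3$ via $u$ and $y$, so $\psi$ is inconsistent while the right-hand side of the formula evaluates to $1$. The same example defeats your $\geq$ direction: starting from $S=\{x\}$ supporting the consistent state with $t_a(v)=3$, your per-side construction gives $\min(t_{a,1}(v),t_{a,2}(v))=\min(6,\perp)=6\neq 3$, so the $\psi_1,\psi_2$ you build do not coincide with $\psi$. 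The $\perp$ cases you flagged as ``fiddly'' are thus precisely where both directions---and, it appears, the lemma as written---break; the paper's proof is no better protected at this point than yours.
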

\begin{proof}
    We begin by showing that a state $\psi$ of a join node $s$ is consistent if and only if there are consistent states $\psi_1$ and $\psi_2$ of its children which coincide with $\psi$. We then give a proof of correctness of our calculation of the signature.

    Suppose that the states $\psi_1$ and $\psi_2$ are consistent for the nodes $s_1$ and $s_2$ respectively and they coincide with the state $\psi$. Then, let $S$ be the union of sets $S_1$ and $S_2$ which support $\psi_1$ and $\psi_2$ respectively. By our assumption of consistency, all vertices in $V(G_s)\setminus B(s)$ are temporally reachable from a vertex in $S \cup B(s)$ and a foremost temporal path from a vertex in $S$ arrives at each vertex $u$ in $B(s)$ at time $t_a(u)$. In addition, if any forgotten vertex $w$ which is not reached from $S$ is temporally reachable from a vertex $v\in B(s)$ by a path departing at some time $t\succ t_{p,j}(v)$ for $j\in\{1,2\}$, then $w$ must be temporally reached from $v$ by a temporal path departing at some time $t\succ t_p(v)$. Therefore, $\psi$ must be a consistent state of $s$.

    We now assume that $\psi$ is a consistent state of $s$ supported by the set $S$. Suppose, for a contradiction, that $S$ supports $\psi$ and there are no child states supported by $S$ which coincide with $\psi$. There must be states for which $t_a$ describes the earliest time of arrival of a path from $S$ to any vertex in $B(s_1)$ and $B(s_2)$. Therefore, if there do not exist states $\psi_1$ and $\psi_2$ supported by $S$, then there must be a vertex $w$ in $G_{s_1}\setminus B(s_1)$ or $G_{s_2}\setminus B(s_2)$ which is not temporally reachable from a vertex in $S \cup B(s)$. If this is the case, then $\psi$ must also be inconsistent. This is because, by construction of a tree decomposition, the only (temporal) path from a vertex $G_{s_1}\setminus B(s_1)$ to $G_{s_2}\setminus B(s_2)$ or vice versa must traverse at least one vertex in $B(s)=B(s_1)=B(s_2)$. Therefore, if there is a temporal path from $S$ or $B(s)$ to each vertex in $G_s\setminus B(s)$, there must be a temporal path from $S\cup B(s_1)$ and $S\cap B(s_2)$ to each vertex in $G_{s_1}\setminus B(s_1)$ and $G_{s_2}\setminus B(s_2)$ respectively. Therefore, there exist child states supported by $S$.
    %

    The smallest set $S$ that supports a consistent state of $s$ must therefore be the smallest set that is the union of sets $S_1$ and $S_2$ which support consistent states $\psi_1$, $\psi_2$ respectively of its children $s_1$ and $s_2$ which coincide. To find the cardinality of this set, we employ the inclusion-exclusion principle. The only vertices which $G_{s_1}$ and $G_{s_2}$ have in common are those in $B(s)$. Therefore, $c(s,\psi)=\min_{\psi_1,\psi_2}\{c(s_1,\psi_1)+c(s_2,\psi_2)-|t_a^{-1}(0)|\}$ where the minimum is taken over pairs of states $\psi_1, \psi_2$ which coincide with $\psi$.
 \end{proof}

\subsubsection{Running Time and Extensions}

We note that for every vertex $v$ in a bag, there are strictly fewer than $(\tau+2)^2$ values of $\psi(v)$. The number of vertices in a bag is bounded by treewidth $\omega$ plus 1. Thus, there are less than $(\tau+2)^{2{(\omega+1)}}$ states per node. We now explore the running time of calculating the signature at each type of node. For leaf nodes, the signature can only be one value; therefore this can be found in constant time. If we have the signatures of all child states of an introduce node and we want to verify validity of a given state of that node, we need to compute the value of $a_v$ for the introduced vertex $v$, the reachability set $R_v^{t_a(v)}$ and the earliest time of arrival at each other vertex in the bag from $v$ by a path departing at times $t \succ t_a(v)$ and $t'\succ t_p(v)$. 

The reachability set $R_v^t(\mathcal G)$ can be computed in time $O(n^2)$ by a modified breadth first search algorithm for any $v$ and $t$. Hence it takes $O(\tau n^3)$ time to calculate $R_v^t(\mathcal G)$ for every vertex $v\in V$ and time $t\in [\tau]$. This can be done as a preprocessing step before we begin working up the tree decomposition.

The earliest time of arrival of the two paths can be computed using a variation of breadth-first search and thus takes $O(\omega^2)$ time. The value $a_v$ is computable in $O(\omega)$ time. In addition, for some cases in \textsc{Nonstrict TaRDiS} we find the minimum signature of restrictions of the state where we change the value of $t_a$ for some neighbours of $v$. Assuming all signatures of descendant nodes are calculated, this takes $O(\omega)$ time. Therefore, computing the signature of all states of an introduce node requires $O(\omega^2 (\tau)^{2{(\omega+1)}})$ time.

For forget nodes, we must compare the signatures of multiple child states. There are $O(\tau^2)$ of these extensions for each state of the forget node. In addition, we compute the earliest time of arrival of a path from any vertex in $B(s)$ to $v$ in the subgraph induced by $B(s)$ with restrictions on its departure time. This can be achieved by a variant of a breadth-first search which takes at most $O(\omega^3)$ time. Thus, finding the signature of all states of a forget node requires $O(\omega^3 (\tau)^{2\omega+4})$ time. 

Finally, to compute the signature of the state of a join node, we must compare the states of both child nodes which coincide with this state. For a given vertex in the bag of a join node, the number of values of $\psi_1$ and $\psi_2$ which coincide with $\psi(v)$ are bounded by $O(\tau^4)$. Therefore, there are $O(\tau^{4(\omega+1)})$ tuples of states to consider when calculating the signature of $\psi$ on $s$. Therefore, we calculate the signature of a state of a join node in $O(\tau^{4(\omega +1)}\cdot \omega)$ time. Note that this dominates the time needed to generate all possible states for a given bag.

We now combine the lemmas from this section to get the following theorem. We can find a tree decomposition of width at most $\omega$ if one exists in linear time by Theorem~\ref{thm:tree-decomp-linear}. Lemma~\ref{lem:nice-convert-linear} states that we can find a nice tree decomposition of width $\omega$ given a tree decomposition of width $\omega$ in linear time. We can recursively compute the signature of a given state of a node using Lemmas~\ref{lem:decomp-intro},~\ref{lem:decomp-forget} and~\ref{lem:decomp-join}. We solve \textsc{TaRDiS} by finding the signature $c(r,\psi)$ of the root where the state $\psi$ is the empty function which gives the cardinality of a minimal TaRDiS.
\smallskip
\begin{theorem}\label{thm:tree-decomp-tardis}
    The algorithm described takes as input a temporal graph $\mathcal{G}$ consisting of $n$ vertices with a nice tree decomposition of width at most $\omega$ and solves \textsc{Strict} and \textsc{Nonstrict TaRDiS} on $\mathcal{G}$ in time $O(\tau^{4({\omega+1})}\cdot \omega\cdot n + \tau n^3)$.
\end{theorem}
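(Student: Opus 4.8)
The plan is to prove Theorem~\ref{thm:tree-decomp-tardis} in two parts: correctness of the dynamic program, and then its running time.

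\textbf{Correctness.} I would argue by structural induction on the nice tree decomposition, processed from the leaves toward the root $r$, that for every node $s$ and every state $\psi\in\Psi(s)$ the value $c(s,\psi)$ returned by the algorithm equals the true signature of $\psi$ at $s$ — the minimum cardinality of a set $S\subseteq V(G_s)$ supporting $\psi$ of $s$, or $\infty$ if no such $S$ exists. The base case is a leaf node $l$: since $B(l)=\emptyset$ and $V(G_l)=\emptyset$, the only state is the empty function, supported by $S=\emptyset$, so $c(l,\emptyset)=0$ as set by the algorithm. For the inductive step there are exactly three cases, and in each the recurrence used by the algorithm computes $c(s,\psi)$ correctly from the signatures of the children of $s$ (correct by the inductive hypothesis): Lemma~\ref{lem:decomp-intro} for introduce nodes, Lemma~\ref{lem:decomp-forget} for forget nodes, and Lemma~\ref{lem:decomp-join} for join nodes. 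Finally, since $B(r)=\emptyset$, the only state of the root is the empty function, and a set $S\subseteq V(G_r)=V(\mathcal G)$ supports it precisely when no vertex of $\mathcal G$ fails to be reachable from $S$ (the ``promise'' clause is vacuous, as there are no bag vertices to route through) — i.e. precisely when $S$ is a TaRDiS. Hence $c(r,\emptyset)$ is the cardinality of a minimum TaRDiS of $\mathcal G$, so $(\mathcal G,k)$ is a yes-instance if and only if $c(r,\emptyset)\le k$. The identical argument covers both variants once $\prec,\succ$ are instantiated with the appropriate strict or nonstrict inequalities.

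\textbf{Running time.} By Theorem~\ref{thm:tree-decomp-linear} we obtain, in linear time, a tree decomposition of $\mathcal G_\downarrow$ of width at most $\omega$ (or a certificate that none exists), and by Lemma~\ref{lem:nice-convert-linear} we convert it to a nice tree decomposition of width $\omega$ with $O(\omega n)$ nodes in time $O(\omega^2 n)$. As a preprocessing step, I would compute $R_v^t(\mathcal G)$ for every vertex $v$ and every $t\in[\tau]$ by modified breadth-first search; a single such set costs $O(n^2)$, so the full table costs $O(\tau n^3)$ and is reused at every node. Then process the nodes bottom-up. Each node has fewer than $(\tau+2)^{2(\omega+1)}$ states, since each of the at most $\omega+1$ bag vertices admits fewer than $(\tau+2)^2$ values of $\psi(v)$. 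Using the per-node costs — $O(1)$ at a leaf, $O(\omega^2(\tau+2)^{2(\omega+1)})$ at an introduce node, $O(\omega^3(\tau+2)^{2\omega+4})$ at a forget node, and $O(\omega(\tau+2)^{4(\omega+1)})$ at a join node, where the $\tau^4$ per bag vertex enumerates the pairs $(\psi_1(v),\psi_2(v))$ coinciding with $\psi(v)$ — the join-node cost dominates. Multiplying by the $O(\omega n)$ nodes and adding the preprocessing yields the stated bound $O(\tau^{4(\omega+1)}\cdot\omega\cdot n+\tau n^3)$.

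\textbf{Main obstacle.} Since the per-node recurrences and their correctness are already packaged in Lemmas~\ref{lem:decomp-intro}, \ref{lem:decomp-forget} and~\ref{lem:decomp-join}, the theorem is largely an assembly job; the delicate points will be (i) making precise the equivalence between ``$S$ supports the empty state at $r$'' and ``$S$ is a TaRDiS'', which is what lets the root signature be read off as the answer, and (ii) the bookkeeping in the time analysis, namely arranging that the $O(\tau n^3)$ reachability table is shared across all nodes rather than recomputed, and checking that the state-count and per-node-cost estimates multiply together to the claimed total.
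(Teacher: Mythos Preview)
Your proposal is correct and mirrors the paper's own argument almost exactly: correctness by structural induction via Lemmas~\ref{lem:decomp-intro}--\ref{lem:decomp-join} with the root signature read off as the minimum TaRDiS size, and the same per-node cost breakdown (leaf, introduce, forget, join) plus the $O(\tau n^3)$ reachability preprocessing. The only minor bookkeeping point is that multiplying the dominant join-node cost $O(\omega\,\tau^{4(\omega+1)})$ by $O(\omega n)$ nodes gives an extra factor of $\omega$ relative to the stated bound; this slip is inherited from the paper's own analysis and does not affect the substance.
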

We emphasize that $\tau n^3$ is polynomial in the input size because $\tau \le \mathcal{E}$ (recall, no snapshot in a temporal graph is empty). The algorithm allows for edges to be active multiple times. That is, it is not restricted to simple temporal graphs.

\section{Parameterized complexity results for \textsc{MaxMinTaRDiS}}\label{sec:param-maxmintardis}

Having shown \textsc{MaxMinTaRDiS} is in $\Sigma^P_2$, finding instances of tractability is even more surprising than with the variants of the \textsc{TaRDiS} problem. We begin the following result, closely related to Lemma \ref{lem:fpt_finite_lang_tardis} which gives us tractability when each component of the input to \textsc{Strict MaxMinTaRDiS} is restricted.

\smallskip
\begin{lemma}\label{lem:fpt_finite_lang}
     \textsc{Strict MaxMinTaRDiS} is in FPT when parameterized by maximum degree $\Delta$ in $H$ and $k$.
\end{lemma}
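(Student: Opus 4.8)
The plan is to combine Lemma~\ref{lem:mmtar2ds} with a simple kernelization argument exploiting bounded degree. By Lemma~\ref{lem:mmtar2ds}, the triple $(H,k,\tau)$ is a yes-instance of \textsc{Strict MaxMinTaRDiS} if and only if $(H,k-1)$ is a \emph{no}-instance of \textsc{Dominating Set}, and crucially this does not depend on $\tau$ (the constant temporal assignment is optimal). So the whole problem reduces to deciding, with parameters $\Delta$ and $k$, whether the static graph $H$ admits \emph{no} dominating set of size at most $k-1$.

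The key observation is that in a graph of maximum degree $\Delta$, a single vertex dominates at most $\Delta+1$ vertices (itself together with its at most $\Delta$ neighbours), so any set of $k-1$ vertices dominates at most $(k-1)(\Delta+1)$ vertices. This yields a dichotomy. If $|V(H)| > (k-1)(\Delta+1)$, then $H$ cannot have a dominating set of size at most $k-1$, so $(H,k,\tau)$ is automatically a yes-instance and we output \texttt{YES}. Otherwise $|V(H)| \le (k-1)(\Delta+1)$, a bound depending only on the parameters; we then enumerate all subsets of $V(H)$ of size at most $k-1$ — there are at most $2^{(k-1)(\Delta+1)}$ of them — and test each for domination in polynomial time, outputting \texttt{YES} precisely when none is a dominating set. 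The total running time is $f(\Delta,k) + \mathrm{poly}(|H|+\tau)$, so the problem lies in FPT with respect to $\Delta + k$.

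I do not expect any genuine obstacle here; the argument is essentially bookkeeping. The only points requiring care are that \textsc{Strict MaxMinTaRDiS} is the \emph{complement} of \textsc{Dominating Set} (so a vertex-rich input makes it a yes-instance rather than a no-instance), and that the lifetime $\tau$ is immaterial by Lemma~\ref{lem:mmtar2ds}, which is exactly what lets us drop it from consideration. This mirrors the ``finite language'' style of reasoning used for \textsc{Strict TaRDiS} in Lemma~\ref{lem:fpt_finite_lang_tardis}, with the role of the reachability-set bound $2\Delta^{\tau}$ replaced by the simpler domination bound $\Delta+1$.
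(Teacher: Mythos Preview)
Your proposal is correct and follows essentially the same approach as the paper: reduce via Lemma~\ref{lem:mmtar2ds} to the complement of \textsc{Dominating Set}, observe that any $k-1$ vertices dominate at most $(k-1)(\Delta+1)$ vertices, and brute-force the bounded-size case. The paper uses the slightly looser threshold $k(\Delta+1)$ but is otherwise identical.
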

\begin{proof}    
    Recall from Lemma \ref{lem:mmtar2ds} that $(H,k)$ is a yes-instance of \textsc{Strict MaxMinTaRDiS} if and only if $(H,k-1)$ is a no-instance of \textsc{Dominating Set}. Also, any pair $(H, k)$ where $H$ has maximum degree $\Delta$ satisfying $|V(H)| > k(\Delta+1)$ is trivially a no-instance of \textsc{Dominating Set}. Applying the same reasoning as in our proof of Lemma \ref{lem:fpt_finite_lang_tardis}, we obtain that \textsc{Strict MaxMinTaRDiS} is solvable in polynomial time when $\Delta$ and $k$ are bounded. If the input graph has at least $k(\Delta+1)$ vertices then output YES, and otherwise solve the problem (necessarily of bounded size) by brute-force.
  \end{proof}

\subsection{Existence of an fpt algorithm for \textsc{MaxMinTaRDiS}}\label{sec:courcelle}
We show tractability of our problems by expressing them in EMSO logic and applying the variant of Courcelle's theorem given by Arnborg, Lagergren and Seese \cite{arnborg_easy_1991_bugfree}. This result states that an optimisation problem which is definable in EMSO can be solved in polynomial time when parameterized by treewidth and length of the formula. The theorem is as follows.

\smallskip
\begin{theorem}[adapted from \cite{langer_practical_2014_bugfree}, Theorem 30]\label{thm:courcelle}
    Let $P$ be an EMSO-definable problem, then one can solve $P$ on graphs $G=(V,E)$ of order $n:=|V|$ and treewidth at most $w$ in time $O(f_{P}(w)\cdot \mathrm{poly}(n))$.
\end{theorem}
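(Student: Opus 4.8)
The plan is to obtain Theorem~\ref{thm:courcelle} directly from the classical Courcelle machinery together with its optimization extension due to Arnborg, Lagergren and Seese~\cite{arnborg_easy_1991_bugfree}, of which the statement as formulated by Langer et al.~\cite{langer_practical_2014_bugfree} is a convenient packaging; so the ``proof'' is in essence a citation, and what follows is a sketch of the underlying mechanism rather than a fresh argument. Recall that an EMSO-definable optimization problem $P$ is given by an MSO formula $\varphi(X_1,\dots,X_\ell)$ with free monadic second-order variables together with an objective of the form $\mathrm{opt}\sum_{i=1}^{\ell}|X_i|$, with $\mathrm{opt}\in\{\min,\max\}$, ranging over all tuples $(X_1,\dots,X_\ell)$ of vertex sets with $G\models\varphi(X_1,\dots,X_\ell)$; the task is to compute this optimum on the input graph $G$.

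First I would fix a tree decomposition $(T,B)$ of $G$ of width at most $w$: by Theorem~\ref{thm:tree-decomp-linear} such a decomposition is computable in linear time for constant $w$, and by Lemma~\ref{lem:nice-convert-linear} it may be refined into a \emph{nice} tree decomposition of width $w$ with $O(w\,|V(G)|)$ nodes, so that the dynamic program need only treat leaf, introduce, forget and join nodes. Second, I would invoke the standard Feferman--Vaught/Courcelle translation: for the fixed formula $\varphi$ there is a deterministic bottom-up tree automaton $\mathcal A_\varphi$, over an alphabet encoding bag contents annotated with which of the sets $X_1,\dots,X_\ell$ each bag vertex is placed in, that accepts exactly the annotated decompositions encoding an assignment satisfying $\varphi$; the number of states of $\mathcal A_\varphi$ is bounded by a function $f_\varphi(w)$ depending only on $\varphi$ and $w$ (non-elementary in $|\varphi|$ in general, which is precisely why the running-time bound is stated via an abstract $f_P(w)$). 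Third, the optimization is handled by a dynamic program over $T$ whose table at a node $s$ is indexed by the reachable automaton state together with the restriction of the partial assignment to $B(s)$, and stores the optimal value of $\sum_i |X_i\cap V_s|$ over partial assignments on $V_s$ consistent with that state; leaf tables are trivial, introduce nodes extend the bag restriction and update the state via the transition function while incrementing the counters, forget nodes project out the forgotten vertex taking the optimum over its membership and discarding now-inconsistent states, and join nodes combine children by matching bag restrictions, adding counters and subtracting the bag-local double count. The optimum is read at the root. Each node is processed in time $f_\varphi(w)\cdot(w+1)^{O(\ell)}\cdot n^{O(1)}$ and there are $O(wn)$ nodes, yielding the claimed $O(f_P(w)\cdot\mathrm{poly}(n))$ bound after absorbing $(w+1)^{O(\ell)}$ into $f_P$.

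The step that carries all the weight — though it is a point of care rather than of difficulty here — is the automaton construction: that the MSO types over bags of size at most $w+1$ form a finite, computable set closed under the four bag operations is exactly the content of Courcelle's theorem and its effective optimization variant, so at this point I would cite~\cite{courcelle_1997_expression,arnborg_easy_1991_bugfree,langer_practical_2014_bugfree} rather than reprove it. The only genuinely additional bookkeeping over the decision version is threading the $\ell$ cardinality counters through the dynamic program, which is routine; the subsequent application (Theorem~\ref{thm:emso}) will then reduce to exhibiting an EMSO sentence over the footprint whose models correspond to temporal assignments witnessing a large minimum TaRDiS, with $k$, $\tau$ and $\mathrm{tw}(\mathcal G_\downarrow)$ controlling the formula length and treewidth.
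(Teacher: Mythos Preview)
Your proposal is correct and, indeed, goes beyond what the paper does: the paper does not prove Theorem~\ref{thm:courcelle} at all but simply cites it as a known result from~\cite{langer_practical_2014_bugfree,arnborg_easy_1991_bugfree}, exactly as you anticipated. Your sketch of the underlying automaton-plus-dynamic-programming mechanism is accurate and well-placed, but for the purposes of matching the paper a bare citation would have sufficed.
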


An extended monadic second order formula (EMSO) over a static graph $H$ is a formula that uses:
\begin{enumerate}
    \item the logical operators $\vee$, $\wedge$, $\neg$, $=$ and parentheses;
    \item a finite set of variables, each of which takes an element of $V(H)$ or $E(H)$;
    \item the quantifiers $\forall$ and $\exists$;
    \item a finite set of variables which take subsets of the sets of edges or vertices;
    \item integers.
\end{enumerate}
We make use of the predicates $\neq$, $\implies$, $\impliedby$, $\iff$, $\subseteq$, $\in$ and $\setminus$, which can be implemented using the above. We note that formulas consisting only of the first 3 components is a first-order formula (FO). A formula which is first order with the addition of the fourth bullet point is referred to as an MSO formula. A more formal definition of an EMSO-definable problem is given by a survey by Langer, Reidl, Rossmanith and Sikdar \cite{langer_practical_2014_bugfree}.

\smallskip
\begin{theorem}\label{thm:emso}
    \textsc{MaxMinTaRDiS} is fixed-parameter tractable when parameterized by lifetime, $k$ and treewidth of the graph.
\end{theorem}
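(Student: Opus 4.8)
The plan is to express each variant of \textsc{MaxMinTaRDiS} as an EMSO-definable (indeed plain MSO-definable) property of the \emph{static} input graph $H$, with a defining formula whose length is bounded by a function of $\tau$ and $k$ alone, and then to invoke Theorem~\ref{thm:courcelle} with treewidth parameter $w=\mathrm{tw}(H)$. The quantifier alternation needed is exactly the $\exists\forall$ pattern behind the $\Sigma_2^P$-membership of Lemma~\ref{lem:insigma}: we existentially guess a temporal assignment and then universally assert that no small vertex set dominates. Concretely, a temporal assignment of lifetime $\tau$ is a $\tau$-tuple $(E_1,\dots,E_\tau)$ of edge sets with $E_1\cup\dots\cup E_\tau=E(H)$, where $E_t$ is the snapshot at time $t$; by Lemma~\ref{lem:max-min-always-simple} we may additionally take the $E_t$ pairwise disjoint. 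These $\tau$ edge-set variables are quantified existentially, with the covering and disjointness requirements written out as a conjunction of length $O(\tau^2)$; for the \textsc{Happy} variant we also require each $E_t$ to be a matching (no two distinct edges of $E_t$ share an endpoint), which is first-order of length $O(\tau)$.

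The core ingredient is an MSO subformula $\mathrm{reach}(u,v)$, with the $E_t$ as parameters, expressing that $v$ is temporally reachable from $u$ in $(H,\lambda)$. In the strict (hence also the happy) setting this is first-order: a strict temporal path uses strictly increasing times and so at most $\tau$ edges, so $\mathrm{reach}(u,v)$ asserts the existence of vertices $u=w_0,w_1,\dots,w_\tau=v$ such that for each $t\in[\tau]$ either $w_{t-1}=w_t$ or $(w_{t-1},w_t)\in E_t$. In the nonstrict setting a single timestep may be crossed by an arbitrarily long walk, so one instead requires, for each $t$, that $w_{t-1}$ and $w_t$ lie in the same connected component of the snapshot $(V,E_t)$; this is the standard MSO connectivity predicate ``every vertex set that contains $w_{t-1}$ and is closed under the edges of $E_t$ also contains $w_t$''. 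Either way, $\mathrm{reach}(u,v)$ has length $O(\tau)$.

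Now ``every TaRDiS of $(H,\lambda)$ has size at least $k$'' is equivalent to ``no set of at most $k-1$ vertices temporal reachability dominates $H$'', which, as $k$ is a parameter, we spell out as $\forall s_1\cdots\forall s_{k-1}\,\exists v\,\bigwedge_{i=1}^{k-1}\neg\,\mathrm{reach}(s_i,v)$ (repetitions among the $s_i$ cover all sets of size at most $k-1$). Prefixing the existential block over $(E_1,\dots,E_\tau)$ together with the structural constraints gives a single MSO sentence $\Phi_{k,\tau}$ over $H$, of length $O(\tau^2+k\tau)$, such that $H\models\Phi_{k,\tau}$ iff $(H,k,\tau)$ is a yes-instance of the variant considered; applying Theorem~\ref{thm:courcelle} then decides the problem in time $f(k,\tau,\mathrm{tw}(H))\cdot\mathrm{poly}(|V(H)|)$, as required.

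The main obstacle is the encoding of \emph{nonstrict} temporal reachability: unlike the strict case it is not a bounded-length pattern of vertex quantifiers, and genuine monadic set quantification is needed to capture connectivity within each snapshot. Once $\mathrm{reach}$ is available as an MSO subformula of size depending only on $\tau$, the remaining pieces — the guessed colouring, the \textsc{Happy} matching constraints, and the ``at most $k-1$ vertices fail to dominate'' clause — are routine, and the bound on the formula length, hence on $f$, follows by inspection.
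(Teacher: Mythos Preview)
Your proposal is correct and follows essentially the same approach as the paper: encode a temporal assignment as a $\tau$-tuple of edge sets, express temporal reachability via a chain $w_0,\dots,w_\tau$ using adjacency (strict) or snapshot connectivity (nonstrict), and apply Theorem~\ref{thm:courcelle}. The only noteworthy difference is that the paper phrases the cardinality constraint as ``there exists a minimum TaRDiS $X$ with $\mathrm{card}(k,X)$'', using an auxiliary $\mathrm{geq}$ predicate with EMSO integer quantification, whereas you unroll ``no set of at most $k-1$ vertices dominates'' directly as $k-1$ first-order vertex quantifiers; your version is slightly cleaner since it stays in plain MSO, but both yield formulas of length bounded by a function of $\tau$ and $k$ and the remainder is identical.
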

\begin{proof}
    The formal definition of EMSO given by Langer Reidl, Rossmanith and Sikdar \cite{langer_practical_2014_bugfree} requires a weight function bounded by a constant. Our weight function will be the cardinality of a TaRDiS, which is bounded by $k$. Since the temporal assignment is not part of the input, we encode it as a partition of edges into sets which correspond to the time at which they are active. The EMSO formula is constructed using the following auxiliary subformulae
\begin{itemize}
    \item $\text{card}(k,X)$ tests whether a set $X$ has cardinality at least $k$:
$$\text{card}(k,X):=\exists x_1,\ldots,x_k \in X :\bigwedge_{1\leq i<j\leq k}x_i\neq x_j.$$
    \item $\text{geq}(X_1,X_2)$ tests whether $|X_1|\geq|X_2|$ for sets $X_1$ and $X_2$:
    $$\text{geq}(X_1,X_2):=\exists k: \text{card}(k,X_1)\wedge\neg\text{card}(k+1,X_2).$$
    \item $\text{part}(S_1,\ldots, S_{\tau})$ tests whether the sets of edges $S_1\ldots S_{\tau}$ partition the edges of $H$:
\begin{align*}
    &\text{part}(S_1,\ldots,S_{\tau}):= \forall e\in E: \\
    &\left(\bigvee_{1\leq i\leq \tau} e\in S_i\wedge\left(\bigwedge_{1\leq j<i} e\not\in S_j\right)\wedge\left(\bigwedge_{i<\ell\leq\tau } e\not\in S_{\ell}\right)\right).
\end{align*}
The two right-most brackets can be ignored if we do not require that the temporal assignment is simple. To enforce a happy temporal assignment, we can add the requirement that no two edges in the same set share and endpoint. Recall that tractability of \textsc{Happy MaxMinTaRDiS} is shown with respect to lifetime and $k$ combined in Lemma~\ref{lem:fpt_finite_lang}. That is a stronger result than what we have here since Courcelle's theorem only gives tractability on graphs with bounded treewidth. 

The following subformulae can be adapted to write \textsc{TaRDiS} in MSO logic.
\item $\text{mconn}(X,S_t)$ tests whether all vertices in a set $X$ are in the same connected component of $G_{t}$.
        $$\text{mconn}(X,S_t):= \forall Y \subset X, Y\neq\emptyset,\exists x\in X,\exists y\in Y\setminus X:xy\in S_t$$
        \item $\text{mvconn}(v,w,t)$ tests whether two vertices are in the same connected component of $G_t$:
        $$\text{mvconn}(v,w,t):= \exists X\subset V : v\in X\wedge w\in X\wedge \text{mconn}(X,S_t).$$
    \item $\text{mtadj}$ tests whether two vertices $v$, $w\in V$ are adjacent at time $t$:
$$\text{mtadj}(v,w,S_t):=\exists e\in S_t : v\in e\wedge w\in e.$$
    \item $\text{mpath}$ tests whether there is a temporal path from $v$ to $w\in V$ with latest time $\tau$:
    \begin{align*}
        \text{mpath}(v,w,S_1,\ldots,S_{\tau}):=\exists v_0,\dots,v_\tau\in V:\\
        v=v_0\wedge v_{\tau}=w\wedge\bigwedge^{\tau-1}_{t=0}(v_t=v_{t+1}\vee \text{a}(v_{t}, v_{t+1},S_{t+1}))
    \end{align*}
    where $\text{a}(v,w,S_t)$ can be substituted for $\text{mvconn}(v,w,t)$ or $\text{mtadj}(v,w,S_t)$ depending on whether we are testing for nonstrict or strict temporal paths respectively.
    \item $\text{TaRDiS}(S_1,\ldots,S_{\tau},X)$ which tests if every vertex is temporally reachable from $X$ by a temporal path of lifetime $\tau$:
$$\text{TaRDiS}(S_1,\ldots,S_{\tau},X):=\forall v\in V(H), \exists s\in X:\text{mpath}(s,v,S_1,\ldots,S_{\tau}).$$
    \item $m\text{TaRDiS}(S_1,\ldots,S_{\tau},X)$ tests whether a set $X$ is a minimum TaRDiS:
    \begin{align*}
        m\text{TaRDiS}(S_1,\ldots,S_{\tau},X):=& \forall X'\subset V(H):\textsc{TaRDiS}(S_1,\ldots,S_{\tau},X)\wedge\\
        &(\textsc{TaRDiS}(S_1,\ldots,S_{\tau},X')\implies \text{geq}(X',X)).
    \end{align*}
    \item $\textsc{MinTaRDiS}(H,\tau,k)$ which tests if there is temporal assignment with lifetime at most $\tau$ such that there exists a minimum TaRDiS of size at least $k$ on $H$:
    \begin{align*}
        &\textsc{MinTaRDiS}(H,\tau,k):=\exists X\subset V, \exists S_1,\ldots, S_{\tau}\subset E:\\
        &\text{part}(S_1,\ldots,S_{\tau})\wedge m\textsc{TaRDiS}(S_1,\ldots,S_{\tau},X)\wedge\text{card}(k,X).
    \end{align*}
\end{itemize}

Therefore \textsc{MaxMinTaRDiS} can be expressed in EMSO and the theorem holds.
  \end{proof}
\begin{corollary}\label{cor:MSOreach}
    Temporal reachability is expressible in MSO logic.
\end{corollary}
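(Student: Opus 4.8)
The claim is essentially a by-product of the construction in the proof of Theorem~\ref{thm:emso}: several of the auxiliary subformulae assembled there use only MSO features — quantification over vertices, edges, and subsets of $V$ and $E$, but no integer weights and no optimisation — and one of them already defines exactly the temporal reachability relation. So the proof amounts to isolating that observation and checking correctness.

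First I would fix the encoding. A temporal graph $\mathcal{G}=(V,E,\lambda)$ of lifetime $\tau$ is represented as the two-sorted relational structure given by its footprint $\mathcal{G}_\downarrow=(V,E)$ together with the edge subsets $S_1,\dots,S_\tau\subseteq E$, where $S_t=E_t(\mathcal{G})$ is the set of edges active at time $t$ (when $\lambda$ is not required to be simple these need not be disjoint, which does not affect the argument). The statement to prove is that the binary relation ``$u$ temporally reaches $v$'' on $V$ is MSO-definable over this structure, separately for the strict and the nonstrict notion.

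Next I would assemble the formula from the pieces already in the proof of Theorem~\ref{thm:emso}. The predicate $\text{mtadj}(v,w,S_t)$ expresses adjacency of $v$ and $w$ inside the snapshot $G_t$ using a single edge quantifier, hence is first-order. For the nonstrict case one additionally needs to traverse a whole path inside a single snapshot in one time step; this is captured by $\text{mconn}(X,S_t)$ and $\text{mvconn}(v,w,t)$, which test membership of $v,w$ in a common connected component of $G_t$ — the only place genuine set quantification over vertex subsets is used, and connectivity is well known to be MSO- (but not FO-) expressible. Finally $\text{mpath}$ chains $\tau$ ``moves'' together,
$$\text{mpath}(v,w,S_1,\dots,S_\tau):=\exists v_0,\dots,v_\tau\in V:\ v=v_0\wedge v_\tau=w\wedge\bigwedge_{t=0}^{\tau-1}\bigl(v_t=v_{t+1}\vee \text{a}(v_t,v_{t+1},S_{t+1})\bigr),$$
where $\text{a}$ is instantiated as $\text{mtadj}$ for strict reachability and as $\text{mvconn}$ for nonstrict reachability. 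Taking $\varphi(u,v):=\text{mpath}(u,v,S_1,\dots,S_\tau)$ is the desired MSO formula; its size is $O(\tau)$ up to a factor depending on the fixed signature, hence finite since $\tau$ is a parameter of the structure. (In fact, for the strict variant $\varphi$ is already first-order, since $\text{mtadj}$ is; only the nonstrict variant genuinely needs MSO.)

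The only real content is correctness of $\text{mpath}$, and I would not expect it to be a serious obstacle. One direction unrolls a temporal path $(e_1,t_1),\dots,(e_\ell,t_\ell)$ by grouping consecutive time-edges sharing a timestamp and recording, for each $t\in[\tau]$, the vertex reached so far (repeating the current vertex whenever no edge with timestamp $t$ is used), producing a witnessing sequence $v_0,\dots,v_\tau$; the converse reads a temporal path off such a sequence, using that $\text{mvconn}(v_t,v_{t+1},t+1)$ certifies a walk inside $G_{t+1}$ in the nonstrict case and $\text{mtadj}(v_t,v_{t+1},S_{t+1})$ a single time-edge in the strict case. This is routine. The point of stating the corollary is precisely to record that temporal reachability requires nothing beyond MSO, which is what makes Courcelle-type theorems applicable to temporal reachability problems on footprints of bounded treewidth.
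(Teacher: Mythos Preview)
Your proposal is correct and follows the paper's own (implicit) argument: the corollary is stated without a separate proof precisely because the subformula $\text{mpath}$ built in the proof of Theorem~\ref{thm:emso} already uses only MSO features and expresses temporal reachability. Your elaboration of the encoding and the correctness check is more detailed than anything the paper provides, but the route is the same.
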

Furthermore, we can use the expression to express \textsc{TaRDiS} in EMSO. This, however, gives a weaker tractability result than Theorem~\ref{thm:tree-decomp-tardis}.

\section{Conclusions and open questions}\label{sec:conclusion}

In this paper, we introduce the \textsc{TaRDiS} and \textsc{MaxMinTaRDiS} problems and study their (parameterized) complexity. We show a bound on the lifetime $\tau$ and a restriction to planar inputs combined are insufficient to obtain tractability (Theorems~\ref{thm:happytardis},~\ref{thm:sigma}, and Corollary~\ref{cor:d3is}) and moreover tightly characterize the minimum lifetime $\tau$ for which each problem becomes intractable. Further, we give an algorithm on a nice tree decomposition of a temporal graph which gives tractability of \textsc{TaRDiS} with respect to lifetime and treewidth of the footprint of the graph. In addition, we show that $\tau$, $k$ and the treewidth of the input graph combined are sufficient to yield tractability in all cases of \textsc{MaxMinTaRDiS} by leveraging Courcelle's theorem. 

These results leave open the questions of the exact complexity of \textsc{Nonstrict MaxMinTaRDiS} with lifetime $\tau \geq 3$ and \textsc{Happy MaxMinTaRDiS} with lifetime $\tau \geq 4$. An interesting extension of our work would be to find approximability results for these problems. From the parameterized side, it remains to be shown whether parameterization by a structural parameter of the footprint (e.g. treewidth) alone is sufficient to obtain tractability for any of the considered variants. 
Another interesting dimension is the comparison of \textsc{Nonstrict MaxMinTaRDiS} and \textsc{Happy MaxMinTaRDiS} when $\tau$ is lower-bounded by a function of the number of edges $m$. With the constraint $\tau = m$ the two problems become equivalent, and their computational complexity in this case is an interesting open question.
Analogously to \textsc{$t$-Dominating Set} \cite{kneis_partial_2007_bugfree}, \textsc{$t$-TaRDiS}, in which $t$ individuals must be reached provides a natural generalisation of our problem and the potential for parameterization by $t$.
Recall that, in \textsc{Strict MaxMinTaRDiS}, it is always optimal to choose the constant function as our temporal assignment $\lambda$. It may be interesting to consider restrictions on $\lambda$ other than happiness which require the use of a non-constant temporal assignment (as in \cite{schoeters2023inefficiently}) to make the problem more interesting.

\section*{Declarations}
The authors have no relevant financial or non-financial interests to disclose.
The authors have no conflicts of interest to declare that are relevant to the content of this article.
All authors certify that they have no affiliations with or involvement in any organization or entity with any financial interest or non-financial interest in the subject matter or materials discussed in this manuscript.
The authors have no financial or proprietary interests in any material discussed in this article.

\backmatter 


\bibliography{algorithmica_tardis}

\end{document}